\numberwithin{equation}{section}
\theoremstyle{plain}
\newtheorem{theorem}[]{Theorem}[section]%\newtheorem{theorem}[thm]{Theorem}
\newtheorem{lemma}[]{Lemma}[section]%\newtheorem{lemma}[thm]{Lemma}
\newtheorem{assumption}[]{Assumption}[section]%\newtheorem{assumption}[thm]{Assumption}
\newtheorem{definition}[]{Definition}[section]%\newtheorem{definition}[thm]{Definition}
\newtheorem{remark}[]{Remark}[section]%\newtheorem{remark}[thm]{Remark}
\def\bx{\mathbf{x}}
\def\bX{\mathbf{X}}
\def\bZ{\mathbf{Z}}
\def\bu{\mathbf{u}}
\def\bv{\mathbf{v}}
\def\bM{\mathbf{M}}
\def\K{\mathbb{K}}
\def\E{\mathbb{E}}
\def\P{\mathbb{P}}
\def\T{\mathbb{T}}
\def\bXv{\overrightarrow{\bX}}
\def\balpha{\boldsymbol{\alpha}}
\def\bbeta{\boldsymbol{\beta}}
\def\btheta{\boldsymbol{\theta}}
\def\bnabla{\boldsymbol{\nabla}}
\def\bpsi{\boldsymbol{\psi}}
\def\bSigma{\boldsymbol{\Sigma}}
\def\bxi{\boldsymbol{\xi}}
\def\balphahat{\widehat{\balpha}}
\def\bbetahat{\widehat{\bbeta}}
\def\bthetahat{\widehat{\btheta}}
\def\mhat{\widehat{m}}
\def\lhat{\widehat{l}}
\def\fhat{\widehat{f}}
\def\mtil{\widetilde{m}}
\def\ltil{\widetilde{l}}
\newcommand\ind{\protect\mathpalette{\protect\independenT}{\perp}}
\def\independenT#1#2{\mathrel{\rlap{$#1#2$}\mkern4mu{#1#2}}}
\newcommand\given[1][]{\:#1\vert\:}
\newcommand\medgiven{\hspace{0.25mm}\vert\hspace{0.25mm}}
\newcommand\smallgiven{\hspace{0.1mm}\vert\hspace{0.1mm}}
\def\tcr{\textcolor{red}}
\def\Lsc{\mathcal{L}}
\def\Isc{\mathcal{I}}
\def\Xsc{\mathcal{X}}
\def\bThat{\widehat{\bT}}
\def\Var{\mbox{Var}}
\def\half{\frac{1}{2}}
\def\nhalf{n^{\half}}
\def\nnhalf{n^{-\half}}
\def\bone{\mathbf{1}}
\def\bzero{\mathbf{0}}
\def\Tsc{\mathcal{T}}
\def\Msc{\mathcal{M}}
\def\bGamma{\boldsymbol{\Gamma}}
\def\bphi{\boldsymbol{\phi}}
\def\bT{\mathbf{T}}
\def\bThat{\widehat{\bT}}
\def\bR{\mathbf{R}}
\def\bDelta{\boldsymbol{\Delta}}
\def\bDeltahat{\widehat{\bDelta}}
\def\bgamma{\boldsymbol{\gamma}}
\def\L{\mathbb{L}}
\def\bmu{\boldsymbol{\mu}}
\def\C{\mathbb{C}}
\def\Asc{\mathcal{A}}
\def\bvj{\bv_{[j]}}
\def\R{\mathbb{R}}
\def\Jsc{\mathcal{J}}
\def\Msc{\mathcal{M}}
\def\Nsc{\mathcal{N}}
\def\T{\mathbb{T}}
\def\Ytil{\widetilde{Y}}
\def\Nsc{\mathcal{N}}
\def\Z{\mathbb{Z}}
\def\Rsc{\mathcal{R}}
\def\bxi{\boldsymbol{\xi}}
\def\Lsc{\mathcal{L}}
\def\bSigma{\boldsymbol{\Sigma}}
\def\bGammahat{\widehat{\bGamma}}
\def\bSigmahat{\widehat{\bSigma}}
\def \hs2{\hspace{2mm}}
\numberwithin{table}{section}
\numberwithin{equation}{section}
\definecolor{jcolor}{RGB}{041,122,000}
\definecolor{darkred}{RGB}{100,000,000}
\definecolor{purple}{RGB}{200,000,200}
\def\boxit#1{\vbox{\hrule\hbox{\vrule\kern6pt  \vbox{\kern6pt#1\kern6pt}\kern6pt\vrule}\hrule}}
\def\bT{\mathbf{T}}
\def\Ysc{\mathcal{Y}}
\def\Y{\mathbb{Y}}
\def\deltapi{\delta_{\pi}}
\def\Dsc{\mathcal{D}}
\def\convP{\stackrel{\P}{\rightarrow}}
\def\convd{\stackrel{d}{\rightarrow}}
\def\bh{\mathbf{h}}
\def\pitil{\widetilde{\pi}}
\def\pihat{\widehat{\pi}}
\def\mhat{\widehat{m}}
\def\phihat{\widehat{\phi}}
\def\phitil{\widetilde{\phi}}
\def\LnDR{\Lsc_{n}^{\mbox{\tiny{DDR}}}}
\def\bthetahatDR{\widehat{\btheta}_{\mbox{\tiny{DDR}}}}
\def\LIPW{\L_{\mbox{\tiny{IPW}}}}
\def\LDR{\L_{\mbox{\tiny{DDR}}}}
\def\LREG{\L_{\mbox{\tiny{REG}}}}
\def\bgammahat{\widehat{\bgamma}}
\def\kappaDR{\kappa_{\mbox{\tiny{DDR}}}}
\def\LntilDR{\widetilde{\Lsc}_n^{\mbox{\tiny{DDR}}}}
\def\Ytil{\widetilde{Y}}
\def\Dscn{\Dsc_n}
\def\Dscnone{\Dsc_{n}^{(1)}}
\def\Dscntwo{\Dsc_{n}^{(2)}}
\def\Dscnk{\Dsc_{n}^{(k)}}
\def\Dscnkp{\Dsc_{n}^{(k')}}
\def\Xscnk{\Xsc_{n,k}^*}
\def\Xscnkp{\Xsc_{n,k'}^*}
\def\Iscone{\Isc_1}
\def\Isctwo{\Isc_2}
\def\Isck{\Isc_k}
\def\Isckp{\Isc_{k'}}
\def\mhatone{\widehat{m}^{(1)}}
\def\mhattwo{\widehat{m}^{(2)}}
\def\mhatk{\widehat{m}^{(k)}}
\def\mtil{\widetilde{m}}
\def\nbar{\bar{n}}
\def\Ekkp{\E_{\Dscnk, \Xscnkp}}
\def\EXscnk{\E_{\Xscnk}}
\def\EXscnkp{\E_{\Xscnkp}}
\def\EDscnk{\E_{\Dscnk}}
\def\Pkkp{\P_{\Dscnk, \Xscnkp}}
\def\PDscnk{\P_{\Dscnk}}
\def\bT{\mathbf{T}}
\def\bTn{\bT_n}
\def\bTzero{\bT_{0}}
\def\bTzeron{\bT_{0,n}}
\def\bTpi{\bT_{\pi}}
\def\bTpin{\bT_{\pi,n}}
\def\bTm{\bT_{m}}
\def\bTmn{\bT_{m,n}}
\def\bR{\mathbf{R}}
\def\bRpim{\bR_{\pi,m}}
\def\bRpimn{\bR_{\pi,m,n}}
\def\bTzeroone{\bTzero^{(1)}}
\def\bTzerotwo{\bTzero^{(2)}}
\def\bTzeroonen{\bTzeron^{(1)}}
\def\bTzerotwon{\bTzeron^{(2)}}
\def\psione{\psi_1}
\def\sigmapsi{\sigma_{\psi}}
\def\sigmaeps{\sigma_{\varepsilon}}
\def\sigmabh{\sigma_{\bh}}
\def\sigmabarone{\bar{\sigma}_1}
\def\Kbarone{\bar{K}_1}
\def\sigmabartwo{\bar{\sigma}_2}
\def\Kbartwo{\bar{K}_2}
\def\bTzeroonej{\bT_{0 [j]}^{(1)}}
\def\bTzerotwoj{\bT_{0 [j]}^{(2)}}
\def\bhj{\bh_{[j]}}
\def\Xscn{\Xsc_n}
\def\Deltapin{\Delta_{\pi,n}}
\def\Deltapininfn{\left\|\Deltapin \right\|_{\infty,n}}
\def\pitiln{\widetilde{\pi}_n}
\def\pitilninfn{\left\| \pitiln \right\|_{\infty,n}}
\def\bvphi{\boldsymbol{\varphi}}
\def\bphij{\bvphi_{[j]}}
\def\bhj{\bh_{[j]}}
\def\bphibarsqnj{\bar{\bvphi}_{n [j]}^{(2)}}
\def\bmusqbphij{\bmu_{\bvphi[j]}^{(2)}}
\def\bmusqbhj{\bmu_{\bh[j]}^{(2)}}
\def\bmusqbhinf{\bmu_{\bh, \infty}^{(2)}}
\def\bTpij{\bT_{\pi [j]}}
\def\vnpi{v_{n,\pi}}
\def\BMC{\mbox{BMC}}
\def\sigmabar{\bar{\sigma}}
\def\Kbar{\bar{K}}
\def\sigmabarbvphi{\sigmabar_{\bvphi}}
\def\Kbarbvphi{\Kbar_{\bvphi}}
\def\bmusqbhinf{\|\bmu^{(2)}_{\bh} \|_{\infty}}
\def\Ascpinj{\Asc_{\pi,n,j}}
\def\Ascpinjc{\Ascpinj^c}
\def\Dscn{\Dsc_n}
\def\Dscnone{\Dsc_{n}^{(1)}}
\def\Dscntwo{\Dsc_{n}^{(2)}}
\def\Dscnk{\Dsc_{n}^{(k)}}
\def\Dscnkp{\Dsc_{n}^{(k')}}
\def\Xscnk{\Xsc_{n,k}^*}
\def\Xscnkp{\Xsc_{n,k'}^*}
\def\Iscone{\Isc_1}
\def\Isctwo{\Isc_2}
\def\Isck{\Isc_k}
\def\Isckp{\Isc_{k'}}
\def\mhatone{\widehat{m}^{(1)}}
\def\mhattwo{\widehat{m}^{(2)}}
\def\mhatk{\widehat{m}^{(k)}}
\def\mtil{\widetilde{m}}
\def\nbar{\bar{n}}
\def\Ekkp{\E_{\Dscnk, \Xscnkp}}
\def\EXscnk{\E_{\Xscnk}}
\def\EXscnkp{\E_{\Xscnkp}}
\def\EDscnk{\E_{\Dscnk}}
\def\Pkkp{\P_{\Dscnk, \Xscnkp}}
\def\PDscnk{\P_{\Dscnk}}
\def\Deltamnk{\Delta^{(k)}_{m, \nbar}}
\def\Deltamnonetwo{\Delta^{(1,2)}_{m, \nbar}}
\def\Deltamntwoone{\Delta^{(2,1)}_{m, \nbar}}
\def\Deltamnkkp{\Delta^{(k,k')}_{m, \nbar}}
\def\Deltamnonetwoin{\left\| \Deltamnonetwo \right\|_{\infty,\nbar}}
\def\Deltamntwoonein{\left\| \Deltamntwoone \right\|_{\infty,\nbar}}
\def\Deltamnkkpin{\left\| \Deltamnkkp \right\|_{\infty,\nbar}}
\def\pitil{\widetilde{\pi}}
\def\pibar{\bar{\pi}}
\def\deltapitil{\widetilde{\delta}_{\pi}}
\def\deltapibar{\bar{\delta}_{\pi}}
\def\bhsqbarkpnj{\bar{\bh}_{\nbar [j]}^{(2, k')}}
\def\sigmabarbhsq{\bar{\sigma}_{\bh}}
\def\Kbarbhsq{\bar{K}_{\bh}}
\def\dnbarj{d_{\nbar, j}}
\def\tnbar{t_{\nbar}}
\def\bTmk{\bTm^{(k)}}
\def\bTmkj{\bT_{m [j]}^{(k)}}
\def\bTmonetwon{\bT_{m,\nbar}^{(1,2)}}
\def\bTmtwoonen{\bT_{m,\nbar}^{(2,1)}}
\def\bTmkkpn{\bT_{m,\nbar}^{(k,k')}}
\def\vnm{v_{n,m}}
\def\vnbarm{v_{\nbar,m}}
\def\bTtilpin{\widetilde{\bT}_{\pi,n}}
\def\bRtilpimn{\widetilde{\bR}_{\pi,m,n}}
\def\bTtilmn{\widetilde{\bT}_{m,n}}
\def\bRdagpimn{\bR_{m,n}^{\dag}}
\def\bPsi{\boldsymbol{\Psi}}
\def\What{\widehat{W}}
\def\what{\widehat{w}}
\def\whatbx{\what_{\bx}}
\def\Stil{\widetilde{S}}
\def\Sbar{\overline{S}}
\def\Rhat{\widehat{R}}
\def\bThat{\widehat{\mathbf{T}}}
\def\bxi{\boldsymbol{\xi}}
\def\bxij{\bxi_{[j]}}
\def\bxibarn{\bar{\bxi}_n}
\def\Deltastarmninfn{\left\|\Delta^*_{m,n}\right\|_{\infty,n}}
\def\bmumodbh{\bmu_{|\bh|}}
\def\bmumodbhj{\bmu_{|\bh_{[j]}|}}
\def\bmumodbhinf{\| \bmumodbh\|_{\infty}}
\def\rbxi{r_*}
\def\sigmabarbxi{\bar{\sigma}_{\bxi}}
\def\Kbarbxi{\bar{K}_{\bxi}}
\def\psitwo{\psi_2}
\def\psitwonorm#1{\| #1\|_{\psitwo}}
\def\psione{\psi_1}
\def\psionenorm#1{\| #1 \|_{\psione}}
\def\psialpha{\psi_{\alpha}}
\def\psialphanorm#1{\| #1 \|_{\psialpha}}
\def\psibeta{\psi_{\beta}}
\def\psibetanorm#1{\| #1 \|_{\psibeta}}
\def\psigamma{\psi_{\gamma}}
\def\psigammanorm#1{\| #1 \|_{\psigamma}}
\def\Lonenorm#1{\| #1 \|_1}
\def\bigpsitwonorm#1{\left\| #1 \right \|_{\psitwo}}
\def\bthetatilDR{\widetilde{\btheta}_{\mbox{\tiny{DDR}}}}
\def\bOmega{\boldsymbol{\Omega}}
\def\bOmegahat{\widehat{\bOmega}}
\def\bUpsilon{\boldsymbol{\Upsilon}}
\def\sigmabUps{\sigma_{\bUpsilon}}
\def\bthetatil{\widetilde{\btheta}}
\def\DDR{{\mbox{\tiny{DDR}}}}
\def\sigmahat{\widehat{\sigma}}
\def\bGammahat{\widehat{\bGamma}}
\def\bpsihat{\widehat{\bpsi}}
\begin{document}

\begin{frontmatter}

% "Title of the paper"
\title{High Dimensional M-Estimation with Missing Outcomes: A Semi-Parametric Framework\thanksref{T1}}
\runtitle{High-Dimensional M-Estimation with Missing Outcomes}
%\thankstext{T1}{This is a working paper and the current draft is certainly not complete. The authors accept full responsibilities for any errors in this incomplete and unpublished manuscript.\smallskip}
\thankstext{T1}{Research supported in part by the NIH grants R01-GM123056 and R01-GM129781.}
\begin{aug}
 \author{\fnms{Abhishek} \snm{Chakrabortty}\corref{}\ead[label=e1]{abhishek@stat.tamu.edu}\thanksref{t1}},
\author{\fnms{Jiarui} \snm{Lu}\corref{}\ead[label=e2]{jiaruilu@pennmedicine.upenn.edu}},
 \author{\fnms{T. Tony} \snm{Cai}\corref{}\ead[label=e3]{tcai@wharton.upenn.edu}}
 \hspace{-0.05in}  %% doing this to make the author names' formatting look slightly better (when all 4 authors are included). %%
 \and
 \author{\fnms{Hongzhe} \snm{Li}\corref{}\ead[label=e4]{hongzhe@pennmedicine.upenn.edu}}
  \thankstext{t1}{Corresponding author; previously at University of Pennsylvania during this work.} %Date of this version: \today.
% {Research was supported in part by the NIH grants R01-GM123056 and R01-GM129781.}
%\thankstext{t1}{\tcr{Editing not finished yet. SEE PAGE 33 FOR REMARKS, and do not edit for now.}}
 \affiliation{Texas A\&M University and University of Pennsylvania}
\runauthor{A. Chakrabortty, J. Lu, T. T. Cai \and H. Li}

\address{Abhishek Chakrabortty\\
Dept. of Statistics\\
Texas A\&M University\\ %University of Pennsylvania\\
College STation, TX 77843, USA.\\ %Philadelphia, PA 19104, USA.\\
\printead{e1}\\
\phantom{E-mail:\ }}

\address{Jiarui Lu\\
Dept. of Biostatistics, Epidemiology \& Informatics\\
University of Pennsylvania\\
Philadelphia, PA 19104, USA.\\
\printead{e2}\\
\phantom{E-mail:\ }}

\address{T. Tony Cai\\
Dept. of Statistics\\
University of Pennsylvania\\
Philadelphia, PA 19104, USA.\\
\printead{e3}\\
\phantom{E-mail:\ }}

\address{Hongzhe Li\\
Dept. of Biostatistics, Epidemiology \& Informatics\\
University of Pennsylvania\\
Philadelphia, PA 19104, USA.\\
\printead{e4}\\
\phantom{E-mail:\ }}
\end{aug}

\begin{keyword}[class=MSC]
%%\kwd[Primary ]{}
\kwd{62F10}
\kwd{62F12}
\kwd{62J07}
\kwd{62F25}
\kwd{62F35}
\kwd{62G08}
\kwd{62J02.}
%Other related keyword areas: 62J05/62J12, 62G35, 62E17,
%%\kwd[; secondary ]{}
\end{keyword}

\begin{keyword}
\kwd{Missing data}
\kwd{Causal inference}
\kwd{Regularized M-estimation}
\kwd{Double robustness}
\kwd{Debiasing}
\kwd{Sparsity}
\kwd{High dimensional inference}
\kwd{Nuisance functions.}
%\kwd{High dimensional M-estimation}
%\kwd{Regularized estimation and sparsity}
%\kwd{Debiased and doubly robust estimation}
%\kwd{Cross-fitting}
%\kwd{Desparsification}
%\kwd{High dimensional semi-parametric inference}
%%\kwd{Non-asymptotic bounds}
%%\kwd{Cross-fitting and sample splitting}
\end{keyword}

% NIH grant number: R01-GM123056 (will need to declare this in the final version)
% Potential AMS subject classifications: 62F10, 62F12, 62F25, 62F30, 62J02/05/12, 62J07, 62E17, 62G08 (will choose some among these and put them in the final version)

%\par\smallskip
%{\today}
%{May 29, 2019}

\begin{abstract}
We consider high dimensional $M$-estimation in settings where the response $Y$ is possibly missing at random and the covariates $\bX \in \R^p$ can be high dimensional compared to the sample size $n$. The parameter of interest $\btheta_0 \in \R^d$ is defined as the minimizer of the risk of a convex loss, under a fully non-parametric model, %, similar in spirit to \citet{Negahban_2012},
and $\btheta_0$ \emph{itself} is \emph{high dimensional} which is a key distinction from existing works. %(e.g. \citet{Chern_DDML_2018} and references therein).
Standard high dimensional regression and series estimation with possibly misspecified models and missing $Y$ are included as special cases, as well as their counterparts in causal inference using `potential outcomes'.
%Under an equivalent formulation of this setting based on `potential outcomes' in causal inference, these parameters also have important applications in heterogeneous treatment effects estimation  that are of interest in precision medicine.

Assuming $\btheta_0$ is $s$-sparse ($s \ll n$), we propose an $L_1$-regularized debiased and doubly robust (DDR) estimator of  $\btheta_0$ based on a high dimensional adaptation of the traditional double robust (DR) estimator's construction. Under mild tail assumptions and arbitrarily chosen (working) models for the propensity score (PS) and the outcome regression (OR) estimators,
satisfying \emph{only} some high-level conditions,
we establish \emph{finite sample} performance bounds for the DDR estimator showing its (optimal) $L_2$ error rate to be $\sqrt{s (\log d)/  n}$  when both models are correct, and its consistency and DR properties when only one of them is correct. %The estimators are first order \emph{insensitive} to any estimation errors or knowledge of construction of the PS and OR estimators.
Further, when both the models are correct, %the nuisance function working models are correctly specified,
we propose a \emph{desparsified} version of our DDR estimator that satisfies an \emph{asymptotic linear expansion} and facilitates \emph{inference} on low dimensional components of $\btheta_0$. Finally, we discuss various of choices of high dimensional parametric/semi-parametric working models for the PS and OR estimators. %and establish their properties.
All results are validated via detailed simulations. %Extensive simulation results are presented to validate our claims
\end{abstract}

\end{frontmatter}

%\section{Background and Introduction}\label{intro} To be written.

\section{Introduction}\label{psetup}
Large and complex observational data are commonplace in the modern `big data' era. Statistical analyses of such datasets often poses unique challenges that has led to a plethora of recent work. In particular, two such frequently encountered challenges include: (a) \emph{high dimensional settings,} wherein the dimension of the observed covariates is often comparable to or far exceeds the available sample size, and (b) potential \emph{incompleteness in the data,} especially in the outcome (or response) variable of interest. Both these issues arise naturally whenever observations are easily available for several covariates %$\bX$,
but the corresponding response %$Y$
is difficult and/or expensive to obtain. The latter could %possibly
be due to practical constraints (e.g. time, cost, logistics etc.), or simply by `design' (e.g. any treatment-response data in causal inference, where the response is automatically unobserved for any untreated individual). All these scenarios are routinely encountered in a variety of modern studies involving large databases, including biomedical data like electronic health records, or eQTL (i.e. expression quantitative trait loci) mapping studies in integrative genomics involving gene expression data, as well as in econometrics (e.g. in policy evaluation).
Further, owing to the very \emph{observational nature} of the data, the underlying missingness (or `treatment' assignment) mechanism is often informative (i.e. not randomized) and depends on the covariates, which leads to further complexities of \emph{selection bias} and confounding issues. Appropriate accounting of such biases is \emph{essential} to ensure the validity of any subsequent statistical analyses and inference.

For issue (a) above, both estimation and inference under high dimensional settings, \emph{but} with complete data, are by now quite well studied and equipped with a vast and growing literature centered around regularized methods and sparsity; see \citet{BuhlmannVdG_Book_2011} and \citet{Wainwright_Book_2019} for an overview. For issue (b) as well, under classical (low dimensional) settings, there has been substantial work leading to a rich body of literature on semi-parametric inference for incomplete response data. We refer to \citet{Tsiatis_Book_2007} and \citet{Bang_2005} for a review, as well as the fundamental works of \citet{Robins_1994} and \citet{Robins_1995}. Even under high dimensional settings, there has been a recent surge of work aimed at an analogous treatment of these problems but mostly in cases where the parameter of interest is still low dimensional (typically, the mean of the response) \citep{Farrell_2015, Belloni_ProgEval_2017, Chern_DDML_2018}.

In this paper, we consider a more challenging, and unique, setting that represents a confluence of all the issues highlighted above, combined with the fact the parameter of interest \emph{itself} is high dimensional, something that has received relatively limited attention so far. We first formalize our basic setup and the problem of interest, followed by an overview of our contributions.

\subsection{Problem Setup, Available Data and the Basic Assumptions}\label{avdata}
Let $Y \in \R$ and $\bX \in \R^p$ denote an outcome variable and a covariate vector of interest respectively, with supports $\Ysc \subseteq \R$ and $\Xsc \subseteq \R^p$ neither of which necessarily need to be continuous. In practice, however, $Y$ may not always be observed and let $T \in \{0,1\}$ denotes the indicator of $Y$ being observed. $\Z := (T, Y, \bX)$ is assumed to be defined jointly under some probability measure $\P(\cdot)$, while the \emph{observable} random vector is: $\bZ := (T, TY, \bX)$. The \emph{observed data} $\Dsc_n := $ $\{\bZ_i \equiv (T_i, T_i Y_i, \bX_i): i = 1, \hdots, n \}$ consists of $n$ independent and identically distributed (i.i.d.) realizations of $\bZ$ with joint distribution defined via $\P(\cdot)$.
We emphasize here that our focus is on \emph{high dimensional} settings, where the covariate dimension $p$ is allowed to diverge with $n$ (possibly, faster than $n$). %(including $p \gg n$).
%In general, our methods and theory (which is mostly non-asymptotic) apply to any regime of $(p,n)$ as long as the rate conditions given in our results hold.

\begin{assumption}[Basic assumptions]\label{base:assmpns}\emph{
\hspace{-0.1in}We assume throughout two basic conditions which are both fairly standard in the literature \citep{Imbens_Review_2004}.
\begin{enumerate}[(a)]
\item \emph{Ignorability:} $ T \ind Y \given \bX$, so that the missingness mechanism may depend on $\bX$, but is conditionally independent of $Y$ given $\bX$. This is also referred to often as
the missing at random (MAR) assumption in the literature.
    \vspace{0.05in}
\item \emph{Positivity/overlap:} Let $\pi(\bX) := \P(T = 1 \medgiven \bX)$ %$\forall \; \bx \in \Xsc$,
denote the \emph{propensity score} \citep{Rosenbaum_1983}, %i.e. the probability of $Y$  being observed given $\bX$,
and let $\pi := \P(T = 1)$. Then, we assume: %$\exists$ a universal constant $\deltapi$ with $0 < \deltapi \leq 1$, such that
\begin{equation}
 \pi(\bx) \; \geq \; \deltapi > 0 \;\; \forall \; \bx \in \Xsc, \quad \mbox{for some constant} \;\; \deltapi \in (0,1]. %\quad \mbox{(and hence,} \; \pi \geq \deltapi > 0\; \mbox{as well}).
 \label{pos:eqn}
\end{equation}
Hence, the probability of observing $Y$ given $\bX$ is always strictly positive.
\end{enumerate}
}
\end{assumption}
The MAR assumption in \ref{base:assmpns} (a) also includes the special case $T \ind (Y, \bX)$, commonly known as missing completely at random (MCAR). In such cases, $\pi(\cdot)$ simply equals the constant $\pi$ from part (b). In general, $\pi(\cdot)$ is allowed to depend on $\bX$ and may be unknown in practice when it needs to be estimated.

%\par\smallskip
The framework and notations above are in accordance with the standard treatment in the missing data literature \citep{Tsiatis_Book_2007}. However, the setting also encompasses problems in causal inference under the `potential outcome' framework. These may be equivalently formulated as missing data problems, a fact well known in the literature. We briefly discuss this equivalence below.

\paragraph*{Causal inference  under `potential outcomes' framework} In this setting, the observable vector is $\bZ := (T, \Y, \bX)$, where $T \in \{0,1\}$ denotes a binary `treatment' assignment indicator (can be any kind of assignment or intervention) and $\Y := T Y^{(1)} + (1-T) Y^{(0)}$ denotes the observed outcome with $(Y^{(1)},Y^{(0)})$ being the true `potential outcomes' \citep{Rubin_1974, ImbensRubin_Book_2015} for $T = 1$ and $T = 0$ respectively. Thus, for each potential outcome, this corresponds to our setting if we set $(Y,T) \equiv (Y^{(1)},T)$ or $(Y,T) \equiv (Y^{(0)}, 1-T)$.
%\par\smallskip
%Any \emph{multi-category} treatment assignment setting can also be similarly accommodated in our framework.
It is also worth noting that in the causal inference (CI) literature, $\bX$ is often referred to as `confounders' (in observational studies) or `adjustment' variables (in randomized trials), while the MAR assumption is often known as no unmeasured confounding (NUC) and MCAR as complete randomization.

\subsection{High Dimensional M-Estimation}\label{Mest_prob} We next introduce our \emph{main problem} of interest under this setting. Let $L(Y,\bX,\btheta):  \R \times \R^p \times \R^d \rightarrow \R$ be any `loss' function that is convex and differentiable in $\btheta$, and we assume that $[\E\{ L(Y,\bX,\btheta)\}^2] < \infty$ $\forall \; \btheta \in \R^d$. Then, the $M$-estimation problem considers the estimation of the minimizer $\btheta_0 \in \R^d$ of the risk function defined by $L(\cdot)$. Specifically, we aim to estimate the functional $\btheta_0 \equiv \btheta_0(\P) \in \R^d$ defined as: %parameter vector $\btheta_0 \equiv \btheta_0(\P) \in \R^d$, a functional of the distribution $\P(\cdot)$ of (the unobserved) $\Z$, defined as:
\begin{equation}
\btheta_0 \equiv \; \btheta_{0}(L,\P) \; :=  \underset{\btheta \in \R^d} \arg \min \; \L(\btheta), \;\; \mbox{where} \;\; \L(\btheta) \; := \; \E\left\{ L(Y, \bX, \btheta)\right\}. \label{Mest:prob}
\end{equation}
Here, $d$ is allowed to be high dimensional, i.e. $d$ can diverge with $n$ (possibly faster). %including $d \gg n$.
We assume without loss of generality (w.l.o.g.) that $d \geq 2$. %We assume w.l.o.g. that $d \geq 2$.
The existence and uniqueness of $\btheta_0$ is implicitly assumed given the generality of the framework considered. For most standard examples, this is fairly straightforward to establish with $L(\cdot)$ being convex and sufficiently smooth in $\btheta$. %In general, this can be guaranteed as long as the risk function $\L(\cdot)$ is strongly convex and coercive in $\btheta$.
For convenience of further discussion, let us define: $\forall \; y \in \Ysc$, $\bx \in \Xsc$ and $\btheta \in \R^d$,
\begin{equation*}
\phi(\bx,\btheta) := \E\{  L(Y, \bX, \btheta) \given \bX = \bx\} \;\; \mbox{and} \;\; \bnabla L(y,\bx,\btheta) := \frac{\partial}{\partial \btheta} L(y,\bx,\btheta) \in \R^d.
\end{equation*}
\begin{remark}\label{rem:semiparametric:framework}%[Semi-parametric framework and necessity of accounting for the missingness]
\emph{
It is important to note that $\btheta_0$ in \eqref{Mest:prob} is defined under a fully non-parametric family of $\P$ without any restrictions (upto Assumption \ref{base:assmpns} and basic moment conditions). Hence, the framework is \emph{semi-parametric} and \emph{model free} in this sense with $\btheta_0(\P)$ well-defined for every $\P$ without any model assumptions for $Y \medgiven \bX$ (even though $\btheta_0$ may sometimes be `motivated' by such `working' models for $Y \medgiven \bX$, as in the case of regression problems). % where they are simply interpreted as projection type parameters onto the corresponding covariate space.
}

\emph{
The framework also highlights the \emph{necessity  of accounting for the incompleteness} of $\Dsc_n$. If one ignores it and simply chooses to estimate $\btheta_0$ via risk minimization in the complete part of the data (i.e. observations with $T = 1$), then this `complete case' (CC) estimator will, in general, be \emph{inconsistent} for $\btheta_0$,  since the target parameter for the CC estimator is simply the minimizer of $\E \{ L(Y,\bX, \btheta) \medgiven T = 1\}$ which bears no direct relation to the unconditional minimizer $\btheta_0$ in \eqref{Mest:prob}.
The only cases when the CC estimator happens to be consistent for $\btheta_0$ is if either $T \ind (Y, \bX)$, i.e. MCAR holds (no selection bias), or if $\E \{ \bnabla L(Y,
\bX,\btheta_0)\medgiven \bX\} = \bzero$ almost surely (a.s.) $[\P_{\bX}]$. The latter, in case of regression problems, implies a %correctly specified
parametric model holds for $\E(Y \medgiven \bX)$ with `true' parameter being $\btheta_0$. Both these cases, however,  impose
further restrictions on $\P$. %leading to submodels included in the larger unrestricted family under which $\btheta_0$ is defined. To obtain consistent estimators of $\btheta_0$ uniformly over the entire family, appropriate accounting of the missingness is therefore necessary.
For consistent estimation of $\btheta_0$ over the \emph{entire} family of $\P$ where it is defined,
appropriate accounting of the missingness is thus necessary.
}
\end{remark}
\vspace{-0.101in}
Finally, it is worth mentioning that a special low dimensional case of \eqref{Mest:prob} is the \emph{mean estimation} problem where $\btheta_0 = \E(Y)$ with $L(Y,\bX,\btheta) = (Y - \btheta)^2$ and $d = 1$. In causal inference under the `potential outcome' framework, this also corresponds to the \emph{average treatment effect} (ATE) estimation problem. Both versions of this problem have by now been extensively studied in classical as well as high dimensional settings, especially the latter in recent times.
We defer a detailed literature review to Section \ref{contributions} and only point out here that the \emph{key distinction} between this literature and our setting is that our parameter of interest $\btheta_0$ in \eqref{Mest:prob} is \emph{itself high dimensional} (apart from $\bX$). %We defer a detailed literature review of all these problems to Section \ref{contributions}.

\subsection{Some Applications}\label{Mest:applns} \hspace{-0.1in} The framework \eqref{Mest:prob} encompasses a broad range of important problems. %problems that are useful in practice.
We enlist below a few useful examples for illustration.
%We highlight here a few useful illustrative examples of the $M$-estimation problem \eqref{Mest:prob} that are important and frequently encountered in practice.
%
%\begin{enumerate}[1.]
%\item
\paragraph*{1. High dimensional regression with possibly misspecified models and missing outcomes} \hspace{-0.1in} \eqref{Mest:prob} includes all standard high dimensional regression problems, where we further allow for: (i) potentially misspecified (working) models and (ii) $Y$ to be partly unobserved. For instance, set $\btheta = (a, \mathbf{b}) $ and $L(Y,\bX,\btheta) := l(Y, a + \mathbf{b}'\bX)$ in \eqref{Mest:prob}, with $a \in \R$, $\mathbf{b} \in \R^p$ and $l(u,v): \R \times \R \rightarrow \R$ being some loss function convex and differentiable in $v$. Typical choices of $l(\cdot , \cdot)$ include the `canonical' losses leading to standard regression problems as follows.
\begin{enumerate}[(a)]\label{loss:ex1}
\item The \emph{squared loss:} $l(u,v) \equiv l_{\mbox{\tiny sq}}(u,v) := (u - v)^2$ (for linear regression).
%the most common choice and usually motivated by a linear regression (working) model.

\item The \emph{logistic loss}: $l(u, v) \equiv l_{\log}(u,v) := - u v  + \log \{1 + \exp(v)\}$ (for logistic regression) and \emph{exponential loss}:  $l(u, v) \equiv l_{\exp}(u,v) := - u v  + \exp(v)$ (for Poisson regression), used often for binary or count valued $Y$ respectively.
%which is a typical choice in binary outcome regression and is usually motivated by an underlying logistic regression (working) model.

%\item The \emph{exponential loss}:  $l(u, v) \equiv l_{\exp}(u,v) := - u v  + \exp(v)$, %$\; \forall \; u, v \in \R$
%which is often used for regressing discrete (count) outcomes and is usually motivated by an underlying Poisson regression (working) model.
\end{enumerate}
In all examples, $\btheta_0$ is \emph{model free} and is %As noted in Remark \ref{rem:semiparametric:framework}, $\btheta_0$ is \emph{model free} and well defined
well defined \emph{regardless} of the validity of any motivating parametric (working) model for $Y \medgiven \bX$. In general, it simply corresponds to the `projection' of $\E(Y \medgiven \bX)$ onto that working model space. % Note that $\btheta_0$ is defined under a fully non-parametric family of $\P$. Hence, it is \emph{model-free} and well defined regardless of the validity of any underlying motivating `working' model for $Y \medgiven \bX$ and allows for its misspecification.

%\par\smallskip
%%\item \emph{Series estimation based on high dimensional basis functions.}
As an extension, one may also consider any (model free) \emph{series estimation problem} by replacing $\bX$ above with $\bPsi(\bX) := \{ \psi_j(\bX) \}_{j=1}^d$, a vector (possibly high dimensional)  of $d$ basis functions comprising transformations (possibly non-linear) of $\bX$. We may analogously set $L(Y, \bX, \btheta) := l\{Y,\bPsi(\bX)'\btheta\}$ with the same choices of $l(\cdot,\cdot)$ as above. A frequently used choice of $\bPsi(\cdot)$ includes the polynomial bases: $\bPsi(\bX) := \{1, \bx_j^k: 1 \leq j \leq p, 1 \leq k \leq d_0\}$, for any fixed degree $d_0 \geq 1$ whereby $d = pd_0 + 1$. The special case of $d_0 = 1$ (linear bases) leads to all the earlier examples, while $d_0 = 3$ leads to the cubic spline bases.
%\par\smallskip
%\item
\paragraph*{2. High dimensional single index models (SIMs) with elliptically symmetric designs}
\hspace{-0.1in}
Another interesting application of \eqref{Mest:prob} lies in signal recovery in SIMs with elliptically symmetric designs that satisfy a certain `linearity condition'. To this end, consider the SIM $Y = f(\beta_0'\bX, \epsilon)$, where $f(\cdot): \R^2 \rightarrow \Ysc$ is an \emph{unknown} link function, $\epsilon \ind \bX$ is a random noise (so that $Y \ind \bX \medgiven \bbeta_0'\bX$) and $\bbeta_0$ denotes the unknown index parameter (identifiable \emph{only} upto scalar multiples).
Now, consider any of the regression problems introduced in Example 1 and assume further that $\bX$ has an elliptically symmetric distribution (e.g. Gaussian). Then, $\btheta_0 \equiv (a_0, \mathbf{b}_0)$ defined therein satisfies: $\mathbf{b}_{0} \propto \bbeta_0$. This result, first noted by \citet{Li_Duan_1989}, provides an `easy' route to signal recovery in SIMs, especially in high dimensional settings and with missing outcomes. This also serves as a classic example where the parameter $\btheta_0$ is defined based on a misspecified parametric model, and yet, it has direct interpretability relating it to a parameter characterizing a larger semi-parametric model and allows one to still simply use \eqref{Mest:prob} for signal recovery in a SIM.

%\par\smallskip
%\item
\paragraph*{3. Applications in causal inference (heterogeneous treatment effects)} \hspace{-0.1in} All the problems in Examples 1 and 2 also have equivalent counterparts in causal inference under the `potential outcome'  framework discussed in Section \ref{avdata}. In this setting, these problems have important applications in the estimation of \emph{heterogeneous treatment effects} which is of great interest in personalized medicine. Fundamentally, this problem relates to estimation of the \emph{average conditional treatment effect} (ACTE): $\Delta(\bX) := \E\{ Y_{(1)} - Y_{(0)} \medgiven \bX\}$. In classical settings, estimation of $\Delta(\bX)$ via non-parametric machine learning methods has received considerable attention in recent times, including use of random forests or neural networks \citep{Wager_RFHTE_2017, Farrell_2018}. However, in a `truly' high dimensional setting, wherein $p$ diverges with $n$ (possibly, at a comparable or faster rate), fully non-parametric approaches may not be feasible and/or efficient. In such cases, it is often more reasonable to focus on (model free) projections of $\Delta(\bX)$ on finite (but high) dimensional function spaces. For the space of linear functions of $\bX$, this leads to the \emph{linear heterogeneous treatment effects} estimation problem. Such ideas and problems have indeed been advocated and considered in the recent works of \citet{Chern_HTE_2017} and \citet{Chern_BLP_2017}.

In our framework, this simply corresponds to the linear regression problem %discussed
in Example 1 (adapted to the CI setup). Further, under our setting, one can consider more general problems involving non-linear function spaces (e.g. series estimation) and/or other loss functions (e.g. logistic regression). These problems correspond to the other illustrations in Example 1. On the other hand, using Example 2, one may also consider ACTE estimation via SIMs which provide a clear generalization over standard parametric models and yet, to the best of our knowledge, has received very limited attention so far. %has not been considered been considered so much in the literature. In this case, our  corresponds to the appropriate ACTE estimation problem.  this simply relates to the linear regression problem in Example 1. On the other hand, \tcr{Working on it}
%\end{enumerate}

\subsection{Overview of Related Literature and Summary of Our Contributions}\label{contributions}
Our work contributes to two distinct lines of literature: (i) high dimensional $M$-estimation \emph{and} inference, and (ii) semi-parametric `doubly robust' inference for incomplete (and high dimensional) data. As regards the first line of work, for a \emph{complete} data, $M$-estimation problems are quite well studied in both classical and high dimensional settings; see \citet{VDV_Book_2000} for an overview of the vast classical literature, and \citet{Negahban_2012, Loh_2012, Loh_2015} and \citet{Loh_2017} for some of the more recent advances in high dimensional settings. Relatively little work, however, has been done for the case of \emph{incomplete} (in the response) data, especially in high dimensional settings.
In classical low dimensional settings, inference with incomplete data has a rich literature on semi-parametric methods and so called \emph{doubly robust} inference; see \citet{Bang_2005, Tsiatis_Book_2007,Kang_2007} and \citet{Graham_2011} for a review. Some of the pioneering work in this area were by \citet{Robins_1994,Robins_1995} and their several ensuing papers on related problems. %which we skip here for brevity. %While many of these are typically targeted towards the mean (or ATE) estimation problem, they can be easily adapted to handle vector valued parameters from $M$/$Z$ estimation problems as long as they are low dimensional.

In recent times, there has also been substantial interest in the extension of these approaches to high dimensional settings, leading to a flurry of papers, including \citet{Belloni_TE_2014}, \citet{Farrell_2015}, \citet{Belloni_ProgEval_2017}, \citet{Chern_DDML_2018} and \citet{Athey_ARBATE_2016}, among many other notable ones which we don't attempt to enlist here. However, their focus has \emph{still} mostly been on simple low dimensional parameters, like the mean (or the ATE), and less on cases where the \emph{parameter itself is high dimensional.}
This is one of the key distinctions of our framework. To the best of  our  knowledge, only \citet{Chern_BLP_2017} and \citet{Chern_Plugin_2018} have recently considered settings of a similar sort. While the former considers only the special case of linear regression and that too under a moderate dimensional setting (with $d \ll \sqrt{n}$), the latter certainly allows for a more general framework, but their approach is also somewhat abstract. Our approach is comparatively more detailed and targeted specifically towards the missing data setting, where we provide a complete hands-on solution to the problem \eqref{Mest:prob}. Further, another \emph{key} contribution of our work is to provide inferential tools for our estimator which hasn't been considered therein or any other existing work for that matter. %We summarize our main contributions below.

%Our \emph{main contributions} can be summarized in \emph{three} different facets: (i) \emph{estimation}, (ii) \emph{inference} and (iii) \emph{estimation of the nuisance functions}. Adopting a semi-perspective (as in Remark \ref{rem:semiparametric:framework}) and assuming $\btheta_0$ is $s$-sparse (with $s \ll n$), we propose to estimate $\btheta_0$ via an \emph{$L_1$-regularized debiased and doubly robust (DDR) estimator} based on a high dimensional adaptation of the traditional double robust (DR) estimator's construction, along with careful use of debiasing and sample splitting techniques. The DDR estimator serves as the appropriate \emph{generalization} of standard (low dimensional) DR estimators \citep{Bang_2005, Chern_DDML_2018} for high dimensional parameters. We also present a simple \emph{user friendly implementation algorithm} for these estimators which can be achieved with standard software packages. The ambient high dimensionality (of both $\bX$ and $\btheta_0$) coupled with the missingness of $Y$ and the unavoidable presence of other nuisance function estimators (possibly also high dimensional) makes the analyses challenging and substantially \emph{nuanced} compared to the low dimensional case.

\paragraph*{Main contributions}
Our contributions can be summarized in \emph{two} different facets: \emph{(i) estimation} and \emph{(ii) high dimensional inference} for $\btheta_0$. %and (iii) \emph{estimation of the nuisance functions}.
Adopting a semi-perspective (as in Remark \ref{rem:semiparametric:framework}) and assuming $\btheta_0$ is $s$-sparse ($s \ll n$), we propose to estimate $\btheta_0$ via an \emph{$L_1$-regularized debiased and doubly robust (DDR) estimator} based on a high dimensional adaptation of the traditional double robust (DR) estimator's construction, along with careful use of debiasing and sample splitting techniques. The DDR estimator serves as the appropriate \emph{generalization} of standard (low dimensional) DR estimators \citep{Bang_2005, Chern_DDML_2018} for high dimensional parameters. We also present a simple user friendly \emph{implementation algorithm} for these estimators which can be achieved with standard software packages. The ambient high dimensionality (of both $\bX$ and $\btheta_0$) coupled with the missingness of $Y$ and the unavoidable presence of other nuisance function estimators (possibly also high dimensional) makes the analyses challenging and substantially \emph{nuanced} compared to the low dimensional case.
Under mild tail assumptions and arbitrarily chosen (working) models for estimating the two \emph{nuisance functions}, the propensity score (PS) and the outcome regression (OR) function, satisfying \emph{only} some \emph{high-level} (pointwise) consistency conditions, we establish \emph{finite sample performance bounds} for the DDR estimator showing its (optimal) $L_2$ error rate to be $\sqrt{s (\log d)/  n}$ whenever both working models are correct, and its consistency and DR properties when only one of the models is correct. Further, the estimators are first order \emph{insensitive} to any estimation errors or knowledge of construction of the PS and OR estimators, thereby allowing the use of non-smooth high dimensional and/or adhoc non/semi-parametric estimators with unclear first order properties.

Further, when both models are correct, we propose a \emph{desparsified version} of our DDR estimator that satisfies an \emph{asymptotic linear expansion} (ALE) and facilitates \emph{inference} on low dimensional components of $\btheta_0$. The desparsified DDR estimator is similar (in spirit) to a Debiased Lasso type approach \citep{VdG_2014, Javanmard_2014} and serves as its appropriate generalization in the missing data setting. Furthermore, the ALE it achieves is \emph{semi-parametric optimal} and matches the `efficient' influence function for this problem. Finally, we also discuss a variety of flexible \emph{choices of the nuisance function estimators}, including common high dimensional parametric models, as well as more general semi-parametric models based on series estimators and single index models. We also establish (in the \hyperref[supp_mat]{Supplementary Material}) general results for all these estimators that verify their required properties, %needed for our main results
and these may further be of independent interest. All our results on estimation, inference and the DR properties are validated via extensive simulation studies over various data settings, nuisance function (working) models and comparisons with other (optimal) oracle estimators. %The results in general are satisfactory and match our theoretical results. %where we also include for comparison the oracle (and optimal) estimator (with the nuisance functions presumed known) and also a hypothetical estimator based on a complete data. We find that the estimation performance matches the oracle estimator's performance when both nuisance models are correct, and also exhibits the DR property when only one of them is incorrect. %Extensive simulation results are presented to validate our claims

\paragraph*{Organization}
The rest of this paper is organized as follows. In Section \ref{estimation}, we detail our estimation strategy, including preliminaries on DR estimation, followed by
construction and implementation of the DDR estimator as well as deterministic deviation bounds on its performance. Section \ref{sec:prob:bounds} contains our main results (Theorems \ref{TZERO:THM}-\ref{RMPI:THM}), and the associated high-level assumptions, regarding  convergence rates of the DDR estimator. % via non-asymptotic probabilistic bounds for various error terms.
In Section \ref{sec:inference}, we discuss inference using the desparsified DDR estimator and establish all its properties in Theorem \ref{HDINF:THM}. In Section \ref{NUISANCE}, we discuss various choices of the nuisance function estimators. Finally, the simulation results are presented in Section \ref{sec:sim}, followed by a concluding discussion in Section \ref{Discussion}. In Appendices \ref{draspect}-\ref{pfs:nuisance} of the \hyperref[supp_mat]{Supplementary Material}, we collect several important materials that could not be accommodated in the main manuscript, including discussions on DR properties of the estimator, properties of the nuisance estimators (Theorems \ref{parametric:thm}-\ref{KS:mainthm2}), additional numerical results, and all technical materials, including the proofs of all our main results and the associated supporting lemmas.

\section[Estimation Strategy: A General Approach Based on L1-Regularized DDR Loss Minimization]{Estimation Strategy: A General Approach Based on $L_1$- Regularized Debiased and Doubly Robust (DDR) Loss Minimization}\label{estimation}

%\section[High Dimensional Sparse M-Estimation: A General Framework via Regularized DDR Loss Minimization]{High Dimensional M-Estimation and Sparse Signal Recovery: A General Framework via Regularized DDR Loss Minimization}\label{estimation}

\paragraph*{Notation} %\hspace{-0.1in}
We use the following general notations throughout. For any $\bv \in \mathbb{R}^d$, $\| \bv \|_r$ denotes the $L_r$ vector norm of $\bv$ for any $ r \geq 0$, $\overrightarrow{\bv}$ denotes $(1,\bv')' \in \R^{d+1}$, $\bvj$ denotes the $j^{th}$ coordinate of $\bv$ $\forall \; 1 \leq j \leq d$, $\Asc(\bv) := \{ j: \bvj \neq 0 \}$ denotes the support of $\bv$ and $s_{\bv} := |\Asc(\bv)|$ denotes the cardinality of $\Asc(\bv)$. For any $\Jsc \subseteq \{1, \hdots,d\}$ and $\bv \in \mathbb{R}^d$,  we let $\Pi_{\Jsc}(\bv)$ $: = [\bv_{[j]}1\{j \in \Jsc\}]_{j=1}^d \in \R^d$, $\Msc_{\Jsc} := \{ \bv \in \R^d : \Asc(\bv) \subseteq \Jsc \}$ and $\Msc_{\Jsc}^{\perp} := \{ \bv \in \R^d : \Asc(\bv) \subseteq \Jsc^c \}$, where $\Jsc^c := \{1, \hdots, d\} \backslash \Jsc$ denotes the complement of $\Jsc$. We use the shorthand $\Pi_{\bv}(\cdot)$ and $\Pi_{\bv}^c(\cdot)$ to denote $\Pi_{\Asc(\bv)}(\cdot)$ and $\Pi_{\Asc^c(\bv)}(\cdot)$ respectively. Further, for any measurable (and possibly random) function $f(\cdot)$ of $\bX$, we let $\| f(\cdot)\|_r := [\E_{\bX}\{|f(\bX)|^r\}]^{1/r}$ denote the $L_r$ norm of $f(\cdot)$ with respect to (w.r.t.) $\P_{\bX}$ for any $r \geq 1$ and $\|f(\cdot)\|_{\infty} := \sup_{\bx \in \Xsc}|f(\bx)|$ denote the $L_{\infty}$ norm w.r.t. $\P_{\bX}$. For any sequences $a_n, b_n \geq 0$, we use $a_n \lesssim b_n$ to denote $a_n \leq C b_n$ and $a_n \asymp b_n$ to denote $cb_n \leq a_n \leq Cb_n$ for all $n \geq 1$ and some constants $c, C > 0$. Finally, $a_n \ll b_n$ denotes $a_n = o(b_n)$ and $a_n \gg b_n$ denotes $b_n = o(a_n)$ as $n \rightarrow \infty$. %Finally, we use the abbreviation `w.h.p.'  to denote the phrase `with high probability'.

\subsection{Identification and Alternative Representations of the Expected Loss}\label{id_and_alt_rep}
We next provide three alternative representations of $\L(\cdot)$ in terms of the observables $(T,TY,\bX)$ and some \emph{nuisance functions} identifiable through them. These representations underlie three fundamental estimation strategies typically adopted in the literature, %for these problems,
namely inverse probability weighting (IPW) involving the propensity score $\pi(\cdot)$, regression based imputation (REG) involving the conditional mean $\phi(\cdot,\cdot)$, and \emph{doubly robust} (DR) methods that
combine the IPW and REG approaches %as well as regression based imputation
and provide the benefits of (double) robustness against model misspecification %in the estimation
of either one of the two nuisance functions $\pi(\cdot)$ and $\phi(\cdot,\cdot)$. DR estimators are also known to be (locally) semi-parametric optimal %achieve the semi-parametric efficiency bound
when both nuisance function estimation models are correctly specified; see \citet{Imbens_Review_2004, Bang_2005} for a review. %\citet{Robins_1994, Robins_1995, Imbens_Review_2004, Bang_2005, Kang_2007, Tsiatis_Book_2007} and \citet{Graham_2011} for a detailed review of the related classical literature. %on these methods.

\paragraph*{IPW and regression based representations of $\L(\cdot)$}\label{IPWrep:lemma}
\hspace{-0.1in} For any $\btheta \in \R^d$, we have:
\begin{eqnarray*}
\nonumber \L(\btheta) &\equiv& \E\{L(Y, \bX, \btheta)\} \; = \; \E_{\bX}\{\phi(\bX,\btheta)\} \; =: \; \LREG(\btheta) \;\; \mbox{(say), \;\; and} \\
\nonumber \L(\btheta) &\equiv& \E\{L(Y, \bX, \btheta)\} \; = \; \E\left\{ \frac{T}{\pi(\bX)} L(Y, \bX, \btheta) \right\} \; =: \; \LIPW (\btheta) \;\; \mbox{(say).} %\label{IPWrep:eqn}
\end{eqnarray*}

\paragraph*{Debiased and doubly robust (DDR) representation of $\L(\cdot)$}\label{DRrep:lemma}
\hspace{-0.1in}
It also holds that:
\begin{eqnarray}
%\nonumber \L(\btheta) &\equiv& \E\{L(Y, \bX, \btheta)\} \; \equiv \; \LREG(\btheta) \; \equiv \; \LIPW (\btheta) \\ %\E_{\bX}\{\phi(\bX,\btheta)\} \\
\L(\btheta) &=& \E_{\bX}\{\phi(\bX,\btheta)\} + \E \left[ \frac{T}{\pi(\bX)} \left\{ L(Y, \bX, \btheta) - \phi(\bX,\btheta) \right\} \right] \label{DRrep:eqn1} \\
\nonumber &=:& \LDR(\btheta) \;\; \mbox{(say)} \quad \forall \; \btheta \in \R^d.
\end{eqnarray}
Further, for any functions $\phi^*(\bX,\btheta)$ and $\pi^*(\bX)$ such that $\phi^*(\cdot , \cdot) = \phi(\cdot , \cdot)$ or $\pi^*(\cdot) = \pi(\cdot)$  holds, but \emph{not} necessarily both, it continues to hold that:
\begin{equation}
\LDR(\btheta) \; = \; \E_{\bX}\{\phi^*(\bX,\btheta)\} + \E \left[ \frac{T}{\pi^*(\bX)} \left\{ L(Y, \bX, \btheta) - \phi^*(\bX,\btheta) \right\} \right]. \label{DRrep:eqn2}
\end{equation}

$\LDR(\cdot)$, unlike $\LIPW(\cdot)$ and $\LREG(\cdot)$, is thus DR as it is `protected' against misspecification of either $\pi(\cdot)$ or $\phi(\cdot,\cdot)$, as in (\ref{DRrep:eqn2}).
Further, even when both are correctly specified, it has a naturally `debiased' form owing to the second term in \eqref{DRrep:eqn1}. While this term %(often also called the `augmented IPW' term)
is simply $0$ in the population version, it leads to \emph{crucial} first order benefits in the empirical version of the loss involving the nuisance function estimators, where it has a debiasing effect making the loss first order insensitive to any estimation errors of the nuisance functions. Approaches based on other representations don't enjoy these benefits which can be especially crucial in high dimensional settings. Further discussions on these nuances in a more general context can be found in the recent works of \citet{Chern_LRSE_2016, Chern_DDNML_2017, Chern_DDML_2018,Chern_Plugin_2018} and \citet{Chern_DDMLReisz_2018} on the use of \emph{Neyman orthogonal} scores for semi-parametric inference in the presence of (unknown) high dimensional nuisance functions.

Finally, note that all three identifications above are fully non-parametric. They require no further assumptions on $\P$ (apart from Assumption \ref{base:assmpns}). %They follow from simple uses of Assumption \ref{base:assmpns} (and iterated expectations) and require no further assumptions on $\P$.
%%They simply represent the risk $\L(\cdot)$ as functionals of the observables $(T, TY, \bX)$ and the nuisance functions $\pi(\bX)$ and $\phi(\bX,\btheta)$ which are both estimable from the observed data $\Dsc_n$.
The nuisance functions %$\pi(\bX)$ and $\phi(\bX,\btheta)$
are both estimable from the data. $\pi(\bX)$ is estimable from the data on $(T,\bX)$, while under MAR, $\phi(\bX,\btheta) = \E\{ L(Y,\bX,\btheta) \medgiven \bX, T = 1)$ is estimable from the `complete case' data. Note that in some cases, $\phi(\bX,\btheta)$ may itself involve $\E(Y \medgiven \bX)$. While the latter may sometimes also `motivate' the definition of $\btheta_0$ in \eqref{Mest:prob}, as in parametric regression problems, %based on parametric (working) models for $\E(Y \medgiven \bX)$,
this should \emph{not} be confused with its role as a nuisance function in the identifications of $\L(\cdot)$ above. In fact, it plays the \emph{same} role as a nuisance function here as it does for the special case of the mean/ATE estimation problem, where this role is well understood and commonly utilized. %(and its importance) is very well understood and it is common practice to estimate these nuisance functions and use them to implement the DR type estimators.
We emphasize that the same principle (and practice) continue to apply here for the general problem \eqref{Mest:prob}. %and it should not be confused with the other (unrelated) issue.

\subsection{Simplifying Structural Assumptions}\label{simplifying} For simplicity, we shall assume henceforth a structure on the derivative of $L(Y,\bX,\btheta)$ w.r.t. $\btheta$ as follows. For some functions $\bh(\bX) \in \R^d$ and $g(\bX,\btheta) \in \R$, we assume it takes the form:
\begin{equation}
\bnabla L(Y,\bX,\btheta) \; \equiv \; \frac{\partial}{\partial \btheta} L(Y,\bX,\btheta) \; = \; \bh(\bX) \{ Y - g(\bX,\btheta)\}. \label{lossfn:splform1}
\end{equation}
The structural assumption in \eqref{lossfn:splform1} is mostly for simplicity in the theoretical analyses of %regarding probabilistic bounds for
our proposed estimator. This form is satisfied by most standard loss functions used in practice, including the examples given in Section \ref{Mest_prob}. Extensions of our results to loss functions with more general structures may also be obtained easily albeit at the cost of less tractable technical conditions. %which we avoid here for brevity.

Under \eqref{lossfn:splform1}, the loss function $L(Y,\bX,\btheta)$ therefore takes the form:
\begin{equation}
 L(Y,\bX,\btheta) \; = \; \{\bh(\bX)' \btheta\}Y - f(\bX,\btheta) + C(Y,\bX), \;\; \mbox{where} \label{lossfn:splform1a}
\end{equation}
$f(\bX,\btheta)$ is the anti-derivative of $\bh(\bX) g(\bX,\btheta)$ w.r.t. $\btheta$ and $C(Y,\bX)$ is some function independent of $\btheta$, e.g. $C(Y,\bX) := Y^2$ for the squared loss.
Hence, %under \eqref{lossfn:splform1},
$\phi(\bX,\btheta) = \{\bh(\bX)' \btheta\}\E(Y \medgiven \bX) - f(\bX,\btheta) +  m_C(\bX)$ is convex and differentiable, where %$m(\bX) := \E(Y \medgiven \bX)$ and
$m_C(\bX) := \E\{ C(Y,\bX) \medgiven \bX\}$, %Further, $\phi(\bX,\btheta)$ is convex and differentiable in $\btheta$ with
and $\bnabla \phi(\bX,\btheta) := \frac{\partial}{\partial \btheta} \phi(\bX,\btheta) $ is given by:
\begin{equation}
\bnabla \phi(\bX,\btheta) \; = \; \bh(\bX) \{ m(\bX) - g(\bX,\btheta)\}, \;\; \mbox{where} \;\; m(\bX) \; := \; \E(Y \medgiven \bX). \label{lossfn:splform1b}
\end{equation}

Thus, given any estimates $\{\mhat(\bX),\mhat_C(\bX)\}$ of $\{m(\bX),m_C(\bX)\}$, one can estimate $\phi(\bX,\btheta)$ as: $\phihat(\bX,\btheta) := \{\bh(\bX)' \btheta\}\mhat(\bX) - f(\bX,\btheta) + \mhat_C(\bX)$. Further, $\phihat(\bX,\btheta)$ is also convex and differentiable in $\btheta$ and we have:
\begin{equation}
 \bnabla \phihat(\bX, \btheta) \; := \; \frac{\partial}{\partial \btheta}  \phihat(\bX,\btheta) \; = \; \bh(\bX) \{ \mhat (\bX) - g(\bX, \btheta)\}. \label{lossfn:splform2}
\end{equation}
Note that %to compute $\phihat(\bX,\btheta)$ explicitly, one needs both the estimates $\mhat(\cdot)$ and $\mhat_C(\cdot)$. However,
the part of $\phihat(\bX, \btheta)$ involving $\mhat_C(\cdot)$ is \emph{free} of $\btheta$. Our proposed estimator of $\btheta_0$ in Section \ref{sec:DDRest} is constructed using an $L_1$-regularized minimization (w.r.t. $\btheta$) %of an objective function
involving $\phihat(\cdot,\cdot)$, whereby only its gradient $\bnabla \phihat(\bX,\btheta)$ is of interest, and that depends only on $\mhat(\bX)$ due to \eqref{lossfn:splform2}. Thus, the part of $\phihat(\cdot,\cdot)$ involving $\mhat_C(\cdot)$ may be ignored for all practical purposes and we \emph{only} require an estimator $\mhat(\cdot)$ of $m(\cdot)$ for implementing our final estimator of $\btheta_0$.%wherein we \emph{only} require an estimator $\mhat(\cdot)$ of $m(\cdot)$ and an arbitrary choice of $\mhat_C(\cdot)$ to plug in and obtain the estimator $\phihat(\cdot)$.

\subsection[The L1-Regularized DDR Estimator]{The $L_1$-Regularized DDR Estimator}\label{sec:DDRest}
Let $\{\pihat(\cdot), \mhat(\cdot)\}$ be \emph{any} reasonable estimators of $\{\pi(\cdot), m(\cdot)\}$, %such that at least one (but not necessarily both) of them are correctly specified estimators
and we assume that $\pihat(\cdot)$ is obtained solely from the data $\{(T_i, \bX_i)\} _{i=1}^n$ (see Appendix \ref{discussion:errorterms} for more discussions).
Let $\phihat(\cdot,\cdot)$ be the corresponding estimator of $\phi(\cdot,\cdot)$ based on $\mhat(\cdot)$. We use sample splitting to further construct \emph{cross-fitted} versions of $\mhat(\cdot)$ and $\phihat(\cdot,\cdot)$, as follows.

\paragraph*{{Cross-fitted versions of $\mhat(\cdot)$ and $\phihat(\cdot,\cdot)$ based on sample splitting}}\label{crossfit}
Let $\{\Dscnone,$ $\Dscntwo\}$ denote a random partition (or split) of the original data $\Dsc_n$ into $\K = 2$ equal parts of size $\nbar := n/2$, where %denote their sizes, where without loss of generality (w.l.o.g.),
we assume w.l.o.g. that $n$ is even. Further, let $\Iscone$ and $\Isctwo$ respectively denote the index sets for the observations in $\Dscnone$ and $\Dscntwo$. Hence, %%with $\K \equiv 2$, we have: $\{\Dscnk\}_{k=1}^{\K}$ and $\{\Isck\}_{k = 1}^{\K}$ are disjoint partitions of $\Dsc$ and $\Isc$ respectively with
we have $\bigcup_{k=1}^{\K} \Isck = \Isc := \{1, \hdots, n\}$ and $\bigcup_{k=1}^{\K} \Dscnk = \Dsc_n$. %and $|\Isck| = \nbar \equiv n/\K$ $\; \forall \; k \in \{1, \hdots, \K \equiv 2\}$.

Given any general procedure for obtaining $\mhat(\cdot)$ and $\phihat(\cdot,\cdot)$ based on the full observed data $\Dsc_n$, let $\{\mhatone(\cdot), \phihat^{(1)}(\cdot, \cdot)\}$ and $\{\mhattwo(\cdot), \phihat^{(2)}(\cdot,\cdot)\}$ denote the corresponding versions of these estimators based on $\Dscnone$ and $\Dscntwo$ respectively. Then, we define the \emph{cross-fitted} estimates $\{ \mtil (\bX_i), \phitil(\bX_i,\btheta)\}_{i=1}^n$ of $\{m(\bX_i), \phi(\bX_i, \btheta)\}_{i=1}^n$ at the $n$ training points in $\Dsc_n$ as follows:
\begin{align}
\label{tmcont:mphicrsfit:defn} & \{ \mtil(\bX_i), \phitil(\bX_i, \btheta)\} =
\begin{cases}
\{\mhattwo(\bX_i), \phihat^{(2)}(\bX_i, \btheta)\} & \forall \; i \in \Isc_1, \quad \mbox{and} \\
\{\mhatone(\bX_i), \phihat^{(1)}(\bX_i, \btheta)\} & \forall \; i \in \Isc_2.
\end{cases}
\end{align}
A detailed discussion regarding the benefits (and virtual necessity) of considering these cross-fitted estimators is given in Appendix \ref{discussion:errorterms}. Further insights regarding the benefits of cross-fitting for general semi-parametric estimation problems in the presence of nuisance components can also be found in \citet{Chern_LRSE_2016, Chern_DDML_2018, Chern_Plugin_2018} and \citet{Newey_Crossfit_2018}. However, note also that we do \emph{not} require sample splitting for constructing the estimates $\{\pihat(\bX_i)\}_{i=1}^n$ as long as $\pihat(\cdot)$ is obtained only from the data on $\{(T_i, \bX_i)\}_{i=1}^n$.

%\begin{remark}\label{tmcont:genK:rem}
%\emph{
%While we focus here on the simple case of sample splitting with $\K = 2$, our notations and analyses are designed to easily accommodate the general case of $\K$-fold cross fitting for any fixed $\K \geq 2$. We stick to $\K = 2$ for simplicity and brevity of our arguments. Finally, note that the estimator $\bthetahatDR$ obtained via this cross-fitting procedure can also be replicated several times over different splittings of $\Dsc_n$, and then suitably combined over these replications to average out the (minor) randomness due to sample splitting.
%}
%\end{remark}

While we focus on $\K = 2$ for simplicity, our analyses can easily accommodate any (fixed) $\K \geq 2$.
Note also that the final estimator can be replicated several times %over different splittings %of $\Dsc_n$, and then suitably combined over these replications
to average out the (minor) randomness due to the cross-fitting.

\paragraph*{The estimator} Recall the DDR representation of the expected loss $\L(\btheta)$:
\begin{equation*}
 \LDR(\btheta) \; = \;\E_{\bX}\{\phi(\bX;\btheta)\} + \E \left[ \frac{T}{\pi(\bX)} \left\{ L(Y, \bX, \btheta) - \phi(\bX;\btheta) \right\} \right], %\\
\end{equation*}
and define its empirical version, based on the estimates $\{\phitil(\bX,\btheta), \pihat(\bX_i)\}_{i=1}^n$ plugged in, as follows. For any $\btheta \in \R^d$, let us define the \emph{empirical DDR loss} %
\begin{eqnarray}
%\nonumber \LnDR(\btheta) &:=& \frac{1}{n}\sum_{i=1}^n \phitil(\bX_i,\btheta) + \frac{1}{n}\sum_{i=1}^n \frac{T_i}{\pihat(\bX_i)} \left\{L(\Y_i,\bX_i, \btheta_i) - \phitil(\bX_i,\btheta)\right\} \\
&& \; \hspace{0.023in}\LnDR(\btheta) \; := \; \frac{1}{n}\sum_{i=1}^n \phitil(\bX_i,\btheta) + \frac{1}{n}\sum_{i=1}^n \frac{T_i}{\pihat(\bX_i)} \left\{L(Y_i,\bX_i, \btheta_i) - \phitil(\bX_i,\btheta)\right\}. \label{empDRloss}
\end{eqnarray}

With $\btheta_0$ (and $\bX$) possibly high dimensional, we shall need to assume that $\btheta_0$ is sparse with sparsity much smaller than $d$ when $d \gg n$. In general, we denote the sparsity of $\btheta_0$ as $s := \| \btheta_0\|_0 $ with $1 \leq s \leq d$. We
now propose to estimate $\btheta_0$ using the \emph{$L_1$-regularized DDR estimator}, $\bthetahatDR$, given by: %defined as follows.
\begin{equation}
\bthetahatDR \; \equiv \; \bthetahatDR(\lambda_n) \; = \; \underset{\btheta \in \R^d}{\arg \min} \; \left\{\LnDR(\btheta) + \lambda_n \| \btheta\|_1 \right\}, \label{penDRest}
\end{equation}
where
$\LnDR(\cdot)$ is as in \eqref{empDRloss} and $\lambda_n \geq 0$ denotes the regularization (or tuning) parameter. (For a classical setting with $d \ll n$, $\lambda_n$ may be set to $0$ if desired).%, although we do not pursue the theoretical analysis for this case)%(Under a classical setting with $p \ll n$, $\lambda_n$ may be set to $0$, if desired, although we do not pursue the theoretical analysis for this case).

\subsection{Simple Algorithm for Implementation}\label{sec:simple:algorithm}
The estimator $\bthetahatDR$ in \eqref{penDRest} can be implemented using a simple user-friendly imputation type algorithm.

Given the observed data $\Dsc_n$ and the estimates $\{ \pihat(\bX_i), \mtil(\bX_i) \}_{i=1}^n$,  define a set of \emph{pseudo outcomes} $\{\Ytil_i\}_{i=1}^n$ and the \emph{pseudo loss} $\LntilDR(\btheta)$ as follows:
\begin{equation}
\Ytil_i := \mtil(\bX_i) + \frac{T_i}{\pihat(\bX_i)}\{Y_i -  \mtil(\bX_i)\} \; \mbox{and} \; \LntilDR(\btheta) := \frac{1}{n}\sum_{i=1}^n L(\Ytil_i,\bX_i, \btheta). \label{eq:pseudo:outcome:def}
\end{equation}
Clearly $\LntilDR(\cdot)$ is convex and differentiable, and under \eqref{lossfn:splform1}-\eqref{lossfn:splform2}, it is easy to see that $\bnabla \LntilDR(\btheta) = \bnabla \LnDR(\btheta)$, where for any $f(\cdot)$, $\bnabla f(\btheta) := \frac{\partial}{\partial \btheta} f(\btheta)$. %$\bnabla \LntilDR(\btheta) := \frac{\partial}{\partial \btheta} \LntilDR(\btheta)$.

Further, observe that the solution for the minimization in \eqref{penDRest} is uniquely determined by the underlying normal equations (the KKT conditions) which \emph{only} depend on the gradient of $\LnDR(\cdot)$ and the subgradient of $\| \cdot \|_1$. Hence, the solution stays \emph{unchanged} if $\LnDR(\btheta)$ in \eqref{penDRest} is replaced by $\LntilDR(\btheta)$ which has the same gradient. Consequently, $\bthetahatDR$ in \eqref{penDRest} may also be defined as:
\begin{equation}
\bthetahatDR \; \equiv \; \bthetahatDR(\lambda_n) %\; \equiv \; \arg \min_{\btheta \in \R^d} \; \{\LnDR(\btheta) + \lambda_n\| \btheta \|_1 \}
\; := \; \arg \min_{\btheta \in \R^d} \; \{\LntilDR(\btheta) + \lambda_n\| \btheta \|_1 \}. \label{eq:simplealgo}
\end{equation}

Thus, if one `pretends' to have a fully observed data $\widetilde{\Dsc}_n := \{ (\Ytil_i, \bX_i)\}_{i=1}^n$ in terms of the pseudo outcomes $\Ytil_i$, then $\bthetahatDR$ can be simply obtained by a $L_1$-penalized minimization of the corresponding empirical risk for $L(\cdot)$ based on $\widetilde{\Dsc}_n$. This minimization is quite straightforward to implement and can be done so using standard statistical software packages (e.g. \texttt{`glmnet'} in \texttt{R}).

Note also that \eqref{eq:simplealgo} confirms our earlier claim that although the estimator $\phitil(\bX, \btheta)$ involved in the definition \eqref{empDRloss} of $\LnDR(\btheta)$ may require estimation of other nuisance functions (independent of $\btheta$) apart from $m(\bX)$, the implementation of $\bthetahatDR$ via the minimization in \eqref{penDRest}, or equivalently the one in \eqref{eq:simplealgo}, requires \emph{only} an estimator of $m(\bX)$, along with that of $\pi(\bX)$.

\subsection{Performance Guarantees: Deviation Bounds}\label{sec:dev:bound}

We next provide a \emph{deterministic} deviation bound regarding the finite sample performance of $\bthetahatDR$ that serves as the backbone for most of our main theoretical analyses. We begin with an assumption. Recall the notations introduced in Section \ref{estimation}.

\begin{assumption}[Restricted strong convexity]\label{strngconv_assmpn}
\emph{
We assume that the loss function $\LnDR(\btheta)$ satisfies a restricted strong convexity (RSC) property at $\btheta = \btheta_0$, as follows: $\exists$ a (non-random) constant $\kappaDR > 0$ such that
\begin{align}
& \delta \LnDR(\btheta_0; \bv) \; \geq \; \kappaDR \|\bv\|_2^2 \quad \forall \; \bv \in \C(\btheta_0), \quad \mbox{where} \;\; \forall \; \btheta,\bv \in \R^d,\label{strngconv_eqn} \\
& \;\; \delta \LnDR(\btheta; \bv)  \; := \; \LnDR(\btheta +\bv) - \LnDR(\btheta) - \bv'\{\bnabla\LnDR(\btheta)\} \nonumber \\
&  \;\; \mbox{and} \quad \C(\btheta_0)  \; := \; \{ \bv \in \R^d: \; \| \Pi_{\btheta_0}^c(\bv)\|_1  \leq   3 \| \Pi_{\btheta_0}(\bv)\|_1 \} \; \subseteq \; \R^d.  \nonumber
\end{align}
}
\end{assumption}
Assumption \ref{strngconv_assmpn}, largely adopted from \citet{Negahban_2012}, is one of the several restricted eigenvalue type assumptions that are standard in the high dimensional statistics literature. While we assume (\ref{strngconv_eqn}) %to hold
deterministically for any realization of $\Dsc_n$, %it only needs to hold a.s. $[\P]$ for some $\kappaDR > 0$. With
it can be relaxed with appropriate modifications to only hold with high probability (w.h.p.). It is important to note that owing to the very structure of $\LnDR(\cdot)$ in \eqref{empDRloss} and the assumed structures in \eqref{lossfn:splform1}-\eqref{lossfn:splform2} for $L(\cdot)$ and $\phitil(\cdot)$, the RSC condition \eqref{strngconv_eqn} is completely \emph{independent} of the quantities depending on the missingness aspect of the problem, i.e. $\delta \LnDR(\btheta_0; \bv)$ in \eqref{strngconv_eqn} is independent of $\{T_i, Y_i\}_{i=1}^n$ as well as the nuisance function estimates $\{ \pihat(\bX_i), \mtil(\bX_i)\}_{i=1}^n$. In fact, it is the \emph{same} as the corresponding version one would obtain in the case of a fully observed data. This fact also follows from the alternative definition of $\bthetahatDR$ in \eqref{eq:simplealgo} based on the pseudo outcomes and the pseudo loss $\LntilDR(\cdot)$ in \eqref{eq:pseudo:outcome:def}. % which satisfies $\delta\LntilDR(\btheta,\bv) = \delta \LnDR(\btheta,\bv)$.
Thus, verifying \eqref{strngconv_eqn} is \emph{equivalent} to verifying the same for a fully observed data which is quite well studied \citep{Negahban_2012, RudZhou_2013, Lecue_2014, KC_MBSG_2018, Vershynin_2018} for several standard problems under fairly mild conditions. This thereby provides an easy route to verifying the RSC condition \eqref{strngconv_eqn} under our setting.

%In general, for standard choices of the underlying loss function $L(u,v)$ and with $\bX$ sufficiently well behaved (e.g. sub-Gaussian) and $\Var(\bX) \succ \bzero$, (\ref{strngconv_eqn}) can be shown \citep{Vershynin_2012, Negahban_2012, Rud_Zhou_2013} to hold w.h.p. for some constant $\kappaDR > 0$ depending only on the sub-Gaussian norm of $\bX$ and the minimum eigenvalue of $\Var(\bX)$. We shall however stick to the deterministic formulation in (\ref{strngconv_eqn}) for simplicity.

\begin{lemma}[Deterministic deviation bounds for $\bthetahatDR$]\label{DEV:BOUND}
Assume $L(\cdot)$ is convex and differentiable in $\btheta$ and %its derivative
satisfies the form \eqref{lossfn:splform1}. %Let $\bthetahatDR$ be as defined in \eqref{penDRest} %or \eqref{eq:simplealgo},
Let Assumption \ref{strngconv_assmpn} hold, with $\kappaDR > 0$ as defined therein, and recall that $s := \| \btheta_0\|_0$. Then, for any realization of $\Dsc_n$ and for any choice of $\lambda \equiv \lambda_n \geq 2 \left\| \bnabla \LnDR (\btheta_0)\right\|_{\infty}$, %$\bthetahatDR$ in \eqref{penDRest} satisfies %we have:
\begin{equation}
\| \bthetahatDR - \btheta_0 \|_2 \; \leq \; 3 \sqrt{s} \frac{\lambda_n}{\kappaDR} \;\; \mbox{and} \;\;
\| \bthetahatDR - \btheta_0 \|_1 \; \leq \; 12s \frac{\lambda_n}{\kappaDR}. \label{det:bound}
\end{equation}
\emph{Convergence rates (informal statement).} %In Section \ref{sec:prob:bounds},
We establish via Theorems \ref{TZERO:THM}-\ref{RMPI:THM} later that under suitable assumptions (given in Section \ref{assumptions}), $\left\| \bnabla \LnDR (\btheta_0)\right\|_{\infty}$ $\lesssim \sqrt{(\log d)/n}$ w.h.p. Hence, choosing $\lambda \equiv \lambda_n \asymp \sqrt{(\log d)/n}$,  it follows that  %w.h.p.,
\begin{equation*}
\| \bthetahatDR - \btheta_0 \|_2 \; \lesssim \; \sqrt{\frac{s \log d}{n}} \;\; \mbox{and} \;\;
\| \bthetahatDR - \btheta_0 \|_1 \; \lesssim \; s\sqrt{\frac{ \log d}{n}} \;\; \mbox{w.h.p.}  \label{informal:rate:bound}
\end{equation*}
\end{lemma}
The deviation bounds \eqref{det:bound}, essentially an easy consequence of the results of \citet{Negahban_2012}, deterministically relate the $L_2$ and $L_1$ error rates of the estimator to the chosen $\lambda_n$ and provides an easy recipe for establishing its convergence rates by studying the same for the (random) lower bound of $\lambda_n$ given in Lemma \ref{DEV:BOUND}. %Hence, the \emph{main goal} from hereon is to analyze this (random) lower bound of $2 \left\| \bnabla \LnDR (\btheta_0)\right\|_{\infty}$ in Lemma \ref{DEV:BOUND} regarding the choice of $\lambda_n$.
This is the main goal of Section \ref{sec:prob:bounds}, where we obtain sharp non-asymptotic upper bounds for $\left\| \bnabla \LnDR (\btheta_0)\right\|_{\infty}$ converging to $0$ at satisfactory rates w.h.p. A choice of $\lambda_n$ of the order of this bound guarantees the requirement of $\lambda_n \geq 2 \left\| \bnabla \LnDR (\btheta_0)\right\|_{\infty}$ in Lemma \ref{DEV:BOUND} to hold w.h.p. and establishes the convergence rates, defined by the $\lambda_n$, for the bounds in \eqref{det:bound}.

Finally, note also that the (informal) bounds in the second part of Lemma \ref{DEV:BOUND} establish the obvious \emph{rate optimality} of the estimator since it matches the (well known) optimal estimation error rate for a fully observed data. % (that is included here a s a special case).

\section[The Main Results for the DDR Estimator: Convergence Rates and Probabilistic Bounds]{The Main Results for the DDR Estimator: Convergence Rate and Probabilistic Bounds for $\left\| \bnabla \LnDR (\btheta_0)\right\|_{\infty}$}\label{sec:prob:bounds}

For most of our theoretical analyses of $\left\| \bnabla \LnDR (\btheta_0)\right\|_{\infty}$, we will assume that $\{\pihat(\cdot), \mhat(\cdot)\}$ are both correctly specified estimators of $\{\pi(\cdot),m(\cdot)\}$. The analysis even for this case is involved (and necessarily non-asymptotic) due to the high dimensionality. %Even for this case, the analysis is quite involved
%to the presence of the nuisance function estimators %(that leads to controlling averages of dependent variables)
%and the inherent high dimensional setting. %which necessitates non-asymptotic bounds.

Under possible misspecification of one of the estimators, the DR property (in terms of consistency) of $\left\| \bnabla \LnDR (\btheta_0)\right\|_{\infty}$ and that of $\bthetahatDR(\lambda_n)$, for a suitably chosen $\lambda_n$ under Lemma \ref{DEV:BOUND}, indeed follows due to the very nature of construction of $\LnDR (\cdot)$ and its population version $\LDR (\cdot)$ outlined in \eqref{DRrep:eqn1}-\eqref{DRrep:eqn2}. This DR property is well known in classical settings \citep{Robins_1994, Robins_1995, Bang_2005} and should also be expected to hold in high-dimensional settings under suitable conditions. We discuss these DR properties further in Appendix \ref{draspect}. %Section \ref{draspect}.

One of the reasons behind considering the DDR representations $\LDR(\btheta)$ and $\LnDR(\btheta)$ is that apart from the obvious benefits of double robustness, the DDR loss has a naturally `debiased' form that provides \emph{crucial} technical benefits in controlling the associated error terms which are naturally `centered' in a certain sense (see Appendix \ref{discussion:errorterms} for more details on these technical aspects). %when both $\pihat (\cdot)$ and $\mhat(\cdot)$ are correctly specified. %, a setting where other approaches such as IPW and REG type estimators are also applicable in principle, but they don't enjoy such technical benefits.
The advantages of such debiased representations, especially in high dimensional settings, have also been studied in a more general context under the name of \emph{Neyman orthogonalization} in the recent works of \citet{Chern_LRSE_2016, Chern_DDNML_2017, Chern_DDML_2018, Chern_Plugin_2018} and \citet{Chern_DDMLReisz_2018}. %The DDR representation indeed (naturally) satisfies such an `orthogonal' structure.

%Lastly, even under possible misspecification of one of $\{\pihat(\cdot), \mhat(\cdot)\}$, we still sketch the DR property of $\left\| \bnabla \LnDR (\btheta_0)\right\|_{\infty}$ in Section \ref{draspect} later, but only in terms of consistency with the convergence rates being obtained based on a general but crude analysis. To obtain possibly faster convergence rates, one needs a more nuanced and careful analysis which, even in a classical setting, has to be done in a case-specific manner as the analysis and the first order properties (and convergence rates) now \emph{will} depend on the exact nature of construction of the estimators and their corresponding first order properties. Considering the main goals and scope of this paper, we suppress such finer analysis under those cases for the sake of simplicity and brevity.
%
%\paragraph*{The basic decomposition}
\subsection{The Basic Decomposition}\label{basic:decomp}
Let $\bTn := \bnabla \LnDR (\btheta_0) \in \R^d$ with $\| \bTn \|_{\infty}$ being our quantity of interest. We first note a decomposition of $\bTn$ as follows.
\begin{eqnarray}
\nonumber \bTn &=& \bTzeron + \bTpin - \bTmn - \bRpimn \\
     &:=& \frac{1}{n}\sum_{i=1}^n \bTzero(\bZ_i) + \frac{1}{n}\sum_{i=1}^n \bTpi(\bZ_i) - \frac{1}{n} \sum_{i=1}^n \bTm(\bZ_i) - \frac{1}{n}\sum_{i=1}^n \bRpim(\bZ_i), \label{decomp:eqn}
\end{eqnarray}
where $\bTzero(\bZ)$, $\bTpi(\bZ)$, $\bTm(\bZ)$ and $\bRpim(\bZ)$ with $\bZ = (T,\Y,\bX)$ are given by:
\begin{alignat}{2}
 & \bTzero(\bZ) && \; := \; \{m(\bX) - g(\bX,\btheta_0)\} \bh(\bX) + \frac{T}{\pi(\bX)} \{ Y - m(\bX)\} \bh(\bX) \label{tzero} \\
%\nonumber & \bTzero(\bZ) && \; := \; \bTzeroone(\bZ) \; + \; \bTzerotwo(\bZ), \quad \mbox{where} \\
%& \bTzeroone(\bZ) && \; := \; \{m(\bX) - g(\bX,\btheta_0)\} \bh(\bX) \quad \mbox{and} \label{tzero1} \\
%& \bTzerotwo(\bZ) && \; := \; \frac{T}{\pi(\bX)} \{ Y - m(\bX)\} \bh(\bX), \label{tzero2} \\
& \bTpi(\bZ) && \; := \; \left\{\frac{T}{\pihat(\bX)} - \frac{T}{\pi(\bX)} \right\} \{ Y - m(\bX)\}  \bh(\bX), \label{tpi}\\
& \bTm(\bZ) && \; := \; \left\{ \frac{T}{\pi(\bX)} - 1 \right\} \{\mtil(\bX) - m(\bX)\} \bh(\bX), \quad \mbox{and} \label{tm}\\
& \bRpim(\bZ) && \; := \; \left\{\frac{T}{\pihat(\bX)} - \frac{T}{\pi(\bX)}\right\} \{\mtil(\bX) - m(\bX)\} \bh(\bX). \label{rmpi}
\end{alignat}
In the decomposition \eqref{decomp:eqn}, $\bTzeron$ denotes the leading (first order) term, while $\bTpin$ and  $\bTmn$ denote the main error terms accounting for the estimation errors of $\pihat(\cdot)$ and $\mhat(\cdot)$ respectively, and $\bRpimn$ is a second order bias term involving the product of the estimation errors of $\pihat(\cdot)$ and $\mhat(\cdot)$.

%We next proceed towards our main goal of obtaining non-asymptotic probabilistic bounds for $\| \bTn \|_{\infty}$.
\paragraph*{Summary of results} We control $\| \bTn \|_{\infty}$ by separately controlling $\|\bTzeron\|_{\infty}$, $\|\bTpin \|_{\infty}$, $\|\bTmn\|_{\infty}$ and $\|\bRpimn\|_{\infty}$ through Theorems \ref{TZERO:THM}-\ref{RMPI:THM}. The results show that the convergence rate of $\| \bTn \|_{\infty}$ is determined primarily by that of the leading term $\| \bTzeron \|_{\infty}$ while the rates of the other three terms are of a (faster) lower order. In particular, we show that under suitable assumptions,
\[
\|\bTzeron\|_{\infty} \lesssim \sqrt{\frac{\log d}{n}} \;\; \mbox{and} \;\; \|\bTpin \|_{\infty} + \|\bTmn\|_{\infty}+ \|\bRpimn\|_{\infty} \lesssim \sqrt{\frac{\log d}{n}} o(1)
\]
w.h.p. The results (proved in Appendices \ref{pf:tzero:thm}-\ref{pf:rmpi:thm}) are all non-asymptotic (with precise constants) and involve careful analyses via concentration inequalities to account for the nuisance function estimators and the high dimensionality. %These technical tools are collected in Appendix \ref{techtools}.

\begin{remark}[Generality of the results]\label{rem:firstorderinsensitive}
\emph{
It is important to note that our results here are completely \emph{free} in terms of choice of the nuisance function estimators. The analysis and the convergence rates are \emph{first order insensitive} to any estimation errors of the nuisance functions and hold \emph{regardless} of any knowledge of the construction and/or first order properties of the estimators, %how these estimators are obtained and what their first order properties are,
as long as they satisfy some basic high-level conditions on their convergence rates. This is also largely an artifact of the debiased form of the DDR loss.
}
\end{remark}

\subsection{The Assumptions Required}\label{assumptions}
We first summarize the main assumptions required for controlling the various terms in \eqref{decomp:eqn}. We begin with a few standard assumptions on the tail behaviors of some key random variables. % involved in our analyses.

\begin{assumption}[Sub-Gaussian tail behaviors]\label{subgaussian:assmpn}
\emph{\noindent (a) We assume that $\varepsilon(\Z) := Y - m(\bX)$, $\psi(\bX) := m(\bX) - g(\bX, \btheta_0)$ and $\bh(\bX)$ are sub-Gaussian (as per Definition \ref{orlicz:def} in Appendix \ref{techtools}, with $\alpha = 2$ therein) with $\psitwonorm{\varepsilon} \leq \sigmaeps$,  $\psitwonorm{\psi(\bX)} \leq \sigmapsi$ and $\psitwonorm{\bh(\bX)} \leq \sigmabh$, for some constants $\sigmaeps, \sigmapsi, \sigmabh \geq 0$. \smallskip %where the sub-Gaussian behaviors are as defined in Definition \ref{subgauss:def}.
}

\emph{
(b) For controlling $\bTpin$, we additionally assume that $\{\varepsilon(\Z) \given \bX\}$ is (conditionally) sub-Gaussian with $\psitwonorm{\varepsilon(\Z) \given \bX} \leq \sigmaeps(\bX)$ for some function $\sigmaeps(\cdot) \geq 0$ such that $\| \sigmaeps(\cdot) \|_{\infty}$ $\leq \sigmaeps < \infty$ with $\sigmaeps$ being as in part (a) above.
}
\end{assumption}
Next, we discuss the basic high-level conditions required on $\pihat(\cdot)$ and $\mhat(\cdot)$. %the behavior and convergence rates of the nuisance function estimators $\pihat(\cdot)$ and $\mhat(\cdot)$. Further discussions on the assumptions are given in Remarks \ref{assmpns:rem1}-\ref{assmpns:rem2}.

\begin{assumption}[Tail bounds on the pointwise behavior of $\pihat(\cdot) - \pi(\cdot)$]\label{tpicont:assmpn}
\emph{
We assume that $\pihat(\cdot)$ is obtained solely from the data $\Xsc_n := \{ (T_i, \bX_i)\}_{i=1}^n \subseteq \Dsc_n$, and for some sequences $\vnpi \geq 0$ and $q_{n,\pi} \in [0,1]$, with $\vnpi = o(1)$ and $q_{n,\pi} = o(1)$, $\pihat(\cdot) - \pi(\cdot)$ satisfies a (pointwise) tail bound at the $n$ training points $\{\bX_i\}_{i=1}^n$ as follows: for any $t \geq 0$ and for some constant $C \geq 0$,
\begin{equation}
\P \{ | \pihat(\bX_i) - \pi(\bX_i) | > t \vnpi \} \; \leq \; C \exp( - t^2) + q_{n, \pi} \;\; \forall \; 1 \leq i \leq n, \label{pi:tailbound2}
\end{equation}
and we further assume that $ v_{n,\pi}\sqrt{\log (nd)} = o(1)$ and $q_{n,\pi} = o(n^{-1}d^{-1})$. %The bound \eqref{pi:tailbound2} can also be equivalently stated as
}
\end{assumption}

\begin{assumption}[Pointwise tail bounds on $\mhat(\cdot) - m(\cdot)$]\label{tmcont:assmpn}
\emph{
For a generic version of $\mhat(\cdot)$ obtained from a data of size $n$ (e.g. $\Dscn$), we assume that for some sequences  $v_{n,m} \geq 0$ and $q_{n,m} \in [0,1]$, with $v_{n,m} = o(1)$ and $q_{n,m}  = o(1)$,
$\mhat(\cdot) - m(\cdot)$ satisfies a (pointwise) tail bound  at any fixed $\bx \in \Xsc$ as follows: for any $t \geq  0$ and for some constant $C > 0$,
\begin{align}
& \P \{ | \mhat(\bx) - m(\bx) | > t \vnm \} \; \leq \;  C \exp ( -t^2) + q_{n,m}, \;\; \mbox{so that} \label{m:tailbound1} \\
& \P \{ | \mhatk(\bX_i) - m(\bX_i) | > t \vnbarm \} \; \leq \; C \exp ( -t^2) + q_{\nbar,m}, \;\; \forall \; k = 1,2 \label{m:tailbound2}
\end{align}
and $\bX_i \in \Dscnkp \ind \Dscnk$ with $k' \neq k \in \{1,2\}$, where $\nbar := n/2$ and $\mhatk(\cdot)$ denotes the version of $\mhat(\cdot)$ obtained from $\Dscnk$ with size $\nbar \equiv n/2$.
%}
%\emph{
Further, we assume that $v_{\nbar,m} \sqrt{\log (nd)} = o(1)$ and $q_{\nbar,m} = o(n^{-1}d^{-1})$. %and that $\vnpi \vnbarm (\log n)  = o\{\sqrt{(\log d)/n}\}$, where $\vnpi$ is as in Assumption \ref{tpicont:assmpn}.
}
\end{assumption}

%\subsection{Discussion on the Assumptions for the Nuisance Function Estimates}\label{discussion:assmpn}

\begin{remark}\label{assmpns:rem1}
\emph{
Assumptions \ref{tpicont:assmpn} and \ref{tmcont:assmpn} are both fairly mild and general (high-level) conditions that should be expected to hold for most reasonable estimators $\{\pihat(\cdot), \mhat(\cdot)\}$ of $\{\pi(\cdot), m(\cdot)\}$. Note that \eqref{pi:tailbound2}, \eqref{m:tailbound1} and \eqref{m:tailbound2} are all conditions on the \emph{pointwise}  behaviors of  $\pihat(\cdot) - \pi(\cdot)$ and $\mhat(\cdot) - m(\cdot)$, and do \emph{not} require any uniform tail bounds over all $\bx \in \Xsc$, such as bounds on the $L_{\infty}$ or $L_2$ errors of $\{\pihat(\cdot),\mhat(\cdot)\}$.
%$L_{\infty} errors  $\|\pihat(\cdot) - \pi(\cdot)\|_{\infty}$ and $\|\mhat(\cdot) - m(\cdot) \|_{\infty}$, or $L_2$ errors $\|\pihat(\cdot) - \pi(\cdot)\|_2$ and $\|\mhat(\cdot) - m(\cdot) \|_2$.
Such conditions are much stronger and also generally harder to verify in high dimensional settings.
%}
%
%\emph{
We simply require pointwise tail bounds for the errors $\pihat(\bX_i) - \pi(\bX_i)$ and $\mhat(\bx) - m(\bx)$, ensuring that they have well-behaved tails. The sequences $\{\vnpi, \vnm\}$ indicate the convergence rates of the estimators, while $\{q_{n,\pi}, q_{n,m}\}$ in the probability bounds allow to rigourously account for any potential lower order terms.
%Finally, note that we require explicit tail bounds, as opposed to assumptions only on the asymptotic rates, since $\bT_n$ is high dimensional and these error terms are directly involved in analyzing the (non-asymptotic) behavior of $\| \bT_n\|_{\infty}$ that we need to control. %behavior and rate of $\| \bTn \|_{\infty}$ with $\bT_n$ being high dimensional.
}
\end{remark}

\begin{remark}[Sufficient conditions for Assumptions \ref{tpicont:assmpn} and \ref{tmcont:assmpn}]\label{assmpns:rem:extra}
\emph{
In general, for any estimator $\pihat(\cdot)$ satisfying a high probability guarantee of the form: $|\pihat(\bX_i) - \pi(\bX_i)| \leq v_n$ with probability at least $1 - q_n$, the bound \eqref{pi:tailbound2} %in Assumption \ref{tpicont:assmpn}
can be shown to hold with $\{\vnpi, q_{n,\pi} \} \propto \{v_n, q_n\}$, through a simple use of Hoeffding's inequality (see Lemma \ref{lem:7:hptosgtypebound} in this regard). Similarly, for any estimator $\mhat(\cdot)$ satisfying a high probability bound: $|\mhat(\bx) - m(\bx)| \leq v_n$ with probability at least $1 - q_n$, the bounds \eqref{m:tailbound1}-\eqref{m:tailbound2} %in Assumption \ref{tpicont:assmpn}
can be shown to hold with $\{\vnm, q_{n,m} \} \propto \{v_n, q_n\}$.
These high probability bounds are expected to be satisfied by most reasonable estimators and hence, so are our assumptions.
}
\end{remark}

\begin{remark}[Examples]\label{assmpns:rem2}
\emph{
In Section \ref{NUISANCE}, we discuss several choices of the estimators $\pihat(\cdot)$ and $\mhat(\cdot)$ based on parametric families, `extended' parametric families (series estimators) and semi-parametric single index families. For all these estimators, we establish precise tail bounds (see Theorems \ref{parametric:thm}, \ref{KS:mainthm1} and \ref{KS:mainthm2}) that are generally useful and should be of independent interest. Among other implications, they also verify the bounds in Assumptions \ref{tpicont:assmpn} and \ref{tmcont:assmpn}.
}
\end{remark}

\subsection{Controlling the Leading Order Term}\label{cont:tzero}

The following result quantifies the behavior and convergence rate of the first order term $\|\bTzeron\|_{\infty}$ in \eqref{decomp:eqn}. %which is essentially the `leading order' term  but also has the simplest structure and is easiest to control among all terms in \eqref{decomp:eqn}.

\begin{theorem}[Control of $\| \bTzeron\|_{\infty}$]\label{TZERO:THM}
\hspace{-0.13in} Under Assumptions \ref{base:assmpns} and \ref{subgaussian:assmpn} (a), %and with the constants $\sigma_0, K_0 > 0$ as defined above, we have: for any $\epsilon > 0$,
\begin{equation*}
\P \left( \left\| \bTzeron \right\|_{\infty} > \sqrt{2} \sigma_0 \epsilon + K_0 \epsilon^2 \right) \; \leq \; 4 \exp\left( - n \epsilon^2 + \log d \right) \quad \mbox{for any} \;\; \epsilon \geq 0, \label{btzeron:thm:finalbound}
\end{equation*}
where $\sigma_0 := 2\sqrt{2}\sigmabh(\sigmapsi + \sigmaeps \deltapi^{-1})$, $K_0 := 2\sigmabh(\sigmapsi + \sigmaeps \deltapi^{-1}) $ and $(\deltapi, \sigmaeps, \sigmabh, \sigmapsi)$ are as defined in the assumptions. In particular, setting $\epsilon = c \sqrt{(\log d)/n}$ for any constant $c > 1$, we have: with probability at least $1 - 4 d^{-(c^2-1)}$,
\begin{equation*}
%\nonumber && \mbox{With probability} \;\; \geq \;\; 1 - \frac{4}{d^{c^2-1}},\\
\left\| \bTzeron \right\|_{\infty} \; \leq \; c\sqrt{\frac{\log d}{n}} \sqrt{2}\sigma_0 + c^2\frac{\log d}{n} K_0  \;\; \lesssim \; \sqrt{\frac{\log d}{n}}. \label{btzeron:easybound}
\end{equation*}
\end{theorem}
%Here and throughout, all our `high probability' bounds are presented with a logarithmic factor of $\sqrt{\log (nd)}$ in the rates to ensure that the probability guarantee of $1 - O\{ (nd)^{-1}\}$ converges to $1$ \emph{regardless} of whether $d$ diverges with $n$ or not. The more familiar rate involving $\sqrt{\log d}$ and a probability guarantee of $1 - O(d^{-1})$ may also be recovered by an analogous choice of $\epsilon$. This argument also applies to all subsequent results and will not be repeated.

\subsection{Controlling the Error Term from the Propensity Score's Estimation}\label{cont:tpi}

Next, we propose the following result to control the error term $\bTpin$ in (\ref{decomp:eqn}).

\begin{theorem}[Control of $\| \bTpin \|_{\infty}$]\label{TPI:THM}
Let Assumptions \ref{base:assmpns}, \ref{subgaussian:assmpn} and \ref{tpicont:assmpn} hold with $(\vnpi, q_{n,\pi})$ and $(\deltapi,\sigmaeps, \sigmabh, C)$ as defined therein. %and let $\bmusqbhinf := \max\{ \E\{\bhj^2(\bX)\}: j = 1, \hdots, d\}$.
Then, for any constants $c, c_1, c_2, c_3 > 1$, where we assume $ c_2 \vnpi \sqrt{\log (nd)}  \leq \deltapi/2 < \deltapi$ and $c_3 \sqrt{(\log d)/n} < 1$ w.l.o.g., we have: with probability at least $1 - 2d^{-(c^2-1)} - 4d^{-(c_3^2-1)} -  2C (nd)^{-(c_1^2-1)} - 2C (nd)^{-(c_2^2-1)} - 4q_{n,\pi}(nd)$, %$1 - \frac{2}{d^{c^2-1}} - \frac{4}{d^{c_3^2-1}} - \sum_{j=1}^2\frac{2C}{(nd)^{c_j^2-1}} - 4q_{n,\pi}(nd)$,
%\begin{equation*}
% \mbox{With probability} \; \geq \; 1 - \frac{2}{d^{c^2-1}} - \frac{4}{d^{c_3^2-1}} - \sum_{j=1}^2\frac{2C}{(nd)^{c_j^2-1}} - 4q_{n,\pi}(nd),
%\end{equation*}
\begin{equation*}
\|\bTpin\|_{\infty} \; \leq \; c\sqrt{\frac{\log d}{n}} \{\vnpi \sqrt{\log (nd)}\} C_1 \left( \frac{\bmusqbhinf}{\deltapi} + C_2 \sqrt{\frac{\log d}{n}} \right)^{\half},
\end{equation*}
where $\bmusqbhinf := \max_{1 \leq j \leq d}\E\{\bhj^2(\bX)\}$, $C_1 := c_1 (4\sqrt{2} \sigmaeps/\deltapi)$ and $C_2 := c_3 (\sqrt{2}\sigma_{\pi} + K_{\pi})$ with $\sigma_{\pi} := 2 \sqrt{2}\sigmabh^2 \deltapi^{-2}$ and $K_{\pi} := 2\sigmabh^2 \deltapi^{-2}$ being constants.
\end{theorem}
\begin{remark}\label{tpi:rem}
\emph{
Theorem \ref{TPI:THM} therefore shows that $\|\bTpin\|_{\infty} \lesssim \sqrt{(\log d)/ n}$ $\{\vnpi \sqrt{\log (nd)}\}  = o\{\sqrt{(\log d)/ n}\}$ w.h.p. The proof is given in Appendix \ref{pf:tpi:thm}. %In the proof of Theorem \ref{TPI:THM}, we also provide a general result (Theorem \ref{tpi:thm:tailbound}) on tail bounds for $\bTpin$.
}
\end{remark}

\subsection{Controlling the Error Term from the Conditional Mean's Estimation}\label{cont:tm}
We now control the error term $\bTmn$ in (\ref{decomp:eqn}) involving the cross-fitted estimates $\{\mtil(\bX_i)\}_{i=1}^n$ obtained via sample splitting, through the result below.

\begin{theorem}[Control of $\| \bTmn \|_{\infty}$]\label{TM:THM}
%Assume the setup, conditions and results of Theorem \ref{tm:thm:tailbound}, and adopt all notations used therein.
Let Assumptions \ref{base:assmpns}, \ref{subgaussian:assmpn} (a) and \ref{tmcont:assmpn} hold, with $(\vnbarm, q_{\nbar,m})$, $\nbar \equiv n/2$ and %the constants
$(\deltapi, \sigmabh, C)$ as defined therein.
Then, for any constants $c, c_1, c_2 > 1$, where we assume $c_2 \sqrt{(\log d)/\nbar} < 1$ w.l.o.g., %we have:
with probability at least $1- 4d^{-(c^2-1)} - 8d^{-(c_2^2-1)} - 4C(\nbar d)^{-(c_1^2-1)} - 4 q_{\nbar,m} (\nbar d)$, %$1- \frac{4}{d^{c^2-1}} - \frac{8}{d^{c_2^2-1}} - \frac{4C}{(\nbar d)^{c_1^2-1}} - 4 q_{\nbar,m} (\nbar d)$,
%\begin{equation*}
% \mbox{With probability} \; \geq \; 1 - \frac{4}{d^{c^2-1}} - \frac{8}{d^{c_2^2-1}} - \frac{4C}{(\nbar d)^{c_1^2-1}} - 4 q_{\nbar,m} (\nbar d),\\
%\end{equation*}
\begin{equation*}
\|\bTmn\|_{\infty} \; \leq \; c \sqrt{\frac{\log d}{n}} \{ \vnbarm \sqrt{\log (n d)}\} C_1^* \left( \bmusqbhinf + C^*_2 \sqrt{\frac{\log d}{n}}\right)^{\half},
\end{equation*}
where $\bmusqbhinf$ is as in Theorem \ref{TPI:THM}, $C_1^* := 4 c_1 \deltapibar$ and $C_2^* := \sqrt{2}c_2 (\sqrt{2}\sigma_m + K_m)$, with $\sigma_m := 2 \sqrt{2} \sigmabh^2$, $K_m := 2 \sigmabh^2$ and $\deltapibar \leq \deltapi^{-1}$ being constants.
\end{theorem}
\begin{remark}\label{tm:rem}
\hspace{-0.05in}
\emph{
Theorem \ref{TM:THM} therefore shows that $\|\bTmn\|_{\infty} \lesssim \sqrt{(\log d )/ n} $ $\{\vnbarm \sqrt{\log (n d)} \} = o\{\sqrt{(\log d)/n}\}$ w.h.p. The proof is given in Appendix \ref{pf:tm:thm}.%In the proof of Theorem \ref{TM:THM}, we also provide a general result (Theorem \ref{tm:thm:tailbound}) on tail bounds for $\bTmn$.
}
\end{remark}

\subsection{Controlling The Lower Order Term}\label{cont:rmpi}
Finally, we control the second order error (or bias) term $\bRpimn$ in (\ref{decomp:eqn}) through the following result. %involving the random variable $\bRpim(\bZ)$ defined in (\ref{rmpi}).

\begin{theorem}[Control of $\| \bRpimn \|_{\infty}$]\label{RMPI:THM}
Let Assumptions \ref{base:assmpns}, \ref{subgaussian:assmpn}, \ref{tpicont:assmpn} and \ref{tmcont:assmpn} hold with $(\vnpi,  q_{n,\pi})$, $(\vnbarm, q_{\nbar,m}, \nbar)$ and $(\deltapi, C)$ as defined therein, and assume that $\vnpi \vnbarm (\log n) = o\{ \sqrt{(\log d)/n}\}$. Then, for any constants $c_1,c_2,c_3,c_4 > 1$ with $ c_2 \vnpi \sqrt{\log n} \leq \deltapi/2 < \deltapi$ and $c_4 \sqrt{(\log d)/n } < 1$, we have: with probability at least $1 - \sum_{j=1}^3 C n^{-(c_j^2 - 1)} - 2d^{-(c_4^2 - 1)} - 2nq_{n,\pi} - n q_{\nbar,m}$, %$1 - \sum_{j=1}^3\frac{C}{n^{c_j^2 - 1}} - \frac{2}{d^{c_4^2 - 1}} - 2nq_{n,\pi} - n q_{\nbar,m},$
\begin{equation*}
%&& \mbox{With probability} \; \geq \; 1 - \sum_{j=1}^3\frac{C}{n^{c_j^2 - 1}} - \frac{2}{d^{c_4^2 - 1}} - 2nq_{n,\pi} - n q_{\nbar,m}, \\
\left\| \bRpimn \right\|_{\infty}\; \leq  \; c_1 c_3 \bar{C}_1  \{ \vnpi \vnbarm (\log n)\} %\rpin \rmn
\left(\bmumodbhinf + c_4\bar{C}_2\sqrt{\frac{\log d}{n}}\right),  \;\; \mbox{where}
\end{equation*}
%$\rpin := \vnpi \sqrt{\log n} $ and $\rmn := \vnbarm \sqrt{\log n }$ with $\rpin \rmn = o\{\sqrt{(\log d)/n}\}$,
$\bmumodbhinf := \max_{1 \leq j \leq d}\E\{ |\bhj(\bX)| \}$ and $\bar{C}_1 := 2/\deltapi$, $\bar{C}_2 := \sqrt{2}\sigma_{\pi,m} + K_{\pi,m}$ are constants with $\sigma_{\pi,m} := 4 \sigmabh\deltapi^{-1} $ and $K_{\pi,m} := 2 \sqrt{2} \sigmabh \deltapi^{-1}$.
\end{theorem}
\begin{remark}\label{rmpi:rem}
\hspace{-0.115in}
\emph{
Thus, Theorem \ref{RMPI:THM} shows $\|\bRpimn\|_{\infty} \lesssim \vnpi \vnbarm (\log n) %\rpin \rmn
= o\{\sqrt{(\log d)/n}\}$ w.h.p. where the last step is by assumption, a sufficient condition for which is $\max\{\vnpi, \vnbarm \} (\log n)^{1/2} \lesssim \{(\log d)/n \}^{1/4}$.
%$\max\{\rpin, \rmn\} \lesssim \{(\log d)/n\}^{0.25}$.
Conditions of this flavor are well known and standard in the mean (or ATE) estimation literature, where they are routinely adopted to control these kind of second order (product-type) bias terms \citep{Farrell_2015, Chern_DDML_2018}. In Theorem \ref{rmpi:thm:tailbound}, %the proof of Theorem \ref{RMPI:THM},
we provide a more general result on tail bounds for $\bRpimn$.
}
\end{remark}

\section[High Dimensional Inference via the DDR Estimator: Asymptotic Linear Expansions]{High Dimensional Inference via the DDR Estimator: Desparsification and Asymptotic Linear Expansion}\label{sec:inference}

We next discuss a debiasing/desparsification approach for the DDR estimator $\bthetahatDR$ which is useful for establishing an estimator with an \emph{asymptotic linear expansion} (ALE), a property not possessed by the $L_1$-regularized shrinkage type estimator $\bthetahatDR$. Such expansions form the fundamental ingredients for high dimensional inference as they automatically lead to asymptotic normality (and hence confidence intervals, p-values, tests etc.) for low dimensional components of $\btheta_0$, thus paving the way for inference on $\btheta_0$ among many other implications. For a fully observed data (much unlike the setting we have here), such problems have received substantial attention in recent times \citep{VdG_2014, Javanmard_2014, Javanmard_2018, Cai_2017}. %much unlike the case of incomplete data as we have here.

For simplicity, we restrict our discussion here to the case of the squared loss: $L(Y, \bX, \btheta) =  \{ Y - \bPsi(\bX)'\btheta \}^2$ where $\bPsi(\bX) \equiv \{ \bPsi_{[j]}(\bX) \}_{j=1}^d \in \R^d$ denotes some basis functions (possibly high dimensional) of $\bX$. While more general loss functions can also be handled similarly, the corresponding results and conditions can be technically more involved. We choose to skip such analyses here given the scope and content of the current work. Note that $L(\cdot)$  satisfies \eqref{lossfn:splform1} with $\bh(\bx) = -2 \bPsi(\bx)$ and $g(\bx,\btheta) = \bPsi(\bx)'\btheta$. The case $\bPsi(\bX) = (1, \bX')'$ corresponds to standard linear regression.  For convenience, let us also define:
\begin{equation*}
 \bSigma \; := \; \E \{ \bPsi(\bX) \bPsi(\bX)' \}, \;\; \bSigmahat \; := \; \frac{1}{n} \sum_{i=1}^n \bPsi(\bX_i) \bPsi(\bX_i)' \;\; \mbox{and} \;\; \bOmega \; := \; \bSigma^{-1}, %\\
\end{equation*}
where we assume that $\E\{ \|\bPsi(\bX) \|_2^2\} < \infty$ and $\bSigma$ is positive definite, so that $\bSigma$ and the precision matrix
$\bOmega$ are both well-defined and well-conditioned.
With $L(\cdot)$ as above, note that we have: $\E\{ \bnabla^2 L(Y,\bX, \btheta)\} = 2 \bSigma$ and its inverse is $\frac{1}{2} \bOmega$ for any $\btheta$,
where for any function $f(\btheta)$, $\bnabla^2 f(\btheta)$ denotes its Hessian matrix w.r.t. $\btheta$. Further, we also have: $\bnabla^2 \LnDR(\btheta) = \bnabla^2 \LntilDR(\btheta) = 2\bSigmahat$.

\subsection{The Desparsified DDR Estimator}\label{desparsified:est}
Let $\bOmegahat$ be \emph{any} reasonable estimator of the precision matrix $\bOmega$ based on
the observed data $\Dsc_n$. Then, given the original $L_1$-regularized DDR estimator $\bthetahatDR$ in \eqref{penDRest}, or as %equivalently
in \eqref{eq:simplealgo}, we define the corresponding \emph{desparsified DDR estimator} $\bthetatilDR$  as follows.
\begin{align}
 \bthetatilDR & \; := \; \bthetahatDR - \frac{1}{2} \bOmegahat \bnabla \LnDR( \bthetahatDR) \; \equiv \; \bthetahatDR - \frac{1}{2} \bOmegahat \bnabla \LntilDR( \bthetahatDR) \label{desparsify:est:def}  \\
 & \;\; = \;  \bthetahatDR + \bOmegahat \frac{1}{n} \sum_{i=1}^n \{ \Ytil_i - \bPsi(\bX_i)'\bthetahatDR\}  \bPsi(\bX_i), \;\; \mbox{where}  \nonumber %\label{desparsify:est:altdef}
\end{align}
$\Ytil_i \equiv \mtil(\bX_i) + \{T_i /\pihat(\bX_i)\}  \{Y_i - \mtil(\bX_i)\}$ are the \emph{pseudo outcomes} as in \eqref{eq:pseudo:outcome:def}. %Section \ref{sec:simple:algorithm}.

The desparsification step in \eqref{desparsify:est:def} is similar in spirit to that of \citet{VdG_2014}, while accounting for a more general and complex setting here involving missing responses. It serves as the appropriate \emph{generalization} of their approach when adapted to this setting. As seen from the representation in the final step, the debiasing step \emph{still} uses the full data \emph{but} with the pseudo outcomes $\Ytil_i$ instead of the true $Y_i$. For a fully observed data with $\Ytil_i = Y_i$, this indeed reduces to the usual Debiased Lasso estimator of \citet{Javanmard_2014}.
In addition, we also allow for misspecified models, non-Gaussian settings and covariate transformations, unlike most of the relevant existing %high dimensional inference
literature (with the exception of \citet{Buhlmann_2015}).

It should be noted that the principle of debiasing has also been used extensively %is not entirely new or restricted to high dimensional inference. They have been used extensively
in the classical semi-parametric inference literature, where it is often called \emph{one-step update} \citep{VDV_Book_2000} and is used to obtain efficient estimators starting from an initial (inefficient) estimator. In our setting, the `update' is used more as a bias correction to obtain an estimator with an ALE starting from a shrinkage estimator that has no such desirable properties. In classical settings, such ALEs are also known as \emph{Bahadur representations.}

\paragraph*{Choice of $\bOmegahat$} Since the debiasing still involves the full data (with the pseudo outcomes), the estimator $\bOmegahat$ is exactly the \emph{same} as that used for a standard fully observed data. This is again largely due to the structure of the DDR loss (and the debiasing term therein). Consequently, one pays no price for the missing outcomes as far as the estimation of $\bOmega$ and the associated conditions are concerned, and can borrow any standard precision matrix estimator from the literature. Several such examples exist depending on the setting (low or high dimensional). In the former case, one can simply choose $\bSigmahat^{-1}$, while for the latter, under sparsity assumptions on $\bOmega$, one can use the Nodewise Lasso estimator of \citet{VdG_2014}, among other choices. For our results on $\bthetatilDR$, we only assume some high-level conditions on $\{\bOmegahat, \bOmega \}$ and one is free to use \emph{any} estimator of $\bOmega$ as long as those conditions are satisfied. We next discuss these conditions (and some notations) followed by our results. %We start with some notations.

\par\smallskip
For any matrix $\bM_{d \times d}$, let $\bM_{[i \cdot]} \in \R^d$ denote its $i^{th}$ row and $\bM_{[ij]}$ denote its $(i, j)^{th}$ entry. %for any $ 1 \leq i,j \leq d$.
Let $\| \bM \|_1 := \max_{1 \leq i \leq d} \sum_{j=1}^d |\bM_{[ij]}|$, $\| \bM \|_2 = \lambda_{\max}^{1/2}(\bM'\bM)$ and $\| \bM \|_{\max} := \max_{1 \leq i,j \leq d} |\bM_{[ij]} |$ denote the maximum rowwise $L_1$ norm, the spectral norm and the elementwise maximum norm of $\bM$ respectively, where $\lambda_{\max(\cdot)}$ denotes the maximum eigenvalue. Finally, %let us also
recall the notations $\bTzeron, \bTpin, \bTmn$ and $\bRpimn$ defined in the decomposition \eqref{decomp:eqn} of $\bT_n \equiv \bnabla \LnDR (\btheta_0)$ and for convenience of further discussion, define:
%and also observe that under the specific choice of $L(\cdot)$ here, we have:
%\begin{equation}
%- \; \frac{1}{2} \bTzero(\bZ) \; = \;  \{ m(\bX) - \bPsi(\bX)'\btheta_0 \} \bPsi(\bX) + \frac{T}{\pi(\bX)} \{ Y - m(\bX)\} \bPsi(\bX). \label{tzero:sqloss:def}
%\end{equation}
\begin{eqnarray}
&& \hspace{0.07in} \bR_{n,1}  :=  - \frac{1}{2}(\bOmegahat - \bOmega) \bnabla \LnDR(\btheta_0), \;\; \bR_{n,2}  :=  - \frac{\bOmega}{2} (\bTpin - \bTmn - \bRpimn) \nonumber \\
&& \;\; \bR_{n,3}  :=  (I_d - \bOmegahat \bSigmahat) (\bthetahatDR - \btheta_0) \;\; \mbox{and let} \;\; \bDelta_n  :=  (\bR_{n,1} + \bR_{n,2} + \bR_{n,3}). %\\
\label{eq:HDinf:Deltandefn}
\end{eqnarray}

\begin{assumption}[High-level conditions on $\bOmega$ and $\bOmegahat\}$] \label{HDinf:assmpn}
\hspace{-0.1in} \emph{We assume that:}
%\par\smallskip
\emph{
(a) $\| \bOmegahat - \bOmega \|_1 = O_{\P}(r_{n})$ and $\|I_d - \bOmegahat\bSigmahat \|_{\max} = O_P(\omega_n)$ for some sequences %non-negative sequences
$\{r_n, \omega_n\} \equiv \{r_{n,\bOmega}, \omega_{n, \bOmega}\} \geq 0$ with $r_n  \sqrt{\log d} = o_{\P}(1)$ and $\omega_n (s \sqrt{\log d}) = o_{\P}(1)$, where $s = \| \btheta_0\|_0$ and $I_d$ denotes the $d \times d$ identity matrix.
}
\par\smallskip
\emph{
\hspace{-0.25in}
(b) $\bUpsilon(\bX) := \bOmega \bPsi(\bX)$ is sub-Gaussian (as per Definition \ref{orlicz:def} with $\alpha = 2$) with $\psitwonorm{\bUpsilon(\bX)} \leq \sigmabUps < \infty$, for some constant $\sigmabUps \geq 0$. Further, we assume that $v_n^* = o_{\P}(1)$, where $v_n^* := (\vnpi + \vnbarm) \sqrt{(\log d) \log (nd)} + \nhalf \vnpi \vnbarm (\log n)$  %$v_n^* := \sigmabUps\{ (\vnpi + \vnbarm) \sqrt{(\log d) \log (nd)} + \nhalf \vnpi \vnbarm (\log n) \} $
and $\{\vnpi, \vnbarm \}$ are the rates of $\{\pihat(\cdot), \mhat(\cdot)\}$ defined in Assumptions \ref{tpicont:assmpn}-\ref{tmcont:assmpn}.
}
\end{assumption}
Assumption \ref{HDinf:assmpn} (a) imposes some general rate conditions on $\bOmegahat$. For most common choices of $\bOmegahat$, including those discussed earlier, these lead to fairly standard conditions.  %that should be satisfied by most common choices of $\bOmegahat$ (under suitable sparsity assumptions on $\bOmega$, if required), including those discussed earlier. For these choices, the corresponding rate conditions are fairly standard in the literature.
Under a low dimensional setting with  $\bOmegahat = \bSigmahat^{-1}$, $\omega_n = 0$ trivially and $r_n = d/\sqrt{n}$ under suitable assumptions; see \citet{Vershynin_2018} for relevant results. Under high dimensional settings, with $\bOmega$ assumed to be sparse and $\bOmegahat$ chosen to be the Nodewise Lasso estimator, $\omega_n = \sqrt{(\log d)/n}$ and $r_n = s_{\bOmega}\sqrt{(\log d)/n}$; see \citet{VdG_2014} for relevant results. In this case, the conditions read as: $s_{\bOmega}(\log d) = o(\sqrt{n})$ and $s(\log d) = o(\sqrt{n})$. These are all familiar (often unavoidable) conditions in the high dimensional inference literature \citep{Cai_2017, Javanmard_2018}. %even for a full data setting.

The sub-Gaussianity condition on $\bUpsilon (\bX)$ in Assumption \ref{HDinf:assmpn} (b) is needed to control the term $\bR_{n,2}$ in \eqref{eq:HDinf:Deltandefn}. Conditions of a similar flavor have also been adopted implicitly or explicitly in \citet{VdG_2014} and \citet{Javanmard_2014}. The condition holds with $\sigmabUps$ to be a constant if either $\| \bOmega \|_2 = O(1)$ and $\bPsi(\bX)$ is (vector) sub-Gaussian in the sense of \citet{Vershynin_2018} with a $O(1)$ norm, or if $\|\bOmega \|_1 = O(1)$ and $\bPsi(\bX)$ is sub-Gaussian in the (weaker) sense of Definition \ref{orlicz:def} with a $O(1)$ norm. %In general, we allow $\sigmabUps$ to depend on $d$ and include it in the rate $v_n^*$.
Finally, the condition on $v_n^*$ is the same (upto a $\sqrt{\log d}$ factor) as those needed for Theorems \ref{TPI:THM}-\ref{RMPI:THM}.

\begin{theorem}[ALE and entrywise asymptotic normality of $\bthetatilDR$]\label{HDINF:THM}
Under Assumptions \ref{base:assmpns}, \ref{strngconv_assmpn}, \ref{subgaussian:assmpn}-\ref{tmcont:assmpn} %, \ref{tpicont:assmpn}, \ref{tmcont:assmpn}
and \ref{HDinf:assmpn}, and with $\bDelta_n$ as defined in \eqref{eq:HDinf:Deltandefn}, $L(\cdot)$ assumed to be the squared loss and $\bthetahatDR$ constructed using a choice of $\lambda_n \asymp \sqrt{(\log d)/n}$, the desparsified DDR estimator $\bthetatilDR$ satisfies the ALE:
%Using \eqref{desparsify:est:def}, it then follows easily that
\begin{equation}
\;(\bthetatilDR - \btheta_0) \; = \; \frac{1}{n} \sum_{i=1}^n \bOmega\{\bpsi_0(\bZ_i)\} + \bDelta_n, \;\; \mbox{where} \;\; \E\{\bpsi_0(\bZ)\} =  \bzero \;\; \mbox{with}  \label{desparsifiedest:ALE}
\end{equation}
\vspace{-0.25in}
\begin{align*}
& \bpsi_0(\bZ) \; = \;  \{ m(\bX) - \bPsi(\bX)'\btheta_0 \} \bPsi(\bX) + \frac{T}{\pi(\bX)} \{ Y - m(\bX)\} \bPsi(\bX), \;\; \mbox{and} \\
& \|\bDelta_n\|_{\infty} \; = \; O_{\P}\left( r_n \sqrt{\frac{\log d}{n}} + v_n^* \nnhalf + \omega_n s  \sqrt{\frac{\log d}{n}}  \right) \; = \; o_{\P}\left(\nnhalf\right).
\end{align*}
Consequently, letting $\bGamma_0(\bZ) := \bOmega \bpsi_0(\bZ)$, $\sigma_{0, j}^2 := \E\{\bGamma_{0 [j]}^2(\bZ) \}$ and assuming that $\sigma_{0,j} > c_0$ $\forall \; j$, for some constant $c_0 > 0$, we have: for each $1 \leq j \leq d$,
\begin{align*}
& \sqrt{n} \sigma_{0, j}^{-1} ( \bthetatil_{\DDR [j]} - \btheta_{0 [j]} )  \convd  \Nsc(0,1) \;\; \mbox{and} \;\;  \sqrt{n} \sigmahat_{0, j}^{-1} ( \bthetatil_{\DDR [j]} - \btheta_{0 [j]})  \convd  \Nsc(0,1), \\
& \mbox{where} \;\; \sigmahat_{0,j}^2 := \frac{1}{n}\sum_{i=1}^n \bGammahat^2_{0 [j]}(\bZ_i) \;\; \mbox{satisfying} \;\; \max_{1 \leq j \leq d} | \sigmahat_{0, j}^2 - \sigma_{0, j}^2| = o_{\P}(1).
\end{align*}
Here $\bGammahat_{0 [j]}(\bZ_i) := \bOmegahat_{[j\cdot]}' \bpsihat_0(\bZ_i)$, where $\bpsihat_0(\bZ_i)$ denotes the estimated version of $\bpsi_0(\bZ)$ in \eqref{desparsifiedest:ALE} with $\{\pi(\bX_i),  m(\bX_i), \btheta_0\}$ plugged in as $\{ \pihat(\bX_i), \mtil(\bX_i), \bthetahatDR\}$.
\end{theorem}
%\begin{remark}
Theorem \ref{HDINF:THM} therefore provides all the necessary inferential tools for $\bthetatilDR$. The ALE \eqref{desparsifiedest:ALE} is also \emph{optimal,} in a certain sense, since the function $\bGamma_0(\bZ) \equiv \bOmega \bpsi_0(\bZ)$ defining the i.i.d. summand (also known as the influence function) in the ALE is known to be the \emph{efficient influence function} for estimating $\btheta_0$ in a classical setting ($d$ fixed) under a fully non-parametric (i.e. unrestricted, upto Assumption \ref{base:assmpns}) family of $\P$, and its variance %$\Var\{\bGamma_0(\bZ)\}$
equals the semi-parametric optimal variance \citep{Robins_1994, Robins_1995, Graham_2011}. The same conclusions continue to hold in high dimensional settings for low-dimensional components (e.g. each coordinate) of $\btheta_0$. Thus, $\bthetatilDR$ achieves the (coordinatewise) \emph{semi-parametric efficiency bound} and is optimal among all achievable estimators of $\btheta_0$ admitting ALEs under a non-parametric family of $\P$.
Further, the asymptotic normality results %along with the consistency results for the standard error estimator $\sigmahat_{0,j}$
also allow one to construct asymptotically valid $(1-\alpha)$ level confidence intervals (CIs): $CI_j := \bthetatil_{\DDR [j]} \pm z_{\alpha/2} \sigmahat_{0,j}$, for each coordinate $\btheta_{0 [j]}$ of $\btheta_0$, where $z_{\alpha/2}$ denotes the $(1-\alpha/2)^{th}$ quantile of the $\Nsc(0,1)$ distribution with $\alpha \in (0,1)$.
%\end{remark}
%\tcr{Rest of the details for this section to be added soon. Please check the slides for an overview of the results.}

\section{Estimation of the Nuisance Functions}\label{NUISANCE}
In Sections \ref{nuisance:pi}-\ref{nuisance:m}, we discuss  various choices for the nuisance function estimators $\{\pihat(\cdot), \mhat(\cdot)\}$ required for implementing our proposed methods. Our entire approach so far does \emph{not} require any specific knowledge of the construction or properties of these estimators as long as they satisfy the high-level conditions in Assumption \ref{tpicont:assmpn}-\ref{tmcont:assmpn}. Hence, one is free to use \emph{any} choice of these estimators based on high dimensional parametric or semi-parametric models, or even non-parametric machine learning based estimators, as has been advocated in many recent works for other related problems in similar settings \citep{Farrell_2015, Chern_DDML_2018, Farrell_2018}. However, a fully non-parametric and/or machine learning based approach may not be feasible or efficient in `truly' high dimensional settings where $p$ diverges with $n$. In this section, we discuss a few novel, principled, and yet, flexible families of choices for $\pihat(\cdot)$ and $\mhat(\cdot)$, including common parametric models, as well as series estimators and single index models. In Appendices \ref{parametricfamilies:convergencerates}-\ref{sec:KS}, we establish general results for these estimators under high dimensional settings that verify our basic assumptions and may also be of independent interest.

\subsection{Propensity Score Estimation: A Few Choices and Their Properties}\label{nuisance:pi}
In some cases, $\pi(\cdot)$ may be known whereby $\pihat(\cdot) \equiv \pi(\cdot)$ trivially. When $\pi(\cdot)$ is unknown, we consider the following (class of) choices for estimating $\pi(\cdot)$. %as follows.
%\begin{enumerate}[1.]%[(i)]
%\item
%\par\smallskip
\paragraph*{`Extended' parametric families (or high dimensional series estimators)} We assume that $\pi(\cdot)$ belongs to the family: $\pi(\bx) \equiv \E(T \smallgiven \bX = \bx) = g\{ \balpha'\bPsi(\bx)\}$, where $g(\cdot)$ $\in$ $[0,1]$ is a \emph{known} `link' function,
    $\bPsi(\bx) := \{ \psi_k(\bx) \}_{k=1}^K$ is any set of $K$ (known) basis functions, possibly high dimensional, with $K$ allowed to depend on $n$ (including $K \gg n$), and $\balpha \in \R^K$ is an \emph{unknown} parameter vector that is further assumed to be sparse (if required).

\par\smallskip
     \emph{Estimator.} $\pi(\bx)$ is then estimated as: $\pihat(\bx) = g\{\balphahat'\bPsi(\bx)\}$, where $\balphahat$ denotes \emph{some} given estimator of $\balpha$ obtained via \emph{any} suitable estimation procedure based on the observed data for $(T,\bX)$ that \emph{only} satisfies a basic high-level requirement that $\| \balphahat - \balpha\|_1 \leq a_n$ w.h.p. for some sequence $a_n = o(1)$.

\par\smallskip
\emph{Examples.} The models above include, as a special case, any logistic regression model for $T | \bX$ given by: $\pi(\bx) = g\{\balpha'\bPsi(\bx)\}$, where $g(u) \equiv g_{\mbox{\tiny{expit}}}(a) := \exp(a)/\{1+\exp(a)\}$. The estimator $\balphahat$ in this case maybe obtained using a simple $L_1$-penalized logistic regression of $T$ vs. $\bPsi(\bX)$ based on the observed data $\{T_i, \bPsi(\bX_i) \}_{i=1}^n$. Using standard results from the theory of high dimensional regression %theory
\citep{BuhlmannVdG_Book_2011, Negahban_2012, Wainwright_Book_2019}, %{HasTibWainwright_Book_2015},
it can be shown that under suitable assumptions (e.g. RSC and exponential tail conditions), $\| \balphahat - \balpha\|_1 \lesssim a_n \equiv a_n(s_{\balpha},K) := s_{\balpha} \sqrt{(\log K) /n}$ w.h.p., where $s_{\balpha} := \| \balpha \|_0$ denotes the sparsity of $\balpha$.

As for the basis functions $\bPsi(\bx)$, some reasonable choices include the polynomial bases given by: $\bPsi(\bx) := \{1, \bx_j^k: 1 \leq j \leq p, 1 \leq k \leq d_0\}$ for any degree $d_0 \geq 1$. The special case $d_0 = 1$ corresponds to the linear bases which leads to all standard parametric models that are commonly used in practice.

\par\smallskip
\emph{The case when $\pi(\cdot)$ is constant.} Note that the extended parametric framework above also includes the special case where $\pi(\cdot)$ is unknown but constant (i.e. the case of MCAR or complete randomization), in which case $g(\balpha'\bX)$ simply equals the constant $\pi$, and $\balpha$ is just an unknown parameter in $\R$ that can be estimated at the rate of $O(n^{-1/2})$ via the usual sample mean of $T$.

\subsection{Estimation of the Conditional Mean: Choices and Their Properties}\label{nuisance:m}
We consider the following two (class of) choices for estimating $m(\cdot)$.

%\begin{enumerate}[1.]
%\item
\paragraph*{1. `Extended' parametric families (high dimensional series estimators)} We assume that $m(\cdot)$ belongs to the family: $g\{\bgamma'\bPsi(\bX)\}$ where $g(\cdot)$ is a (known) `link' function (e.g. `canonical' link functions), $\bPsi(\bX) := \{ \psi_k(\bX) \}_{k=1}^K$ is any set of $K$ (known) basis functions, with $K$ possibly high dimensional and is allowed to depend on $n$ (including $K \gg n$), and $\bgamma \in \R^K$ is an unknown parameter vector that is further assumed to be sparse (if required).

\par\smallskip
\emph{Estimator.} We estimate $m(\bx) \equiv \E(Y \smallgiven \bX) \equiv \E(Y \smallgiven \bX, T=1) = g\{\bgamma'\bPsi(\bX)\}$ as: $\mhat(\bx) = g\{\bgammahat'\bPsi(\bX)\}$, where $\bgammahat$ denotes \emph{some} given estimator of $\bgamma$ obtained via \emph{any} suitable estimation procedure based on the `complete case' data $\Dsc_n^{(c)} := \{(Y_i,\bX_i) \given T_i = 1\}_{i=1}^n$ that \emph{only} satisfies a basic high-level requirement that $\| \bgammahat - \bgamma\|_1 \leq a_n$ w.h.p. for some sequence $a_n = o(1)$.

\par\smallskip
\emph{Examples.} These models include, as special cases, all standard parametric regression models with `canonical' link functions, through suitable choices of $g(\cdot)$ depending on the nature of $Y$ (continuous, binary or discrete). Specifically, $g(u) \equiv g_{\mbox{\tiny{id}}} = u$ (identity link), %corresponds to linear regression,
$g(u) \equiv g_{\mbox{\tiny{expit}}} = \exp(u)/\{ 1 + \exp(u)\}$ (expit/logit link) and %corresponds to logistic regression and
$g(u) \equiv g_{\mbox{\tiny{exp}}} = \exp(u)$ (exponential/log link) correspond to the linear, logistic and Poisson regression models respectively.

As for the basis functions $\bPsi(\bx)$, some reasonable choices include the polynomial bases given by: $\bPsi(\bx) := \{1, \bx_j^k: 1 \leq j \leq p, 1 \leq k \leq d_0\}$ for any degree $d_0 \geq 1$. The special case $d_0 = 1$ corresponds to the linear bases which leads to all standard parametric models, while $d_0 = 3$ leads to cubic splines. %that are commonly used in practice.

\par\smallskip
\emph{Examples of $\bgammahat$.} For all the examples above, with $g(\cdot)$ being any `canonical' link function, the estimator $\bgammahat$ of $\bgamma$ may be simply obtained through a corresponding $L_1$ penalized `canonical' link based regression (e.g. linear, logistic or Poisson regression) of $Y$ vs. $\bX$ in the `complete case' data $\Dsc_n^{(c)}$ under Assumption \ref{base:assmpns} (a). Using standard results from high dimensional regression \citep{BuhlmannVdG_Book_2011, Negahban_2012, Wainwright_Book_2019}, %{HasTibWainwright_Book_2015}
it can be shown that under suitable assumptions (e.g. RSC and exponential tail conditions) and Assumption \ref{base:assmpns}, $\| \bgammahat - \bgamma\|_1 \lesssim a_n \equiv a_n(s_{\bgamma},K) := s_{\bgamma} \sqrt{(\log K) /n}$ w.h.p., where $s_{\bgamma} := \| \bgamma \|_0$ denotes the sparsity of $\bgamma$.
%\par\smallskip
%\item

\paragraph*{2. Semi-parametric single index models} We assume that $m(\cdot)$ satisfies the SIM: $m(\bX) \equiv \E(Y \smallgiven \bX) \equiv \E(Y \smallgiven \bX, T =1 ) = g(\bgamma'\bX)$, where $g(\cdot) \in \R$ is some \emph{unknown} `link' function and $\bgamma \in \R^p$ is an unknown parameter (identifiable only upto scalar multiples) that is further assumed to be sparse (if required).
\par\smallskip
\emph{Estimator.} Given \emph{any} reasonable estimator $\bgammahat$ of the $\bgamma$ `direction' obtained from %based on some suitable procedure on the observed dat $\Dsc_n$
$\Dsc_n$, we estimate $m(\bX)  \equiv \E(Y \given \bgamma'\bX) \equiv \E(Y \given \bgamma'\bX, T = 1) = g(\bgamma'\bX)$ via a one-dimensional kernel smoothing (KS) over the estimated scores $\{\bgammahat'\bX_i\}_{i=1}^n$, under appropriate smoothness and regularity assumptions, as follows.
\begin{equation*}
\mhat(\bx) \; \equiv \; \mhat(\bgammahat'\bx) \; \equiv \; \mhat(\bgammahat,\bx) \; := \; \frac{\frac{1}{nh} \sum_{i=1}^n T_i Y_i K\left( \frac{\bgammahat'\bX_i - \bgammahat'\bx}{h} \right)}{\frac{1}{nh} \sum_{i=1}^n T_i K\left( \frac{\bgammahat'\bX_i - \bgammahat'\bx}{h} \right)} \quad \forall \; \bx \in \Xsc, %\; \equiv \; \frac{\lhat(\bgammahat, \bx)}{\fhat(\bgammahat,\bx)},
\end{equation*}
 where $K(\cdot): \R \rightarrow \R$ is some suitable `kernel' function and $h \equiv h_n > 0$ denotes a bandwidth sequence with $h_n = o(1)$. Here, we \emph{only} assume that $\bgammahat$ is \emph{some} reasonable estimator of the $\bgamma$ `direction' satisfying a basic high-level condition: $\|\bgammahat - \bgamma_0\|_1 \leq a_n $ w.h.p. for some $\bgamma_0 \propto \bgamma$ and $a_n = o(1)$.
\par\smallskip
\emph{Estimation of $\bgammahat$.} Under Assumption \ref{base:assmpns} (a) and the SIM framework we have adopted here, $\E(Y \smallgiven \bX) \equiv \E(Y \smallgiven \bX, T = 1) = g(\bgamma'\bX)$. Hence,  in general, one may use \emph{any} standard method available in the literature for signal recovery in SIMs \citep{Horowitz_2009, Alquier_2013, Radchenko_2015}, %{Yi_2015, Yang_2017}
and apply it to the `complete case' data $\Dsc_n^{(c)}$ to obtain a reasonable estimator $\bgammahat$ of $\bgamma$. Under some additional design restrictions and model assumptions, however, one may also estimate $\bgamma$ by even simpler approaches, as follows.
%\begin{enumerate}[(a)]
%\item

\par\smallskip
\noindent (a) Suppose $Y$ satisfies the (slightly) stronger SIM formulation: $(Y \smallgiven \bX) \equiv  (Y \smallgiven \bX, T =1) = f(\bgamma'\bX; \epsilon)$ for some unknown function $f: \R^2 \rightarrow \Ysc$ and some noise $\epsilon \ind (T,\bX)$, and assume further that the distribution of $(\bX \smallgiven T = 1)$ is elliptically symmetric. Then, owing to the results of \citet{Li_Duan_1989}, one can \emph{still} estimate $\bgamma$ with a rate guarantee of $a_n  = s_{\bgamma} \sqrt{(\log p)/n}$ using a simple $L_1$ penalized `canonical' link based regression (e.g. linear, logistic or Poisson regression) of $Y$ vs. $\bX$ in the `complete case' data $\Dsc_n^{(c)}$, as discussed in the previous example. %The same conclusion continues to hold if $T \ind \bX$ and the distribution of $\bX$ is elliptically symmetric.
    Similar approaches %based on similar ideas
    have been used extensively in recent years for sparse signal recovery in high dimensional SIMs with fully observed data and elliptically symmetric designs \citep{Plan_2013, Plan_2016, Goldstein_2016, Genzel_2017, Wei_2018} %Plan_2017,

%\item
\par\smallskip
\noindent (b) Suppose $Y$ satisfies the same SIM as in part (a) above, and assume now that the distribution of $\bX$ is elliptically symmetric. Then, using the results of \citet{Li_Duan_1989}, along with our discussions in Section \ref{id_and_alt_rep} regarding IPW representations, it follows that one can also estimate $\bgamma$ using a \emph{weighted}  $L_1$-penalized regression based on any `canonical' link (e.g. linear, logistic or Poisson regression) of $Y$ vs. $\bX$ in the `complete case' data $\Dsc_n^{(c)}$. The weights are given by: $\pi^{-1}(\bX)$, if $\pi(\cdot)$ is known, and by $\pihat^{-1}(\bX)$, if $\pi(\cdot)$ is unknown and estimated via $\pihat(\cdot)$ (assumed to be correctly specified) through any of the choices discussed in Section \ref{nuisance:m}. Using the results of \citet{Negahban_2012}, along with the techniques used in our proofs of Lemma \ref{DEV:BOUND} and Theorems \ref{TZERO:THM} and \ref{RMPI:THM}, it can be shown that the resulting IPW estimator $\bgammahat$ satisfies an $L_1$ norm bound: $\|\bgammahat - \bgamma\|_1 \lesssim a_n \equiv s_{\bgamma}\sqrt{(\log p)/n}$ w.h.p. in the case when $\pi(\cdot)$ is known, and $\|\bgammahat - \bgamma\|_1 \lesssim a_n \equiv  s_{\bgamma}\max\{\sqrt{(\log p)/n}, \vnpi \sqrt{\log n} \} $ when $\pi(\cdot)$ is unknown, where $\vnpi = o(1)$ denotes the (pointwise) convergence rate of $\pihat(\cdot)$ as given in Assumption \ref{tpicont:assmpn}. Given the main goals of this paper, we skip the technical details and proofs of these claims for the sake of brevity.
%\end{enumerate}
%\end{enumerate}

\section{Simulation Studies}\label{sec:sim}
% \tcr{Results are available and will be added soon.} %Please check the slides for an overview of all the results.}

%\input{Simulation-Main.tex}

%structure of the simulation sections

%simulation setup (models and choices of parameters)
%estimators implemented (estimators considered, comparison criteria and comments)
%simulation results  (tables and figures)
%Comments on the results
%Double robustness : cc estimator and large sample setting

%additional results in the sup.

%\section{Simulation Studies}\label{Simulation}
We conducted extensive simulations to examine the performances of our proposed estimation and inference procedures under various
%In this section, we perform a group of simulations to examine the performances of our method under different
data generating processes (DGPs) and parameter settings. We set $n=1000$, and $p=50$ or $500$ reflecting moderate and high dimensional settings, respectively. (In Appendix \ref{Sim. double-robustness}, we also conduct a large sample analysis with $n=50000$ to investigate the DR properties of our estimator(s), as well as the performance of the CC estimator). %study the double-robustness of our proposed estimator and the performance of the complete case estimator (see  for details).
%For DGPs, the observed data $\Dsc_n := $ $\{\bZ_i \equiv (T_i, T_i Y_i, \bX_i): i = 1, \hdots, n \}$ is given by $\bX \sim N(\mathbf{0}, \bSigma_p)$ (the choices of $\bSigma_p$ will be discussed later) and three models for $Y|\bX$ and $T|\bX$ : a logistic model for $T|\bX$ and a linear model for $Y|\bX$ (denoted as ``linear-linear'' DGP), a logistic model with both linear and quadratic terms for $T|\bX$ and a linear model with both linear and quadratic terms for $Y|\bX$ (denoted as ``quad-quad'' DGP) and a single index model (SIM) for both $T|\bX$ and $Y|\bX$ (denoted as ``SIM-SIM'' DGP). These models are formalized as following:
%$\bX$ is generated  as:
We used $\bX \sim N(\mathbf{0}, \bSigma_p)$, for $3$ choices of $\bSigma_p$ discussed below, %and used three model combinations or DGPs for $Y|\bX$ and $T|\bX$ as follows:
and $3$ DGPs for $Y|\bX$ and $T|\bX$ as follows:%,  given by: : a logistic model for $T|\bX$ and a linear model for $Y|\bX$ (denoted as ``linear-linear'' DGP), a logistic model with both linear and quadratic terms for $T|\bX$ and a linear model with both linear and quadratic terms for $Y|\bX$ (denoted as ``quad-quad'' DGP) and a single index model (SIM) for both $T|\bX$ and $Y|\bX$ (denoted as ``SIM-SIM'' DGP). These models are formalized as following:
\begin{enumerate}[(a)]
	\item \emph{``Linear-linear'' DGP:} $Y   = \gamma_{0} + \bgamma'\bX  + \varepsilon$ with $\varepsilon | \bX \sim N(0, 1)$, and $\textrm{logit}\{\pi(\bX)\}  \equiv \textrm{logit} \{\E (T | \bX )\}= \alpha_{0} +\balpha'\bX.$ These represent standard linear and logistic regression models for $Y|\bX$ and $T|\bX$ respectively.
\par\smallskip
	
	%\begin{align*} %\label{sim: DGP_linear_linear}
	%& Y   = \bgamma_{0} + \bgamma'\bX  + \varepsilon, \quad \varepsilon | \bX \sim N(0, 1),  \\
	%&  \textrm{logit}\{\pi(\bX)\}  = \textrm{logit} \{\E (T | \bX )\}= \balpha_{0} +\balpha'\bX.
	%\end{align*}

	\item  \emph{``Quad-quad'' DGP:} $Y  = \gamma_{0} + \bgamma'\bX + \sum_{j=1}^{p} \bgamma^{*}_{[j]} \bX_{[j]}^2 + \varepsilon$ with $\varepsilon | \bX \sim N(0,1)$, and $\textrm{logit} \{\pi(\bX)\} \equiv \textrm{logit}\{\E (T | \bX )\}  =  \alpha_{0} +  \balpha'\bX + \sum_{j=1}^{p}\balpha^{*}_{[j]} \bX_{[j]}^2.$ These allow both linear and quadratic effects of $\bX$ in $\E(Y| \bX)$ and $\textrm{logit} \{\pi(\bX)\}$. %represent logistic and linear models with , for $T|\bX$ and $Y|\bX$ respectively.
\par\smallskip

	%\begin{align*} %\label{sim: DGP_quad_quad}
	%& Y  = \bgamma_{0} + \bgamma'\bX + \sum_{j=1}^{p} \bgamma^{*}_{j} \bX_{j}^2 + \varepsilon, \quad \varepsilon | \bX \sim N(0,1),  \\
	%& \textrm{logit} \{\pi(\bX)\} = \textrm{logit}\{\E (T | \bX )\}  =  \balpha_{0} +  \balpha'\bX + \sum_{j=1}^{p}\balpha^{*}_{j} \bX_{j}^2.
	%\end{align*}
	
\item  \emph{``SIM-SIM'' DGP:} $Y = \gamma_{0} + \bgamma'\bX  + c_Y(\bgamma'\bX)^2+\varepsilon$ with $\varepsilon | \bX \sim N(0, 1)$, and $\textrm{logit}\{\pi(\bX) \} \equiv   \textrm{logit} \{\E (T  | \bX )\} =  \alpha_{0} + \balpha'\bX  + c_T(\balpha'\bX )^2$. These represent standard single index models (SIMs) for both $\E(Y|\bX)$ and $\textrm{logit} \{\pi(\bX)\}$.

	%\begin{align*} %\label{sim: DGP_sim_sim}
	%& Y = \bgamma_{0} + \bgamma'\bX  + c_Y(\bgamma'\bX)^2+\varepsilon, \quad \varepsilon | \bX \sim N(0, 1),\\
	%& \textrm{logit}\{\pi(\bX) \} =   \textrm{logit} \{\E (T  | \bX )\} =  \balpha_{0} + \balpha'\bX  + c_T(\balpha'\bX )^2.
	%\end{align*}
\end{enumerate}
The \emph{choices of $\bSigma_p$} were: (a) $\bSigma_p = \textbf{I}_p$, the \emph{identity} matrix, or (b) $\bSigma_{ij} = \rho^{|i-j|}$, the first order \emph{autoregressive} (AR1) matrix, or (c) $\bSigma_p = \rho\bone_p\bone_p'+ (1 - \rho)\textbf{I}_p$, the \emph{compound symmetry} (CS) matrix, where we set $\rho = 0.2$. These choices %of the covariance structure of $\bX$
exhibit a variety of correlation and sparsity structures in $\bSigma_p$, %for the covariance matrix, %different correlations among the coordinates of $\bX$ and different sparsities of the covariance matrices,
ranging from independent and sparse (i.e. $\textbf{I}_p$) to correlated and not sparse (i.e. CS matrix).

%using minimizing mean squared errors (MSE) as criterion.
%(100 times for $n=50000$).

We also manually truncated $\pi(\cdot)$ to lie in $[0.1, 0.9]$ to avoid extreme values. By choice of our model parameters, the proportion of data being truncated was roughly around $1\%$ and the proportion of observations with missing $Y$ was around $40\%$ for all model settings. The tuning parameter $\lambda_n$ in \eqref{eq:simplealgo} for obtaining $\bthetahatDR$  %in the penalized regression
 was selected using 10-fold cross validation of the loss $L(\cdot)$. All simulations were replicated 500 times. Further details on our parameter choices and other implementation details are provided in Appendix %Section
\ref{Simulation: Tech. detail}.

\subsection{Target Parameter and Choices of the Working Nuisance Models} \label{Target para.}
We considered the linear regression problem, where the target parameter $\btheta_0$ is:% is: % $\btheta_0$ is the best linear estimator:
\begin{equation*}
\btheta_0 := \underset{\btheta \in \R^d}{\arg \min} \; \E (Y - \bXv'\btheta)^2  =  \bSigma^{-1}\E (\overrightarrow{\bX} Y), \;\; \mbox{with} \;\; d = p+1, \;\; \bSigma := \E(\bXv\bXv'),
\end{equation*}
%with $d = p+1$ and the definition of
and $\overrightarrow{\bv}$ being as in Section \ref{estimation}. Note that $\btheta_0$ is the (model-free) target parameter %that is always the target
for linear regression \emph{regardless} of whether $\E(Y | \bX)$ is truly linear or not. For the ``linear-linear'' DGP, $\btheta_0$ matches the parameters $(\gamma_0,\bgamma)$ therein. %introduced previously.
For the other non-linear DGPs, $\btheta_0$ is, in general, \emph{different} from the parameters introduced therein. %we introduced in the working nuisance models.
By choices of our model parameters, this $\btheta_0$ is \emph{still} guaranteed to be sparse. For all the DGPs (linear or non-linear), we computed (and fixed) $\btheta_0$ via Monte-Carlo based on a large dataset with size $200000$.

To implement the estimator $\bthetahatDR$ of $\btheta_0$, we considered the following (combination of) choices of (working) models for the nuisance estimators $\pihat(\cdot)$ and $\mhat(\cdot)$. %two choices of the working nuisance model for $\pihat(\cdot)$ and three choices of $\mhat(\cdot)$ are considered.
Specifically, we considered \emph{two choices for the PS estimator $\pihat(\cdot)$}:
\begin{enumerate}[1.]
	\item  \emph{``$\pihat$: linear''} - obtained via a standard $L_1$-penalized logistic regression with linear covariates (i.e. the Logistic Lasso) fitted to the data $\{T_i, \bX_i\}_{i=1}^{n}$.
	\item  \emph{``$\pihat$: quad''} - obtained via an $L_1$-penalized logistic regression with both linear and quadratic covariates fitted to the data $\{T_i, \bX_i\}_{i=1}^{n}$.
\end{enumerate}
For \emph{each} choice of $\pihat(\cdot)$, we used \emph{three choices of the OR estimator $\mhat(\cdot)$}:
\begin{enumerate}[1.]
\item \emph{``$\mhat$: linear''} - obtained via a standard $L_1$-penalized linear regression (i.e. the Lasso) of $Y$ vs. $\bX$ fitted to the `complete case' data $\Dsc_n^{(c)}$.

\vspace{0.05in}
\item \emph{``$\mhat$: quad''} - obtained via an $L_1$-penalized linear regression with both linear and quadratic covariates fitted to the `complete case' data $\Dsc_n^{(c)}$.

\vspace{0.05in}
\item \emph{``$\mhat$: SIM''} - obtained by fitting a SIM to the `complete case' data $\Dsc_n^{(c)}$ with the index parameter estimated via an IPW Lasso (as discussed in Method 2(b) in Section \ref{nuisance:m}, which applies under our assumptions on $\bX$).
\end{enumerate}
Thus, we had 6 different combinations of $\{\pihat(\cdot), \mhat(\cdot)\}$ \emph{each} of which were used to implement $\bthetahatDR$ on data generated from \emph{any} given DGP. It is important to note that the names used to denote these choices have \emph{no relation} to those of the true DPGs for $\pi(\cdot)$ and $m(\cdot)$. For each DGP, there exists a combination of $\{\pihat(\cdot), \mhat(\cdot)\}$ that correctly specifies at least one of $\{\pi(\cdot),m(\cdot)\}$.
For the ``linear-linear'' DGP, all 6 choices of $\{\pihat(\cdot), \mhat(\cdot)\}$ are correct. For the ``quad-quad'' DGP, only ``$\pihat$: quad-$\mhat$: quad'' is correct for both, while there are some combinations that are correct for only one, e.g. ``$\pihat$: linear-$\mhat$: quad'' is correct for $m(\cdot)$ but misspecifies $\pi(\cdot)$.
Finally, note that for the ``SIM-SIM'' DGP, we do \emph{not} include any case where $\pihat(\cdot)$ is correct. This, in some sense, serves as a test of robustness for our estimator. As the results in Section \ref{Sim. Results} will reveal, the performance improves significantly (and is nearly optimal) whenever $\mhat(\cdot)$ is correct, and is quite robust to any misspecification of $\pihat(\cdot)$.

%From the results presented in Section \ref{Sim. Results}, we would see that correctly specifying the conditional mean $m(\cdot)$ would largely reduce the estimation errors comparing to that of $\pi(\cdot)$.

\subsection{Estimators Implemented and Criteria for Comparison} \label{Est. Implemented}
Apart from $\bthetahatDR$, we also considered the following two oracle estimators for comparison:
\begin{enumerate}[(a)]
	\item $ \widehat{\btheta}_{orac} $ \emph{(oracle)}: The version of $\bthetahatDR$ %An estimator obtained
 assuming $\pi(\cdot)$ and $m(\cdot)$ to be known.
 	\item $\widehat{\btheta}_{full}$ \emph{(`super'-oracle)}: The version of $\bthetahatDR$ %An estimator
obtained assuming the full dataset is observed, i.e. no missing $Y$ and no involvement of $\{\pihat(\cdot),\mhat(\cdot)\}$.
\end{enumerate}
The implementation of these estimators is similar (in spirit) to that of $\bthetahatDR$ after making the appropriate adjustments, as detailed above. %the same as our proposed estimator $\bthetahatDR$, using the corresponding working nuisance functions or dataset.
The oracle estimator $\widehat{\btheta}_{orac} $ is considered to examine the impact of estimating the nuisance functions involved in $\bthetahatDR$. Moreover, it is also known (at least under classical settings) to  achieve  the semi-parametric \emph{optimal} performance \citep{Graham_2011} for this problem. The `super'-oracle $\widehat{\btheta}_{full}$ is an \emph{ideal-case} estimator, of course, obtained assuming the full data is observed. We consider it mainly as a benchmark to get a sense of the best performance one can hope to achieve.

We compared the estimators based on the following \emph{performance criteria:}
%We evaluate the simulation performances by:

\begin{enumerate}[1.]
	\item \emph{Estimation:} For all the estimators, we report their average $L_2$ errors for estimating $\btheta_0$,
%Measuring the $L_2$ norms of the differences between the estimators and the true parameter $\btheta_0$.
defined as the average of $\|\widetilde{\btheta} - \btheta_0 \|_2$ over the 500 replications, where $\widetilde{\btheta}$ is any candidate estimator.
%The reported values are the average $L_2$ norms over all replications %with
In addition, we report the standard errors (SEs) of these $L_2$ errors over the 500 replications in the parentheses.

\vspace{0.05in}
\item \emph{Inference:} We calculate the empirical coverage probabilities (CovPs) and the mean lengths of the (coordinatewise) 95\% CIs of $\btheta_0$ obtained via $\bthetatilDR$, over all replications. We report the average and median of these empirical CovPs, denoted ``$A$-CovP'' and ``$M$-CovP'' respectively, and the average of the mean CI lengths, all separated over the truly zero and non-zero coefficients of $\btheta_0$. We also report their respective SEs in the subscripts.
%Using the 95\% (coordinatewise) CIs of $\btheta_0$ obtained via $\bthetatilDR$, we calculate the empirical coverage probabilities (CovPs) and the mean length of the CIs over all replications. We report the average and median of these empirical CovPs, denoted ``A-CovP'' and ``M-CovP'' respectively, and the average of the mean CI lengths, all separated over the truly zero and non-zero coefficients (coeffs.) of $\btheta_0$. %We also report the SEs (for A-CovP and CI length) and the median absolute deviation (MAD) (for M-CovP) in the subscripts.

%Calculating the average coverage probabilities (CovP) and average lengths of the confidence intervals (CIs). We calculate the empirical coverage probability and average length of the CI for each coefficient and the reported values are means and medians together with standard errors and median absolute deviation (MAD) as subscripts taken over the truly zero and non-zero coefficients.
%The expected  coverage level of the CIs is set as $95\%$.

%	\item We investigate double-robustness of our estimator and the performance of the complete case estimator (samples with $T = 1$) in Section \ref{Sim. double-robustness}.
\end{enumerate}
\begin{table}[H]%[!htbp]
	\centering
	\caption{Average $L_2$ errors of $\bthetahatDR$, obtained via various combinations of the nuisance estimators $\{\pihat(\cdot), \mhat(\cdot)\}$, and those of the oracle  estimators $\widehat{\btheta}_{orac}$ and $\widehat{\btheta}_{full}$, for $n = 1000$, $\bSigma_p = I_p$ and all three choices of the \emph{true} DGPs.} %The values in parentheses denote the SEs (over iterations) of the $L_2$ errors.}
	\label{table: DGP_id_n1000_p50}
	{\bf(I)} $p=50$. \\
	(a) DGP: ``Linear-linear'' for $\pi(\cdot)$ and $m(\cdot)$.
	\vspace{0.05in}
	\resizebox{\textwidth}{!}{
		\begin{tabular}{ll||ccc}\hline
			\multicolumn{2}{l}{Working nuisance model}& $\bthetahatDR$  & $ \widehat{\btheta}_{orac} $ &  $\widehat{\btheta}_{full}$    \\
			\hline
			\multirow{2}{*}{$\mhat$: linear}&$\pihat$: logit & 0.222 (0.035) & 0.223 (0.036) & 0.168 (0.027)   \\
			& $\pihat$: quad & 0.221 (0.035) & 0.223 (0.036) & 0.168 (0.027)  \\
			\hline
			\multirow{2}{*}{$\mhat$: quad}&$\pihat$: logit & 0.224 (0.035)  & 0.223 (0.036) & 0.168 (0.027)  \\
			&$\pihat$: quad &0.224 (0.035) & 0.223 (0.036) & 0.168 (0.027)\\
			\hline
			\multirow{2}{*}{$\mhat$: SIM}&	$\pihat$: logit & 0.222 (0.036) &  0.223 (0.036) & 0.168 (0.027)  \\
			&$\pihat$: quad & 0.222 (0.036) & 0.223 (0.036) & 0.168 (0.027) \\
			\hline
	\end{tabular}}
	(b) DGP: ``Quad-quad'' for $\pi(\cdot)$ and $m(\cdot)$.
	\vspace{0.05in}
	\resizebox{\textwidth}{!}{
		\begin{tabular}{ll||ccc}\hline
			\multicolumn{2}{l}{Working nuisance model}& $\bthetahatDR$  & $ \widehat{\btheta}_{orac} $ &  $\widehat{\btheta}_{full}$   \\
			\hline
			\multirow{2}{*}{$\mhat$: linear}&$\pihat$: logit & 0.682 (0.115) & 0.478 (0.076) & 0.453 (0.074)  \\
			& $\pihat$: quad & 0.638 (0.105) & 0.478 (0.076) & 0.453 (0.074)    \\
			\hline
			\multirow{2}{*}{$\mhat$: quad}&$\pihat$: logit & 0.475 (0.077) & 0.478 (0.076) & 0.453 (0.074)  \\
			&$\pihat$: quad & 0.475 (0.077) & 0.478 (0.076) & 0.453 (0.074) \\
			\hline
			\multirow{2}{*}{$\mhat$: SIM}&	$\pihat$: logit &  0.683 (0.116) & 0.478 (0.076) & 0.453 (0.074)   \\
			&$\pihat$: quad & 0.640 (0.108)  & 0.478 (0.076) & 0.453 (0.074) \\
			\hline
	\end{tabular}}
	(c) DGP: ``SIM-SIM'' for $\pi(\cdot)$ and $m(\cdot)$.
	\vspace{0.05in}
	\resizebox{\textwidth}{!}{
		\begin{tabular}{ll||ccc}\hline
			\multicolumn{2}{l}{Working nuisance model}& $\bthetahatDR$  & $ \widehat{\btheta}_{orac} $ &  $\widehat{\btheta}_{full}$    \\
			\hline
			\multirow{2}{*}{$\mhat$: linear}&$\pihat$: logit & 0.618 (0.138) &  0.517 (0.125) & 0.499 (0.121) \\
			& $\pihat$: quad & 0.613 (0.137)  & 0.517 (0.125) & 0.499 (0.121)   \\
			\hline
			\multirow{2}{*}{$\mhat$: quad} &$\pihat$: logit &  0.616 (0.141) & 0.517 (0.125) & 0.499 (0.121)\\
			&$\pihat$: quad & 0.612 (0.140) &  0.517 (0.125) & 0.499 (0.121)   \\
			\hline
			\multirow{2}{*}{$\mhat$: SIM}&	$\pihat$: logit &  0.553 (0.132) &  0.517 (0.125) & 0.499 (0.121) \\
			&$\pihat$: quad &  0.550 (0.131) &  0.517 (0.125) & 0.499 (0.121)  \\
			\hline
	\end{tabular}}
\end{table}

\subsection{Simulation Results} \label{Sim. Results}
We only present here the simulation results for the case $\bSigma_p = I_p$. %using identity covariance matrix are provided in this section.
The results for $\bSigma_p = $ AR1 or CS matrices exhibit similar patterns.  %share similar results and hence
These are given in Appendix \ref{sec:sim:othercov} %\ref{sec:sim:supp}
of the \hyperref[supp_mat]{Supplementary Material}.

\begin{table}[!ht]%tbp]
	\centering
	\caption{See caption of Table \ref{table: DGP_id_n1000_p50}. (Only change: $p= 500$ instead of $50$)  }
	\label{table: DGP_id_n1000_p500}
	{\bf (II)} $p=500$. \\
	(a) DGP: ``Linear-linear'' for $\pi(\cdot)$ and $m(\cdot)$.
	\vspace{0.05in}
	\resizebox{\textwidth}{!}{
		\begin{tabular}{ll||ccc}\hline
			\multicolumn{2}{l}{Working nuisance model}& $\bthetahatDR$  & $ \widehat{\btheta}_{orac} $ &  $\widehat{\btheta}_{full}$   \\
			\hline
			\multirow{2}{*}{$\mhat$: linear}&$\pihat$: logit & 0.448 (0.047) & 0.424 (0.042) & 0.317 (0.028)  \\
			& $\pihat$: quad & 0.448 (0.046) & 0.424 (0.042) & 0.317 (0.028)   \\
			\hline
			\multirow{2}{*}{$\mhat$: quad}&$\pihat$: logit & 0.461 (0.050) & 0.424 (0.042) & 0.317 (0.028)   \\
			&$\pihat$: quad & 0.461 (0.050) & 0.424 (0.042) & 0.317 (0.028)   \\
			\hline
			\multirow{2}{*}{$\mhat$: SIM}&	$\pihat$: logit & 0.436 (0.045) & 0.424 (0.042) & 0.317 (0.028)  \\
			&$\pihat$: quad & 0.436 (0.045) & 0.424 (0.042) & 0.317 (0.028) \\
			\hline
	\end{tabular}}
	(b) DGP: ``Quad-quad'' for $\pi(\cdot)$ and $m(\cdot)$.
	\vspace{0.05in}
	\resizebox{\textwidth}{!}{
		\begin{tabular}{ll||ccc}\hline
			\multicolumn{2}{l}{Working nuisance model}& $\bthetahatDR$  & $ \widehat{\btheta}_{orac} $ &  $\widehat{\btheta}_{full}$    \\
			\hline
			\multirow{2}{*}{$\mhat$: linear}&$\pihat$: logit & 1.153 (0.122) & 0.866 (0.082) & 0.811 (0.078)   \\
			& $\pihat$: quad & 1.141 (0.121) & 0.866 (0.082) & 0.811 (0.078)   \\
			\hline
			\multirow{2}{*}{$\mhat$: quad}&$\pihat$: logit & 0.887 (0.088) & 0.866 (0.082) & 0.811 (0.078) \\
			&$\pihat$: quad & 0.887 (0.088) & 0.866 (0.082) & 0.811 (0.078) \\
			\hline
			\multirow{2}{*}{$\mhat$: SIM}&	$\pihat$: logit & 1.151 (0.117) & 0.866 (0.082) & 0.811 (0.078)   \\
			&$\pihat$: quad & 1.136 (0.117) & 0.866 (0.082) & 0.811 (0.078)  \\
			\hline
	\end{tabular}}
	(c) DGP: ``SIM-SIM'' for $\pi(\cdot)$ and $m(\cdot)$.
	\vspace{0.05in}
	\resizebox{\textwidth}{!}{
		\begin{tabular}{ll||ccc}\hline
			\multicolumn{2}{l}{Working nuisance model}& $\bthetahatDR$  & $ \widehat{\btheta}_{orac} $ &  $\widehat{\btheta}_{full}$  \\
			\hline
			\multirow{2}{*}{$\mhat$: linear}&$\pihat$: logit &  1.103 (0.158) & 1.116 (0.168) & 1.087 (0.165)  \\
			& $\pihat$: quad & 1.090 (0.149) & 1.116 (0.168) & 1.087 (0.165)      \\
			\hline
			\multirow{2}{*}{$\mhat$: quad} &$\pihat$: logit &1.108 (0.159) & 1.116 (0.168) & 1.087 (0.165)  \\
			&$\pihat$: quad & 1.095 (0.151) & 1.116 (0.168) & 1.087 (0.165)    \\
			\hline
			\multirow{2}{*}{$\mhat$: SIM}&	$\pihat$: logit &  1.034 (0.161) & 1.116 (0.168) & 1.087 (0.165)  \\
			&$\pihat$: quad &  1.021 (0.153)  & 1.116 (0.168) & 1.087 (0.165) \\
			\hline
	\end{tabular}}
\end{table}

Tables \ref{table: DGP_id_n1000_p50} and \ref{table: DGP_id_n1000_p500} provide the $L_2$ error comparison for all estimators under $p= 50$ and $500$ respectively.
The results, in general, exhibit a similar pattern across the two tables, with the errors being only  higher (and understandably so) for $p = 500$ than the corresponding case for $p=50$. Overall, we observe that whenever  $\{\pihat(\cdot), \mhat(\cdot)\}$ are both correctly specified, $\bthetahatDR$ closely matches the performance of the oracle $\bthetahat_{orac}$, which validates our claims of optimality, and also, first order insensitivity (Remark \ref{rem:firstorderinsensitive}) of $\bthetahatDR$ to nuisance function estimation errors.
Interestingly, over all the DGP settings, the performance of $\bthetahatDR$, in fact, continues to remain nearly as good when only $\mhat(\cdot)$, but \emph{not} necessarily $\pihat(\cdot)$, is correctly specified. This indicates that it is fairly robust to any misspecification of $\pihat(\cdot)$. On the other hand, it is also %appears somewhat
more sensitive to $\mhat(\cdot)$, since the errors do tend to increase somewhat %appreciably
when $\mhat(\cdot)$ is misspecified for some of the non-linear DGPs, except Table \ref{table: DGP_id_n1000_p500}(c). Nevertheless, it is still expected to be consistent whenever only one of $\{\pihat(\cdot), \mhat(\cdot)\}$ is correct, and we validate this DR property via a large sample analysis in Appendix \ref{Sim. double-robustness}.
Lastly, it is interesting to note that for all the non-linear DGPs, the $L_2$ errors of $\bthetahat_{orac}$ and $\bthetahat_{full}$ are relatively close, indicating that there is little loss due to missing $Y$ in estimating $\btheta_0$. This is possibly due to the non-linear form of $m(\bX)$ %a joint effect of missing $Y$ and the non-linear nature of $\E(Y|X)$ %and the high dimensionality all of
which contributes towards reducing the gap between the two oracles. %and `super'-oracle estimators.
\begin{table}[H]%[!htbp]
	\centering
	\caption{Average ($A$-CovP) and median ($M$-CovP) of the empirical coverage probabilities (CovPs) for the (coordinatewise) 95\% CIs of $\btheta_0$ obtained via $\bthetatilDR$ (based on various combinations of the nuisance estimators $\{\pihat(\cdot), \mhat(\cdot)\}$) %over all replications,
for $n = 1000$, $\bSigma_p = I_p$ and all three choices of the \emph{true} DGPs. Shown also are the corresponding average %(over coordinates) of the mean
lengths of these CIs. All values are reported separately for the truly zero and non-zero coefficients of $\btheta_0$ (see Section \ref{Est. Implemented}).}
%%and lengths of the CIs built upon the desparsified estimator under the setting of $n=1000$ using identity covariance matrix. Different working nuisance models for $\pi(\cdot)$ and $m(\cdot)$ are compared. We report the means and medians together with standard errors and MAD as subscripts.  }
	\label{table: DGP_id_n1000_p50_infer}
	{\bf (I)} $p=50$. \\
	(a) DGP: ``Linear-linear'' for $\pi(\cdot)$ and $m(\cdot)$.
		\vspace{0.05in}
		\resizebox{\textwidth}{!}{
			\begin{tabular}{ll||ccc|ccc}\hline
				\multicolumn{2}{l}{Working nuisance model}  & \multicolumn{3}{c}{Zero coefficients} &  \multicolumn{3}{c}{Non-zero coefficients} \\
				\hline
				\multicolumn{2}{l}{} &$A$-CovP & $M$-CovP  & Length &$A$-CovP & $M$-CovP  & Length \\
				\hline
				\multirow{2}{*}{$\mhat$: linear}&$\pihat$: logit & $0.94_{0.01}$ & $(0.94_{0.01})$ & $0.16_{0}$ & $0.94_{0.01}$ & $(0.94_{0.01})$ & $0.16_{0}$ \\
				&$\pihat$: quad & $0.94_{0.01}$ & $(0.94_{0.01})$ & $0.16_{0}$ & $0.94_{0.01}$ & $(0.94_{0.01})$ & $0.16_{0}$  \\
				\hline
				\multirow{2}{*}{$\mhat$: quad}&$\pihat$: logit & $0.94_{0.01}$ & $(0.94_{0.01})$ & $0.16_{0}$ & $0.94_{0.01}$ & $(0.94_{0.01})$ & $0.16_{0}$   \\
				&$\pihat$: quad &  $0.95_{0.01}$ & $(0.95_{0.01})$ & $0.16_{0}$ & $0.94_{0.01}$ & $(0.94_{0.01})$ & $0.16_{0}$  \\
				\hline
				\multirow{2}{*}{$\mhat$: SIM}&	$\pihat$: logit &  $0.95_{0.01}$ & $(0.95_{0.01})$ & $0.16_{0}$ & $0.93_{0.01}$ & $(0.93_{0.01})$ & $0.16_{0}$  \\
				&$\pihat$: quad & $0.95_{0.01}$ & $(0.95_{0.01})$ & $0.16_{0}$ & $0.94_{0.01}$ & $(0.94_{0.01})$ & $0.16_{0}$\\
				\hline
		\end{tabular}}
	(b) DGP: ``Quad-quad'' for $\pi(\cdot)$ and $m(\cdot)$.
	\vspace{0.05in}
	\resizebox{\textwidth}{!}{
		\begin{tabular}{ll||ccc|ccc}\hline
			\multicolumn{2}{l}{Working nuisance model} & \multicolumn{3}{c}{Zero coefficients} &  \multicolumn{3}{c}{Non-zero coefficients} \\
		\hline
		\multicolumn{2}{l}{} &$A$-CovP & $M$-CovP  & Length &$A$-CovP & $M$-CovP  & Length \\
		\hline
			\multirow{2}{*}{$\mhat$: linear}&$\pihat$: logit & $0.94_{0.01}$ & $(0.94_{0.01})$ & $0.41_{0}$ & $0.88_{0.16}$ & $(0.93_{0.02})$ & $0.46_{0.08}$ \\
			& $\pihat$: quad & $0.95_{0.01}$ & $(0.95_{0.01})$ & $0.41_{0}$ & $0.89_{0.12}$ & $(0.93_{0.02})$ & $0.46_{0.07}$   \\
			\hline
			\multirow{2}{*}{$\mhat$: quad}&$\pihat$: logit &  $0.94_{0.01}$ & $(0.94_{0.01})$ & $0.34_{0}$ & $0.94_{0.01}$ & $(0.94_{0.01})$ & $0.38_{0.06}$\\
			&$\pihat$: quad &$0.94_{0.01}$ & $(0.95_{0.01})$ & $0.34_{0}$ & $0.94_{0.01}$ & $(0.94_{0.01})$ & $0.38_{0.06}$   \\
			\hline
			\multirow{2}{*}{$\mhat$: SIM}&	$\pihat$: logit &  $0.95_{0.01}$ & $(0.94_{0.01})$ & $0.41_{0}$ & $0.88_{0.16}$ & $(0.94_{0.02})$ & $0.46_{0.08}$ \\
			&$\pihat$: quad & $0.95_{0.01}$ & $(0.95_{0.01})$ & $0.41_{0}$ & $0.89_{0.12}$ & $(0.93_{0.03})$ & $0.47_{0.07}$ \\
			\hline
	\end{tabular}}
	(c) DGP: ``SIM-SIM'' for $\pi(\cdot)$ and $m(\cdot)$.
	\vspace{0.05in}
	\resizebox{\textwidth}{!}{
		\begin{tabular}{ll||ccc|ccc}\hline
			\multicolumn{2}{l}{Working nuisance model} & \multicolumn{3}{c}{Zero coefficients} &  \multicolumn{3}{c}{Non-zero coefficients} \\
		\hline
		\multicolumn{2}{l}{}  &$A$-CovP & $M$-CovP  & Length &$A$-CovP & $M$-CovP  & Length \\
		\hline
		\multirow{2}{*}{$\mhat$: linear}&$\pihat$: logit &  $0.94_{0.01}$ & $(0.94_{0.01})$ & $0.46_{0}$ & $0.94_{0.01}$ & $(0.94_{0.01})$ & $0.52_{0.04}$  \\
			& $\pihat$: quad & $0.94_{0.01}$ & $(0.94_{0.01})$ & $0.45_{0}$ & $0.94_{0.01}$ & $(0.94_{0.01})$ & $0.52_{0.04}$   \\
			\hline
			\multirow{2}{*}{$\mhat$: quad}&$\pihat$: logit & $0.95_{0.01}$ & $(0.95_{0.01})$ & $0.45_{0}$ & $0.94_{0.01}$ & $(0.94_{0.01})$ & $0.52_{0.04}$   \\
			&$\pihat$: quad & $0.95_{0.01}$ & $(0.95_{0.01})$ & $0.45_{0}$ & $0.94_{0.01}$ & $(0.94_{0.01})$ & $0.52_{0.04}$\\
			\hline
			\multirow{2}{*}{$\mhat$: SIM}&	$\pihat$: logit &$0.95_{0.01}$ & $(0.95_{0.01})$ & $0.40_{0}$ & $0.94_{0.01}$ & $(0.94_{0.01})$ & $0.46_{0.03}$  \\
			&$\pihat$: quad &  $0.95_{0.01}$ & $(0.95_{0.01})$ & $0.40_{0}$ & $0.94_{0.01}$ & $(0.94_{0.01})$ & $0.45_{0.03}$ \\
			\hline
	\end{tabular}}
\end{table}
%For example in Table \ref{table: DGP_id_n1000_p50}(b), under the ``quad-quad'' DGP, the error for ``$\pihat$: quad-$\mhat$: linear'' is larger than that of ``$\pihat$: linear-$\mhat$: quad''. Similar patterns can also be observed from ``SIM-SIM'' DGP. In fact, when the working model for $m(\cdot)$ is correct, the estimation performances are equally good for both choices of the working model for $\pi(\cdot)$. In the ``SIM-SIM'' DGP, the estimation errors for using different working nuisance model are relatively close to each other. This is due to the parameter that we choose for the ``SIM-SIM'' DGP. In addition, comparing the estimation errors of $\btheta_{orac}$ and $\btheta_{full}$ across different DGPs, we notice that they are relatively closer for ``quad-quad'' and ``SIM-SIM'' DGPs than those of the ``linear-linear'' DGPs. The joint effect of missing outcomes and non-linear DPGs reduces the gap between the estimation errors using oracle and super oracle estimator.
%
\begin{table}[!ht]%[!htbp]
	\centering
	\caption{See caption of Table \ref{table: DGP_id_n1000_p50_infer}. (Only change: $p= 500$ instead of $50$) }
	\label{table: DGP_id_n1000_p500_infer}
	{\bf (II)} $p=500$. \\
	(a) DGP: ``Linear-linear'' for $\pi(\cdot)$ and $m(\cdot)$.
	\vspace{0.05in}
	\resizebox{\textwidth}{!}{
		\begin{tabular}{ll||ccc|ccc}\hline
		\multicolumn{2}{l}{Working nuisance model}  & \multicolumn{3}{c}{Zero coefficients} &  \multicolumn{3}{c}{Non-zero coefficients} \\
		\hline
		\multicolumn{2}{l}{}  &$A$-CovP & $M$-CovP  & Length &$A$-CovP & $M$-CovP  & Length \\
		\hline
			\multirow{2}{*}{$\mhat$: linear}&$\pihat$: logit &$0.94_{0.01}$ & $(0.94_{0.01})$ & $0.16_{0}$ & $0.92_{0.01}$ & $(0.92_{0.01})$ & $0.16_{0}$ \\
			&$\pihat$: quad &  $0.94_{0.01}$ & $(0.94_{0.01})$ & $0.16_{0}$ & $0.91_{0.02}$ & $(0.92_{0.01})$ & $0.16_{0}$  \\
			\hline
			\multirow{2}{*}{$\mhat$: quad}&$\pihat$: logit &  $0.94_{0.01}$ & $(0.95_{0.01})$ & $0.17_{0}$ & $0.91_{0.02}$ & $(0.91_{0.01})$ & $0.17_{0}$ \\
			&$\pihat$: quad &  $0.94_{0.01}$ & $(0.95_{0.01})$ & $0.17_{0}$ & $0.91_{0.02}$ & $(0.91_{0.01})$ & $0.17_{0}$  \\
			\hline
			\multirow{2}{*}{$\mhat$: SIM}&	$\pihat$: logit &  $0.94_{0.01}$ & $(0.94_{0.01})$ & $0.16_{0}$ & $0.92_{0.01}$ & $(0.92_{0.01})$ & $0.16_{0}$  \\
			&$\pihat$: quad & $0.94_{0.01}$ & $(0.95_{0.01})$ & $0.16_{0}$ & $0.92_{0.01}$ & $(0.92_{0.01})$ & $0.16_{0}$ \\
			\hline
	\end{tabular}}
	(b) DGP: ``Quad-quad'' for $\pi(\cdot)$ and $m(\cdot)$.
	\vspace{0.05in}
	\resizebox{\textwidth}{!}{
		\begin{tabular}{ll||ccc|ccc}\hline
		\multicolumn{2}{l}{Working nuisance model}  & \multicolumn{3}{c}{Zero coefficients} &  \multicolumn{3}{c}{Non-zero coefficients} \\
		\hline
		\multicolumn{2}{l}{}  &$A$-CovP & $M$-CovP  & Length &$A$-CovP & $M$-CovP  & Length \\
		\hline
			\multirow{2}{*}{$\mhat$: linear}&$\pihat$: logit & $0.95_{0.01}$ & $(0.95_{0.01})$ & $0.44_{0}$ & $0.91_{0.03}$ & $(0.92_{0.02})$ & $0.46_{0.07}$  \\
			& $\pihat$: quad & $0.95_{0.01}$ & $(0.95_{0.01})$ & $0.43_{0}$ & $0.91_{0.03}$ & $(0.92_{0.01})$ & $0.46_{0.06}$   \\
			\hline
			\multirow{2}{*}{$\mhat$: quad}&$\pihat$: logit & $0.94_{0.01}$ & $(0.95_{0.01})$ & $0.33_{0}$ & $0.92_{0.01}$ & $(0.92_{0.01})$ & $0.35_{0.04}$ \\
			&$\pihat$: quad & $0.94_{0.01}$ & $(0.95_{0.01})$ & $0.33_{0}$ & $0.92_{0.01}$ & $(0.92_{0.01})$ & $0.35_{0.04}$   \\
			\hline
			\multirow{2}{*}{$\mhat$: SIM}&	$\pihat$: logit &  $0.95_{0.01}$ & $(0.95_{0.01})$ & $0.44_{0}$ & $0.91_{0.05}$ & $(0.93_{0.02})$ & $0.47_{0.07}$  \\
			&$\pihat$: quad & $0.95_{0.01}$ & $(0.95_{0.01})$ & $0.43_{0}$ & $0.91_{0.04}$ & $(0.92_{0.01})$ & $0.46_{0.06}$ \\
			\hline
	\end{tabular}}
	(c) DGP: ``SIM-SIM'' for $\pi(\cdot)$ and $m(\cdot)$.
	\vspace{0.05in}
	\resizebox{\textwidth}{!}{
		\begin{tabular}{ll||ccc|ccc}\hline
			\multicolumn{2}{l}{Working nuisance model}  & \multicolumn{3}{c}{Zero coefficients} &  \multicolumn{3}{c}{Non-zero coefficients} \\
			\hline
			\multicolumn{2}{l}{}  &$A$-CovP & $M$-CovP  & Length &$A$-CovP & $M$-CovP  & Length \\
			\hline
			\multirow{2}{*}{$\mhat$: linear}&$\pihat$: logit & $0.94_{0.01}$ & $(0.95_{0.01})$ & $0.53_{0}$ & $0.87_{0.05}$ & $(0.88_{0.06})$ & $0.57_{0.03}$  \\
			& $\pihat$: quad & $0.94_{0.01}$ & $(0.95_{0.01})$ & $0.53_{0}$ & $0.87_{0.05}$ & $(0.86_{0.07})$ & $0.57_{0.03}$  \\
			\hline
			\multirow{2}{*}{$\mhat$: quad}&$\pihat$: logit & $0.95_{0.01}$ & $(0.95_{0.01})$ & $0.53_{0}$ & $0.88_{0.04}$ & $(0.88_{0.05})$ & $0.57_{0.03}$  \\
			&$\pihat$: quad & $0.95_{0.01}$ & $(0.95_{0.01})$ & $0.53_{0}$ & $0.87_{0.05}$ & $(0.87_{0.06})$ & $0.57_{0.03}$\\
			\hline
			\multirow{2}{*}{$\mhat$: SIM}&	$\pihat$: logit &$0.95_{0.01}$ & $(0.95_{0.01})$ & $0.50_{0}$ & $0.93_{0.02}$ & $(0.93_{0.01})$ & $0.54_{0.03}$  \\
			&$\pihat$: quad &  $0.95_{0.01}$ & $(0.95_{0.01})$ & $0.50_{0}$ & $0.93_{0.02}$ & $(0.93_{0.01})$ & $0.54_{0.03}$ \\
			\hline
	\end{tabular}}
\end{table}
Tables \ref{table: DGP_id_n1000_p50_infer} and \ref{table: DGP_id_n1000_p500_infer} summarize the CovPs and lengths of the CIs obtained via %the desparsified estimator
$\bthetatilDR$. We first observe that across \emph{all} DGPs and choices of $p$, the CovPs for the truly zero coefficients of $\btheta_0$ are always close to the desired 95\% level \emph{regardless} of the choice of the working nuisance models, which is (pleasantly) surprising. While our theoretical results in Section \ref{sec:inference} on $\bthetatilDR$ do require both $\pihat(\cdot)$ and $\mhat(\cdot)$ to be correct, the empirical results seem to be quite robust in this regard, at least for the zero coefficients. Among the non-zero coefficients of $\btheta_0$ (which are much fewer in number), the results are more along expected lines. For $p = 50$, the CovPs are all close to 95\% whenever $\mhat(\cdot)$ is correctly specified which, again, demonstrates the robustness of the results (this time in inference) towards misspecififcation of $\pihat(\cdot)$. On the other hand, when $\mhat(\cdot)$ is misspecified, the average CovPs could often be much lower than 95\%, e.g. Table \ref{table: DGP_id_n1000_p50_infer}(b), which should not be unexpected. However, for these same CIs, the median CovPs are considerably better and \emph{still} reasonably close to 95\%, thus indicating that for only a few of these coefficients, the corresponding CIs have low CovPs when $\mhat(\cdot)$ is misspecified. For the SIM-SIM DGP, however, the results seem to be quite good and invariant to the choices of $\{\pihat(\cdot),\mhat(\cdot)\}$.

%For the truly non-zero coefficients, the results are determined by the working models and $n,p$. When $m(\cdot)$ is correctly specified, the CIs would provide correct coverage probabilities when $p=50$. When $m(\cdot)$ is not correctly specified, the performance of the CIs is in general not good in terms of mean coverage probabilities, but these CIs still have a reasonable median coverage probabilities (see Table \ref{table: DGP_id_n1000_p50_infer}(b)). This implies that in the moderate dimensional case, there are some particular coefficients whose corresponding CIs have low coverage probabilities when the model is misspecified while some others still have desired value. When the true DGP is ``SIM-SIM'', different working models provide similar results. The pattern is the same as the one observed in the estimation error table.

For $p = 500$, the results for the zero coefficients are similar to $p = 50$. For the non-zero coefficients, however, the CovPs generally tend to be a little bit below 95\%, and at the same time, are more similar (except for Table \ref{table: DGP_id_n1000_p500_infer}(c)) across different choices of $\{\pihat(\cdot), \mhat(\cdot)\}$ regardless of their correctness. This is possibly due to a combination of the price we pay in estimating the influence function for $\bthetatilDR$ and the precision matrix $\bOmega$ under such high dimensional settings, as well as the (well known) bias inherent in the non-zero coefficients of shrinkage estimators like $\bthetahatDR$. These finite sample biases are expected to be reduced with larger sample sizes. Indeed, in our large sample analyses in Appendix \ref{Sim. double-robustness} with $n=50000$, the patterns are much clearer and the results much improved, wherein we show that the CovPs achieved are fairly close to the nominal level of 95\% whenever at least one of $\{\pihat(\cdot),\mhat(\cdot)\}$ is correct.
Lastly, for both $p = 50$ and $500$, the average lengths of the CIs are in general shorter for the cases where $\mhat(\cdot)$ %if not both $\{\pihat(\cdot),\mhat(\cdot)\}$,
is correctly specified (e.g., Table \ref{table: DGP_id_n1000_p50_infer}(b)-(c) and Table \ref{table: DGP_id_n1000_p500_infer}(b)-(c)). Whenever $\mhat(\cdot)$ is misspecified, the corresponding CIs tend to be larger and also provide low coverage in some cases, as expected. %Those misspecified working models provide CIs that have poor coverage probabilities even with larger lengths.

As mentioned before, the results for $\bSigma_p =$ AR1 or CS, given in Appendix \ref{sec:sim:othercov}, are mostly similar to the corresponding results for $\bSigma_p = I_p$,
indicating (empirically) that our procedures are fairly \emph{robust} to the underlying correlation structure of $\bX$, as well as the degree of sparsity of $\bOmega$, in high dimensional settings. Finally, in Appendix \ref{Sim. double-robustness}, apart from validating the DR properties (both in estimation and inference) of our estimators, we also demonstrate, via a large sample analysis for a non-linear DGP, that the CC estimator can be \emph{inconsistent}, thus showing its unsuitability as a general estimator of $\btheta_0$.

\section{Discussion}\label{Discussion}  %% Needs to be edited %%

%high dimensional M estimation problem with missing response
%high dimensional target parameter under fully non-parametric model
%estimation error: finite sample bounds
%DDR property
%inference of the DDR estimator
%estimation of the nuisance functions

%sharp bounds when only one correct

\hspace{-0.053in}In this paper, we studied high dimensional $M$-estimation problems with missing outcomes under a model-free semi-parametric framework. Our parameter of interest itself is high dimensional which is a key distinction from most of the existing literature. A variety of important problems were discussed as special cases of this framework,
along with their counterparts in causal inference based on the `potential outcomes' framework. %with useful applications in heterogeneous treatment effects estimation and precision medicine. %With the response $Y$ possibly missing at random and high dimensional covariates, we consider estimation of and inference for the target parameter $\btheta_0$, which is defined as the minimizer of the risk of a convex loss. This parameter of interest is defined in such a way that it is a high dimensional parameter under a fully non-parametric model. This framework includes standard regression with missing outcomes and is also applicable to causal inference such as heterogeneous treatment effects estimation.

We proposed the $L_1$-regularized DDR estimator of $\btheta_0$ which serves as a \emph{generalization} of traditional DR estimators for high dimensional parameters. We studied its properties in detail via non-asymptotic bounds under mild tail assumptions and \emph{only} high-level rate conditions on the nuisance estimators, showing its rate optimality, first order insensitivity and DR properties under appropriate conditions. %when both nuisance models are correct, and consistency %(in Appendix \ref{draspect})%showing its rate optimality and first order insensitivity when both nuisance models are correct, and consistency when only one is correct.
Our other main contribution is the desparsified DDR estimator which admits a \emph{semi-parametric optimal} ALE %that is  %, under suitable conditions, a ALE
and also facilitates \emph{inference} on $\btheta_0$. %ALE , % and also facilitates .
It serves as the appropriate \emph{generalization} of Debiased Lasso type estimators for high dimensional inference with missing $Y$. Further, %As a third contribution,
we also discuss various choices of the nuisance estimators %based on high dimensional models %parametric or semi-parametric
and their properties. However, one is also free to use \emph{any} other choice as long as it satisfies our basic conditions. All our results were validated via extensive simulations, and while not presented due to limited space, we also have promising real data analysis results for our methods. Lastly, we have only investigated the DR properties of our estimator in terms of consistency. Getting the sharp rates (and inferential tools) under these general settings is much more challenging and requires a case-by-case analysis. We leave this for future research. %and remains an open problem.%(in Appendix \ref{SEC:NUISANCE:SEPARATE:SUPP}. %Meanwhile, we provide theoretical results on estimating the nuisance functions (propensity score and outcome models) using linear and non-linear, parametric and semi-parametric models, which expands the current literature.

%We also investigate the double robustness of our estimator showing its consistency even if only one of the working models on the propensity score and the outcome models is correctly specified. We include both the theoretical results and the simulations that examine this property in the appendix. The sharp rates of our proposed estimator under more general settings requires case-by-case studies and remains an open problem.

\appendix

%\vspace{-0.014in}
\section*{Supplementary Material}\label{supp_mat}
\textbf{Supplementary Materials for ``High Dimensional $M$-Estimation with Missing Outcomes: A Semi-Parametric Framework''}
(.pdf file).
In the \hyperref[supp_mat]{Supplementary Material} (Appendices \ref{draspect}-\ref{pfs:nuisance}), we collect several important materials that could not be accommodated in the main manuscript, %, including discussions on DR properties of the estimator, additional numerical results and all technical materials and discussions, including proofs for all the main results and associated supporting lemmas.
including: discussions on the DR properties of our estimator (Appendix \ref{draspect}), properties of the nuisance function estimators (Appendix \ref{SEC:NUISANCE:SEPARATE:SUPP}; Theorems \ref{parametric:thm}-\ref{KS:mainthm2}) and their proofs (Appendix \ref{pfs:nuisance}), additional numerical results (Appendix \ref{sec:sim:supp}), %including specifically the validation of DR properties and performance of the CC estimator (Appendix \ref{Sim. double-robustness}),
 supporting lemmas and a few related definitions (Appendix \ref{techtools}), some key technical discussions on the error terms (Appendix \ref{discussion:errorterms}),
 and the proofs of all our main results (Lemma \ref{DEV:BOUND} in Appendix \ref{pf:dev:bound}, Theorem \ref{TZERO:THM} in Appendix \ref{pf:tzero:thm}, Theorems \ref{TPI:THM}-\ref{RMPI:THM} in Appendices \ref{pf:tpi:thm}-\ref{pf:rmpi:thm}, and Theorem \ref{HDINF:THM} in Appendix \ref{pf:HDinf:thm}).

%\bibliographystyle{imsart-nameyear}
%%\bibliographystyle{apalike}
%\bibliography{P1-HDME-Abhishek-Biblio}

%\end{document}

\pagebreak

\setcounter{section}{0}

\begin{center}
%\large
{\bf \uppercase{Supplementary Materials for ``High Dimensional $M$-Estimation with Missing Outcomes: A Semi-Parametric Framework''}}
\end{center}%\label{suppl}%\vspace{0.1in} %\vspace{0.8in}
\vspace{0.03in}

\begin{center}
\uppercase{By Abhishek Chakrabortty, Jiarui Lu, T. Tony Cai} \\
\uppercase{and Hongzhe Li} \par\smallskip
{\em Texas A\&M University and  University of Pennsylvania}
\end{center}
\vspace{0.02in}

\paragraph*{Organization}
In this supplement (Appendices \ref{draspect}-\ref{pfs:nuisance}), %of the \hyperref[supp_mat]{Supplementary Material},
we collect several important materials that could not be accommodated in the main manuscript, including: discussions on the DR properties of our estimator (Appendix \ref{draspect}), properties of the nuisance function estimators (Appendix \ref{SEC:NUISANCE:SEPARATE:SUPP}; Theorems \ref{parametric:thm}-\ref{KS:mainthm2}) and their proofs (Appendix \ref{pfs:nuisance}), additional numerical results (Appendix \ref{sec:sim:supp}), %including specifically the validation of DR properties and performance of the CC estimator (Appendix \ref{Sim. double-robustness}),
 supporting lemmas and a few related definitions (Appendix \ref{techtools}), some key technical discussions on the error terms (Appendix \ref{discussion:errorterms}),
 and the proofs of all our main results (Lemma \ref{DEV:BOUND} in Appendix \ref{pf:dev:bound}, Theorem \ref{TZERO:THM} in Appendix \ref{pf:tzero:thm}, Theorems \ref{TPI:THM}-\ref{RMPI:THM} in Appendices \ref{pf:tpi:thm}-\ref{pf:rmpi:thm}, and Theorem \ref{HDINF:THM} in Appendix \ref{pf:HDinf:thm}). %technical materials proofs of all our main results and the associated supporting lemmas.

\appendix

\section{Double Robustness of the DDR Estimator}\label{draspect}

Our probabilistic analysis of $\|\bT_n\|_{\infty}$ %$\|\bT_n\|_{\infty} \equiv \| \bnabla \LnDR (\btheta_0)\|_{\infty}$
for establishing the convergence rate of $\bthetahatDR$ (in the light of Lemma \ref{DEV:BOUND}) has so far assumed that both the nuisance functions $\{\pi(\cdot), m(\cdot)\}$ are correctly estimated via $\{\pihat(\cdot),\mhat(\cdot)\}$ satisfying Assumptions \ref{tpicont:assmpn}-\ref{tmcont:assmpn}. As noted in \eqref{DRrep:eqn2}, the nature of the population DDR loss $\LDR(\cdot)$ and the empirical version $\LnDR(\cdot)$ is such that consistency of $\|\bT_n\|_{\infty}$ (and hence $\bthetahatDR$) should hold even if only one of $\{\pihat(\cdot),\mhat(\cdot)\}$ is correct. %, and not necessarily both,

%\subsection{Consistency under Misspecification of one of the Nuisance Functions}\label{subsec:draspect}
In this section, we briefly sketch the arguments that ensure \emph{consistency} of $\|\bT_n\|_{\infty}$ even if \emph{only one} of $\{\pihat(\cdot), \mhat(\cdot)\}$ is correctly specified but \emph{not} necessarily both. The convergence rates underlying this consistency, while reasonable, are not necessarily sharp, however. To obtain sharper rates (if possible at all) under these general situations, one needs a more nuanced case-by-case analysis which \emph{will depend} now on the first order properties, rates and nature of construction of the estimators, unlike the case when both estimators are correctly specified and the results are first order insensitive (see Remark \ref{rem:firstorderinsensitive}), i.e. require no specific knowledge about the estimators except for some high-level convergence properties. This is true even for classical settings, and the high dimensional setting here only lends further complexity and subtlety to the issue. Considering the main goals and scope of this paper, we suppress such finer analysis under those cases, for simplicity and brevity.

\paragraph*{Case 1} Suppose that $\pihat(\cdot)$ is misspecified, such that $\pihat(\bx) \convP \pi^*(\bx) \neq \pi(\bx)$ following Assumption \ref{tpicont:assmpn} with $\pi(\cdot)$ therein replaced by a general $\pi^*(\cdot)$, %possibly not  equal to $ \pi(\cdot)$,
while $\mhat(\cdot)$ is still correctly specified with $\mhat(\bx) \convP m(\bx)$ following Assumption \ref{tmcont:assmpn}. In this case, the terms $\bTzeron$ and $\bTmn$ in the decomposition (\ref{decomp:eqn}) of $\bT_n$ will stay unaffected and their properties still governed by the results of Theorems \ref{TZERO:THM} and \ref{TM:THM} respectively, while the error terms $\bTpin$ and $\bRpimn$ involving $\pihat(\cdot)$  would be affected and need to be appropriately analyzed as follows.

$\bTpin$ should be further decomposed into two terms as: $\bTpin = \bTtilpin + \bTpin^*$,
\begin{align}
\mbox{where} \;\; &\bTtilpin \; := \; \frac{1}{n} \sum_{i=1}^n \left\{\frac{T_i}{\pihat(\bX_i)} - \frac{T_i}{\pi^*(\bX_i)}\right\} \left\{ Y_i - m(\bX_i)\right\} \bh(\bX_i) \nonumber  \\
\mbox{and} \;\; &\bTpin^* \; := \; \frac{1}{n} \sum_{i=1}^n \left\{\frac{T_i}{\pi^*(\bX_i)} - \frac{T_i}{\pi(\bX_i)}\right\} \left\{ Y_i - m(\bX_i)\right\} \bh(\bX_i),  \nonumber %\label{dr:pi:eqn1} %\\
\end{align}
while $\bRpimn$ should be decomposed further as: $\bRpimn = \bRtilpimn + \bRpimn^*$,
\begin{align}
\mbox{where} \;\; & \bRtilpimn \; := \; \frac{1}{n} \sum_{i=1}^n  \left\{\frac{T_i}{\pihat(\bX_i)} - \frac{T_i}{\pi^*(\bX_i)}\right\} \left\{ \mtil(\bX_i) - m(\bX_i)\right\} \bh(\bX_i)  \nonumber \\
\mbox{and} \;\; &  \bRpimn^* \; := \; \frac{1}{n} \sum_{i=1}^n \left\{\frac{T_i}{\pi^*(\bX_i)} - \frac{T_i}{\pi(\bX_i)}\right\} \left\{  \mtil(\bX_i) - m(\bX_i)\right\} \bh(\bX_i). \nonumber %\label{dr:pi:eqn2}
\end{align}
Suppose Assumption \ref{tpicont:assmpn} is modified appropriately with $\pi(\cdot)$ therein replaced throughout by $\pi^*(\cdot)$, the true target function of $\pihat(\cdot)$ in this case, and assume also that $\pi^*(\bX) > \deltapi^* > 0$ for some constant $\deltapi^*$, and $\pi^*(\bX) - \pi(\bX)$ is bounded (or sub-Gaussian).
Then, under Assumptions \ref{base:assmpns} and \ref{subgaussian:assmpn}-\ref{tmcont:assmpn}, using similar arguments as those used in the proofs of Theorems \ref{TZERO:THM}-\ref{TPI:THM} (for $\bTpin^*$ and $\bTtilpin$ respectively) and Theorem \ref{RMPI:THM} (for $\bRtilpimn$ and $\bRpimn^*$), it can be shown that
\begin{align}
& \| \bTtilpin \|_{\infty} \; \lesssim \; \vnpi \sqrt{\log(nd)} \sqrt{\frac{\log d}{n}} \;\; \mbox{and} \;\; \| \bTpin^* \|_{\infty} \; \lesssim \; \sqrt{\frac{\log d}{n}} \;\; \mbox{w.h.p.}, \;\; \mbox{and} \nonumber \\
& \| \bRtilpimn \|_{\infty} \; \lesssim \; \vnpi \vnbarm (\log n) \;\; \mbox{and} \;\; \| \bRpimn^* \|_{\infty} \; \lesssim \; \vnbarm \sqrt{\log n} \;\; \mbox{w.h.p.} \nonumber
\end{align}
\paragraph*{Case 2} Suppose $\mhat(\cdot)$ is misspecified instead with $\mhat(\bx) \convP m^*(\bx)\neq m(\bx)$ according to Assumption \ref{tmcont:assmpn} with $m(\cdot)$ replaced by a general $m^*(\cdot)$ therein, %possibly not  equal to $ \m(\cdot)$,
while $\pihat(\cdot)$ is still correctly specified with $\pihat(\bx) \convP \pi(\bx)$ following Assumption \ref{tpicont:assmpn}. In this case, the terms $\bTzeron$ and $\bTpin$ in the decomposition (\ref{decomp:eqn}) of $\bT_n$ stay unaffected and their properties still governed by the results of Theorems \ref{TZERO:THM} and \ref{TPI:THM} respectively, while the error terms $\bTmn$ and $\bRpimn$ involving $\mhat(\cdot)$  would be affected and need to be appropriately analyzed as follows.

$\bTmn$ may be further decomposed into two terms as: $\bTmn =  \bTtilmn  + \bTmn^*$,
\begin{align}
\mbox{where} \;\; & \bTtilmn \; := \; \frac{1}{n} \sum_{i=1}^n \left\{\frac{T_i}{\pi(\bX_i)} - 1 \right\} \left\{ \mtil(\bX_i) - m^*(\bX_i)\right\} \bh(\bX_i) \nonumber  \\
\mbox{and} \;\; & \bTmn^* \; := \; \frac{1}{n} \sum_{i=1}^n \left\{\frac{T_i}{\pi(\bX_i)} - 1 \right\} \left\{ m^*(\bX_i) - m(\bX_i)\right\} \bh(\bX_i), \nonumber %\label{dr:m:eqn1}
\end{align}
while $\bRpimn$ should be decomposed further as: $\bRpimn = \bRdagpimn   + \bRpimn^{**}$,
\begin{align}
\mbox{where} \;\; & \bRdagpimn \; := \; \frac{1}{n} \sum_{i=1}^n  \left\{\frac{T_i}{\pihat(\bX_i)} - \frac{T_i}{\pi(\bX_i)}\right\} \left\{ \mtil(\bX_i) - m^*(\bX_i)\right\} \bh(\bX_i) \nonumber \\
\mbox{and} \;\; & \bRpimn^{**} \; := \;  \frac{1}{n} \sum_{i=1}^n \left\{\frac{T_i}{\pihat(\bX_i)} - \frac{T_i}{\pi(\bX_i)}\right\} \left\{  m^*(\bX_i) - m(\bX_i)\right\} \bh(\bX_i). \nonumber %\label{dr:m:eqn2}
\end{align}
Suppose Assumption \ref{tmcont:assmpn} is modified appropriately whereby $m(\cdot)$ is replaced throughout by $m^*(\cdot)$, the true target function of $\mhat(\cdot)$ in this case. Further, assume also that $m^*(\bX) - m(\bX)$ is sub-Gaussian.
Then, under Assumptions \ref{base:assmpns} and \ref{subgaussian:assmpn}-\ref{tmcont:assmpn}, using similar arguments as those in the proofs of Theorems \ref{TZERO:THM} and \ref{TM:THM} (for $\bTmn^*$ and $\bTtilmn$ respectively) and Theorem \ref{RMPI:THM} (for $\bRdagpimn$ and $\bRpimn^{**}$), it is not difficult to show that the following hold:
\begin{align}
& \| \bTtilmn \|_{\infty} \; \lesssim \; \vnbarm \sqrt{\log(nd)} \sqrt{\frac{\log d}{n}} \;\; \mbox{and} \;\; \| \bTmn^* \|_{\infty} \; \lesssim \; \sqrt{\frac{\log d}{n}} \;\; \mbox{w.h.p.}, \; \mbox{and} \nonumber \\
& \| \bRdagpimn \|_{\infty} \; \lesssim \; \vnpi \vnbarm (\log n) \;\; \mbox{and} \;\; \| \bRpimn^{**} \|_{\infty} \; \lesssim \; \vnpi \sqrt{\log n} \;\; \mbox{w.h.p.} \nonumber
\end{align}

Combining the results over the two cases, under a general setting allowing for misspecification of either $\pihat(\cdot)$ or  $\mhat(\cdot)$, the terms in \eqref{decomp:eqn} therefore satisfy:
\begin{align}
& \| \bTzeron\|_{\infty} + \|\bTpin\|_{\infty} + \|\bTmn \|_{\infty} \; \lesssim \; \sqrt{\frac{\log d}{n}}\{1 + 1_{(\pi^*, m^*) \neq (\pi, m)} + o(1)\} \label{dr:final:rates} \\
& \mbox{and} \;\; \| \bRpimn \|_{\infty} \; \lesssim \; \{\vnpi 1_{(m^* \neq m)} + \vnbarm  1_{(\pi^* \neq \pi)}\} \sqrt{\log n} + \vnpi \vnbarm (\log n). \nonumber
\end{align}
Hence, even under possible misspecification of one of the nuisance function estimators, $\|\bT_n\|_{\infty}$ is certainly $o_{\P}(1)$ and thus double robust (in terms of consistency). %Hence, and hence double robust (in terms of consistency).
Consequently, $\bthetahatDR$ is also double robust (in terms of consistency)  in the light of Lemma \ref{DEV:BOUND} for an appropriately chosen $\lambda_n \geq 2 \| \bT_n \|_{\infty} = o_{\P}(1)$ as long as the corresponding the deviation bounds in \eqref{det:bound} involving $\sqrt{s \lambda_n}$ (for $L_2$ consistency) and $s\sqrt{\lambda_n}$ (for $L_1$ consistency) are assumed to be $o(1)$.

It is important to note from \eqref{dr:final:rates} that under the misspecification of either $\pihat(\cdot)$ or  $\mhat(\cdot)$, at least one among $\|\bTpin\|_{\infty}$ and $\|\bTmn \|_{\infty}$ is no longer a lower order term, but instead contributes an extra term of order $\sqrt{(\log d)/n}$, same as the main term $\bTzeron$, while the other one stays to be of lower order. More importantly, however, the behavior of the %second order
product-type bias (or `drift') term $\bRpimn$ changes dramatically! From being a lower order term involving the products of the rates of $\pihat(\cdot)$ and $\mhat(\cdot)$, it now involves the individual rates themselves appearing as leading order terms in a complementary manner, i.e. $\vnbarm$ appears if $\pihat(\cdot)$ is misspecified and $\vnpi$ appears if $\mhat(\cdot)$ is misspecified. This is mainly due to the unavoidable appearance of the additional terms $\bRpimn^*$ or $\bRpimn^{**}$, and their control inevitably requires use of the first order properties and rates of $\{\pihat(\cdot), \mhat(\cdot)\}$. In general, these rates are not necessarily of faster (or even same) order than $\sqrt{(\log d)/n}$. In fact, they are quite likely to be slower in most cases, especially if $\pihat(\cdot)$ and/or $\mhat(\cdot)$ are obtained based on non/semi-parametric models or high dimensional parametric models, in all of which cases the convergence rates are typically slower than $\sqrt{(\log d)/n}$.

Hence, under misspecification of $\pihat(\cdot)$ or $\mhat(\cdot)$, the $L_2$ convergence rate of $\bthetahatDR$ is likely to be slower than the usual benchmark rate of $\sqrt{s (\log d)/n}$. To achieve estimators with faster rates, one needs to carefully incorporate further bias corrections while constructing the estimator itself, given a choice of $\{\pihat(\cdot),\mhat(\cdot)\}$. This is quite a challenging problem in high dimensional settings, even for the simple case of mean (or ATE) estimation and with $\{\pihat(\cdot),\mhat(\cdot)\}$ obtained using standard high dimensional sparse parametric models. This case has been considered only recently by \citet{Avagyan_ATE-Mispecified_2017} and \citet{Andrea_Unified_2019}, where the methods and the associated analyses are evidently quite involved. We refer the interested reader to these papers for further insights on the problem and the ensuing challenges and nuances. However, given the scope of this paper, we do not delve further into such analyses for brevity, especially since in our case, the parameter is also high dimensional which leads to further complexity. % to the problem.
Nevertheless, we do empirically investigate in detail and validate the double robustness of $\bthetahatDR$ and $\bthetatilDR$ in our simulation studies; see Appendix \ref{Sim. double-robustness} %\ref{sec:sim}
for the results.

%\subsection{Double-robustness of the estimator and the performance of complete case estimator} \label{Sim. double-robustness}

\section{Results on Nuisance Function Estimators}\label{SEC:NUISANCE:SEPARATE:SUPP}

%\subsection{Convergence Rates for the `Extended' Parametric Families}\label{parametricfamilies:convergencerates}
\subsection{Convergence Rates for `Extended' Parametric Families}\label{parametricfamilies:convergencerates}
\hspace{-0.07in}We establish here tail bounds and convergence rates for estimators based on the `extended' parametric families discussed in Sections \ref{nuisance:pi}-\ref{nuisance:m}. For notational simplicity, we derive the results for a general outcome which may be assigned to be $T$ for estimation of $\pi(\cdot)$, or $TY$ for estimation of $m(\cdot)$. Let $(Z,\bX)$ denote a generic random vector where $Z \in \R$ and $\bX \in \R^p$ with support $\Xsc \subseteq \R^p$. %Let $Z \in \R$ be a generic random variable and $\bX \in \R^p$ be a random vector of covariates with support $\Xsc \subseteq \R^p$.
Consider an `extended' parametric family of (working) models for estimating $\E(Z \medgiven \bX)$ given by: $g\{\bbeta'\bPsi(\bX)\}$ where $\bPsi(\bX) \in \R^K$ is some vector of basis functions. Let $\bbeta_0$ denote the `target' parameter corresponding to this working model and let $\bbetahat$ be \emph{any} estimator of $\bbeta_0$ based on any suitable procedure applied to the observed data: $\{Z_i, \bX_i\}_{i=1}^n$. Then, we estimate $\E(Z \medgiven \bX = \bx)$ based on the working model as: $g\{\bbetahat'\bPsi(\bx)\}$. The result below establishes a tail bound for this estimator w.r.t. its target $g\{\bbeta_0'\bPsi(\bx)\}$.
\begin{theorem}\label{parametric:thm}
Suppose $\bbetahat$ satisfies a basic high-level $L_1$ error guarantee: %$\|\bbetahat - \bbeta_0 \|_1 \leq a_n$ with probability at least $1- q_n$ for some sequences $a_n \geq 0$ and $q_n \in [0,1]$ such that $a_n, q_n = o(1)$.
\begin{equation*}
\P (\| \bbetahat - \bbeta \|_1 > a_n) \; \leq \; q_n \;\; \mbox{for some} \;\; a_n, q_n = o(1), \;\; a_n \geq 0, \; q_n \in [0,1].
\end{equation*}
Suppose further that $g(\cdot)$ is Lipschitz continuous with $|g(u) - g(v)| \leq C_g |u-v|$ $\forall \; u,v \in \R$ and that $\bPsi(\bX)$ is uniformly bounded, i.e. $\max_{1 \leq j \leq K} | \bPsi_{[j]} (\bX)| \leq C_{\bPsi} < \infty $ a.s. $[\P]$, for some constants $C_g,  C_{\bPsi} \geq 0$.
%$\max_{1 \leq j \leq K} | \bPsi_{[j]} (\bX)| \leq C_{\bPsi} $ almost surely (a.s.) $[\P]$, for some constants $C_g,  C_{\bPsi}$.
Then, for any $t \geq 0$,
\begin{equation*}
\P \left[ \sup_{\bx \in \Xsc} | g\{\bbetahat'\bPsi(\bx) \} - g\{\bbeta_0'\bPsi(\bx) \}| \; > \; (\sqrt{2}C_g C_{\bPsi}) a_n t \right] \; \leq \; 2 \exp(-t^2) + q_n.
\end{equation*}
\end{theorem}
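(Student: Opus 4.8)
The plan is to reduce the supremum to a deterministic multiple of the $L_1$ estimation error $\|\bbetahat - \bbeta_0\|_1$ and then invoke the high-level guarantee, splitting on the size of $t$ to manufacture the stated tail form. First I would bound the integrand pointwise using Lipschitz continuity of $g$: for every $\bx \in \Xsc$,
\[
|g\{\bbetahat'\bPsi(\bx)\} - g\{\bbeta_0'\bPsi(\bx)\}| \; \leq \; C_g \, |(\bbetahat - \bbeta_0)'\bPsi(\bx)|.
\]
Next, applying H\"older's inequality together with the uniform bound $\|\bPsi(\bx)\|_{\infty} \leq C_{\bPsi}$ (valid for every $\bx$ in the support) gives $|(\bbetahat - \bbeta_0)'\bPsi(\bx)| \leq C_{\bPsi}\|\bbetahat - \bbeta_0\|_1$, and since the right-hand side is free of $\bx$, taking the supremum yields the pathwise (deterministic) envelope
\[
\sup_{\bx\in\Xsc}|g\{\bbetahat'\bPsi(\bx)\} - g\{\bbeta_0'\bPsi(\bx)\}| \; \leq \; C_g C_{\bPsi}\|\bbetahat - \bbeta_0\|_1 .
\]

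With this envelope in hand, the only source of randomness is $\|\bbetahat - \bbeta_0\|_1$, whose tail is controlled at the single threshold $a_n$ by the hypothesis. To match the stated bound I would split on $t$. For $t \geq 1/\sqrt{2}$ the threshold satisfies $(\sqrt{2}C_g C_{\bPsi})a_n t \geq C_g C_{\bPsi} a_n$, so by the envelope the event $\{\sup_{\bx} |\,\cdots| > (\sqrt{2}C_g C_{\bPsi})a_n t\}$ is contained in $\{\|\bbetahat - \bbeta_0\|_1 > a_n\}$, and the high-level guarantee bounds its probability by $q_n \leq 2\exp(-t^2) + q_n$. For $0 \leq t < 1/\sqrt{2}$ the claim is vacuous, since $2\exp(-t^2) > 2\exp(-1/2) > 1$ already exceeds any probability. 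Combining the two regimes yields the theorem.

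The proof is essentially mechanical, so the only real subtlety — and what I would flag as the ``hard part'' — is recognizing that the sub-Gaussian-looking factor $2\exp(-t^2)$ is \emph{not} produced by any genuine concentration (there is none: the supremum ranges over the fixed support $\Xsc$, so the quantity is a deterministic function of $\bbetahat$ alone), but is instead furnished free of charge by the case split, with the constant $\sqrt{2}$ tuned precisely so that at the crossover $t = 1/\sqrt{2}$ one has $2\exp(-t^2) > 1$. I would also read the hypothesis $\|\bbetahat - \bbeta\|_1$ as $\|\bbetahat - \bbeta_0\|_1$, with $\bbeta_0$ the target. Crucially, no appeal to the specific construction or first-order properties of $\bbetahat$ is needed, which is exactly what makes the bound applicable to \emph{any} suitable estimation procedure satisfying the $L_1$ guarantee.
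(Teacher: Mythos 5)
Your proof is correct. The substantive step---Lipschitz continuity followed by the $L_1$--$L_\infty$ H\"older bound, yielding the deterministic envelope $\sup_{\bx\in\Xsc}|g\{\bbetahat'\bPsi(\bx)\}-g\{\bbeta_0'\bPsi(\bx)\}|\le C_g C_{\bPsi}\|\bbetahat-\bbeta_0\|_1$---is exactly the paper's, and your reading of the hypothesis as $\|\bbetahat-\bbeta_0\|_1$ matches the paper's own proof (the $\bbeta$ in the displayed guarantee is a typo). Where you diverge is in converting the single-threshold guarantee at $a_n$ into the tail in $t$: the paper conditions on the event $\{\|\bbetahat-\bbeta_0\|_1\le a_n\}$ and applies Hoeffding's inequality for bounded variables (equivalently Lemma \ref{lem:1:GenProp} (ii)(d) and (iii)(a), packaged generally as Lemma \ref{lem:7:hptosgtypebound}) to obtain $\P(\|\bbetahat-\bbeta_0\|_1>\epsilon)\le 2\exp\{-\epsilon^2/(2a_n^2)\}+q_n$ for every $\epsilon\ge 0$, then sets $\epsilon=\sqrt{2}\,a_n t$; you instead split on $t$, noting the claim is vacuous for $t<1/\sqrt{2}$ and that for $t\ge 1/\sqrt{2}$ the event is contained in $\{\|\bbetahat-\bbeta_0\|_1>a_n\}$, whose probability is at most $q_n$. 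These are the same observation in different clothing: the paper's Hoeffding bound, applied to a variable truncated at $a_n$, is itself vacuous for $\epsilon$ below $a_n\sqrt{2\log 2}$ and refers to an empty event above $a_n$, so no genuine concentration enters either way---your remark that the $2\exp(-t^2)$ factor is furnished "for free" rather than by concentration is exactly right and is, if anything, a more transparent account of what the paper's step is doing. Your route is marginally more elementary (no Orlicz-norm machinery); the paper's route has the minor advantage of producing a uniform-in-$\epsilon$ tail bound in one shot, which is the form it reuses elsewhere.
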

\noindent Theorem \ref{parametric:thm} establishes a bound for the supremum which is much stronger than what we need to verify our basic assumptions. Nevertheless, as a consequence, it establishes that when one uses any of these `extended' parametric families for constructing $\{\pihat(\cdot), \mhat(\cdot)\}$, then the pointwise tail bounds required in our basic Assumptions \ref{tpicont:assmpn}-\ref{tmcont:assmpn} hold with the choices of $\{\vnpi, \vnm\}$ $\propto a_n$
and $\{q_{n,\pi}, q_{n,m} \} \propto q_n$. Further, as discussed in Sections \ref{nuisance:pi} and \ref{nuisance:m}, for most common choices of $\bbetahat$ based on penalized estimators from high dimensional models, the $L_1$ error rate $a_n$ should behave as: $a_n \propto s_{\bbeta_0} \sqrt{(\log {K})/n}$ w.h.p.

\subsection[High Dimensional Single Index KS: Non-Asymptotic Bounds and Rates]{High Dimensional Single Index Models: Non-Asymptotic Bounds and Rates for KS over Estimated Index Parameters}\label{sec:KS}

%In this section,
Here, we study the properties of single index KS estimators involving high dimensional covariates with the index parameter being (possibly) unknown and estimated. The underlying high dimensionality and the non-ignorable index estimation error makes the analyses nuanced and different from most existing results in the literature under classical settings. We consider both linear kernel average estimators (e.g. density estimators) as well as ratio form estimators (e.g. conditional mean estimators) and develop a non-asymptotic theory that establishes concrete tail bounds and pointwise convergence rates for such estimators. The results apply equally to both classical and high dimensional regimes, and while obtained in course of characterizing our nuisance function estimators' properties, may also be useful in other applications and should be of independent interest. We therefore
present the results under a generic framework and a set of notations independent of the main paper. %that is  independent of the rest of the paper.

%Let $Z \in \R$ be a generic random variable and $\bX \in \R^p$ be a random vector of covariates with support $\Xsc \subseteq \R^p$. We assume that $(Z,\bX)$ has finite $2^{nd}$ moments. Let $\{(Z_i, \bX_i): i = 1, \hdots, n\}$ be a sample of $n \geq 2$ i.i.d. realizations of $(Z,\bX)$, where we note that $p \geq 1$ is allowed to be high dimensional w.r.t.  $n$, i.e. $p$ is allowed to diverge with $n$. Let $\bbeta \in \R^p$ be any (unknown) parameter vector of interest and let $\bbetahat$ denote \emph{any} reasonable estimator of $\bbeta$ that satisfies a basic high-level guarantee (to be made precise shortly) in terms of an $L_1$ norm bound given by: $\| \bbetahat - \bbeta \|_1 \leq a_n$ w.h.p. for some sequence $a_n = o(1)$.

Let $\{(Z_i, \bX_i): i = 1, \hdots, n\}$ denote a sample of $n \geq 2$ i.i.d. realizations of a generic random vector $(Z,\bX)$ assumed to have finite $2^{nd}$ moments, where $Z \in \R$, $\bX \in \R^p$ with support $\Xsc \subseteq \R^p$ and $p \geq 1$ is allowed to be high dimensional compared to the sample size, i.e. $p$ is allowed to diverge with $n$.

Let $\bbeta \in \R^p$ be any (unknown) `parameter' of interest and let $\bbetahat$ denote \emph{any} reasonable estimator of $\bbeta$ that satisfies a basic high-level $L_1$ error guarantee:
\begin{equation}
\P (\| \bbetahat - \bbeta \|_1 > a_n) \; \leq \; q_n \;\; \mbox{for some} \;\; a_n, q_n = o(1), \;\; a_n \geq 0, \; q_n \in [0,1]. \label{eq:index:L1bound}
\end{equation}
\eqref{eq:index:L1bound} is a reasonable high-level requirement that should hold in most cases. It is important to note that \eqref{eq:index:L1bound} is the \emph{only} condition we require on $\{ \bbeta, \bbetahat\}$ for all our results and nothing specific regarding their construction or properties.
\par\smallskip
Let $W \equiv W_{\bbeta} := \bbeta'\bX$ and $\What := \bbetahat'\bX$. %$W_i \equiv W_{\bbeta,i} := \bbeta'\bX_i$ and $\What_i := \bbetahat'\bX_i$ for $ i = 1, \hdots, n$, and
For any $\bx \in \R^p$, let $w_{\bx} \equiv w_{\bx,\bbeta} := \bbeta'\bx$ and $\what_{x} := \bbetahat'\bx$. For any $w \in \R$, let $m_{\bbeta}(w) := \E(Z \given W = w)$ and $l_{\bbeta}(w) := m_{\bbeta}(w)f_{\bbeta}(w)$, where $f_{\bbeta}(\cdot)$ denotes the density of $W \equiv \bbeta'\bX$. Finally, for any $\bx\in \Xsc$, let $m (\bbeta, \bx) := m_{\bbeta}(\bbeta'\bx)$, $f (\bbeta, \bx) := f_{\bbeta}(\bbeta'\bx)$ and $l (\bbeta, \bx) := l_{\bbeta}(\bbeta'\bx)$.
\par\smallskip
Given \emph{any} estimator $\bbetahat$ of $\bbeta$ satisfying \eqref{eq:index:L1bound}, consider the following single index KS estimators of $l(\bbeta,\bx)$, $f(\bbeta,\bx)$ and $m(\bbeta,\bx)$ for any \emph{fixed} $\bx \in \Xsc$,
\begin{align*}
& \lhat(\bbetahat, \bx) \; := \; \frac{1}{nh} \sum_{i=1}^n Z_i K\left( \frac{\bbetahat'\bX_i - \bbetahat'\bx}{h} \right) \; \equiv \; \frac{1}{nh} \sum_{i=1}^n Z_i K\left( \frac{\What_i - \whatbx}{h} \right), \nonumber \\
& \fhat(\bbetahat, \bx) \; := \; \frac{1}{nh} \sum_{i=1}^n K\left( \frac{\bbetahat'\bX_i - \bbetahat'\bx}{h} \right)  \quad \mbox{and} \quad \mhat(\bbetahat, \bx) \; := \; \frac{\lhat(\bbetahat, \bx)}{\fhat(\bbetahat, \bx)}, \nonumber  %\quad \equiv \; \frac{1}{nh} \sum_{i=1}^n K\left( \frac{\What_i - \whatbx}{h} \right), \;\; \mbox{and} \nonumber
\end{align*}
where $K(\cdot): \R \rightarrow \R$ denotes any suitable kernel function (e.g. the Gaussian kernel) and $h \equiv h_n > 0$ denotes the bandwidth sequence with $h_n = o(1)$.

%Note that $\fhat(\cdot)$ is a special case of $\lhat(\cdot)$ with $Z \equiv 1$, and
$\lhat(\cdot)$ and $\fhat(\cdot)$ are both linear kernel average (LKA) estimators while $\mhat(\cdot)$ is a ratio type %(Nadaraya-Watson)
KS estimator. We obtain non-asymptotic tail bounds and (pointwise) convergence rates for these estimators in Theorems \ref{KS:mainthm1}-\ref{KS:mainthm2} below. The Assumptions \ref{KS:assmpn1}-\ref{KS:assmpn2} for these results are given separately in Appendix \ref{sec:KS:assmpns}.
%We mainly focus on the LKA estimator $\lhat(\cdot)$. $\fhat(\cdot)$ is a special case (with $Z \equiv 1$). These can be then combined to obtain results for the ratio estimator $\mhat(\cdot)$. We summarize our assumptions first.

\begin{theorem}[Tail bounds for LKA estimators]\label{KS:mainthm1}
\hspace{-0.1in} Consider the estimator $\lhat(\bbetahat, \bx)$ of $l(\bbeta, \bx)$. Assume \eqref{eq:index:L1bound} and Assumptions \ref{KS:assmpn1}-\ref{KS:assmpn2} (in Appendix \ref{sec:KS:assmpns})
and that $h = o(1)$, $\log (np)/(nh) = o(1)$ and $(a_n/h)\sqrt{\log p} = o(1)$. Then, for any fixed $\bx \in \Xsc$ and any $t \geq 0$, with probability at least $1 - 9 \exp(-t^2) - 2 q_n$,
\begin{equation*}
|l(\bbetahat, \bx) - l(\bbeta, \bx)|  \hspace{0.028in} \leq \hspace{0.028in} C_1\left(\frac{t+1}{\sqrt{nh}} + \frac{t^2 \sqrt{\log n}}{nh}\right) + C_2 \left(h^2 + a_n + \frac{a_n^2}{h^2} + \frac{\log (np)}{nh}\right)
\end{equation*}
for some constants $C_1, C_2 > 0$ depending only on those in the assumptions.

%Further, a similar tail bound also holds at the training points $\{\bX_i\}_{i=1}^n$. For any $1 \leq i \leq n$ and any $t \geq 0$, with probability $\geq 1 - 11 \exp(-t^2) - 2 q_n$,
%\begin{equation*}
%|\lhat(\bbetahat, \bX_i) - l(\bbeta, \bX_i)|  \leq  C_1^*\left(\frac{t+1}{\sqrt{nh}} + \frac{t^2 \sqrt{\log n}}{nh}\right) + C_2^* \left(h^2 + a_n + \frac{a_n^2}{h^2} + \frac{\log np}{nh}\right)
%\end{equation*}
%for some constants $C_1^*, C_2^* > 0$ depending only on those in the assumptions.
\end{theorem}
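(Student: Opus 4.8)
The plan is to split the total error of the linear kernel average estimator into an \emph{oracle} part, in which the true index $\bbeta$ is used, and an \emph{index-perturbation} part capturing the effect of replacing $\bbeta$ by $\bbetahat$:
\begin{equation*}
\lhat(\bbetahat, \bx) - l(\bbeta, \bx) \;=\; \underbrace{\{\lhat(\bbetahat, \bx) - \lhat(\bbeta, \bx)\}}_{\text{(I): index perturbation}} \;+\; \underbrace{\{\lhat(\bbeta, \bx) - l(\bbeta, \bx)\}}_{\text{(II): oracle KS error}},
\end{equation*}
where $\lhat(\bbeta, \bx) := (nh)^{-1}\sum_i Z_i K\{(\bbeta'\bX_i - \bbeta'\bx)/h\}$ is the infeasible estimator at the true index. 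Term (II) is a classical kernel-smoothing error for a \emph{fixed} index and is handled first; the high-dimensional difficulties all sit in term (I).

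For (II), I would further write $\lhat(\bbeta,\bx) - l(\bbeta,\bx) = \{\lhat(\bbeta,\bx) - \E\lhat(\bbeta,\bx)\} + \{\E\lhat(\bbeta,\bx) - l(\bbeta,\bx)\}$. The bias $\E\lhat(\bbeta,\bx) - l(\bbeta,\bx)$ is $O(h^2)$ by a change of variables and a second-order Taylor expansion of $l_{\bbeta}$, using the symmetry and finite second moment of $K$ together with the smoothness of $l_{\bbeta}$ from Assumptions \ref{KS:assmpn1}-\ref{KS:assmpn2}; this yields the $C_2 h^2$ term. The stochastic part is an average of i.i.d.\ summands $(nh)^{-1}\sum_i\{Z_iK(u_i) - \E Z_iK(u_i)\}$ with $u_i := (\bbeta'\bX_i - \bbeta'\bx)/h$; since $Z$ only has a finite second moment I would truncate $Z_i$ at a level of order $\sqrt{\log n}$, bound the truncation remainder in mean, and apply a Bernstein inequality to the bounded part. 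The per-summand variance is $O(1/h)$, giving the dominant deviation $C_1(t+1)/\sqrt{nh}$, while the sub-exponential (truncation) correction contributes the $C_1 t^2\sqrt{\log n}/(nh)$ term.

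For the index-perturbation term (I), I would Taylor-expand the kernel in its argument. Writing $\delta_i := (\bbetahat - \bbeta)'(\bX_i - \bx)$ so that $(\What_i - \whatbx)/h = u_i + \delta_i/h$,
\begin{equation*}
\lhat(\bbetahat,\bx) - \lhat(\bbeta,\bx) \;=\; \frac{1}{nh^2}\sum_{i}Z_iK'(u_i)\,\delta_i \;+\; \frac{1}{2nh^3}\sum_i Z_iK''(\xi_i)\,\delta_i^2,
\end{equation*}
for intermediate points $\xi_i$. The first-order term equals $h^{-1}(\bbetahat - \bbeta)'\mathbf{V}$ with $\mathbf{V} := (nh)^{-1}\sum_i Z_iK'(u_i)(\bX_i - \bx)$. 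The crucial observation is that, after integration by parts, each coordinate of $\E\mathbf{V}$ is $O(h)$ rather than $O(1)$ (the derivative $K'$ integrates against a smooth function to yield an extra factor $h$); hence $h^{-1}(\bbetahat-\bbeta)'\E\mathbf{V} = O(a_n)$ on the event $\{\|\bbetahat-\bbeta\|_1 \leq a_n\}$, which is exactly the $C_2 a_n$ term. The fluctuation $h^{-1}(\bbetahat-\bbeta)'(\mathbf{V}-\E\mathbf{V})$ is bounded by H\"older as $h^{-1}a_n\|\mathbf{V}-\E\mathbf{V}\|_\infty$; a coordinatewise Bernstein bound with a union over the $p$ coordinates gives $\|\mathbf{V}-\E\mathbf{V}\|_\infty \lesssim \sqrt{\log p/(nh)}$ w.h.p., so this piece is of order $(a_n/h)\sqrt{\log p}\cdot(nh)^{-1/2} = o(1/\sqrt{nh})$ under the stated condition $(a_n/h)\sqrt{\log p} = o(1)$ and is absorbed into (II)'s $C_1/\sqrt{nh}$. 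For the second-order term I would condition on the sub-Gaussian covariate event $\{\max_{i,j}|X_{i,[j]} - x_{[j]}| \lesssim \sqrt{\log(np)}\}$, on which $\delta_i^2 \leq \|\bbetahat-\bbeta\|_1^2\max_i\|\bX_i-\bx\|_\infty^2 \lesssim a_n^2\log(np)$; combined with the fact that $K''(\xi_i)$ localizes the sum to an $O(h)$-fraction of the sample, this produces the $C_2 a_n^2/h^2$ contribution, with the residual high-dimensional bookkeeping giving the $C_2\log(np)/(nh)$ term. The probability budget then collects the two applications of the $L_1$ guarantee ($2q_n$) and the handful of Bernstein and maximal inequalities (each of the form $\exp(-t^2)$), yielding $1 - 9\exp(-t^2) - 2q_n$.

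The main obstacle is term (I): unlike the classical fixed-bandwidth single-index analysis, here the index estimation error is genuinely non-ignorable and interacts with the high dimension $p$ and the localization scale $h$ simultaneously. The two delicate points are (a) recognizing that integration by parts downgrades the first-order index effect from the naive $a_n/h$ to $a_n$ --- without this the rate would be far too crude --- and (b) controlling an $\ell_\infty$ norm of a $p$-dimensional kernel-weighted average whose summands are heavy-tailed (through $Z$) and localized at scale $h$, which forces a joint truncation-plus-union-bound argument and is where the $\log p$, $\log(np)$ and $1/h$ factors must be tracked carefully so that they remain of lower order under the bandwidth conditions.
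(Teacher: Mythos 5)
Your decomposition and the main ideas coincide with the paper's proof: the paper also splits the error into the oracle error $\ltil(\bbeta,\bx) - l(\bbeta,\bx)$ (further split into a stochastic part and an $O(h^2)$ bias; Lemma \ref{KS:thm1}) and the index-perturbation term $\Rhat_n(\bx) = \lhat(\bbetahat,\bx) - \ltil(\bbeta,\bx)$, and it likewise obtains the $O(a_n)$ first-order contribution by integration by parts on $\E\{\bThat_n(\bx)\}$ together with an $L_1$--$L_\infty$ bound and a coordinatewise concentration/union-bound control of the fluctuation (Lemma \ref{KS:thm2}). Three small points where your write-up departs from the stated assumptions, none of which changes the structure or the final rate: (i) Assumption \ref{KS:assmpn1}(a) makes $Z$ sub-Gaussian, so no truncation is needed --- the paper applies the ready-made concentration bound of Lemma \ref{lem:6:VarianceTailBound}, whose $(\log n)^{1/\alpha}$ factor is the source of the $t^2\sqrt{\log n}/(nh)$ term; with only a finite second moment, as you write, the claimed $\exp(-t^2)$ tail would not be attainable. (ii) $K$ is not assumed twice differentiable, so your second-order Taylor term involving $K''(\xi_i)$ is not licensed under Assumption \ref{KS:assmpn2}; the paper instead uses a first-order mean-value expansion and controls the remainder $\Rhat_{n,2}$ via the local Lipschitz property of $K'$ in Assumption \ref{KS:assmpn2}(c), which is where the condition $2M_{\bX}(a_n/h)\le L$ is needed to keep the perturbed argument inside the Lipschitz window. (iii) $\bX$ is assumed bounded in Assumption \ref{KS:assmpn2}(d), so $\delta_i^2 \lesssim a_n^2$ deterministically and no sub-Gaussian maximal event (hence no extra $\log(np)$ factor in $\delta_i^2$) is required; the $\log(np)/(nh)$ term in the statement actually arises from the higher-order piece of the concentration bound for $\bThat_n(\bx)$ after multiplying by $a_n$ and simplifying under $(a_n/h)\sqrt{\log p}=o(1)$.
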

Apart from an explicit tail bound, Theorem \ref{KS:mainthm1} also establishes the convergence rate of $\lhat(\bbetahat,\bx)$ to be $O({nh}^{-\half} + h^2 + a_n + a_n^2 h^{-2})$ which quantifies the additional price one pays for estimating the high dimensional index parameter $\bbeta$ apart from the error rate of a standard one dimensional KS. This is highlighted through all the terms in the bound involving the $L_1$ error rate $a_n$ of $\bbetahat$. For a given $a_n$, one can also optimize the choice of $h = O(n^{-a})$ over $a > 0$  by minimizing the convergence rate above whose terms behave differently with $h$, similar to a variance-bias tradeoff phenomenon typically observed in KS regression. We skip these technical discussions here for brevity.

\begin{theorem}[Tail bounds for ratio type KS estimators]\label{KS:mainthm2}
 Consider the ratio type KS estimator $\mhat(\bbetahat,\bx)$ of $m(\bbeta, \bx)$ and assume that $|m(\bbeta,\bx)| \leq \delta_m$ and $f(\bbeta,\bx) \geq \delta_f > 0$ for some constants $\delta_m, \delta_f > 0$. For any $t \geq 0$, define:
$$
\epsilon_n(t) := C_1\frac{t+1}{\sqrt{nh}} + C_2 \frac{t^2 \sqrt{\log n}}{nh} + C_3 b_n, \;\; \mbox{where} \;\; b_n := h^2 + a_n + \frac{a_n^2}{h^2} + \frac{\log (np)}{nh}
$$ and $C_1, C_2, C_3 > 0$ are the same as in Theorem \ref{KS:mainthm1}. Assume \eqref{eq:index:L1bound}, Assumptions \ref{KS:assmpn1}-\ref{KS:assmpn2} (in Appendix \ref{sec:KS:assmpns})
and that
$h = o(1)$, $\log (np)/(nh) = o(1)$, $(a_n/h)\sqrt{\log p} = o(1)$ and $b_n = o(1)$.
Then, for any fixed $\bx \in \Xsc$ and any $t, t_* \geq 0$ with $t_*$ further assumed w.l.o.g. to satisfy $\epsilon_n(t_*) \leq \delta_f/2 < \delta_f$, we have: with probability at least $1 - 18 \exp(-t^2) - 9 \exp(-t_*^2) - 6 q_n$,
\begin{align*}
& |\mhat(\bbetahat, \bx) - m(\bbeta,\bx) | \; \leq \; \frac{2(1 + \delta_m)}{\delta_f} \epsilon_n(t)  \; \lesssim \;  \frac{t+1}{\sqrt{nh}} + \frac{t^2 \sqrt{\log n}}{nh} + b_n,
\end{align*}
where `$\lesssim$' denotes inequality upto multiplicative contants (possibly depending on those introduced in the assumptions). In particular, assuming further that $\{\log (np) \log n\}/(nh) = o(1)$ and choosing $t = t_* = c\sqrt{\log np}$ for any $c > 0$ (assuming w.l.o.g. the chosen $t_*$ satisfies the required condition), we have:
\begin{align*}
& |\mhat(\bbetahat, \bx) - m(\bbeta,\bx) | \; \lesssim \; (c+1)\sqrt{\frac{\log (np)}{nh}} \left( 1 + c \sqrt{\frac{\log(np) \log n}{nh}} \right)  + b_n \\
& \;\; \lesssim \; c \sqrt{\frac{\log (np)}{nh}} + b_n \;\;\; \mbox{with probability at least} \;\; 1 - 27 (np)^{-{c^2}} - 6q_n.
\end{align*}

\end{theorem}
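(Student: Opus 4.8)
The plan is to reduce the ratio estimator $\mhat(\bbetahat,\bx) = \lhat(\bbetahat,\bx)/\fhat(\bbetahat,\bx)$ to the linear kernel average estimators that Theorem \ref{KS:mainthm1} already controls. The first observation is that $\fhat(\bbetahat,\bx)$ is itself an LKA estimator: it is exactly $\lhat(\bbetahat,\bx)$ in the special case $Z\equiv 1$. Since $\E(1 \given W = w)=1$, this identification sends the target $l(\bbeta,\bx)$ to $f(\bbeta,\bx)$, so Theorem \ref{KS:mainthm1} applies verbatim to give, for any $t\geq 0$, the bound $|\fhat(\bbetahat,\bx) - f(\bbeta,\bx)| \leq \epsilon_n(t)$ with probability at least $1-9\exp(-t^2)-2q_n$, where I use that $\epsilon_n(\cdot)$ is constructed precisely to dominate the right-hand side of Theorem \ref{KS:mainthm1} (with the constants $C_1,C_2,C_3$ taken uniformly across the two applications). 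Applying Theorem \ref{KS:mainthm1} to $\lhat$ itself gives $|\lhat(\bbetahat,\bx)-l(\bbeta,\bx)| \leq \epsilon_n(t)$ on an event of the same probability.

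Next I would use the elementary ratio identity
\[
\mhat(\bbetahat,\bx) - m(\bbeta,\bx) \;=\; \frac{\{\lhat(\bbetahat,\bx)-l(\bbeta,\bx)\} \,-\, m(\bbeta,\bx)\{\fhat(\bbetahat,\bx)-f(\bbeta,\bx)\}}{\fhat(\bbetahat,\bx)},
\]
which follows from $m=l/f$ after adding and subtracting $l(\bbeta,\bx)f(\bbeta,\bx)$ in the numerator of $\lhat/\fhat - l/f$. The numerator is then controlled by the triangle inequality together with $|m(\bbeta,\bx)|\leq \delta_m$, yielding a bound of $|\lhat - l| + \delta_m|\fhat - f| \leq (1+\delta_m)\epsilon_n(t)$ on the intersection of the two events above.

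The only remaining ingredient is a lower bound on the random denominator. Here I would invoke Theorem \ref{KS:mainthm1} a third time, now at the level $t_*$, to obtain $|\fhat(\bbetahat,\bx)-f(\bbeta,\bx)|\leq \epsilon_n(t_*)$ with probability at least $1-9\exp(-t_*^2)-2q_n$. Combined with $f(\bbeta,\bx)\geq\delta_f$ and the standing assumption $\epsilon_n(t_*)\leq\delta_f/2$, this gives $\fhat(\bbetahat,\bx)\geq \delta_f - \epsilon_n(t_*)\geq\delta_f/2 > 0$, so the denominator is bounded away from zero. Putting the three pieces together yields $|\mhat - m| \leq (1+\delta_m)\epsilon_n(t)/(\delta_f/2) = 2(1+\delta_m)\delta_f^{-1}\epsilon_n(t)$, exactly the stated bound. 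A union bound over the three events — two at level $t$ (for $\lhat$ and for $\fhat$ in the numerator) and one at level $t_*$ (for $\fhat$ in the denominator) — produces the stated failure probability $18\exp(-t^2)+9\exp(-t_*^2)+6q_n$. The main point requiring care is precisely this bookkeeping: spending level $t$ on the numerator control while reserving a separate level $t_*$ for the denominator's positivity, and choosing the constants of Theorem \ref{KS:mainthm1} uniformly so that one $\epsilon_n(\cdot)$ governs both the $\lhat$ and $\fhat$ applications.

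Finally, for the ``in particular'' rate, I would substitute $t=t_*=c\sqrt{\log(np)}$, so that $\exp(-t^2)=\exp(-t_*^2)=(np)^{-c^2}$ and the failure probability collapses to $27(np)^{-c^2}+6q_n$. Expanding $\epsilon_n(c\sqrt{\log(np)})$ term by term, the leading $C_1(t+1)/\sqrt{nh}$ contributes $(c+1)\sqrt{(\log np)/(nh)}$ while the $C_2\,t^2\sqrt{\log n}/(nh)$ term contributes $c^2(\log np)\sqrt{\log n}/(nh)$; factoring out $(c+1)\sqrt{(\log np)/(nh)}$ recovers the displayed product form with bracketed factor $1+c\sqrt{(\log np)\log n/(nh)}$. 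The extra bandwidth condition $\{\log(np)\log n\}/(nh)=o(1)$ then renders this factor $1+o(1)$, collapsing the bound to $c\sqrt{(\log np)/(nh)}+b_n$ after absorbing constants. These last manipulations are routine; the entire conceptual content of the theorem lies in the ratio decomposition and the separate $t_*$-level denominator control described above.
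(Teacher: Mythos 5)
Your proposal is correct and follows essentially the same route as the paper's own proof: two applications of Theorem \ref{KS:mainthm1} at level $t$ (to $\lhat$ and to $\fhat$ as the $Z\equiv 1$ special case) to control the numerator of the ratio decomposition via $|\lhat-l|+\delta_m|\fhat-f|$, a third application at level $t_*$ to bound the denominator below by $\delta_f/2$, and a union bound yielding $18\exp(-t^2)+9\exp(-t_*^2)+6q_n$. The only cosmetic difference is that the paper phrases the numerator step as a bound on $|\fhat\{\mhat-m\}|$ before dividing, rather than writing the ratio identity up front.
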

%% The commented part below is the version of the verification remark if we also include SIM as a choice for estimating the propensity score. %%
%
%As a consequence, Theorem \ref{KS:mainthm2} verifies our basic Assumptions \ref{tpicont:assmpn} and \ref{tmcont:assmpn} regarding $\pihat(\cdot)$ and $\mhat(\cdot)$ when one chooses to estimate these based on single index models. In particular, it establishes that the tail bounds in Assumptions \ref{tpicont:assmpn} and \ref{tmcont:assmpn} hold with the choices $\{\vnpi, \vnm\} \propto v_n$, $\{\bnpi, \bnm\} \propto b_n$ and $\{ q_{n,\pi}, q_{n,m} \propto \exp(-t_*^2) + q_n$  with $v_n$, $b_n$, $q_n$ and $t_*$ as in Theorem \ref{KS:mainthm2} above and $t_*$ can be chosen as large as possible as long as it still satisfies the required conditions. Finally, as discussed in Sections \ref{nuisance:pi} and \ref{nuisance:m}, for most common choices of the estimator $\bbetahat$, the $L_1$ error rate $a_n$ is expected to behave as: $a_n \propto s_{\bbeta_0} \sqrt{(\log {p})/n}$ w.h.p..

Theorem \ref{KS:mainthm2} establishes explicit tail bounds and convergence rates for the ratio-type KS estimator $\mhat(\bbetahat,\bx)$. As a consequence, it also verifies our basic Assumption \ref{tmcont:assmpn} regarding $\mhat(\cdot)$ when one chooses to estimate it using SIMs. In particular, in view of Remark \ref{assmpns:rem:extra}, it establishes that the tail bound \eqref{m:tailbound1} %in Assumption \ref{tmcont:assmpn}
holds with the choices $\vnm \propto \sqrt{\log (np)/(nh)} + b_n$ and $q_{n,m} \propto (np)^{-c} + q_n$,  for some $c > 0$, with $b_n$ and $q_n$ as above. Finally, as discussed in Sections \ref{nuisance:pi} and \ref{nuisance:m}, for most common choices of the estimator $\bbetahat$, the $L_1$ error rate $a_n$ is expected to behave as: $a_n \propto s_{\bbeta} \sqrt{(\log {p})/n}$ w.h.p., where $s_{\bbeta} := \|\bbeta \|_0$.

\subsection{Assumptions for Theorems \ref{KS:mainthm1} and \ref{KS:mainthm2}}\label{sec:KS:assmpns}
We summarize here the smoothness and regularity assumptions required for Theorems \ref{KS:mainthm1}-\ref{KS:mainthm2}.

\begin{assumption}[Standard smoothness assumptions and conditions on $K(\cdot)$ and the tail behavior of $Z$]
\label{KS:assmpn1}
\emph{We assume the following conditions.}
\begin{enumerate}[(a)]
\item \emph{$Z$ is sub-Gaussian with $\psitwonorm{Z} \leq \sigma_Z$ for some constant $\sigma_Z \geq 0$. %$\psialphanorm{Z} \leq \sigma_Z$ for some $\alpha \geq 1$ and some constant $\sigma_Z \geq 0$, with the $\psitwonorm{\cdot}$ norm being as in Definition \ref{orlicz:def}. Special cases include sub-exponential ($\alpha = 1$), sub-Gaussian ($\alpha  = 2$) and bounded ($\alpha = \infty$) random variables.
    }
\par\smallskip
\item \emph{$K(\cdot)$ is bounded and integrable with $\| K(\cdot)\|_{\infty} \leq M_K$ and $\int_{\R} |K(u)| du \leq C_K$ for some constants $M_K, C_K \geq 0$.}
\par\smallskip
\item \emph{Let $m_{\bbeta}^{(2)}(w) := \E\{ Z^2 \given \bbeta'\bX = w\}$ for any $w \in \R$. Then, $m_{\bbeta}^{(2)}(w)f_{\bbeta}(w)$ is bounded in $w \in \R$ and $\| m_{\bbeta}^{(2)}(\cdot)f_{\bbeta}(\cdot)\|_{\infty} \leq B_1$ for some constant $B_1 \geq 0$.}
\par\smallskip
\item \emph{$K(\cdot)$ is a second order kernel satisfying: $\int_{\R} K(u) d(u) = 1$, $\int_{R} u K(u) du = 0$ and $\int_{\R} u^2 | K(u) | du \leq R_K < \infty$ for some constant $R_K \geq 0$.
%\item
$l_{\bbeta}(\cdot) \equiv m_{\bbeta}(\cdot) f_{\bbeta}(\cdot)$ is twice continuously differentiable with bounded second derivatives $l_{\bbeta}^{''}(\cdot)$ satisfying: $\| l_{\bbeta}^{''}(\cdot)\|_{\infty} \leq B_2 $ for some constant $B_2 \geq 0$.}
\end{enumerate}
%}
\end{assumption}

\begin{assumption}[Further conditions on $K(\cdot)$ and other assumptions to account for the estimation error of $\bbeta$]\label{KS:assmpn2}
\emph{
We also assume the following.
\begin{enumerate}[(a)]
\item $K(\cdot)$ is continuously differentiable with a bounded and integrable derivative $K'(\cdot)$ satisfying $\| K'(\cdot) \|_{\infty} \leq M_{K'}$ and $\int_{\R} | K'(u) | du \leq C_{K'}$ for some constants $M_{K'}, C_{K'} \geq 0$. Further, $K(u) \rightarrow 0$ as $u \rightarrow \infty$ or $u \rightarrow - \infty$.
    \vspace{0.05in}
\item Let $\boldsymbol{\eta}_{\bbeta}(w) := \E(Z \bX \given \bbeta'\bX = w) f_{\bbeta}(w)$ for any $w \in \R$, and let $\boldsymbol{\eta}_{\bbeta [j]} (\cdot)$ denote the $j^{th}$ coordinate of $\boldsymbol{\eta}_{\bbeta}(\cdot)$ for $j = 1, \hdots, d$. Then, for each $j$, $\boldsymbol{\eta}_{\bbeta [j]} (\cdot)$ is continuously differentiable with derivative $\boldsymbol{\eta}_{\bbeta [j]}' (\cdot)$ that is bounded uniformly in $j = 1,\hdots, d$. Further, $l_{\bbeta}(\cdot)$ is also continuously differentiable with a bounded derivative $l_{\bbeta}'(\cdot)$. Thus, $\max_{1 \leq j \leq d} \| \boldsymbol{\eta}_{\bbeta [j]}' (\cdot) \|_{\infty} \leq B^*_1$ and $\| l_{\bbeta}'(\cdot)\|_{\infty} \leq B^*_2$ for some constants $B^*_1, B^*_2 \geq 0$.
\vspace{0.05in}
\item $K'(\cdot)$ satisfies a `local' Lipschitz property as follows. There exists a constant $L > 0$ such that for all $u, v \in \R$ with $| u - v| \leq L$, $|K'(u) - K'(v)| \leq \varphi(u) |u-v| $ for some bounded and integrable function $\varphi(\cdot): \R \rightarrow \R^+$ with $\| \varphi(\cdot) \|_{\infty} \leq M_{\varphi}$ and $\int_{\R} \varphi(u) du \leq C_{\varphi}$ for some constants $M_{\varphi}, C_{\varphi} \geq 0$.
\vspace{0.05in}
\item $\bX$ is bounded, i.e. $\|\bX \|_{\infty} \leq M_{\bX}$ a.s. $[\P]$ for some constant $M_{\bX} \geq 0$, and $\bbetahat$ satisfies the high-level guarantee \eqref{eq:index:L1bound}. %: $\| \bbetahat - \bbeta\|_1 \leq a_n$ with probability $\geq 1 - q_n$, for some $a_n, q_n \geq 0$ with $a_n = o(1)$ and $q_n = o(1)$.
    Further, we assume $a_n/h = o(1)$ and $2M_{\bX}(a_n / h) \leq L$, where $L$ is as in (c) above and $a_n$ is as in \eqref{eq:index:L1bound}.
\end{enumerate}
}
\end{assumption}
Most of the smoothness assumptions and the conditions on $K(\cdot)$ in Assumptions \ref{KS:assmpn1} and \ref{KS:assmpn2} are fairly mild and standard in the non-parametric statistics literature. Similar or equivalent versions of these assumptions can be found in a variety of references including \citet{Newey_Book_1994, Andrews_1995, Masry_1996} and \citet{Hansen_2008}, among others.

Assumption \ref{KS:assmpn2} (c) imposes a `local' Lipschitz property of sorts on $K'(\cdot)$, where the Lipschitz `constant' is a bounded function that also decays quickly enough to be integrable. This is satisfied by the Gaussian kernel in particular. In general, it holds for any $K(\cdot)$ where $K'(\cdot)$ has a compact support and is Lipschitz continuous, or $K'(\cdot)$ is differentiable with a bounded derivative $K''(\cdot)$ that has a polynomially integrable tail, i.e. $|K''(u)| \leq |u|^{-\rho}$ for some $\rho > 1$ and all $u \in \R$ such that $|u| > L^*$ for some $L^* > 0$ (see \citet{Hansen_2008}).

Finally, the boundedness assumption on $\bX$ is mostly for the simplicity of our exposition. With appropriate modifications in the proofs, this can be relaxed to allow for more general tail behaviors of $\bX$ (e.g. $\bX$ is sub-Gaussian), although the corresponding technical analyses can be more involved. %We therefore stick to a bounded $\bX$ to avoid such technicalities.

\section{Supplementary Numerical Results}\label{sec:sim:supp}

\subsection{Simulation Setting: Technical Details} \label{Simulation: Tech. detail}
%\tcr{**To be edited.}
We summarize here a few relevant details regarding our simulation studies, including in particular, the parameter choices for all the DGPs, along with other technical details of the implementations. The DGP parameters are specified as follows. %In this section we provide more technical details of the simulation. The parameters in the DGPs are specified as the following:
\begin{enumerate}[(a)]
\item For the DGPs of $T|\bX$, we set $\alpha_0 = 0.5$, and chose $\balpha$ and $\balpha^{*}$ as follows.
 %Choices of $\balpha$, $\balpha^{*}$ and $\alpha_{0}$:
\begin{enumerate}[(i)]
\item When $p=50$, we set $\|\balpha\|_0  = 5$ and $\|\balpha^{*}\|_0 = 2$ with:
\begin{align*}
&\balpha  = 1/ \sqrt{5}(1,-1,0.5,-0.5,0.5,\mathbf{0}_{p-5}), \\
& \balpha^{*}  = (0.25,-0.25,\mathbf{0}_{p-2}).
\end{align*}
\item When $p =500$, we set $\|\balpha\|_0  = 10$ and $\|\balpha^{*}\|_0 = 4$ with:
\begin{align*}
 &\balpha = 1/ \sqrt{10}(\mathbf{1}_3, \mathbf{-1}_2, \mathbf{0.5}_2,\mathbf{-0.5}_3,\mathbf{0}_{p-10}), \\
 &\balpha^{*} = (\mathbf{0.25}_2,\mathbf{-0.25}_2,\mathbf{0}_{p-4}).
\end{align*}
\end{enumerate}
Note that in both sets of choices, $\balpha$ is normalized by $\sqrt{\|\balpha\|_0}$ to ensure that the likelihood of $\pi(\bX)$ being too close to 0 or 1 is small, in practice. %and enables the validity of the positivity assumption.

\par\smallskip
\item For the DGPs of $Y|\bX$, we set $\gamma_0 = 1$, and chose $\bgamma$ and $\bgamma^{*}$  as follows. %Choices of $\bgamma$, $\bgamma^{*}$ and $\gamma_{0}$:
\begin{enumerate}[(i)]
\item When $p=50$, we set $\|\bgamma\|_0  = 10$ and $\|\bgamma^{*}\|_0 = 5$ with:
\begin{align*}
& \bgamma =(\mathbf{1}_3,\mathbf{-1}_2,\mathbf{0.5}_2,\mathbf{-0.5}_3,\mathbf{0}_{p-10}), \\
 &\bgamma^{*} = (1,-1,0.5,0.5,-0.5,\mathbf{0}_{p-5}).
\end{align*}
\item When $p =500$, we set $\|\bgamma\|_0  = 20$, $\|\bgamma^{*}\|_0 = 5$ with:
\begin{align*}
 &\bgamma =(\mathbf{1}_3,\mathbf{-1}_2,\mathbf{0.5}_5,\mathbf{-0.5}_5,\mathbf{0.25}_2,\mathbf{-0.25}_3, \mathbf{0}_{p-20}),\\
 &\bgamma^{*} = (1,-1,0.5,0.5,-0.5,\mathbf{0}_{p-5}).
\end{align*}
\end{enumerate}
\end{enumerate}
In addition, for the SIM DGPs, we set $c_T = 0.2$ and $c_Y = 0.3/\sqrt{\lambda_{max}(\bSigma_p)}$, %and the intercepts $\bgamma_{0}$ and $\balpha_{0}$ are set to be $1$ and $0.5$.
where $\lambda_{max}(\bSigma_p)$ is the largest eigenvalue of the matrix $\bSigma_p$. Throughout, in the above, we have used the notation $\mathbf{a}_d := (a,a,\ldots,a) \in \R^d$ for any $a \in \R$. %$\mathbf{a}_d := ( \underbrace{a,a, \ldots,a}_{d})$.  %The parameters in the DGPs for $T | \bX$ are normalized by $\sqrt{\|\balpha\|_0}$ so that the proportion of $\pi(\bX)$ that is close to 0 or 1 is small.
Lastly, for implementing $\bthetatilDR$ and the associated CIs, we choose $\bOmegahat$ as $\bSigmahat^{-1}$ when $p \ll n$, or as the Nodewise Lasso estimator otherwise (see Section \ref{sec:inference}).

The sample splitting and cross-fitting required for our estimator was performed with $\K = 2$ folds. %The number of folds in the sample splitting is 2.
The tuning parameter for any penalized logistic regression involved in obtaining $\pihat(\cdot)$ was chosen via minimizing the Bayes Information Criteria (BIC), and %the tuning parameters
  that for any penalized linear regression involved in obtaining $\mhat(\cdot)$ was chosen via 10-fold least squares cross validation. %using 10-fold cross validation with minimizing mean squared errors (MSE) as criterion.
  The bandwidth in the kernel smoothing required for fitting any SIM was chosen based on %the nonparametric regression
  %for SIM is chosen using
  least square cross-validation, as suggested in the \texttt{`np'} package in \texttt{R}. All codes were implemented in \texttt{R} and are available upon request. %\par

\subsection{Investigating Double Robustness of the DDR Estimator and Performance of the Complete Case Estimator}\label{Sim. double-robustness}
%\tcr{**To be edited.}
We present here a large sample analysis of one of our simulation settings, with $n = 50000$, $p = 50$ or $500$, and the true DGP for $\{\pi(\cdot), m(\cdot)\}$ chosen to be ``quad-quad'' for illustration. We study the asymptotic properties of our estimators $\bthetahatDR$ and $\bthetatilDR$, specifically, their DR properties (for both estimation and inference), whereby they should remain consistent when at least one of the two nuisance estimators $\pihat(\cdot)$ and $\mhat(\cdot)$ is correctly specified, but \emph{not} necessarily both. %To investigate whether our proposed estimator has the desired double-robustness property, we study a large sample setting where $n=50000$, $p=50$ and $500$.
%Consequently, when either the propensity score $\pi(\cdot)$ or the conditional mean $m(\cdot)$ is correctly specified, the final estimators should be consistent.
In addition, apart from the oracle estimators $\bthetahat_{orac}$ and $\bthetahat_{full}$, we also implement the complete case (CC) estimator, $\bthetahat_{cc}$, obtained via a simple Lasso of $Y$ vs. $\bX$ in the complete case data (i.e. samples with $T = 1$), in order to investigate its estimation performance. %of the aside from the oracle and super oracle estimators, we also consider the complete case estimator $\widehat{\btheta}_{cc}$, which is an estimator obtained by using only the complete data (samples with $T = 1$).
This estimator is expected to be consistent \emph{only} when the true DGP for $Y|X$ is linear %is ``linear-linear''
which, by choice, is not the case here.
\begin{table}[!htbp]
	\centering
	\caption{A large sample analysis of the performance of all estimators with $n = 50000$, $\bSigma_p = I_p$, DGP for $\{\pi(\cdot), m(\cdot)\}$ = ``Quad-quad'', and using various combinations of the nuisance estimators $\{\pihat(\cdot), \mhat(\cdot)\}$.
{\it Table (a):} Comparison of the $L_2$ errors of $\bthetahatDR$, $\widehat{\btheta}_{orac}$, $\widehat{\btheta}_{full}$ and the CC estimator $\bthetahat_{cc}$.
{\it Table (b):} Average ($A$-CovP) and median ($M$-CovP) of the empirical CovPs for the 95\% CIs of $\btheta_0$ obtained via $\bthetatilDR$, as well as the average lengths of these CIs, all reported separately for the truly zero and non-zero coefficients of $\btheta_0$.} %Average coverage probabilities and lengths of the CIs built upon the desparsified estimator. We report the means and medians together with standard errors and MADs as subscripts. The reported values are separated into truly zero and non-zero coefficients.}
%Average $L_2$ errors of $\bthetahatDR$, obtained via various combinations of the nuisance estimators $\{\pihat(\cdot), \mhat(\cdot)\}$, and those of the oracle  estimators $\widehat{\btheta}_{orac}$ and $\widehat{\btheta}_{full}$, for $n = 1000$, $\bSigma_p = I_p$ and all three choices of the \emph{true} DGPs.
%Average ($A$-CovP) and median ($M$-CovP) of the empirical coverage probabilities (CovPs) for the (coordinatewise) 95\% CIs of $\btheta_0$ obtained via $\bthetatilDR$ (based on various combinations of the nuisance estimators $\{\pihat(\cdot), \mhat(\cdot)\}$) for $n = 1000$, $\bSigma_p = I_p$ and all three choices of the \emph{true} DGPs. Shown also are the corresponding average length of these CIs. All values are reported separately for the truly zero and non-zero coefficients of $\btheta_0$ (see Section \ref{Est. Implemented}).
	\label{table: DGP_id_n50000_p50}
	{\bf (I)} $p=50$. \\
	(a) Comparison of $L_2$ errors for the estimators.
\vspace{0.05in}
	\resizebox{\textwidth}{!}{
	\begin{tabular}{ll||cccc}\hline
		\multicolumn{2}{l}{Working nuisance model}& $\bthetahatDR$  & $ \widehat{\btheta}_{orac} $ &  $\widehat{\btheta}_{full}$  &  $\widehat{\btheta}_{cc}$  \\
		\hline
		\multirow{2}{*}{$\mhat$: linear}&$\pihat$: logit & 0.460 (0.026) &  0.072 (0.011) & 0.069 (0.01) & 0.528 (0.021)  \\
		& $\pihat$: quad & 0.204 (0.137) & 0.072 (0.011) & 0.069 (0.01) & 0.528 (0.021)   \\
		\hline
		\multirow{2}{*}{$\mhat$: quad}&$\pihat$: logit & 0.071 (0.010) &  0.072 (0.011) & 0.069 (0.01) & 0.528 (0.021)  \\
		&$\pihat$: quad & 0.072 (0.011)  &  0.072 (0.011) & 0.069 (0.01) & 0.528 (0.021) \\
		\hline
		\multirow{2}{*}{$\mhat$: SIM}&	$\pihat$: logit & 0.323 (0.019)  &  0.072 (0.011) & 0.069 (0.01) & 0.528 (0.021)  \\
		&$\pihat$: quad & 0.175 (0.079) &  0.072 (0.011) & 0.069 (0.01) & 0.528 (0.021) \\
		\hline
\end{tabular}}
	(b) Average (and median) CovPs and lengths of the CIs from $\bthetatilDR$.
\vspace{0.05in}
\resizebox{\textwidth}{!}{
	\begin{tabular}{ll||ccc|ccc}\hline
		\multicolumn{2}{l}{Working nuisance model}  & \multicolumn{3}{c}{Zero coefficients} &  \multicolumn{3}{c}{Non-zero coefficients} \\
		\hline
		\multicolumn{2}{l}{} &$A$-CovP & $M$-CovP  & Length &$A$-CovP & $M$-CovP  & Length \\
		\hline
		\multirow{2}{*}{$\mhat$: linear}&$\pihat$: logit & $0.94_{0.03}$ & $(0.95_{0.03})$ & $0.06_{0}$ & $0.68_{0.39}$ & $(0.84_{0.19})$ & $0.07_{0.02}$ \\
		& $\pihat$: quad & $0.96_{0.02}$ & $(0.96_{0.01})$ & $0.12_{0}$ %& $0.12_{0.01}$
& $0.96_{0.02}$ & $(0.96_{0.03})$ & $0.14_{0.08}$   \\
		\hline
		\multirow{2}{*}{$\mhat$: quad}&$\pihat$: logit &  $0.94_{0.03}$ & $(0.95_{0.02})$ & $0.05_{0}$ & $0.93_{0.03}$ & $(0.95_{0.01})$ & $0.05_{0.01}$   \\
		&$\pihat$: quad & $0.94_{0.03}$ & $(0.95_{0.03})$ & $0.05_{0}$ & $0.94_{0.02}$ & $(0.95_{0.01})$ & $0.05_{0.01}$  \\
		\hline
		\multirow{2}{*}{$\mhat$: SIM}&	$\pihat$: logit &  $0.94_{0.03}$ & $(0.94_{0.01})$ & $0.06_{0}$ & $0.80_{0.19}$ & $(0.88_{0.13})$ & $0.07_{0.01}$  \\
		&$\pihat$: quad & $0.95_{0.02}$ & $(0.95_{0.03})$ & $0.10_{0}$ & $0.95_{0.02}$ & $(0.95_{0.02})$ & $0.12_{0.06}$ \\
		\hline
\end{tabular}}
\end{table}

\begin{table}[!htbp]
	\centering
	\caption{See caption of Table \ref{table: DGP_id_n50000_p50}. (Only change: $p= 500$ instead of $50$) }
	\label{table: DGP_id_n50000_p500}
	{\bf (II)} $p=500$. \\
	(a) Comparison of $L_2$ errors for the estimators.
	\vspace{0.05in}
	\resizebox{\textwidth}{!}{
	\begin{tabular}{ll||cccc}\hline
		\multicolumn{2}{l}{Working nuisance model}& $\bthetahatDR$  & $ \widehat{\btheta}_{orac} $ &  $\widehat{\btheta}_{full}$  &  $\widehat{\btheta}_{cc}$  \\
		\hline
		\multirow{2}{*}{$\mhat$: linear}&$\pihat$: logit & 0.297 (0.017) & 0.178 (0.009) & 0.173 (0.007) & 0.325 (0.018) \\
		& $\pihat$: quad & 0.282 (0.113) & 0.178 (0.009) & 0.173 (0.007) & 0.325 (0.018)  \\
		\hline
		\multirow{2}{*}{$\mhat$: quad}&$\pihat$: logit &  0.177 (0.008) & 0.178 (0.009) & 0.173 (0.007) & 0.325 (0.018) \\
		&$\pihat$: quad & 0.180 (0.01) & 0.178 (0.009) & 0.173 (0.007) & 0.325 (0.018) \\
		\hline
		\multirow{2}{*}{$\mhat$: SIM}&	$\pihat$: logit &  0.407 (0.022) & 0.178 (0.009) & 0.173 (0.007) & 0.325 (0.018) \\
		&$\pihat$: quad & 0.294 (0.045) & 0.178 (0.009) & 0.173 (0.007) & 0.325 (0.018)\\
		\hline
\end{tabular}}
	(b) Average (and median) CovPs and lengths of the CIs from $\bthetatilDR$.
	\vspace{0.05in}
	\resizebox{\textwidth}{!}{
	\begin{tabular}{ll||ccc|ccc}\hline
		\multicolumn{2}{l}{Working nuisance model}  & \multicolumn{3}{c}{Zero coefficients} &  \multicolumn{3}{c}{Non-zero coefficients} \\
		\hline
		\multicolumn{2}{l}{} &$A$-CovP & $M$-CovP  & Length &$A$-CovP & $M$-CovP  & Length \\
		\hline
		\multirow{2}{*}{$\mhat$: linear}&$\pihat$: logit & $0.95_{0.02}$ & $(0.95_{0.03})$ & $0.07_{0}$ & $0.78_{0.32}$ & $(0.94_{0.04})$ & $0.07_{0.01}$ \\
		& $\pihat$: quad & $0.95_{0.02}$ & $(0.96_{0.01})$ & $0.09_{0}$ & $0.94_{0.04}$ & $(0.96_{0.03})$ & $0.10_{0.03}$   \\
		\hline
		\multirow{2}{*}{$\mhat$: quad}&$\pihat$: logit &  $0.95_{0.02}$ & $(0.95_{0.01})$ & $0.05_{0}$ & $0.94_{0.02}$ & $(0.94_{0.02})$ & $0.05_{0.01}$  \\
		&$\pihat$: quad & $0.95_{0.02}$ & $(0.95_{0.01})$ & $0.05_{0}$ & $0.94_{0.02}$ & $(0.94_{0.02})$ & $0.05_{0.01}$  \\
		\hline
		\multirow{2}{*}{$\mhat$: SIM}&	$\pihat$: logit &   $0.95_{0.02}$ & $(0.95_{0.03})$ & $0.08_{0}$ & $0.75_{0.38}$ & $(0.94_{0.05})$ & $0.09_{0.01}$  \\
		&$\pihat$: quad & $0.95_{0.02}$ & $(0.95_{0.01})$ & $0.08_{0}$ & $0.88_{0.12}$ & $(0.92_{0.04})$ & $0.09_{0.02}$ \\
		\hline
\end{tabular}}
\end{table}

The results are presented separately for $p = 50$ and $500$ in Tables \ref{table: DGP_id_n50000_p50} and \ref{table: DGP_id_n50000_p500} respectively. Tables \ref{table: DGP_id_n50000_p50}(a) and \ref{table: DGP_id_n50000_p500}(a)  summarize the $L_2$ estimation error comparison for all the estimators, while Tables \ref{table: DGP_id_n50000_p50}(b) and \ref{table: DGP_id_n50000_p500}(b) provide all the inference related results based on $\bthetatilDR$. %Tables \ref{table: DGP_id_n50000_p50} and \ref{table: DGP_id_n50000_p500} summarize the estimation errors of $\bthetahatDR$, coverage probabilities and lengths of the CIs.
%We could see the double robustness of our estimator through the estimation errors.
The results, for both estimation and inference, and for each $p$, clearly validate the DR properties of $\bthetahatDR$ and $\bthetatilDR$. Whenever both working nuisance models are correct, %ly specified,
the achieved $L_2$ %estimation
errors of $\bthetahatDR$ are very close to those of the oracle estimators. %and are close to the super oracle estimator.
In addition, whenever $\mhat(\cdot)$ is correct, the results are similar (and near optimal) regardless of $\pihat(\cdot)$, %when only the conditional mean $m(\cdot)$ is correctly specified, it could have similar performance comparing to both correctly specified.
which is consistent with the results for $n=1000$. Further, when only $\pihat(\cdot)$ is correct, the $L_2$ errors are still smaller than %the case
when both are misspecified but cannot reach the same level as the correctly specified case, showing that consistency still holds but possibly at a slower convergence rate, %This still shows consistency, but possibly with slower convergence rates,
as discussed in Appendix \ref{draspect}. %due to the convergence rate is slow in such cases (see the discussion in Section \ref{draspect}).
Finally, when both are misspecified, the $L_2$ errors of $\bthetahatDR$ are much higher, indicating its inconsistency, as expected. On the same vein, the last columns of Tables \ref{table: DGP_id_n50000_p50}(a) and \ref{table: DGP_id_n50000_p500}(a) show that the $L_2$ errors of $\bthetahat_{cc}$ are also quite high (and different from the oracles) even at this sample size, thereby clearly showing that it is \emph{inconsistent}, as expected under a non-linear DGP for $Y|X$, and hence, is unsuitable as a general estimator of $\btheta_0$. %similar (if not larger) to these values %As we stated, only in ``linear-linear'' DGP is the complete case estimator consistent.
%This is clearly revealed in the last column of the Table \ref{table: DGP_id_n50000_p50} and \ref{table: DGP_id_n50000_p500}.

As regards the inference results, across all settings, the CovPs for the zero coefficients of $\btheta_0$
are always close to the expected 95\% level, similar to the results for $n = 1000$. For the non-zero coefficients, the results for both $p = 50$ and $500$ now demonstrate a clear pattern, whereby they are close to 95\% as soon as at least one of $\{\pihat(\cdot), \mhat(\cdot)\}$ is correct, and considerably lower when both are misspecified (indicating inconsistency). This therefore validates the DR property, \emph{even} for $\sqrt{n}$-rate inference via $\bthetatilDR$. It is interesting to note that while our theoretical results on $\bthetatilDR$ do require both $\{\pihat(\cdot),\mhat(\cdot)\}$ to be correct, the empirical results seem to be quite robust in this regard, achieving 95\% CovPs in large samples via $\sqrt{n}$-rate CIs whenever at least one, but \emph{not} necessarily both, working nuisance model is correct. Finally, the lengths of the CIs also seem to be small across all settings, thus indicating consistency. However, for the cases where only one of $\pihat(\cdot)$ and $\mhat(\cdot)$ is correct, especially the former, the CIs have the desired CovPs but are wider than those obtained when both are correct (possibly due to larger biases in variance estimation).

%\input{Simulation-Supp.tex}

%\section{Supplementary Numerical Results}

\subsection{Simulation Results for Non-Identity Covariance Matrices}\label{sec:sim:othercov}
%\tcr{** To be edited.}
 We present here additional simulation results for cases when $\bSigma_p$, the covariance matrix of $\bX$, corresponds to other correlation structures (possibly not sparse), specifically $\bSigma_p = $ AR1 (autoregressive) or CS (compund symmetry). %Aside from identity covariance matrix, we also study the case when the covariance matrix $\bSigma_p$ is AR1 and CS.

When $\bSigma_p = $ AR1, the results (for both estimation and inference, and for $p = 50$ and $500$) are presented in Tables \ref{table: DGP_auto_n1000_p50}, \ref{table: DGP_auto_n1000_p500}, \ref{table: DGP_auto_n1000_p50_infer} and \ref{table: DGP_auto_n1000_p500_infer}. Overall, the results are fairly consistent with those for the case when $\bSigma_p = I_p$ (identity matrix). The estimation errors as well as the inference results are quite close for both choices of $\bSigma_p$, thereby drawing similar conclusions as discussed in Section \ref{Sim. Results}. %from before. %as the identity case.
This is also possibly because the AR1 matrix with a relatively small $\rho=0.2$ is fairly close to the identity matrix $I_p$. Laslty, we also note that in Table \ref{table: DGP_auto_n1000_p500}(c), the estimation errors for the ``$\mhat$: SIM'' case are interestingly slightly better than the oracles. This, however, is not the case in general.

\begin{table}[!ht]%[!htbp]
	\centering
	\caption{Average $L_2$ errors of $\bthetahatDR$, obtained via various combinations of the nuisance estimators $\{\pihat(\cdot), \mhat(\cdot)\}$, and those of the oracle  estimators $\widehat{\btheta}_{orac}$ and $\widehat{\btheta}_{full}$, for $n = 1000$, $\bSigma_p = \mbox{AR1}$ and all $3$ choices of the \emph{true} DGPs.}
	\label{table: DGP_auto_n1000_p50}
	{\bf (I)} $p=50$. \\
	(a) DGP: ``Linear-linear'' for $\pi(\cdot)$ and $m(\cdot)$.
	\vspace{0.05in}
	\resizebox{\textwidth}{!}{
		\begin{tabular}{ll||ccc}\hline
			\multicolumn{2}{l}{Working nuisance model}& $\bthetahatDR$  & $ \widehat{\btheta}_{orac} $ &  $\widehat{\btheta}_{full}$    \\
			\hline
			\multirow{2}{*}{$\mhat$: linear}&$\pihat$: logit & 0.222 (0.038) & 0.223 (0.038) & 0.169 (0.028)  \\
			& $\pi$: quad & 0.222 (0.038) & 0.223 (0.038) & 0.169 (0.028)  \\
			\hline
			\multirow{2}{*}{$\mhat$: quad}&$\pihat$: logit & 0.224 (0.038) &  0.223 (0.038) & 0.169 (0.028)  \\
			&$\pihat$: quad & 0.223 (0.038) & 0.223 (0.038) & 0.169 (0.028)\\
			\hline
			\multirow{2}{*}{$\mhat$: SIM}&	$\pihat$: logit & 0.222 (0.038) & 0.223 (0.038) & 0.169 (0.028)  \\
			&$\pihat$: quad &  0.222 (0.038) & 0.223 (0.038) & 0.169 (0.028) \\
			\hline
	\end{tabular}}
	(b) DGP: ``Quad-quad'' for $\pi(\cdot)$ and $m(\cdot)$.
	\vspace{0.05in}
	\resizebox{\textwidth}{!}{
		\begin{tabular}{ll||ccc}\hline
			\multicolumn{2}{l}{Working nuisance model}& $\bthetahatDR$  & $ \widehat{\btheta}_{orac} $ &  $\widehat{\btheta}_{full}$   \\
			\hline
			\multirow{2}{*}{$\mhat$: linear}&$\pihat$: logit & 0.664 (0.107) & 0.469 (0.075) & 0.445 (0.074)  \\
			& $\pi$: quad & 0.625 (0.104) & 0.469 (0.075) & 0.445 (0.074)    \\
			\hline
			\multirow{2}{*}{$\mhat$: quad}&$\pihat$: logit & 0.464 (0.075) &  0.469 (0.075) & 0.445 (0.074)  \\
			&$\pihat$: quad & 0.464 (0.075) & 0.469 (0.075) & 0.445 (0.074) \\
			\hline
			\multirow{2}{*}{$\mhat$: SIM}&	$\pihat$: logit & 0.671 (0.109) & 0.469 (0.075) & 0.445 (0.074)   \\
			&$\pihat$: quad & 0.631 (0.106) & 0.469 (0.075) & 0.445 (0.074) \\
			\hline
	\end{tabular}}
	(c) DGP: ``SIM-SIM'' for $\pi(\cdot)$ and $m(\cdot)$.
	\vspace{0.05in}
	\resizebox{\textwidth}{!}{
		\begin{tabular}{ll||ccc}\hline
			\multicolumn{2}{l}{Working nuisance model}& $\bthetahatDR$  & $ \widehat{\btheta}_{orac} $ &  $\widehat{\btheta}_{full}$    \\
			\hline
			\multirow{2}{*}{$\mhat$: linear}&$\pihat$: logit & 0.569 (0.127 ) & 0.478 (0.112) & 0.459 (0.109) \\
			& $\pi$: quad & 0.567 (0.127) & 0.478 (0.112) & 0.459 (0.109)   \\
			\hline
			\multirow{2}{*}{$\mhat$: quad} &$\pihat$: logit &  0.562 (0.126) & 0.478 (0.112) & 0.459 (0.109)\\
			&$\pi$: quad & 0.562 (0.126) & 0.478 (0.112) & 0.459 (0.109)   \\
			\hline
			\multirow{2}{*}{$\mhat$: SIM}&	$\pihat$: logit &  0.499 (0.119) & 0.478 (0.112) & 0.459 (0.109) \\
			&$\pihat$: quad &  0.498 (0.120) & 0.478 (0.112) & 0.459 (0.109) \\
			\hline
	\end{tabular}}
\end{table}

\begin{table}[!ht]%[!htbp]
	\centering
	\caption{See caption of Table \ref{table: DGP_auto_n1000_p50}. (Only change: $p= 500$ instead of $50$)}
	\label{table: DGP_auto_n1000_p500}
	{\bf (II)} $p=500$. \\
	(a) DGP: ``Linear-linear'' for $\pi(\cdot)$ and $m(\cdot)$.
	\vspace{0.05in}
	\resizebox{\textwidth}{!}{
		\begin{tabular}{ll||ccc}\hline
			\multicolumn{2}{l}{Working nuisance model}& $\bthetahatDR$  & $ \widehat{\btheta}_{orac} $ &  $\widehat{\btheta}_{full}$   \\
			\hline
			\multirow{2}{*}{$\mhat$: linear}&$\pihat$: logit &  0.420 (0.045) & 0.401 (0.043) & 0.295 (0.029)  \\
			& $\pihat$: quad & 0.419 (0.044) & 0.401 (0.043) & 0.295 (0.029)   \\
			\hline
			\multirow{2}{*}{$\mhat$: quad}&$\pihat$: logit & 0.430 (0.046)  & 0.401 (0.043) & 0.295 (0.029)   \\
			&$\pihat$: quad & 0.430 (0.046)  & 0.401 (0.043) & 0.295 (0.029)    \\
			\hline
			\multirow{2}{*}{$\mhat$: SIM}&	$\pihat$: logit & 0.409 (0.044) & 0.401 (0.043) & 0.295 (0.029)  \\
			&$\pihat$: quad & 0.408 (0.044) & 0.401 (0.043) & 0.295 (0.029) \\
			\hline
	\end{tabular}}
	(b) DGP: ``Quad-quad'' for $\pi(\cdot)$ and $m(\cdot)$.
	\vspace{0.05in}
	\resizebox{\textwidth}{!}{
		\begin{tabular}{ll||ccc}\hline
			\multicolumn{2}{l}{Working nuisance model}& $\bthetahatDR$  & $ \widehat{\btheta}_{orac} $ &  $\widehat{\btheta}_{full}$    \\
			\hline
			\multirow{2}{*}{$\mhat$: linear}&$\pihat$: logit & 1.060 (0.112)  & 0.797 (0.084) & 0.743 (0.077)   \\
			& $\pihat$: quad & 1.049 (0.109) & 0.797 (0.084) & 0.743 (0.077)   \\
			\hline
			\multirow{2}{*}{$\mhat$: quad}&$\pihat$: logit & 0.814 (0.083) & 0.797 (0.084) & 0.743 (0.077) \\
			&$\pihat$: quad & 0.814 (0.083)  & 0.797 (0.084) & 0.743 (0.077) \\
			\hline
			\multirow{2}{*}{$\mhat$: SIM}&	$\pihat$: logit & 1.050 (0.110)  & 0.797 (0.084) & 0.743 (0.077)   \\
			&$\pihat$: quad &  1.038 (0.109) & 0.797 (0.084) & 0.743 (0.077)  \\
			\hline
	\end{tabular}}
	(c) DGP: ``SIM-SIM'' for $\pi(\cdot)$ and $m(\cdot)$.
	\vspace{0.05in}
	\resizebox{\textwidth}{!}{
		\begin{tabular}{ll||ccc}\hline
			\multicolumn{2}{l}{Working nuisance model}& $\bthetahatDR$  & $ \widehat{\btheta}_{orac} $ &  $\widehat{\btheta}_{full}$  \\
			\hline
			\multirow{2}{*}{$\mhat$: linear}&$\pihat$: logit &  1.026 (0.166) & 1.001 (0.153) & 0.974 (0.151) \\
			& $\pihat$: quad & 1.016 (0.159) & 1.001 (0.153) & 0.974 (0.151)     \\
			\hline
			\multirow{2}{*}{$\mhat$: quad} &$\pihat$: logit &1.029 (0.162)  & 1.001 (0.153) & 0.974 (0.151) \\
			&$\pihat$: quad & 1.019 (0.157) & 1.001 (0.153) & 0.974 (0.151)   \\
			\hline
			\multirow{2}{*}{$\mhat$: SIM}&	$\pihat$: logit &  0.961 (0.162)  & 1.001 (0.153) & 0.974 (0.151)  \\
			&$\pihat$: quad &  0.952 (0.158) & 1.001 (0.153) & 0.974 (0.151) \\
			\hline
	\end{tabular}}
\end{table}

\begin{table}[!ht]%[!htbp]
	\centering
	\caption{Average ($A$-CovP) and median ($M$-CovP) of the empirical coverage probabilities (CovPs) for the (coordinatewise) 95\% CIs of $\btheta_0$ obtained via $\bthetatilDR$ (based on various combinations of the nuisance estimators $\{\pihat(\cdot), \mhat(\cdot)\}$) for $n = 1000$, $\bSigma_p = \mbox{AR1}$ and all three choices of the \emph{true} DGPs. Shown also are the corresponding average lengths of these CIs. All values are reported separately for the truly zero and non-zero coefficients of $\btheta_0$ (see Section \ref{Est. Implemented}).}
	\label{table: DGP_auto_n1000_p50_infer}
	{\bf (I)} $p=50$.  \\
	(a) DGP: ``Linear-linear'' for $\pi(\cdot)$ and $m(\cdot)$.
		\vspace{0.05in}
		\resizebox{\textwidth}{!}{
		\begin{tabular}{ll||ccc|ccc}\hline
				\multicolumn{2}{l}{Working nuisance model}  & \multicolumn{3}{c}{Zero coefficients} &  \multicolumn{3}{c}{Non-zero coefficients} \\
			\hline
			\multicolumn{2}{l}{} &$A$-CovP & $M$-CovP  & Length &$A$-CovP & $M$-CovP  & Length \\
			\hline
				\multirow{2}{*}{$\mhat$: linear}&$\pihat$: logit & $0.94_{0.01}$ & $(0.94_{0.01})$ & $0.17_{0}$ & $0.94_{0.01}$ & $(0.94_{0.02})$ & $0.17_{0}$   \\
				&$\pihat$: quad & $0.94_{0.01}$ & $(0.94_{0.01})$ & $0.17_{0}$ & $0.94_{0.01}$ & $(0.94_{0.02})$ & $0.17_{0}$  \\
				\hline
				\multirow{2}{*}{$\mhat$: quad}&$\pihat$: logit &  $0.94_{0.01}$ & $(0.94_{0.01})$ & $0.17_{0}$ & $0.94_{0.01}$ & $(0.94_{0.02})$ & $0.17_{0}$   \\
				&$\pihat$: quad &  $0.94_{0.01}$ & $(0.94_{0.01})$ & $0.17_{0}$ & $0.94_{0.01}$ & $(0.95_{0.02})$ & $0.17_{0}$  \\
				\hline
				\multirow{2}{*}{$\mhat$: SIM}&	$\pihat$: logit &   $0.94_{0.01}$ & $(0.94_{0.01})$ & $0.17_{0}$ & $0.94_{0.01}$ & $(0.94_{0.01})$ & $0.17_{0}$  \\
				&$\pihat$: quad & $0.94_{0.01}$ & $(0.94_{0.01})$ & $0.17_{0}$ & $0.94_{0.01}$ & $(0.94_{0.01})$ & $0.17_{0}$ \\
				\hline
		\end{tabular}}
	(b) DGP: ``Quad-quad'' for $\pi(\cdot)$ and $m(\cdot)$.
	\vspace{0.05in}
	\resizebox{\textwidth}{!}{
	\begin{tabular}{ll||ccc|ccc}\hline
		\multicolumn{2}{l}{Working nuisance model}  & \multicolumn{3}{c}{Zero coefficients} &  \multicolumn{3}{c}{Non-zero coefficients} \\
	\hline
	\multicolumn{2}{l}{} &$A$-CovP & $M$-CovP  & Length &$A$-CovP & $M$-CovP  & Length \\
	\hline
			\multirow{2}{*}{$\mhat$: linear}&$\pihat$: logit & $0.94_{0.01}$ & $(0.94_{0.01})$ & $0.42_{0}$ & $0.89_{0.14}$ & $(0.94_{0.02})$ & $0.47_{0.08}$  \\
			& $\pihat$: quad &$0.94_{0.01}$ & $(0.94_{0.01})$ & $0.42_{0}$ & $0.90_{0.12}$ & $(0.94_{0.02})$ & $0.47_{0.07}$   \\
			\hline
			\multirow{2}{*}{$\mhat$: quad}&$\pihat$: logit &  $0.94_{0.01}$ & $(0.94_{0.01})$ & $0.34_{0}$ & $0.95_{0.01}$ & $(0.95_{0.01})$ & $0.38_{0.05}$ \\
			&$\pihat$: quad & $0.94_{0.01}$ & $(0.94_{0.01})$ & $0.34_{0}$ & $0.94_{0.01}$ & $(0.94_{0.01})$ & $0.38_{0.05}$  \\
			\hline
			\multirow{2}{*}{$\mhat$: SIM}&	$\pihat$: logit &  $0.94_{0.01}$ & $(0.94_{0.01})$ & $0.42_{0}$ & $0.89_{0.14}$ & $(0.94_{0.01})$ & $0.47_{0.07}$  \\
			&$\pihat$: quad & $0.94_{0.01}$ & $(0.94_{0.01})$ & $0.42_{0}$ & $0.90_{0.12}$ & $(0.94_{0.02})$ & $0.47_{0.07}$  \\
			\hline
	\end{tabular}}
	(c) DGP: ``SIM-SIM'' for $\pi(\cdot)$ and $m(\cdot)$.
	\vspace{0.05in}
	\resizebox{\textwidth}{!}{
	\begin{tabular}{ll||ccc|ccc}\hline
		\multicolumn{2}{l}{Working nuisance model}  & \multicolumn{3}{c}{Zero coefficients} &  \multicolumn{3}{c}{Non-zero coefficients} \\
	\hline
	\multicolumn{2}{l}{} &$A$-CovP & $M$-CovP  & Length &$A$-CovP & $M$-CovP  & Length \\
	\hline
			\multirow{2}{*}{$\mhat$: linear}&$\pihat$: logit & $0.95_{0.01}$ & $(0.95_{0.01})$ & $0.43_{0}$ & $0.94_{0.01}$ & $(0.94_{0.01})$ & $0.48_{0.03}$ \\
			& $\pihat$: quad & $0.95_{0.01}$ & $(0.94_{0.01})$ & $0.43_{0}$ & $0.94_{0.01}$ & $(0.94_{0.01})$ & $0.48_{0.03}$  \\
			\hline
			\multirow{2}{*}{$\mhat$: quad}&$\pihat$: logit & $0.95_{0.01}$ & $(0.95_{0.01})$ & $0.42_{0}$ & $0.94_{0.01}$ & $(0.94_{0.01})$ & $0.47_{0.03}$   \\
			&$\pihat$: quad & $0.95_{0.01}$ & $(0.95_{0.01})$ & $0.42_{0}$ & $0.94_{0.01}$ & $(0.94_{0.01})$ & $0.47_{0.03}$ \\
			\hline
			\multirow{2}{*}{$\mhat$: SIM}&	$\pihat$: logit &  $0.95_{0.01}$ & $(0.95_{0.01})$ & $0.37_{0}$ & $0.94_{0.01}$ & $(0.95_{0.01})$ & $0.41_{0.02}$   \\
			&$\pihat$: quad & $0.95_{0.01}$ & $(0.95_{0.01})$ & $0.37_{0}$ & $0.94_{0.01}$ & $(0.95_{0.01})$ & $0.41_{0.02}$ \\
			\hline
	\end{tabular}}
\end{table}

\begin{table}[!ht]%[!htbp]
	\centering
	\caption{See caption of Table \ref{table: DGP_auto_n1000_p50_infer}. (Only change: $p= 500$ instead of $50$)}
	\label{table: DGP_auto_n1000_p500_infer}
	{\bf (II)} $p=500$. \\
	(a) DGP: ``Linear-linear'' for $\pi(\cdot)$ and $m(\cdot)$.
	\vspace{0.05in}
	\resizebox{\textwidth}{!}{
		\begin{tabular}{ll||ccc|ccc}\hline
				\multicolumn{2}{l}{Working nuisance model}  & \multicolumn{3}{c}{Zero coefficients} &  \multicolumn{3}{c}{Non-zero coefficients}\\
			\hline
			\multicolumn{2}{l}{} &$A$-CovP & $M$-CovP  & Length &$A$-CovP & $M$-CovP  & Length \\
			\hline
			\multirow{2}{*}{$\mhat$: linear}&$\pihat$: logit & $0.95_{0.01}$ & $(0.95_{0.01})$ & $0.17_{0}$ & $0.91_{0.02}$ & $(0.91_{0.01})$ & $0.17_{0}$ \\
			&$\pihat$: quad &  $0.95_{0.01}$ & $(0.95_{0.01})$ & $0.17_{0}$ & $0.91_{0.02}$ & $(0.91_{0.01})$ & $0.17_{0}$  \\
			\hline
			\multirow{2}{*}{$\mhat$: quad}&$\pihat$: logit & $0.95_{0.01}$ & $(0.95_{0.01})$ & $0.17_{0}$ & $0.91_{0.02}$ & $(0.91_{0.02})$ & $0.17_{0}$    \\
			&$\pihat$: quad &  $0.95_{0.01}$ & $(0.95_{0.01})$ & $0.17_{0}$ & $0.91_{0.02}$ & $(0.91_{0.02})$ & $0.17_{0}$  \\
			\hline
			\multirow{2}{*}{$\mhat$: SIM}&	$\pihat$: logit &  $0.95_{0.01}$ & $(0.95_{0.01})$ & $0.16_{0}$ & $0.91_{0.01}$ & $(0.91_{0.01})$ & $0.17_{0}$  \\
			&$\pihat$: quad & $0.95_{0.01}$ & $(0.95_{0.01})$ & $0.16_{0}$ & $0.91_{0.01}$ & $(0.91_{0.02})$ & $0.17_{0}$ \\
			\hline
	\end{tabular}}
	(b) DGP: ``Quad-quad'' for $\pi(\cdot)$ and $m(\cdot)$.
	\vspace{0.05in}
	\resizebox{\textwidth}{!}{
		\begin{tabular}{ll||ccc|ccc}\hline
				\multicolumn{2}{l}{Working nuisance model}  & \multicolumn{3}{c}{Zero coefficients} &  \multicolumn{3}{c}{Non-zero coefficients} \\
			\hline
			\multicolumn{2}{l}{} &$A$-CovP & $M$-CovP  & Length &$A$-CovP & $M$-CovP  & Length \\
			\hline
			\multirow{2}{*}{$\mhat$: linear}&$\pihat$: logit & $0.95_{0.01}$ & $(0.95_{0.01})$ & $0.44_{0}$ & $0.92_{0.03}$ & $(0.93_{0.02})$ & $0.46_{0.07}$  \\
			& $\pihat$: quad & $0.95_{0.01}$ & $(0.95_{0.01})$ & $0.43_{0}$ & $0.91_{0.03}$ & $(0.92_{0.02})$ & $0.46_{0.06}$   \\
			\hline
			\multirow{2}{*}{$\mhat$: quad}&$\pihat$: logit & $0.95_{0.01}$ & $(0.95_{0.01})$ & $0.33_{0}$ & $0.92_{0.02}$ & $(0.92_{0.02})$ & $0.35_{0.04}$ \\
			&$\pihat$: quad & $0.95_{0.01}$ & $(0.95_{0.01})$ & $0.33_{0}$ & $0.92_{0.02}$ & $(0.92_{0.02})$ & $0.35_{0.04}$   \\
			\hline
			\multirow{2}{*}{$\mhat$: SIM}&	$\pihat$: logit &  $0.95_{0.01}$ & $(0.95_{0.01})$ & $0.44_{0}$ & $0.91_{0.03}$ & $(0.92_{0.02})$ & $0.46_{0.07}$  \\
			&$\pihat$: quad & $0.95_{0.01}$ & $(0.95_{0.01})$ & $0.43_{0}$ & $0.91_{0.03}$ & $(0.91_{0.02})$ & $0.46_{0.06}$ \\
			\hline
	\end{tabular}}
	(c) DGP: ``SIM-SIM'' for $\pi(\cdot)$ and $m(\cdot)$.
	\vspace{0.05in}
	\resizebox{\textwidth}{!}{
		\begin{tabular}{ll||ccc|ccc}\hline
				\multicolumn{2}{l}{Working nuisance model}  & \multicolumn{3}{c}{Zero coefficients} &  \multicolumn{3}{c}{Non-zero coefficients} \\
			\hline
			\multicolumn{2}{l}{} &$A$-CovP & $M$-CovP  & Length &$A$-CovP & $M$-CovP  & Length \\
			\hline
			\multirow{2}{*}{$\mhat$: linear}&$\pihat$: logit & $0.95_{0.01}$ & $(0.95_{0.01})$ & $0.52_{0}$ & $0.88_{0.04}$ & $(0.88_{0.04})$ & $0.55_{0.03}$  \\
			& $\pihat$: quad & $0.95_{0.01}$ & $(0.95_{0.01})$ & $0.52_{0}$ & $0.87_{0.04}$ & $(0.87_{0.04})$ & $0.55_{0.03}$ \\
			\hline
			\multirow{2}{*}{$\mhat$: quad}&$\pihat$: logit & $0.95_{0.01}$ & $(0.95_{0.01})$ & $0.52_{0}$ & $0.88_{0.03}$ & $(0.88_{0.03})$ & $0.55_{0.03}$  \\
			&$\pihat$: quad & $0.95_{0.01}$ & $(0.95_{0.01})$ & $0.52_{0}$ & $0.88_{0.03}$ & $(0.88_{0.04})$ & $0.55_{0.03}$\\
			\hline
			\multirow{2}{*}{$\mhat$: SIM}&	$\pihat$: logit &$0.95_{0.01}$ & $(0.95_{0.01})$ & $0.48_{0}$ & $0.93_{0.01}$ & $(0.94_{0.01})$ & $0.51_{0.03}$  \\
			&$\pihat$: quad &  $0.95_{0.01}$ & $(0.95_{0.01})$ & $0.48_{0}$ & $0.93_{0.01}$ & $(0.94_{0.01})$ & $0.51_{0.03}$ \\
			\hline
	\end{tabular}}
\end{table}

For $\bSigma_p =$ CS, %the covariance matrix is CS matrix,
the corresponding results are given in Tables \ref{table: DGP_symm_n1000_p50} and \ref{table: DGP_symm_n1000_p50_infer} for  $p=50$, and Tables \ref{table: DGP_symm_n1000_p500} and \ref{table: DGP_symm_n1000_p500_infer} for $p=500$. Note that the CS matrix (and its inverse) is not sparse, so that the nodewise Lasso estimator of $\bOmega$ is not theoretically guaranteed to work when $p=500$. Nevertheless, the general pattern of the results stay the same as for $\bSigma =I_p$ or AR1, indicating that our procedures are fairly robust to the underlying correlation structure of $\bX$, as well as to the degree of sparsity of $\bOmega$, in high dimensional settings.
%We also note that for $p=500$, the increased errors in estimating the precision matrices and the influence function leads to slightly lower coverage probabilities. %even when the working models are correctly specified.

\begin{table}[!ht]%[!htbp]
	\centering
	\caption{Average $L_2$ errors of $\bthetahatDR$, obtained via various combinations of the nuisance estimators $\{\pihat(\cdot), \mhat(\cdot)\}$, and those of the oracle  estimators $\widehat{\btheta}_{orac}$ and $\widehat{\btheta}_{full}$, for $n = 1000$, $\bSigma_p = \mbox{CS}$ and all three choices of the \emph{true} DGPs.}
	\label{table: DGP_symm_n1000_p50}
	{\bf (I)} $p=50$.\\
	(a) DGP: ``Linear-linear'' for $\pi(\cdot)$ and $m(\cdot)$.
	\vspace{0.05in}
	\resizebox{\textwidth}{!}{
		\begin{tabular}{ll||ccc}\hline
			\multicolumn{2}{l}{Working nuisance model}& $\bthetahatDR$  & $ \widehat{\btheta}_{orac} $ &  $\widehat{\btheta}_{full}$   \\
			\hline
			\multirow{2}{*}{$\mhat$: linear}&$\pihat$: logit & 0.245 (0.039) &  0.247 (0.04) & 0.185 (0.03)  \\
			& $\pihat$: quad &0.245 (0.038) & 0.247 (0.04) & 0.185 (0.03)  \\
			\hline
			\multirow{2}{*}{$\mhat$: quad}&$\pihat$: logit & 0.248 (0.039) &  0.247 (0.04) & 0.185 (0.03)  \\
			&$\pihat$: quad & 0.247 (0.039) & 0.247 (0.04) & 0.185 (0.03)  \\
			\hline
			\multirow{2}{*}{$\mhat$: SIM}&	$\pihat$: logit & 0.246 (0.039) & 0.247 (0.04) & 0.185 (0.03) \\
			&$\pihat$: quad & 0.246 (0.038) & 0.247 (0.04) & 0.185 (0.03) \\
			\hline
	\end{tabular}}
	(b) DGP: ``Quad-quad'' for $\pi(\cdot)$ and $m(\cdot)$.
	\vspace{0.05in}
	\resizebox{\textwidth}{!}{
		\begin{tabular}{ll||ccc}\hline
			\multicolumn{2}{l}{Working nuisance model}& $\bthetahatDR$  & $ \widehat{\btheta}_{orac} $ &  $\widehat{\btheta}_{full}$    \\
			\hline
			\multirow{2}{*}{$\mhat$: linear}&$\pihat$: logit & 0.701 (0.126) &  0.513 (0.088) & 0.483 (0.083)   \\
			& $\pihat$: quad & 0.657 (0.118) & 0.513 (0.088) & 0.483 (0.083)   \\
			\hline
			\multirow{2}{*}{$\mhat$: quad}&$\pihat$: logit & 0.509 (0.087) & 0.513 (0.088) & 0.483 (0.083) \\
			&$\pihat$: quad & 0.509 (0.088) & 0.513 (0.088) & 0.483 (0.083) \\
			\hline
			\multirow{2}{*}{$\mhat$: SIM}&	$\pihat$: logit & 0.704 (0.126) &  0.513 (0.088) & 0.483 (0.083)  \\
			&$\pihat$: quad & 0.662 (0.119) & 0.513 (0.088) & 0.483 (0.083)  \\
			\hline
	\end{tabular}}
	(c) DGP: ``SIM-SIM'' for $\pi(\cdot)$ and $m(\cdot)$.
	\vspace{0.05in}
	\resizebox{\textwidth}{!}{
		\begin{tabular}{ll||ccc}\hline
			\multicolumn{2}{l}{Working nuisance model}& $\bthetahatDR$  & $ \widehat{\btheta}_{orac} $ &  $\widehat{\btheta}_{full}$  \\
			\hline
			\multirow{2}{*}{$\mhat$: linear}&$\pihat$: logit &  0.284 (0.052) &  0.272 (0.047) & 0.224 (0.042)  \\
			& $\pihat$: quad & 0.282 (0.052) & 0.272 (0.047) & 0.224 (0.042)      \\
			\hline
			\multirow{2}{*}{$\mhat$: quad} &$\pihat$: logit & 0.287 (0.052) & 0.272 (0.047) & 0.224 (0.042)  \\
			&$\pihat$: quad & 0.285 (0.052) & 0.272 (0.047) & 0.224 (0.042)     \\
			\hline
			\multirow{2}{*}{$\mhat$: SIM}&	$\pihat$: logit &   0.275 (0.048) & 0.272 (0.047) & 0.224 (0.042)  \\
			&$\pihat$: quad & 0.274 (0.048) & 0.272 (0.047) & 0.224 (0.042) \\
			\hline
	\end{tabular}}
\end{table}

\begin{table}[!ht]%[!htbp]
	\centering
	\caption{See caption of Table \ref{table: DGP_symm_n1000_p50}. (Only change: $p= 500$ instead of $50$)}
	\label{table: DGP_symm_n1000_p500}
	{\bf (II)} $p=500$. \\
	(a) DGP: ``Linear-linear'' for $\pi(\cdot)$ and $m(\cdot)$.
	\vspace{0.05in}
	\resizebox{\textwidth}{!}{
		\begin{tabular}{ll||ccc}\hline
			\multicolumn{2}{l}{Working nuisance model}& $\bthetahatDR$  & $ \widehat{\btheta}_{orac} $ &  $\widehat{\btheta}_{full}$   \\
			\hline
			\multirow{2}{*}{$\mhat$: linear}&$\pihat$: logit & 0.492 (0.055) & 0.466 (0.050) & 0.350 (0.032)  \\
			& $\pihat$: quad & 0.492 (0.055) & 0.466 (0.050) & 0.350 (0.032)   \\
			\hline
			\multirow{2}{*}{$\mhat$: quad}&$\pihat$: logit & 0.509 (0.059) & 0.466 (0.050) & 0.350 (0.032)   \\
			&$\pihat$: quad & 0.508 (0.059) & 0.466 (0.050) & 0.350 (0.032)   \\
			\hline
			\multirow{2}{*}{$\mhat$: SIM}&	$\pihat$: logit & 0.483 (0.053) & 0.466 (0.050) & 0.350 (0.032)  \\
			&$\pihat$: quad & 0.483 (0.053) &  0.466 (0.050) & 0.350 (0.032) \\
			\hline
	\end{tabular}}
	(b) DGP: ``Quad-quad'' for $\pi(\cdot)$ and $m(\cdot)$.
	\vspace{0.05in}
	\resizebox{\textwidth}{!}{
		\begin{tabular}{ll||ccc}\hline
			\multicolumn{2}{l}{Working nuisance model}& $\bthetahatDR$  & $ \widehat{\btheta}_{orac} $ &  $\widehat{\btheta}_{full}$    \\
			\hline
			\multirow{2}{*}{$\mhat$: linear}&$\pihat$: logit &  1.245 (0.131) & 0.949 (0.094) & 0.890 (0.087)   \\
			& $\pihat$: quad & 1.236 (0.130) & 0.949 (0.094) & 0.890 (0.087)  \\
			\hline
			\multirow{2}{*}{$\mhat$: quad}&$\pihat$: logit & 0.972 (0.100) & 0.949 (0.094) & 0.890 (0.087) \\
			&$\pihat$: quad & 0.973 (0.100) & 0.949 (0.094) & 0.890 (0.087) \\
			\hline
			\multirow{2}{*}{$\mhat$: SIM}&	$\pihat$: logit & 1.251 (0.128) & 0.949 (0.094) & 0.890 (0.087)   \\
			&$\pihat$: quad & 1.240 (0.128) & 0.949 (0.094) & 0.890 (0.087)  \\
			\hline
	\end{tabular}}
	(c) DGP: ``SIM-SIM'' for $\pi(\cdot)$ and $m(\cdot)$.
	\vspace{0.05in}
	\resizebox{\textwidth}{!}{
		\begin{tabular}{ll||ccc}\hline
			\multicolumn{2}{l}{Working nuisance model}& $\bthetahatDR$  & $ \widehat{\btheta}_{orac} $ &  $\widehat{\btheta}_{full}$  \\
			\hline
			\multirow{2}{*}{$\mhat$: linear}&$\pihat$: logit &  0.460 (0.055) & 0.463 (0.051) & 0.364 (0.036)  \\
			& $\pihat$: quad & 0.458 (0.055) & 0.463 (0.051) & 0.364 (0.036)      \\
			\hline
			\multirow{2}{*}{$\mhat$: quad} &$\pihat$: logit &0.473 (0.057) & 0.463 (0.051) & 0.364 (0.036)  \\
			&$\pihat$: quad & 0.472 (0.057) & 0.463 (0.051) & 0.364 (0.036)     \\
			\hline
			\multirow{2}{*}{$\mhat$: SIM}&	$\pihat$: logit &   0.466 (0.054) & 0.463 (0.051) & 0.364 (0.036)  \\
			&$\pihat$: quad & 0.465 (0.054) & 0.463 (0.051) & 0.364 (0.036)\\
			\hline
	\end{tabular}}
\end{table}

\begin{table}[!ht]%[!htbp]
	\centering
	\caption{Average ($A$-CovP) and median ($M$-CovP) of the empirical coverage probabilities (CovPs) for the (coordinatewise) 95\% CIs of $\btheta_0$ obtained via $\bthetatilDR$ (based on various combinations of the nuisance estimators $\{\pihat(\cdot), \mhat(\cdot)\}$) for $n = 1000$, $\bSigma_p = \mbox{CS}$ and all three choices of the \emph{true} DGPs. Shown also are the corresponding average lengths of these CIs. All values are reported separately for the truly zero and non-zero coefficients of $\btheta_0$ (see Section \ref{Est. Implemented}).}
	\label{table: DGP_symm_n1000_p50_infer}
	{\bf (I)} $p=50$. \\
	(a) DGP: ``Linear-linear'' for $\pi(\cdot)$ and $m(\cdot)$.
	\vspace{0.05in}
	\resizebox{\textwidth}{!}{
		\begin{tabular}{ll||ccc|ccc}\hline
			\multicolumn{2}{l}{Working nuisance model}  & \multicolumn{3}{c}{Zero coefficients} &  \multicolumn{3}{c}{Non-zero coefficients} \\
			\hline
			\multicolumn{2}{l}{} &$A$-CovP & $M$-CovP  & Length &$A$-CovP & $M$-CovP  & Length \\
			\hline
			\multirow{2}{*}{$\mhat$: linear}&$\pihat$: logit & $0.94_{0.01}$ & $(0.94_{0.01})$ & $0.18_{0}$ & $0.94_{0.01}$ & $(0.94_{0.01})$ & $0.18_{0}$  \\
			&$\pihat$: quad &  $0.94_{0.01}$ & $(0.94_{0.01})$ & $0.18_{0}$ & $0.94_{0.01}$ & $(0.94_{0.01})$ & $0.18_{0}$   \\
			\hline
			\multirow{2}{*}{$\mhat$: quad}&$\pihat$: logit &  $0.94_{0.01}$ & $(0.94_{0.01})$ & $0.18_{0}$ & $0.94_{0.01}$ & $(0.94_{0.01})$ & $0.18_{0}$  \\
			&$\pihat$: quad &  $0.94_{0.01}$ & $(0.94_{0.01})$ & $0.18_{0}$ & $0.94_{0.01}$ & $(0.94_{0})$ & $0.18_{0}$   \\
			\hline
			\multirow{2}{*}{$\mhat$: SIM}&	$\pihat$: logit &  $0.94_{0.01}$ & $(0.94_{0.01})$ & $0.18_{0}$ & $0.94_{0.01}$ & $(0.94_{0.01})$ & $0.18_{0}$  \\
			&$\pihat$: quad & $0.94_{0.01}$ & $(0.94_{0.01})$ & $0.18_{0}$ & $0.94_{0.01}$ & $(0.94_{0.01})$ & $0.18_{0}$ \\
			\hline
	\end{tabular}}
	(b) DGP: ``Quad-quad'' for $\pi(\cdot)$ and $m(\cdot)$.
	\vspace{0.05in}
	\resizebox{\textwidth}{!}{
		\begin{tabular}{ll||ccc|ccc}\hline
			\multicolumn{2}{l}{Working nuisance model}  & \multicolumn{3}{c}{Zero coefficients} &  \multicolumn{3}{c}{Non-zero coefficients} \\
			\hline
			\multicolumn{2}{l}{} &$A$-CovP & $M$-CovP  & Length &$A$-CovP & $M$-CovP  & Length \\
			\hline
			\multirow{2}{*}{$\mhat$: linear}&$\pihat$: logit & $0.94_{0.01}$ & $(0.94_{0.01})$ & $0.45_{0}$ & $0.90_{0.11}$ & $(0.94_{0.01})$ & $0.49_{0.07}$  \\
			& $\pihat$: quad & $0.94_{0.01}$ & $(0.95_{0.01})$ & $0.45_{0}$ & $0.90_{0.10}$ & $(0.93_{0.02})$ & $0.49_{0.06}$   \\
			\hline
			\multirow{2}{*}{$\mhat$: quad}&$\pihat$: logit &   $0.94_{0.01}$ & $(0.94_{0.01})$ & $0.37_{0}$ & $0.95_{0.01}$ & $(0.95_{0.01})$ & $0.41_{0.05}$  \\
			&$\pihat$: quad & $0.94_{0.01}$ & $(0.94_{0.01})$ & $0.37_{0}$ & $0.95_{0.01}$ & $(0.95_{0.01})$ & $0.41_{0.05}$   \\
			\hline
			\multirow{2}{*}{$\mhat$: SIM}&	$\pihat$: logit &  $0.94_{0.01}$ & $(0.95_{0.01})$ & $0.45_{0}$ & $0.90_{0.11}$ & $(0.94_{0.02})$ & $0.50_{0.07}$  \\
			&$\pihat$: quad & $0.94_{0.01}$ & $(0.95_{0.01})$ & $0.45_{0}$ & $0.90_{0.09}$ & $(0.93_{0.02})$ & $0.50_{0.06}$\\
			\hline
	\end{tabular}}
	(c) DGP: ``SIM-SIM'' for $\pi(\cdot)$ and $m(\cdot)$.
	\vspace{0.05in}
	\resizebox{\textwidth}{!}{
		\begin{tabular}{ll||ccc|ccc}\hline
			\multicolumn{2}{l}{Working nuisance model}  & \multicolumn{3}{c}{Zero coefficients} &  \multicolumn{3}{c}{Non-zero coefficients} \\
			\hline
			\multicolumn{2}{l}{} &$A$-CovP & $M$-CovP  & Length &$A$-CovP & $M$-CovP  & Length \\
			\hline
			\multirow{2}{*}{$\mhat$: linear}&$\pihat$: logit & $0.94_{0.01}$ & $(0.94_{0.01})$ & $0.21_{0}$ & $0.94_{0.01}$ & $(0.94_{0.01})$ & $0.22_{0.01}$  \\
			& $\pihat$: quad &$0.94_{0.01}$ & $(0.94_{0.01})$ & $0.21_{0}$ & $0.94_{0.01}$ & $(0.94_{0.01})$ & $0.22_{0.01}$  \\
			\hline
			\multirow{2}{*}{$\mhat$: quad}&$\pihat$: logit & $0.94_{0.01}$ & $(0.94_{0.01})$ & $0.21_{0}$ & $0.94_{0.01}$ & $(0.94_{0.01})$ & $0.23_{0.01}$  \\
			&$\pihat$: quad & $0.94_{0.01}$ & $(0.94_{0.01})$ & $0.21_{0}$ & $0.94_{0.01}$ & $(0.94_{0.01})$ & $0.22_{0.01}$ \\
			\hline
			\multirow{2}{*}{$\mhat$: SIM}&	$\pihat$: logit & $0.94_{0.01}$ & $(0.94_{0.02})$ & $0.20_{0}$ & $0.94_{0.01}$ & $(0.95_{0.01})$ & $0.21_{0.01}$ \\
			&$\pihat$: quad & $0.94_{0.01}$ & $(0.94_{0.01})$ & $0.20_{0}$ & $0.94_{0.01}$ & $(0.95_{0.01})$ & $0.21_{0.01}$ \\
			\hline
	\end{tabular}}
\end{table}

\begin{table}[!ht]%[!htbp]
	\centering
	\caption{See caption of Table \ref{table: DGP_symm_n1000_p50_infer}. (Only change: $p= 500$ instead of $50$)}
	\label{table: DGP_symm_n1000_p500_infer}
	{\bf (II)} $p=500$.\\
	(a) DGP: ``Linear-linear'' for $\pi(\cdot)$ and $m(\cdot)$.
	\vspace{0.05in}
	\resizebox{\textwidth}{!}{
		\begin{tabular}{ll||ccc|ccc}\hline
				\multicolumn{2}{l}{Working nuisance model}  & \multicolumn{3}{c}{Zero coefficients} &  \multicolumn{3}{c}{Non-zero coefficients} \\
			\hline
			\multicolumn{2}{l}{} &$A$-CovP & $M$-CovP  & Length &$A$-CovP & $M$-CovP  & Length \\
			\hline
			\multirow{2}{*}{$\mhat$: linear}&$\pihat$: logit & $0.94_{0.01}$ & $(0.94_{0.01})$ & $0.18_{0}$ & $0.91_{0.02}$ & $(0.92_{0.01})$ & $0.18_{0}$ \\
			&$\pihat$: quad &  $0.94_{0.01}$ & $(0.94_{0.01})$ & $0.18_{0}$ & $0.91_{0.02}$ & $(0.92_{0.01})$ & $0.18_{0}$  \\
			\hline
			\multirow{2}{*}{$\mhat$: quad}&$\pihat$: logit & $0.94_{0.01}$ & $(0.94_{0.01})$ & $0.18_{0}$ & $0.91_{0.02}$ & $(0.91_{0.01})$ & $0.19_{0}$   \\
			&$\pihat$: quad & $0.94_{0.01}$ & $(0.94_{0.01})$ & $0.19_{0}$ & $0.91_{0.02}$ & $(0.91_{0.01})$ & $0.19_{0}$  \\
			\hline
			\multirow{2}{*}{$\mhat$: SIM}&	$\pihat$: logit & $0.94_{0.01}$ & $(0.94_{0.01})$ & $0.18_{0}$ & $0.91_{0.01}$ & $(0.91_{0.01})$ & $0.18_{0}$   \\
			&$\pihat$: quad & $0.94_{0.01}$ & $(0.94_{0.01})$ & $0.18_{0}$ & $0.91_{0.01}$ & $(0.91_{0.01})$ & $0.18_{0}$  \\
			\hline
	\end{tabular}}
	(b) DGP: ``Quad-quad'' for $\pi(\cdot)$ and $m(\cdot)$.
	\vspace{0.05in}
	\resizebox{\textwidth}{!}{
		\begin{tabular}{ll||ccc|ccc}\hline
				\multicolumn{2}{l}{Working nuisance model}  & \multicolumn{3}{c}{Zero coefficients} &  \multicolumn{3}{c}{Non-zero coefficients} \\
			\hline
			\multicolumn{2}{l}{} &$A$-CovP & $M$-CovP  & Length &$A$-CovP & $M$-CovP  & Length \\
			\hline
			\multirow{2}{*}{$\mhat$: linear}&$\pihat$: logit & $0.94_{0.01}$ & $(0.95_{0.01})$ & $0.47_{0}$ & $0.91_{0.02}$ & $(0.92_{0.01})$ & $0.50_{0.06}$  \\
			& $\pihat$: quad & $0.94_{0.01}$ & $(0.94_{0.01})$ & $0.47_{0}$ & $0.91_{0.03}$ & $(0.91_{0.03})$ & $0.49_{0.06}$  \\
			\hline
			\multirow{2}{*}{$\mhat$: quad}&$\pihat$: logit & $0.94_{0.01}$ & $(0.94_{0.01})$ & $0.36_{0}$ & $0.92_{0.02}$ & $(0.92_{0.01})$ & $0.38_{0.04}$ \\
			&$\pihat$: quad &  $0.94_{0.01}$ & $(0.94_{0.01})$ & $0.36_{0}$ & $0.92_{0.02}$ & $(0.92_{0.02})$ & $0.38_{0.04}$   \\
			\hline
			\multirow{2}{*}{$\mhat$: SIM}&	$\pihat$: logit &  $0.94_{0.01}$ & $(0.94_{0.01})$ & $0.47_{0}$ & $0.91_{0.03}$ & $(0.92_{0.02})$ & $0.50_{0.06}$  \\
			&$\pihat$: quad & $0.94_{0.01}$ & $(0.94_{0.01})$ & $0.47_{0}$ & $0.91_{0.03}$ & $(0.91_{0.02})$ & $0.49_{0.06}$ \\
			\hline
	\end{tabular}}
	(c) DGP: ``SIM-SIM'' for $\pi(\cdot)$ and $m(\cdot)$.
	\vspace{0.05in}
	\resizebox{\textwidth}{!}{
		\begin{tabular}{ll||ccc|ccc}\hline
				\multicolumn{2}{l}{Working nuisance model}  & \multicolumn{3}{c}{Zero coefficients} &  \multicolumn{3}{c}{Non-zero coefficients} \\
			\hline
			\multicolumn{2}{l}{} &$A$-CovP & $M$-CovP  & Length &$A$-CovP & $M$-CovP  & Length \\
			\hline
			\multirow{2}{*}{$\mhat$: linear}&$\pihat$: logit & $0.94_{0.01}$ & $(0.94_{0.01})$ & $0.18_{0}$ & $0.92_{0.01}$ & $(0.92_{0.01})$ & $0.18_{0}$  \\
			& $\pihat$: quad & $0.94_{0.01}$ & $(0.94_{0.01})$ & $0.18_{0}$ & $0.92_{0.01}$ & $(0.92_{0.01})$ & $0.18_{0}$   \\
			\hline
			\multirow{2}{*}{$\mhat$: quad}&$\pihat$: logit & $0.94_{0.01}$ & $(0.94_{0.01})$ & $0.19_{0}$ & $0.92_{0.01}$ & $(0.92_{0.01})$ & $0.19_{0}$  \\
			&$\pihat$: quad &$0.94_{0.01}$ & $(0.94_{0.01})$ & $0.19_{0}$ & $0.92_{0.01}$ & $(0.92_{0.01})$ & $0.19_{0}$\\
			\hline
			\multirow{2}{*}{$\mhat$: SIM}&	$\pihat$: logit &$0.94_{0.01}$ & $(0.94_{0.01})$ & $0.18_{0}$ & $0.92_{0.01}$ & $(0.92_{0.01})$ & $0.18_{0}$  \\
			&$\pihat$: quad & $0.94_{0.01}$ & $(0.94_{0.01})$ & $0.18_{0}$ & $0.92_{0.01}$ & $(0.92_{0.01})$ & $0.18_{0}$ \\
			\hline
	\end{tabular}}
\end{table}

\section{Technical Tools}\label{techtools}

We collect here some useful definitions and supporting lemmas that serve throughout as key technical ingredients in the proofs of all our main results.

\subsection{Orlicz Norms, Sub-Gaussians and Sub-Exponentials}\label{subsec:orlicznorms}
We first introduce a few definitions and results regarding concentration bounds.

\begin{definition}[Orlicz norms]\label{orlicz:def}\emph{
For any $\alpha > 0$, let $\psialpha(\cdot)$ denote the function given by: $\psialpha(x) = \exp(x^{\alpha}) - 1 \; \forall \; x \geq 0$. Then, for any random variable $X$ and any $\alpha > 0$, the \emph{$\psialpha$-Orlicz norm} $\| X \|_{\psialpha}$ of $X$ is defined as:}
\begin{equation*}
\| X \|_{\psialpha} \; := \; \inf \left\{ c > 0: \; \E\{ \psialpha(|X|/c)\}\; \leq \; 1 \right\},
\end{equation*}
\emph{and $X$ is said to have a finite $\psialpha$-Orlicz norm, denoted as $\| X \|_{\psialpha} < \infty$ (if the set above is empty, then the infimum is simply defined to be $\infty$). }
\par\smallskip
\emph{For a \emph{random vector} $\bX \in \R^d$ $(d \geq 1)$, we define $\bX$ to have finite $\psialpha$-Orlicz norm if each coordinate of $\bX$ does %has finite $\psialpha$-Orlicz norm,
and we let $\|\bX\|_{\psialpha} := \max_{1 \leq j \leq d} \| \bX_{[j]}\|_{\psialpha}$.
}
\par\smallskip
\emph{A random variable (or random vector) is said to be \emph{sub-Gaussian} or \emph{sub-exponential} if it has finite $\psialpha$-Orlicz norm with $\alpha = 2$ or $\alpha = 1$ respectively. %, and it is said to be  if it has a finite $\psialpha$-Orlicz norm with $\alpha = 1$.
}
\end{definition}
Note that sub-Gaussians and sub-exponentials also possess other alternative definitions in terms of tail bounds, moment bounds or moment generating functions that are standard in the literature. All these definitions may be shown to be equivalent, upto constant factors in the parameters, to the one above. The $\psialpha$-Orlicz norms are more general norms allowing for any $\alpha > 0$ (not just $1$ or $2$), and hence, weaker tail behaviors. It is also worth noting that a bounded random variable $X$ has $\| X \|_{\psialpha} < \infty$ for \emph{any} $\alpha \in (0, \infty]$. % and hence, has finite $\psialpha$-Orlicz norm with $\alpha = \infty$.

\subsection{Properties of Orlicz Norms and Concentration Bounds}\label{subsec:probbound}

We enlist here some useful general properties of Orlicz norms along with a few specific ones for sub-Gaussians and sub-exponentials. These are all quite well known and routinely used. Their statements (possibly with slightly different constants) and proofs can be found in several relevant references, including \citet{VdVWellner_Book_1996, Pollard_2015, Vershynin_2012, Vershynin_2018, Rigollet_2017} and \citet{Wainwright_Book_2019} among others. We therefore skip their proofs here for the sake of brevity. %The proofs are therefore skipped here for brevity.

\begin{lemma}[General properties of Orlicz norms, sub-Gaussians and sub-exponentials]\label{lem:1:GenProp}
%\begin{itemize}
\hspace{-0.15in} Let $X, Y $ denote generic random variables and let $\mu := \E(X)$.
\begin{enumerate}[(i)]
\item \emph{(Basic properties).} For $\alpha \geq 1$, $\psialphanorm{\cdot}$ is a norm  %(and a quasinorm if $\alpha < 1$)
satisfying: (a) $\psialphanorm{X} \geq 0$ and $\psialphanorm{X} = 0 \Leftrightarrow X = 0$ a.s., (b) $\psialphanorm{c X} = |c| \psialphanorm{X}$ $\forall \; c \in \R$ and $\psialphanorm{|X|} = \psialphanorm{X}$, and (c) $\psialphanorm{X + Y} \leq \psialphanorm{X} + \psialphanorm{Y}$.
\par\smallskip
\item \emph{(Monotonicities).} (a) For any $0 < \alpha \leq \beta$, $(\log 2)^{1/\alpha} \psialphanorm{X} \leq  (\log 2)^{1/\beta}$ $\psibetanorm{X}$. %In particular, $\psionenorm{X} \leq (\log 2)^{-1/2} \psitwonorm{X}$.
(b) For any $\alpha > 0$, $\psionenorm{|X|^{\alpha}} \leq \psialphanorm{X}^{\alpha}$. (c) If $|X| \leq |Y|$ a.s., then $\psialphanorm{X} \leq \psialphanorm{Y}$ $\forall \; \alpha > 0$.  (d) If $X$ is bounded, i.e. $|X| \leq M$ a.s. for some constant $M$, then $\psialphanorm{X} \leq (\log 2)^{-1/\alpha} M$ for each $\alpha \in (0, \infty]$.
\par\smallskip
\item \emph{(Tail bounds and equivalences).} (a) If $\psialphanorm{X} \leq \sigma$, % for some $(\alpha, \sigma)$,
then $\P(|X| > \epsilon) \leq 2 \exp(- \epsilon^{\alpha}/ \sigma^{\alpha})$ $\forall \; \epsilon \geq 0$. (b) Conversely if $\P(|X| > \epsilon) \leq C \exp(- \epsilon^\alpha/ \sigma^\alpha)$ $\forall \; \epsilon \geq 0$, for some $(C, \sigma, \alpha) > 0$, then $\psialphanorm{X} \leq \sigma (1 + C/2)^{1/\alpha}$.
\par\smallskip

\item \emph{(Moment bounds).} If $\psialphanorm{X} \leq \sigma$ for some $(\alpha, \sigma) > 0$, then $\E(|X|^m) \leq C_{\alpha}^m \sigma^m m^{m/\alpha}$ $\forall \; m \geq 1$, for some constant $C_{\alpha}$ depending only on $\alpha$. (A converse also holds although not presented here). In particular, %for $\alpha = 1$ and $2$, we have:
    \begin{enumerate}[(a)]
    \item If $\psionenorm{X} \leq \sigma$, then for each $m \geq 1$, $\E(|X|^m) \; \leq \; \sigma^m  m! \; \leq \; \sigma^m m^m $.
    \item If $\psitwonorm{X} \leq \sigma$, then $\E(|X|^m) \leq 2 \sigma^m \Gamma(m/2 + 1)$ $\forall \; m \geq 1$, where $\Gamma(a) := \int_0^{\infty} x^{a - 1} exp(-x) dx $ $\forall \; a > 0$ denotes the Gamma function. Hence, $\E(|X|) \leq \sigma \sqrt{\pi}$ and $\E(|X|^m) \leq 2 \sigma^m (m/2)^{m/2}$ $\forall \; m \geq 2$.
    \end{enumerate}

\item \emph{(H\"{o}lder-type inequality for the Orlicz norm of products).} For any $\alpha, \beta > 0$, let $\gamma := (\alpha^{-1} + \beta^{-1})^{-1}$. Then, for any $X,Y$ %two random variables $X$ and $Y$
with $\psialphanorm{X} < \infty$ and $\psibetanorm{Y} < \infty$, $\psigammanorm{XY} < \infty$ and $\psigammanorm{XY} \leq  \psialphanorm{X} \psibetanorm{Y}$.
In particular, if $X$ and $Y$ are sub-Gaussian, then $XY$ is sub-exponential and $\psionenorm{XY} \leq \psitwonorm{X} \psitwonorm{Y}$. Further, if $Y$ is bounded with $Y \leq M$ a.s. and $\psialphanorm{X} < \infty$ for any $\alpha > 0$, then $\psialphanorm{XY} \leq M \psialphanorm{X}$.
\par\smallskip
\item \emph{(Orlicz norms and tail bounds for maximums).}  Let $\{X_i\}_{i=1}^n$ ($n \geq 1$) be random variables (possibly dependent) with $\underset{1 \leq i \leq n}{\max} \psialphanorm{X_i} \leq \sigma$ for some $(\alpha, \sigma)$ and let $Z_n := \underset{1 \leq i \leq n}{\max} |X_i|$. Then, $\psialphanorm{Z_n} \leq \sigma (\log n + 2)^{1/\alpha} \leq \sigma \{3 \log (n+1) \}^{1/\alpha}$ %,$\P(Z_n > \epsilon) \leq 2n \exp( - \epsilon^{\alpha}/\sigma^{\alpha})$ $\forall \; \epsilon \geq 0$,
    and $\P\{Z_n > c \sigma (\log n)^{1/\alpha}\} \leq 2 n^{- (c^\alpha-1)}$ $\forall \; c > 1$.
\par\smallskip

\item \emph{(MGF related properties of sub-Gaussians).} Let $\E[\exp\{t(X-\mu)\}]$ denote the moment generating function (MGF) of $X - \mu$ at $t \in \R$. Then:
\begin{enumerate}[(a)]
\item If $\psitwonorm{X - \mu} \leq \sigma$, %for some $\sigma \geq 0$,
then $\E[\exp\{t(X-\mu)\}] \leq \exp(2\sigma^2 t^2)$ $\forall \; t \in \R$.
\item Conversely, if $\E[\exp\{t(X-\mu)\}] \leq \exp(\sigma^2 t^2)$  $\forall \; t \in \R$,  %for some $\sigma \geq 0$,
then $\forall \; \epsilon \geq 0$, $\P(| X - \mu | > \epsilon) \leq 2 \exp(- \epsilon^2/ 4 \sigma^2)$ and hence, $\psitwonorm{X-\mu} \leq 2\sqrt{2} \sigma$.
    \end{enumerate}
\end{enumerate}
\end{lemma}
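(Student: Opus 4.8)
The plan is to reduce every assertion to the single defining equivalence
\begin{equation*}
\psialphanorm{X} \le c \;\Longleftrightarrow\; \E\{\exp(|X|^{\alpha}/c^{\alpha})\} \le 2,
\end{equation*}
together with the elementary facts that $\psialpha(\cdot)$ is nonnegative, increasing, vanishes at $0$, and is convex precisely when $\alpha \ge 1$. I would prove the parts in order of logical dependence rather than as listed. For (i), nonnegativity, definiteness, and homogeneity are immediate from the infimum (homogeneity via the substitution $c \mapsto |k|c$). The triangle inequality is the only substantive claim, and it is exactly where $\alpha \ge 1$ must be imposed: setting $a = \psialphanorm{X}$, $b = \psialphanorm{Y}$ (the degenerate cases being trivial) and writing $|X+Y|/(a+b) \le \tfrac{a}{a+b}(|X|/a) + \tfrac{b}{a+b}(|Y|/b)$, convexity of $\psialpha$ would give $\E\{\psialpha(|X+Y|/(a+b))\} \le \tfrac{a}{a+b} + \tfrac{b}{a+b} = 1$.

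Next I would build the tail/moment machinery. Direction (iii)(a) is Markov's inequality applied to the monotone map $t \mapsto \exp(t^{\alpha}/\sigma^{\alpha})$; the converse (iii)(b) follows by inserting the assumed tail into $\E\{\exp(|X|^{\alpha}/c^{\alpha})\} = \int_0^{\infty} \P\{\exp(|X|^{\alpha}/c^{\alpha}) > s\}\,ds$ and choosing $c$ so the resulting geometric-type integral is at most $1$. The moment bounds (iv) then come from $\E(|X|^m) = \int_0^{\infty} m t^{m-1}\P(|X| > t)\,dt$ and the Gamma integral, specializing $\alpha = 1, 2$ for (iv)(a),(b). Monotonicity (ii) follows from pointwise comparison of $\exp(x^{\alpha})$ and $\exp(x^{\beta})$ (split at $x = 1$), the admissibility of $|X|^{\alpha}$ as an argument in the $\psi_1$-norm, and direct bounding when $X$ is bounded. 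The product bound (v) is the Orlicz analogue of H\"{o}lder: with $\gamma^{-1} = \alpha^{-1} + \beta^{-1}$, I would apply Young's inequality to $(|X|/a)^{\gamma}(|Y|/b)^{\gamma}$ with the conjugate exponents $\alpha/\gamma, \beta/\gamma$, then a second weighted-convexity step $\exp(\lambda s + (1-\lambda)t) \le \lambda e^{s} + (1-\lambda)e^{t}$ with $\lambda = \gamma/\alpha$, so that taking expectations yields $\E\{\exp(|XY|^{\gamma}/(ab)^{\gamma})\} \le \tfrac{\gamma}{\alpha}\cdot 2 + \tfrac{\gamma}{\beta}\cdot 2 = 2$.

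Finally, (vii) is the standard sub-Gaussian MGF pair: for (a) I would expand $\E\{\exp(tY)\} = 1 + \sum_{k \ge 2} t^k\E(Y^k)/k!$ for $Y = X - \mu$ (the $k=1$ term vanishing), insert the moment bounds from (iv)(b), and sum the series to $\exp(2\sigma^2 t^2)$; (b) is a Chernoff bound optimized at $t = \epsilon/(2\sigma^2)$ followed by an appeal to (iii)(b). The one genuinely delicate step is the maximal inequality (vi): the naive bound $\exp(Z_n^{\alpha}/c^{\alpha}) \le 1 + \sum_i\{\exp(|X_i|^{\alpha}/c^{\alpha}) - 1\}$ combined with the crude tail only produces a polynomial-in-$n$ factor and thereby loses the sharp logarithmic rate. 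The correct route is first to obtain the union tail bound $\P(Z_n > t) \le 2n\exp(-t^{\alpha}/\sigma^{\alpha})$ (which already yields the stated $2n^{-(c^{\alpha}-1)}$ bound upon setting $t = c\sigma(\log n)^{1/\alpha}$), and then to \emph{absorb the prefactor $2n$ into the exponent} by splitting $t^{\alpha}/\sigma^{\alpha}$ and thresholding, converting the tail into one of the form $\exp\{-t^{\alpha}/(2\sigma^{\alpha})\}$ beyond $t_0 \asymp \sigma(\log n)^{1/\alpha}$, after which (iii)(b)-style integration returns $\psialphanorm{Z_n} \lesssim \sigma(\log n)^{1/\alpha}$. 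Securing this logarithmic (rather than polynomial) dependence, and tracking the explicit constants through each of the above steps, is where I expect essentially all of the real work to lie.
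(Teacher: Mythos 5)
The paper does not actually prove Lemma \ref{lem:1:GenProp}: it explicitly defers to standard references (van der Vaart--Wellner, Vershynin, Wainwright, etc.), noting that the statements there may carry ``slightly different constants.'' So there is no in-paper argument to compare against; judged on its own, your roadmap is correct and is essentially the standard textbook development --- convexity of $\psialpha$ for the triangle inequality, Markov plus tail-integration for (iii), $\E(|X|^m)=\int_0^\infty m t^{m-1}\P(|X|>t)\,dt$ for (iv), Young's inequality with exponents $\alpha/\gamma,\beta/\gamma$ followed by convexity of $\exp$ for (v), a union bound with the prefactor absorbed into the exponent for (vi), and the series/Chernoff pair for (vii). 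Each step goes through.

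The one caveat worth flagging, precisely because you identify constant-tracking as where the work lies: a few of the constants as stated in the lemma will not fall out of your sketch verbatim. Your tail-integration argument for (iii)(b) yields $\psialphanorm{X}\le\sigma(1+C)^{1/\alpha}$, not $\sigma(1+C/2)^{1/\alpha}$ (the difference matters in (vii)(b), where the stated $(1+C/2)^{1/2}$ with $C=2$ is what produces exactly $2\sqrt{2}\sigma$; your route gives $2\sqrt{3}\sigma$). Similarly, routing (iv)(a) through the tail bound picks up an extra factor of $2$; the clean way to get $\E(|X|^m)\le\sigma^m m!$ is to read it off termwise from $\sum_{m\ge1}\E(|X|^m)/(\sigma^m m!)=\E\{\exp(|X|/\sigma)\}-1\le1$. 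For (vi), the exact constant $(\log n+2)^{1/\alpha}$ is obtained more directly by writing $\E\{\exp(Z_n^\alpha/c^\alpha)\}\le\E\{(\sum_i\exp(|X_i|^\alpha/\sigma^\alpha))^{\sigma^\alpha/c^\alpha}\}\le(2n)^{\sigma^\alpha/c^\alpha}$ via Jensen than by your (valid but lossier) tail-absorption step. None of this affects correctness of the approach --- these are exactly the ``slightly different constants'' the authors acknowledge --- but if you intend the explicit constants in the statement, a couple of the parts need the sharper direct arguments rather than the generic tail machinery.
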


\begin{lemma}[Concentration bounds for sums of independent sub-Gaussian variables]\label{lem:2:SGconc}
Let $\{X_i\}_{i=1}^n$ ($n \geq 1$) be independent (but not necessarily i.i.d.) random variables with means $\{\mu_i\}_{i=1}^n$ such that $\psitwonorm{X_i - \mu_i} \leq \sigma_i $ for some $\{\sigma_i\}_{i=1}^n \geq 0$. Then, for any set of real numbers $\{a_i\}_{i=1}^n$, we have %and letting $\mathbf{a} = (a_1,\hdots,a_n) \in \R^n$, we have:
    \begin{eqnarray*}
    &&\E \left[ \exp\left\{t \sum_{i=1}^n a_i (X_i -\mu_i) \right\}\right]  \;\leq \; \exp\left(2 t^2 \sum_{i=1}^n\sigma_i^2 a_i^2 \right)\quad \forall \; t \in \R, \quad \mbox{and} \\
    &&\P\left\{ \left| \sum_{i=1}^n a_i (X_i - \mu_i) \right| \; > \; \epsilon \right\} \;\; \leq \; 2 \exp\left(\frac{- \epsilon^2} {8 \sum_{i=1}^n \sigma_i^2 a_i^2 } \right) \quad \forall \; \epsilon \geq 0.
    \end{eqnarray*}
    This further implies that $\psitwonorm{a_i (X_i -\mu_i)} \leq 4 (\sum_{i=1}^n\sigma_i^2 a_i^2)^{1/2}$. In particular, when $a_i = 1/n$ and $\sigma_i = \sigma$, $\psitwonorm{\frac{1}{n} \sum_{i=1}^n (X_i - \mu_i)} \leq (4\sigma)/\sqrt{n}$.
    %%and $ \P \left\{ \left|\frac{1}{n} \sum_{i=1}^n (X_i - \mu_i)\right| >  \epsilon \right\} \leq  2 \exp\left\{- n \epsilon^2 / \left(8 \sigma^2 \right)\right\} \forall \; \epsilon \geq 0.$
\end{lemma}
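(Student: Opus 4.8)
The plan is to assemble the result from the single-variable sub-Gaussian MGF bound already established in Lemma \ref{lem:1:GenProp}, and then to extract the tail bound and the Orlicz norm by a standard Chernoff argument followed by the tail-to-norm equivalence. Throughout, write $S := \sum_{i=1}^n a_i(X_i - \mu_i)$ and $V := \sum_{i=1}^n \sigma_i^2 a_i^2$.

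First I would establish the MGF bound for the weighted sum. Fix $t \in \R$. Since the $X_i$ are independent, so are the rescaled centered summands $t a_i(X_i - \mu_i)$, and hence the MGF factorizes: $\E[\exp(tS)] = \prod_{i=1}^n \E[\exp\{t a_i(X_i - \mu_i)\}]$. For each factor I apply Lemma \ref{lem:1:GenProp}(vii)(a) with scalar argument $s = t a_i$: since $\psitwonorm{X_i - \mu_i} \leq \sigma_i$, this yields $\E[\exp\{t a_i(X_i - \mu_i)\}] \leq \exp(2\sigma_i^2 a_i^2 t^2)$. Multiplying over $i$ gives $\E[\exp(tS)] \leq \exp(2 t^2 V)$, which is the first claim.

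Next I would convert this MGF control into the stated tail bound. For any $t > 0$, Markov's inequality gives $\P(S > \epsilon) \leq e^{-t\epsilon}\E[\exp(tS)] \leq \exp(2t^2 V - t\epsilon)$; minimizing the exponent over $t > 0$ at $t = \epsilon/(4V)$ produces $\P(S > \epsilon) \leq \exp(-\epsilon^2/(8V))$. Because the MGF bound is symmetric in the sign of $t$, the identical estimate holds for $-S$, so a union bound yields $\P(|S| > \epsilon) \leq 2\exp(-\epsilon^2/(8V))$, the second claim.

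Finally, the Orlicz-norm bound follows directly from the converse tail-to-norm equivalence in Lemma \ref{lem:1:GenProp}(iii)(b). The tail bound has the form $\P(|S| > \epsilon) \leq C\exp(-\epsilon^{\alpha}/\tau^{\alpha})$ with $C = 2$, $\alpha = 2$ and $\tau = \sqrt{8V} = 2\sqrt{2}\,\sqrt{V}$, whence $\psitwonorm{S} \leq \tau(1 + C/2)^{1/\alpha} = 2\sqrt{2}\,\sqrt{V}\cdot\sqrt{2} = 4\sqrt{V} = 4\left(\sum_{i=1}^n \sigma_i^2 a_i^2\right)^{1/2}$. Specializing to $a_i = 1/n$ and $\sigma_i = \sigma$ gives $V = \sigma^2/n$ and hence $\psitwonorm{\frac{1}{n}\sum_{i=1}^n(X_i - \mu_i)} \leq 4\sigma/\sqrt{n}$. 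There is no genuine obstacle here, since every ingredient is furnished by Lemma \ref{lem:1:GenProp}; the only care required is bookkeeping the constants through the Chernoff optimization and the converse equivalence so as to land exactly on the factor $4$.
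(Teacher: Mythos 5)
Your argument is correct: the paper itself omits a proof of Lemma \ref{lem:2:SGconc} (it is listed among the standard concentration facts whose proofs are deferred to the cited references), and what you give is precisely the canonical MGF--Chernoff derivation that those references use, assembled entirely from Lemma \ref{lem:1:GenProp} (vii)(a) for the factorized MGF, the optimized Markov bound at $t=\epsilon/(4V)$, and the tail-to-norm converse in Lemma \ref{lem:1:GenProp} (iii)(b), with the constants $8$ and $4$ landing exactly as stated. The only cosmetic remark is that the displayed Orlicz bound in the lemma statement is evidently meant to read $\psitwonorm{\sum_{i=1}^n a_i(X_i-\mu_i)}$ rather than $\psitwonorm{a_i(X_i-\mu_i)}$, which is how you (correctly) interpret it; aside from the degenerate case $V=0$, where $S=0$ a.s.\ and the claims hold trivially, there is nothing to add.
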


\begin{lemma}[Sub-Gaussian properties of binary random variables]\label{lem:3:SubgaussianBinary}\hspace{-0.05in}
Let $Z \in \{0,1\}$ be a binary random variable with $\E(Z) \equiv \P(Z = 1) = p \in [0,1]$ and let $\widetilde{Z} = (Z-p)$. %denote the corresponding centered version of $Z$.
Then, $\| \widetilde{Z} \|_{\psitwo} \leq 2\widetilde{p}$, where $\widetilde{p} = 0$ if $p \in \{0,1\}$, $\widetilde{p} = 1/2$ if $p = 1/2$, and $\widetilde{p} = [(p-1/2)/\log \{p/(1-p)\}]^{1/2}$ if $p \notin \{0,1,1/2\}$.
\end{lemma}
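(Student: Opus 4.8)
The plan is to reduce the Orlicz-norm bound to a moment generating function (MGF) estimate for the centered variable $\widetilde{Z} = Z - p$ and then feed it into the MGF-to-Orlicz conversion of Lemma \ref{lem:1:GenProp}(vii). First I would dispatch the degenerate cases $p \in \{0,1\}$, where $\widetilde{Z} = 0$ almost surely, so that $\psitwonorm{\widetilde{Z}} = 0 = 2\widetilde{p}$ by Lemma \ref{lem:1:GenProp}(i), matching the claim since $\widetilde{p} = 0$ there.

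For $p \in (0,1)$ the heart of the matter is the sharp sub-Gaussian MGF bound
\[
\E\big[\exp\{t(Z-p)\}\big] \; \leq \; \exp\!\Big(\tfrac{1}{2}\,\widetilde{p}^{2}\, t^2\Big) \qquad \forall\; t \in \R,
\]
with $\widetilde{p}^{2} = (p - \tfrac12)/\log\{p/(1-p)\}$. Granting this, I would rewrite the exponent as $\sigma^2 t^2$ with $\sigma = \widetilde{p}/\sqrt{2}$ and apply Lemma \ref{lem:1:GenProp}(vii)(b) to obtain $\psitwonorm{\widetilde{Z}} \leq 2\sqrt{2}\,\sigma = 2\widetilde{p}$, which is exactly the assertion. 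It is worth noting that one genuinely needs the \emph{sharp} variance proxy here: the factor $2\sqrt{2}$ lost in the conversion is precisely what the closed form $\widetilde{p}^{2}$ is calibrated to absorb, so no cruder proxy would land at $2\widetilde{p}$.

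To establish the MGF bound, the symmetric case $p = \tfrac12$ is immediate: $\widetilde{Z} = \pm\tfrac12$, so $\E[\exp(t\widetilde{Z})] = \cosh(t/2) \leq \exp(t^2/8)$, which is the required estimate since $\widetilde{p}^{2}/2 = 1/8$. For general $p \neq \tfrac12$ this is the Kearns--Saul inequality, and I would begin by recognizing that $\widetilde{p}^{2}$ coincides exactly with the optimal variance proxy $(1-2p)/\{2\log((1-p)/p)\}$. Writing $\Lambda(t) := \log\{(1-p) + p e^t\} - pt$ for the log-MGF of $\widetilde{Z}$, one has $\Lambda(0) = \Lambda'(0) = 0$ (centering) and $\Lambda''(t) = p_t(1-p_t)$, the variance of the exponentially tilted Bernoulli with $p_t = pe^t/\{(1-p) + pe^t\}$; the goal is to show $H(t) := \tfrac12 \widetilde{p}^{2} t^2 - \Lambda(t) \geq 0$ for all $t$.

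The main obstacle is that $\Lambda''(t) = p_t(1-p_t)$ is bell-shaped in $t$ and attains its maximum $\tfrac14 \geq \widetilde{p}^{2}$ near $p_t = \tfrac12$, so $H''$ changes sign, $H$ is not convex, and the bound cannot be obtained by dominating $\Lambda''$ pointwise or by a crude convexity argument. I would resolve this via the standard sharp route: reparametrize by $x = p_t \in (0,1)$, so that $t = \mathrm{logit}(x) - \mathrm{logit}(p)$ and $dx/dt = x(1-x)$, which recasts the claim as a one-dimensional inequality in $x$. Its stationarity analysis reveals that $H$ vanishes not only at $t = 0$ but tangentially at a single interior point, and that $\widetilde{p}^{2}$ is precisely the constant pinned down by this binding point, i.e.\ $\tfrac12\widetilde{p}^{2} = \sup_{t \neq 0} \Lambda(t)/t^2$; alternatively one simply invokes the Kearns--Saul / Buldygin--Moskvichova bound. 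Assembling the three cases yields the MGF estimate for all $p \in [0,1]$, and the conversion of the second paragraph then completes the proof.
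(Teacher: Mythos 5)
Your proposal is correct and follows essentially the same route as the paper: the paper simply cites \citet{Buldygin_2013} for the sharp MGF-based (Kearns--Saul type) bound $\E[\exp\{t(Z-p)\}] \leq \exp(\widetilde{p}^{2}t^{2}/2)$ and then converts to the Orlicz norm exactly as you do, picking up the factor $2\sqrt{2}\cdot(\widetilde{p}/\sqrt{2}) = 2\widetilde{p}$ via Lemma \ref{lem:1:GenProp}(vii)(b). The only difference is that you additionally sketch a proof of the Kearns--Saul inequality itself (correctly identifying $\widetilde{p}^{2}$ as the optimal variance proxy and the non-convexity obstruction), whereas the paper defers that entirely to the cited reference.
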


Lemma \ref{lem:3:SubgaussianBinary} explicitly characterizes the sub-Gaussian properties of (centered) binary random variables and its proof can be found in \citet{Buldygin_2013}. The statement therein uses a MGF based definition of sub-Gaussians. The statement above is appropriately modified with the factor $2$ multiplied in the $\psitwonorm{\cdot}$ norm bound to adapt to our definition.

Next, we present a version of the well known Bernstein's inequality. While Lemma \ref{lem:2:SGconc} is useful, it applies only to sub-Gaussians. However, Bersntein's inequality applies more generally to sub-exponentials that include as special cases: sub-guassian variables, bounded variables, as well as products of two sub-Gaussian and/or bounded variables (see Lemma \ref{lem:5:BMC}).

\begin{lemma}[Bernstein's inequality - adopted from \citet{VdG_2013}]\label{lem:4:Bernstein}
Let $\{Z_i\}_{i=1}^n$ be independent (but not necessarily i.i.d.) random variables and let $\mu_i := \E(Z_i)$ $\forall \; 1 \leq i \leq n$. Suppose $\exists$ constants $\sigma, K \geq 0$ such that $n^{-1} \sum_{i=1}^n \E(|Z_i - \mu_i|^m) \leq (m!/2) \sigma^2 K^{m-2} $ for each $m \geq 2$. Then, %the following concentration bound holds:
 \begin{equation*}
 \P \left( \left| \frac{1}{n} \sum_{i=1}^n (Z_i - \mu_i) \right| \; \geq \; \sqrt{2}\sigma \epsilon + K \epsilon^2\right) \; \leq \; 2 \exp\left(-n\epsilon^2\right) \quad \mbox{for any} \; \epsilon \geq 0.
 \end{equation*}
In particular, if $\{Z_i\}_{i=1}^n$ are i.i.d. realizations of a sub-exponential variable $Z$ with $\E(Z) = \mu$ and $\psionenorm{Z} \leq \sigma_Z$ for some $\sigma_Z \geq 0$, then $\psionenorm{Z - \mu} \leq 2 \sigma_Z$ and the bound above holds with $\sigma \equiv 2\sqrt{2}\sigma_Z$ and $K \equiv 2\sigma_Z$. Two important special cases of such a setting include: (a) $Z = XY$ with $X$ and $Y$ sub-Gaussian, in which case $\sigma_Z \leq \psitwonorm{X} \psitwonorm{Y}$, and (b) $Z = XY$ with $X$ sub-exponential and $|Y| \leq M$ a.s. for some $M > 0$, in which case $\sigma_Z \leq M\psionenorm{X}$.
\end{lemma}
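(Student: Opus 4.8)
The plan is to establish the tail bound by the classical Chernoff (exponential moment) method, and then to deduce the special cases from the general moment hypothesis together with the Orlicz-norm machinery of Lemma \ref{lem:1:GenProp}. Write $S_n := n^{-1}\sum_{i=1}^n (Z_i - \mu_i)$, fix $\lambda \in (0, 1/K)$, and control the exponential moment of the centered, independent summands. First I would expand $\E[\exp\{\lambda(Z_i - \mu_i)\}]$ as $1 + \sum_{m \geq 2} (\lambda^m/m!)\,\E[(Z_i-\mu_i)^m]$ (the $m=1$ term vanishes by centering), bound $|\E[(Z_i-\mu_i)^m]| \leq \E|Z_i-\mu_i|^m$, and use $\log(1+x) \leq x$ to obtain
\[
\sum_{i=1}^n \log \E[\exp\{\lambda(Z_i-\mu_i)\}] \;\leq\; \sum_{m\geq 2}\frac{\lambda^m}{m!}\sum_{i=1}^n \E|Z_i-\mu_i|^m.
\]
The hypothesis $n^{-1}\sum_i \E|Z_i-\mu_i|^m \leq (m!/2)\sigma^2 K^{m-2}$ then cancels the $m!$ and collapses the double sum into a geometric series, yielding the key bound $\E[\exp\{\lambda n S_n\}] \leq \exp\{ n\sigma^2\lambda^2 / (2(1-\lambda K))\}$, valid whenever $\lambda K < 1$.

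The second step is the Chernoff optimization, and this is precisely where the stated form of the threshold is recovered. For $t > 0$ we have $\P(S_n \geq t) \leq \exp[ n\{\sigma^2\lambda^2/(2(1-\lambda K)) - \lambda t\}]$. The subtle point is that the familiar ``variance-type'' choice $\lambda = t/(\sigma^2+Kt)$ produces the standard bound $\exp\{-nt^2/(2(\sigma^2+Kt))\}$, which does \emph{not} invert cleanly to the advertised threshold $t = \sqrt{2}\sigma\epsilon + K\epsilon^2$. Instead I would substitute $t = \sqrt{2}\sigma\epsilon + K\epsilon^2$ directly and make the tailored choice $\lambda = \sqrt{2}\,\epsilon/(\sigma + \sqrt{2}K\epsilon)$ (equivalently $\lambda = a/(1+Ka)$ with $a := \sqrt{2}\epsilon/\sigma$); a short computation using $\sigma^2 a^2 = 2\epsilon^2$ shows the exponent equals exactly $-\epsilon^2$, giving $\P(S_n \geq \sqrt{2}\sigma\epsilon + K\epsilon^2) \leq \exp(-n\epsilon^2)$. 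Since $-Z_i$ satisfies the identical moment hypothesis, the same bound holds for the lower tail, and a union bound supplies the factor $2$ in the two-sided statement.

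For the i.i.d.\ sub-exponential specialization, I would first upgrade $\psionenorm{Z} \leq \sigma_Z$ to the centered bound $\psionenorm{Z-\mu} \leq 2\sigma_Z$ via the standard symmetrization argument (introduce an independent copy $Z'$, apply Jensen's inequality to the convex Orlicz function, and use the triangle inequality of Lemma \ref{lem:1:GenProp}(i)); then the moment bound of Lemma \ref{lem:1:GenProp}(iv)(a), namely $\E|Z-\mu|^m \leq (2\sigma_Z)^m m!$, matches the Bernstein hypothesis \emph{exactly} with $\sigma = 2\sqrt{2}\sigma_Z$ and $K = 2\sigma_Z$, since $\sigma^2 K^{m-2} = 8\sigma_Z^2 (2\sigma_Z)^{m-2}$. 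The subcases (a) and (b) follow immediately from the H\"{o}lder-type product bound $\psionenorm{XY} \leq \psitwonorm{X}\psitwonorm{Y}$ and the boundedness bound $\psionenorm{XY} \leq M\psionenorm{X}$, both recorded in Lemma \ref{lem:1:GenProp}(v). The only genuinely delicate piece is the exact $\lambda$-calibration in the Chernoff step; every other ingredient reduces to a routine invocation of Lemma \ref{lem:1:GenProp}.
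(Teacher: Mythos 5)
Your proof is correct, and it fills a gap the paper deliberately leaves open: Lemma \ref{lem:4:Bernstein} is stated as adopted from \citet{VdG_2013}, and the text preceding it explicitly skips the proofs of the concentration lemmas in Appendix \ref{subsec:probbound}, so there is no in-paper argument to compare against. Your Chernoff derivation is the standard one behind the cited form, and the one genuinely delicate step checks out: with $a:=\sqrt{2}\epsilon/\sigma$ and $\lambda=a/(1+Ka)$ one gets $1-\lambda K=1/(1+Ka)$, hence $\sigma^2\lambda^2/\{2(1-\lambda K)\}=\epsilon^2/(1+Ka)$, while $t=\sqrt{2}\sigma\epsilon+K\epsilon^2=\sigma^2a(1+Ka/2)$ gives $\lambda t=2\epsilon^2(1+Ka/2)/(1+Ka)$, so the exponent is exactly $-\epsilon^2$; the lower tail and the factor $2$ follow as you say. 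The specialization is also exactly how the paper itself verifies the same moment conditions in its proof of Lemma \ref{lem:5:BMC}: $\E(|Z-\mu|^m)\leq(2\sigma_Z)^m m!=(m!/2)(2\sqrt{2}\sigma_Z)^2(2\sigma_Z)^{m-2}$, with cases (a) and (b) supplied by Lemma \ref{lem:1:GenProp}(v). One point where your route is actually tighter than the paper's: the paper obtains $\psionenorm{Z-\mu}\leq 2\sigma_Z$ from the triangle inequality plus $|\mu|\leq\E(|Z|)\leq\sigma_Z$, but the $\psi_1$-norm of the constant $\mu$ is $|\mu|/\log 2$, so that route only gives $\sigma_Z(1+1/\log 2)\approx 2.44\,\sigma_Z$; your symmetrization-plus-Jensen argument delivers the constant $2$ exactly. (Either way the lemma's conclusion is unaffected up to constants, which the paper explicitly tolerates.)
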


\begin{lemma}[The Bernstein moment conditions and their verification]\label{lem:5:BMC}
 Consider the moment conditions required in Bernstein's inequality in Lemma \ref{lem:4:Bernstein}. Define a random variable $Z$ to satisfy the Bernstein moment conditions (BMC) with parameters $(\sigma, K) \geq 0$, denoted as $Z \sim \BMC(\sigma, K)$, if for each $m \geq 2$, $\E(|Z - \mu|^m) \leq (m!/2) \sigma^2 K^{m-2}$ where $\mu := E(Z)$. Then,
\par\smallskip
\noindent (a) If $Z$ is sub-exponential with $\psionenorm{Z} \leq \sigma_Z$, then $Z \sim \BMC(2\sqrt{2}\sigma_Z, 2\sigma_Z)$ and $|Z| \sim \BMC(2\sqrt{2}\sigma_Z, 2\sigma_Z)$.
\par\smallskip
\noindent (b) If $X$ and $Y$ sub-Gaussian variables, then $Z := XY \sim \BMC(2 \sqrt{2}\sigma_Z, 2\sigma_Z )$ with $\sigma_Z = \psitwonorm{X} \psitwonorm{Y}$.
\par\smallskip
\noindent (c) If $X$ is sub-exponential and $Y$ is a bounded random variable with $|Y| \leq M$ a.s., then $Z := XY \sim \BMC(2 \sqrt{2} \sigma_Z, 2\sigma_Z)$ with $\sigma_Z = M\psionenorm{X}$.
\end{lemma}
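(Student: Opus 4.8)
The plan is to prove part (a) first by a direct moment computation, and then obtain parts (b) and (c) as immediate corollaries via the H\"older-type inequality of Lemma \ref{lem:1:GenProp}(v) combined with part (a). Throughout, I would exploit the fact that the target parameters $(\sigma, K) = (2\sqrt{2}\sigma_Z, 2\sigma_Z)$ are precisely calibrated so that the BMC bound simplifies: a direct arithmetic gives $\frac{m!}{2}(2\sqrt{2}\sigma_Z)^2(2\sigma_Z)^{m-2} = m!\,(2\sigma_Z)^m$, so verifying $Z \sim \BMC(2\sqrt{2}\sigma_Z, 2\sigma_Z)$ is \emph{equivalent} to establishing $\E(|Z - \mu|^m) \leq m!\,(2\sigma_Z)^m$ for every $m \geq 2$.

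For part (a), the first step is to control the \emph{uncentered} absolute moments of $Z$. Since $\psionenorm{Z} \leq \sigma_Z$, Lemma \ref{lem:1:GenProp}(iv)(a) yields $\E(|Z|^m) \leq \sigma_Z^m\, m!$ for every $m \geq 1$; in particular, taking $m = 1$ gives $|\mu| = |\E(Z)| \leq \E(|Z|) \leq \sigma_Z$. The second step is to pass to centered moments via Minkowski's inequality in $L^m$: writing $\|W\|_m := (\E|W|^m)^{1/m}$ and using that $\mu$ is a constant, one has $\|Z - \mu\|_m \leq \|Z\|_m + |\mu| \leq \sigma_Z (m!)^{1/m} + \sigma_Z = \sigma_Z\{(m!)^{1/m} + 1\}$. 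The final step is the elementary observation that $(m!)^{1/m} \geq 1$ for all $m \geq 1$ (indeed $(m!)^{1/m} \geq \sqrt{2}$ for $m \geq 2$), so that $(m!)^{1/m} + 1 \leq 2(m!)^{1/m}$ and hence $\E(|Z - \mu|^m) \leq \{2\sigma_Z (m!)^{1/m}\}^m = m!\,(2\sigma_Z)^m$, which is exactly the required bound. The same three steps, applied to $|Z|$ in place of $Z$ (noting $\psionenorm{|Z|} = \psionenorm{Z} \leq \sigma_Z$ by Lemma \ref{lem:1:GenProp}(i)(b), and that $\E|Z| \leq \sigma_Z$), give the analogous centered moment bound and hence $|Z| \sim \BMC(2\sqrt{2}\sigma_Z, 2\sigma_Z)$ as well.

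Parts (b) and (c) then follow instantly. For (b), the H\"older-type inequality in Lemma \ref{lem:1:GenProp}(v) shows that the product of two sub-Gaussians is sub-exponential with $\psionenorm{XY} \leq \psitwonorm{X}\psitwonorm{Y} =: \sigma_Z$; applying part (a) to $Z = XY$ gives the claim. For (c), the bounded-multiplier case of Lemma \ref{lem:1:GenProp}(v) (with $\alpha = 1$) gives $\psionenorm{XY} \leq M\,\psionenorm{X} =: \sigma_Z$, and part (a) applied to $Z = XY$ again completes the argument.

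The only mildly delicate point is the centering step in part (a): the BMC is a condition on \emph{centered} moments, whereas the sub-exponential hypothesis most naturally controls \emph{uncentered} ones, so one must verify that the prescribed constants $2\sqrt{2}$ and $2$ absorb the centering with no slack. This is precisely what the Minkowski-plus-$(m!)^{1/m} \geq 1$ argument above delivers; no concentration inequality or symmetrization is needed, and parts (b)--(c) introduce no new ideas beyond the multiplicative Orlicz-norm bounds already recorded in Lemma \ref{lem:1:GenProp}(v).
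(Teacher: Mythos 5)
Your proposal is correct and follows essentially the same route as the paper: parts (b) and (c) are reduced to part (a) via the H\"older-type product bounds of Lemma \ref{lem:1:GenProp}(v), and part (a) rests on the sub-exponential moment bound of Lemma \ref{lem:1:GenProp}(iv)(a) together with the arithmetic identity $(m!/2)(2\sqrt{2}\sigma_Z)^2(2\sigma_Z)^{m-2} = m!(2\sigma_Z)^m$. The only difference is cosmetic: the paper absorbs the centering at the Orlicz-norm level (bounding $\psionenorm{Z-\mu}\leq 2\sigma_Z$ and then applying (iv)(a) to $Z-\mu$), whereas you apply (iv)(a) to $Z$ first and center afterwards via Minkowski in $L^m$ using $(m!)^{1/m}\geq 1$; both yield the same constants.
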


\begin{proof}
If $\psionenorm{Z} \leq \sigma_Z$, then using Lemma \ref{lem:1:GenProp} (i)(c) and (iv)(a), $\psionenorm{Z - \mu} \leq 2 \sigma_{Z}$ and $\E (|Z - \mu|^m) \leq (2 \sigma_Z)^m m! \equiv (m!/2) (2\sqrt{2}\sigma_Z)^2 (2\sigma_Z)^{m-2} $ for each $m \geq 1$. Hence, by definition, $Z \sim \BMC(2\sqrt{2}\sigma_Z, 2\sigma_Z)$. \qed

Similarly, $\psionenorm{|Z|} = \psionenorm{Z} \leq \sigma_Z$ and $\psionenorm{|Z| - \E\{|Z|\}} \leq 2 \sigma_{Z}$. Therefore, by identical arguments as above we again have: $|Z| \sim \BMC(2\sqrt{2}\sigma_Z, 2\sigma_Z)$. \qed

Finally, using Lemma \ref{lem:1:GenProp}, we have: for case (b), $\psionenorm{Z} \leq \psitwonorm{X} \psitwonorm{Y} \equiv \sigma_Z$, while for case (c), $\psionenorm{Z} \leq M\psionenorm{X} \equiv \sigma_Z$. The desired results then follow by using the same arguments used for proving the first result above. %\qed
\end{proof}

The following lemma is a useful concentration inequality that applies generally to any random variables with finite $\psi_{\alpha}$-Orlicz norm, preserves the right rate and tail behaviors and involves only the variance in the leading term.

\begin{lemma}[Concentration bounds with variance in the leading term - adopted from Theorem 3.4 of \citet{KC_MBSG_2018}]\label{lem:6:VarianceTailBound}
Suppose $\{\bX_i\}_{i=1}^n$ are independent mean zero random vectors in $\R^p$, for any $p \geq 1$ and $ n \geq 2$, such that for some $\alpha > 0$ and some $K_n > 0$,
\[
\max_{1\leq i\leq n}\max_{1\leq j\leq p}\,\psialphanorm{\bX_{i[j]}} \leq K_n, \;\; \mbox{and define} \;\; \Gamma_{n} := \max_{1\leq j\leq q}\,\frac{1}{n}\sum_{i=1}^n \E\left(\bX_{i[j]}^2\right).
\]
Then for any $t \geq 0,$ with probability at least $1 - 3e^{-t}$,
\begin{align*}
\left\|\frac{1}{n}\sum_{i=1}^n \bX_i\right\|_{\infty} \leq 7\sqrt{\frac{\Gamma_{n}(t + \log p)}{n}} + \frac{C_{\alpha}K_n (\log n)^{1/\alpha}(t + \log p)^{1/\alpha^*}}{n}, %% replace \log n by \log (2n) if n geq 1 only %%
\end{align*}
where $\alpha^* := \min\{\alpha,1\}$ and $C_{\alpha} > 0$ is some constant depending only on $\alpha$.
\end{lemma}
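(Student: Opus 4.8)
The plan is to reduce the $\ell_\infty$ maximal bound to a single coordinatewise tail inequality through a union bound, and then to prove that one-dimensional inequality by a truncation argument that isolates a bounded, Bernstein-controlled part from a heavy-tailed remainder. First I would fix a coordinate $j$, write $S_{n,j} := \frac{1}{n}\sum_{i=1}^n \bX_{i[j]}$, and aim to show that for every $u \ge 0$,
\[
\P\Big( |S_{n,j}| > 7\sqrt{\tfrac{\Gamma_{n} u}{n}} + \tfrac{C_\alpha K_n (\log n)^{1/\alpha} u^{1/\alpha^*}}{n} \Big) \;\le\; 3 e^{-u},
\]
using that the per-coordinate variance $\frac{1}{n}\sum_i \E(\bX_{i[j]}^2)$ is bounded by $\Gamma_{n}$ for every $j$. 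Granting this, a union bound over the $p$ coordinates with the choice $u = t + \log p$ yields a total failure probability of at most $p \cdot 3 e^{-(t+\log p)} = 3 e^{-t}$, which is exactly the claimed $\ell_\infty$ statement. So the whole problem rests on the coordinatewise inequality with the variance $\Gamma_{n}$ appearing (up to the universal constant) in the leading Gaussian term.

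For the coordinatewise bound I would truncate at level $\tau := c\,K_n (\log n)^{1/\alpha}$ for a constant $c$ to be fixed, splitting $\bX_{i[j]} = A_i + B_i$ with $A_i := \bX_{i[j]}\mathbf{1}(|\bX_{i[j]}| \le \tau)$ and $B_i := \bX_{i[j]}\mathbf{1}(|\bX_{i[j]}| > \tau)$, and using $\E A_i + \E B_i = 0$. The recentered truncated variables $A_i - \E A_i$ are bounded by $2\tau$ with variance at most $\E(\bX_{i[j]}^2)$, so by Lemma \ref{lem:5:BMC} they obey the Bernstein moment conditions with variance proxy of order $\Gamma_{n}$ and scale of order $\tau$; Bernstein's inequality (Lemma \ref{lem:4:Bernstein}), applied with $\epsilon = \sqrt{u/n}$, then delivers precisely a leading term $\asymp \sqrt{\Gamma_{n} u/n}$ together with a deviation term $\asymp \tau u/n = K_n(\log n)^{1/\alpha} u/n$. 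For the remainder, on the event $\mathcal{E} := \{\max_{1\le i\le n} |\bX_{i[j]}| \le \tau\}$ every $B_i$ vanishes, so $\frac{1}{n}\sum_i(B_i - \E B_i)$ collapses to the deterministic centering $-\frac{1}{n}\sum_i \E B_i$, which the $\psi_\alpha$ tail and moment bounds (Lemma \ref{lem:1:GenProp}(iii),(iv)) show to be of strictly smaller order than the main deviation term; and by the maximal tail bound in Lemma \ref{lem:1:GenProp}(vi), choosing $c$ large makes $\P(\mathcal{E}^c) \le 2 n^{-(c^\alpha - 1)}$ negligible. Adding the bounded-part and remainder failure probabilities produces the factor $3 e^{-u}$.

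The exponent $1/\alpha^* = 1/\min\{\alpha,1\}$ is the crux, and bookkeeping it correctly is where I expect the main difficulty to lie. When $\alpha \ge 1$ the Bernstein deviation term is genuinely linear in $u$, matching $\alpha^* = 1$; when $\alpha < 1$, however, the tails are heavier than exponential and the large-deviation regime is dominated by a single atypically large summand of size $\asymp K_n u^{1/\alpha}$, forcing the deviation term to scale as $u^{1/\alpha}$ rather than $u$. Reproducing this transition cleanly --- i.e. verifying that the truncation level $\tau \asymp K_n(\log n)^{1/\alpha}$ and the Bernstein split together regenerate the stated $u^{1/\alpha^*}$ dependence with universal constants $(7, C_\alpha)$ --- is the delicate part, and is precisely the content imported from Theorem 3.4 of \citet{KC_MBSG_2018}; the union bound and the isolation of $\Gamma_{n}$ in the leading term are otherwise routine.
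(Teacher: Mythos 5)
First, a point of comparison: the paper does not actually prove this lemma. As the bracketed title indicates, it is imported from Theorem 3.4 of \citet{KC_MBSG_2018}, and the preamble to Appendix \ref{techtools} explicitly says the proofs of these supporting lemmas are omitted and may be found in the cited references. So there is no in-paper argument to match against; what can be assessed is whether your sketch would reconstruct the result on its own. Your reduction of the $\ell_\infty$ statement to a single-coordinate tail bound via the union bound with $u = t + \log p$ is correct and is indeed how the vector form follows from the scalar one.

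The genuine gap is in the scalar bound, specifically the treatment of the unbounded part. You truncate at a deterministic level $\tau \asymp K_n(\log n)^{1/\alpha}$ and argue that on $\mathcal{E}:=\{\max_{1\le i\le n}|\bX_{i[j]}|\le\tau\}$ the heavy part vanishes, with $\P(\mathcal{E}^c)\lesssim n^{-(c^\alpha-1)}$ declared negligible. But the target failure probability is $3e^{-u}$ with $u=t+\log p$ and $t\ge 0$ arbitrary: once $u\gg\log n$, a polynomial-in-$n$ bound on $\P(\mathcal{E}^c)$ does not fall below $e^{-u}$, so it cannot be absorbed into the stated probability. The natural repair --- letting the truncation level grow with $u$, say $\tau\asymp K_n(\log n)^{1/\alpha}u^{1/\alpha}$ so that the exceedance probability is $\lesssim e^{-u}$ --- then inflates the Bernstein deviation term to $\tau u/n\asymp K_n(\log n)^{1/\alpha}u^{1+1/\alpha}/n$, which is strictly worse than the claimed $u^{1/\alpha^*}/n$. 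Recovering the correct $u^{1/\alpha^*}$ scaling when $\alpha<1$ is exactly where truncation-plus-Bernstein breaks down; the argument in \citet{KC_MBSG_2018} (in the spirit of Adamczak's generalization of Talagrand's inequality) instead controls the sum of the exceedances through the $\psi_\alpha$-Orlicz norm of $\max_i|\bX_{i[j]}|$ --- the true source of the $(\log n)^{1/\alpha}$ factor --- together with a Hoffmann--J{\o}rgensen/Talagrand-type concentration step, rather than by forcing every summand below a threshold. (For $\alpha\ge 1$ the issue is moot, since the summands are already sub-exponential and Lemma \ref{lem:4:Bernstein} applies directly.) You do flag this transition as ``the delicate part \ldots imported from Theorem 3.4,'' but as written the sketch replaces it with a step that fails, so it is an outline of where the difficulty lives rather than a proof.
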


Finally, we end with a simple lemma that relates high probability bounds to sub-Gaussian type tail bounds with an extra probability correction term.

\begin{lemma}[High probability bounds to sub-Gaussian type tail bounds]\label{lem:7:hptosgtypebound}
Let $X_n$ be any sequence of random variables satisfying $|X_n| \leq a_n$ with probability at least $1 - q_n$ for some $a_n \in [0, \infty)$ and $q_n \in [0,1]$, $\forall \; n \geq 1$. Then, %for any $t \geq 0$,
\[
\P(|X_n| > t) \; \leq \; 2 \exp\left\{ -t^2/(2a_n^2)\right\} + q_n \;\; \mbox{for any} \;\; t \geq 0.
\]
\end{lemma}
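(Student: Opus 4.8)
The plan is to exploit the defining event directly and split the probability on it. Let me write $A_n := \{|X_n| \leq a_n\}$, so that the hypothesis reads $\P(A_n) \geq 1 - q_n$, equivalently $\P(A_n^c) \leq q_n$. For any $t \geq 0$ I would decompose
\[
\P(|X_n| > t) \; = \; \P\big(\{|X_n| > t\} \cap A_n\big) \; + \; \P\big(\{|X_n| > t\} \cap A_n^c\big),
\]
and immediately bound the second term by $\P(A_n^c) \leq q_n$. The entire content then reduces to controlling the first term by $2\exp\{-t^2/(2a_n^2)\}$.

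For the first term I would argue by a simple case split on the size of $t$ relative to $a_n$. If $t \geq a_n$, then on the event $A_n$ we have $|X_n| \leq a_n \leq t$, so $\{|X_n| > t\} \cap A_n = \emptyset$ and the term is exactly $0 \leq 2\exp\{-t^2/(2a_n^2)\}$. If instead $0 \leq t < a_n$, I would simply bound the probability crudely by $1$ and check that the target exponential already exceeds $1$: since $t < a_n$ gives $t^2/(2a_n^2) < 1/2 < \log 2$, we get $\exp\{-t^2/(2a_n^2)\} > 1/2$, hence $2\exp\{-t^2/(2a_n^2)\} > 1 \geq \P(\{|X_n| > t\} \cap A_n)$. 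Combining the two cases yields the bound on the first term, and adding the $q_n$ from the second term gives the claim.

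Honestly, there is no real obstacle here; the only point requiring the slightest care is the numerical check $1/2 < \log 2$ that makes the constant $2$ (and the exponent $1/(2a_n^2)$) large enough to absorb the trivial upper bound $1$ in the regime $t < a_n$. I would also note the degenerate case $a_n = 0$ briefly: there $A_n = \{X_n = 0\}$, so for $t > 0$ the first term vanishes and the stated bound reads $0 + q_n$ under the convention $\exp\{-t^2/0\} = 0$, while for $t = 0$ the inequality is trivial as the right side is at least $2$; alternatively one simply restricts to $a_n > 0$ without loss of generality. This edge case is the only thing worth flagging, and it is immediate.
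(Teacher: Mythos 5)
Your proof is correct, and it takes a genuinely more elementary route than the paper's. The decomposition over the event $A_n=\{|X_n|\le a_n\}$ and the bound $\P(A_n^c)\le q_n$ are identical to the paper's first and last steps. Where you diverge is in handling $\P(\{|X_n|>t\}\cap A_n)$: the paper observes that the truncated variable $X_n 1(A_n)$ is bounded by $a_n$ almost surely, invokes its Lemma \ref{lem:1:GenProp} (ii)(d) to conclude its $\psi_2$-Orlicz norm is at most $(\log 2)^{-1/2}a_n\le\sqrt{2}\,a_n$, and then applies the generic sub-Gaussian tail bound of Lemma \ref{lem:1:GenProp} (iii)(a) to get $2\exp\{-t^2/(2a_n^2)\}$. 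Your case split ($t\ge a_n$ gives an empty intersection; $t<a_n$ makes the right-hand side exceed $1$) reaches the same conclusion with no machinery at all, and it exposes exactly where the slack lives: your check $1/2<\log 2$ is the same numerical fact that makes $(\log 2)^{-1/2}\le\sqrt{2}$ in the paper's route. What the paper's version buys is a one-line proof given the Orlicz toolbox already assembled in its appendix; what yours buys is self-containedness and the transparent observation that the bound is exactly zero on the good event for $t\ge a_n$ and vacuous otherwise. Your remark on the degenerate case $a_n=0$ is also more careful than the paper, which silently leaves $t^2/(2a_n^2)$ undefined there.
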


\begin{proof}
Define the event $\Asc_n := \{ X_n \leq a_n \}$ and let $\Asc_n^c$ denote its complement event. Then, $\P (\Asc_n^c) \leq q_n$ by assumption. Furthermore, note that $|X_n 1(\Asc_n)| \leq a_n$ a.s. $[\P]$, where $1(\cdot)$ denotes the indicator function. Hence, using Lemma \ref{lem:1:GenProp} (ii) (d), we have: $\psitwonorm{X_n 1(\Asc_n)} \leq (\log 2)^{-1/2} a_n \leq \sqrt{2} a_n$.

Hence, using Lemma \ref{lem:1:GenProp} (iii) (a), $\P \{ |X_n 1(\Asc_n)| > t \} \leq 2 \exp\{-t^2/(2a_n^2)\}$ for any $t \geq 0$. Consequently, we have: for any $t \geq 0$,
\begin{align*}
\P(|X_n| > t) & \; = \; \P(|X_n| > t, \Asc_n) + \P(|X_n| > t, \Asc_n^c) \\
& \; \leq \; \P(|X_n 1(\Asc_n) | > t) + \P(\Asc_n^c) \; \leq \; 2 \exp\{-t^2/(2a_n^2)\} + q_n.
\end{align*}
This establishes the desired tail bound and completes the proof.
\end{proof}

\section{Technical Discussions on the Error Terms}\label{discussion:errorterms}
%\subsection[Some Key Technical Discussions on the Structure of the Error Terms]{Some Key Technical Discussions on the Structure of the Error Terms}\label{discussion:errorterms}

We note here a few useful %important
details regarding the structure and techniques for controlling the error terms $\bTpin$, $\bTmn$ and $\bRpimn$ accounting for the nuisance function estimators $\{ \pihat(\cdot), \mhat(\cdot) \}$ in the decomposition \eqref{decomp:eqn} of $\bT_n$.

\paragraph*{(a) The structure of $\bTpin$ and reasons for obtaining $\pihat(\cdot)$ solely from $\Xsc_n$} $\bTpin$ is simply the sample average of the random variables $\{\bTpi(\bZ_i)\}_{i=1}^n$ in \eqref{tpi}. However, this average is \emph{not} an i.i.d. average due to the presence of $\pihat(\cdot)$ which depends on all observations in $\Dsc_n$. A key property that is quite useful in this regard is that, by assumption, $\pihat(\cdot)$ is obtained solely from the subset $\Xsc_n := \{ (T_i,\bX_i): i = 1,\hdots,n\}$ of $\Dsc_n$. Hence, $\{\bTpi(Z_i)\}_{i=1}^n \medgiven \Xsc_n$ are \emph{conditionally} independent and centered with $\E\{\bTpi(\bZ_i)\} = \E[\E\{\bTpi(\bZ_i) \given \pihat(\cdot), \bX_i\}]$ $= \E[\E\{\bTpi(\bZ_i)\given \Xsc_n\}]$ $= \bzero$. %owing to the definitions of the underlying quantities involved, the nature of the construction of $\pihat(\cdot)$, and Assumption \ref{base:assmpns} (a).
The conditioning on $\Xsc_n$ ensures that $\pihat(\cdot)$, as well as all other components in $\bTpi(\bZ_i)$ which are functions of $(T_i,\bX_i)$ only, can now be treated as fixed and further, the conditional expectation being $\bzero$ follows from the fact that $\E[\{Y_i -m(\bX_i)\} \given \Xsc_n]$ $ \equiv \E\{\varepsilon(\Z_i) \given \Xsc_n\}$ $ = \E \{ \varepsilon(\Z_i) \given \T_i, \bX_i \}$ = $\E\{ \varepsilon(\Z_i) \given \bX_i\} = 0$, where the final step is due to  Assumption \ref{base:assmpns} (a).

Thus, $\bTpin$ is a centred average of (conditionally) independent variables. We exploit this and the structure of $\bTpi(\bZ)$ in Theorem \ref{TPI:THM} to control $\bTpin$. %But first, a few remarks are in order.

\paragraph*{(b) The structure of $\bTmn$ and the benefits of sample splitting/cross-fitting} $\bTmn$ is simply the sample average of the random variables $\{\bTm(\bZ)\}_{i=1}^n$ in (\ref{tm}). However, in the absence of sample splitting, this is \emph{not} an i.i.d. average due to the presence of $\mhat(\cdot)$ which depends on all observations in $\Dsc_n$. Further, unlike $\bTpin$ where $\{\bTpi(\bZ_i)\}_{i=1}^n\given \Xsc_n$ were at least (conditionally) independent and centered, $\bTmn$ possesses no such desirable features even if $\mhat(\cdot)$ is obtained solely from the subset $\Dsc_{n}^{(c)} := \{ (Y_i,\bX_i): T_i = 1, 1 \leq i \leq n\}$ of `complete cases' in $\Dsc_n$, as $\Dsc_{n}^{(c)}$ still (implicitly) depends on $\{T_i\}_{i=1}^n$ due to the restriction to the set with $T_i = 1$, and not just on $\{Y_i, \bX_i \}_{i=1}^n$. %Moreover, for any of the training points $\{\bZ_i\}_{i=1}^n$, it is also not clear whether $\bTm(\bZ_i)$ is even centered (unconditionally or conditionally).

Thus, in the absence of sample splitting, $\bTmn$ has no additional `structure' readily available that may lead to averages of variables which can be treated as conditionally independent and centered. In general, to control $\bTmn$ without sample splitting, one needs tools from empirical process theory. The corresponding analyses can be substantially involved and the conditions necessary can be quite strong, especially in high dimensional settings. However, these technical issues can be avoided through the sample splitting based estimates $\{\mtil(\bX_i)\}_{i=1}^n$ which `induces' a natural independence.

For any $\bZ \ind \mhat(\cdot)$, or more specifically, $\bZ \ind$ \{data used to obtain $\mhat(\cdot)\}$, $\E\{\bTm(\bZ) \given \mhat(\cdot), \bX\}= \E\{\bTm(\bZ) \medgiven \bX\} = \bzero$ due to Assumption \ref{base:assmpns} (a). Hence, $\E\{\bTm(\bZ) \given \mhat(\cdot)\} = \bzero$ and %$\E\{\bTm(\bZ) \} = \bzero$
for any i.i.d. collection $\{\bZ_k\}_{k=1}^K$  of $\bZ \ind \mhat(\cdot)$, $\{\bTm(\bZ_k)\}_{k=1}^K \given \mhat(\cdot)$ are (conditionally) independent and centered random variables. These serve as the main motivations behind the sample splitting.

In contrast to the `in-sample' estimates $\{ \mhat(\bX_i\}_{i=1}^n$, wherein $\mhat(\cdot)$ is obtained from $\Dscn$ and also evaluated at the same training points $\{\bX_i\}_{i=1}^n \in \Dscn$, thereby making them intractably dependent on $\mhat(\cdot)$, the cross-fitted estimates $\{ \mtil(\bX_i)\}_{i=1}^n$ ensure that for each $k \neq k' \in \{1,2\}$, the evaluation points $\{\bX_i \in \Dscnk\}$ used are independent of the estimator $\mhat^{(k')}(\cdot)$ obtained from $\Dscn^{(k')} \ind \Dscnk$, thus inducing a desirable `independence structure'. %between the training and the evaluation points.
This has substantial technical as well as practical benefits in reducing over-fitting. We exploit the technical benefits  greatly in Theorem \ref{TM:THM} to control $\bTmn$.

\paragraph*{(b) The structure of $\bRpimn$} Finally, note that $\bRpimn$ is essentially a second order (product-type) bias term involving the product of two error terms arising from the estimation of $\{\pi(\cdot), m(\cdot)\}$. Under reasonable assumptions on the convergence rates of the estimators $\{\pihat(\cdot), \mhat(\cdot)\}$, one can try to control the behavior of this term by `naive' techniques, as opposed to the more sophisticated analyses required for controlling $\bTpin$ and $\bTmn$. Such techniques and associated conditions are well known and standard in the literature for the special case of the mean estimation problem (or ATE estimation problem in CI), where a commonly adopted assumption is to have the product of the two convergence rates to be faster than $n^{-0.5}$ \citep{Farrell_2015, Chern_DDML_2018}. In general, such product conditions are typically reasonable and allows for much weaker (slower) convergence rates for one estimator as long as the other one has sufficiently fast enough rates. A stronger but familiar sufficient condition however is to have the convergence rates of both estimators to be faster than $n^{-0.25}$. In Theorem \ref{RMPI:THM}, we control $\bRpimn$ by adopting a similar condition with an additional logarithmic factor involved to account for the inherent high dimensionality of our error terms.

\section{Proof of Lemma \ref{DEV:BOUND}}\label{pf:dev:bound}

The proof relies substantially on a useful result of \citet{Negahban_2012}. We therefore adopt some of their basic notations and terminology at the beginning of the proof in order to facilitate the use of that result.

For any $\bu \in \R^p$, let $\Rsc(\bu) = \| \bu \|_1$ and let $\Rsc^*(\bu) \equiv %\underset{\bv \in \R^p \backslash \{\bzero\}}{\mbox{sup}}
\mbox{sup}_{\bv \in \R^p \backslash \{\bzero\}}
\{\bu'\bv/ \Rsc(\bv)\}$ be the `dual norm' for $\Rsc(\cdot)$. Further, for any subspace $\Msc \subseteq \R^p$, let $\Psi(\Msc) \equiv \mbox{sup}_{\bu \in \Msc \backslash \{\bzero \}}
%\underset{\bu \in \Msc \backslash \{{\bzero} \}}{\mbox{sup}}
\{\Rsc(\bu)/ \| \bu\|_2\}$ denote its %the
`subspace compatibility constant' %for $\Msc$
with respect to %(w.r.t.) %the norm
$\Rsc(\cdot)$. Then, with $\Jsc, \Msc_{\Jsc}$ and $\Msc^{\perp}_{\Jsc}$ as defined
in Section \ref{estimation}, %at the beginning of Section \ref{P2:psetup},
it is not difficult to show that: (i) $\Rsc(\cdot)$ is \emph{decomposable} with respect to the orthogonal subspace pair $(\Msc_{\Jsc},\Msc^{\perp}_{\Jsc})$ for any $\Jsc \subseteq  \{1,\hdots,p\}$, in the sense that $\Rsc(\bu + \bv) = \Rsc(\bu) + \Rsc(\bv)$ $\forall \; \bu \in \Msc_{\Jsc}, \bv \in \Msc^{\perp}_{\Jsc}$; (ii) $\Rsc^*(\bu) = \| \bu \|_{\infty} \; \forall \; \bu \in \R^p$; and (iii) with $\Jsc = \Asc(\bv)$ for any $\bv \in \R^p$, $\Psi^2(\Msc_{\Jsc}) = s_{\bv}$. (We refer %the %interested reader
to \citet{Negahban_2012} for further discussions and/or proofs of these facts). Lastly, let $P_{\Jsc}(\bv)$ and $P_{\Jsc}^{\perp} (\bv)$ respectively denote the orthogonal projections of any $\bv \in \R^p$ onto $\Msc_{\Jsc}$ and $\Msc_{\Jsc}^{\perp}$, for any $\Jsc$ as above.
%\par\smallskip

To establish the result, we consider the alternative representation \eqref{eq:simplealgo} of $\btheta_0$ based on regularized minimization of the pseudo loss $\LntilDR(\btheta)$ defined in \eqref{eq:pseudo:outcome:def}. Clearly, since $L(\cdot)$ is convex and differentiable in $\btheta$ as assumed, so is $\LntilDR (\btheta)$. Further, owing to \eqref{lossfn:splform1}-\eqref{lossfn:splform2}, we have: for any $\btheta, \bv \in \R^d$,
\begin{align}
\bnabla \LntilDR(\btheta) \; = \; \bnabla \LnDR(\btheta) \;\; \mbox{and} \;\;  \delta \LntilDR(\btheta,\bv) \; = \; \delta \LnDR(\btheta, \bv), \label{pf:basic:lem:eq1}
\end{align}
where $\delta \LntilDR(\btheta,\bv) := \LntilDR(\btheta + \bv) - \LntilDR(\btheta) - \bv'\bnabla \LntilDR(\btheta)$. Thus, under Assumption \ref{strngconv_assmpn}, $\LntilDR(\btheta)$ also satisfies the RSC property \eqref{strngconv_eqn} at $\btheta = \btheta_0$.

Hence, using the decomposability of $\Rsc(\cdot)$ over $(\Msc_{\Jsc},\Msc^{\perp}_{\Jsc})$ with $\Jsc$ chosen to be $\Asc(\btheta_0)$, and the RSC property
of $\LntilDR(\btheta)$ at $\btheta = \btheta_0$ under Assumption \ref{strngconv_assmpn} and \eqref{pf:basic:lem:eq1}, we have: by Theorem 1 of \citet{Negahban_2012}, for any realization of $\Dsc_n$ and any choice of $\lambda \equiv \lambda_n \geq 2 \|\bnabla \LnDR(\btheta_0\|_{\infty}$,
\begin{align}
& \left\| \bthetahatDR - \btheta_0 \right\|_2 \; \equiv \; \left\| \bthetahatDR(\lambda_n; \Dsc_n) - \btheta_0 \right\|_2 \;\; \leq \; 3 \sqrt{s} \frac{\lambda}{\kappaDR}  \label{pf:basic:lem:eq2}
\end{align}
where, while applying the result from \citet{Negahban_2012}, we chose the parameter $\btheta^*$, in their notation, as $\btheta^* = \btheta_0$, $\{\Rsc(\cdot), \Rsc^*(\cdot)\}$ as $\{ \| \cdot \|_1, \| \cdot ||_{\infty}\}$, and used:  $\Psi^2(\Msc_{\Jsc}) = \| \btheta_0 \|_0 \equiv s$, $\Rsc^*[ \boldsymbol{\nabla}\{\LntilDR(\btheta)\}] =  \Rsc^*[ \boldsymbol{\nabla}\{\LnDR(\btheta)\}] \equiv \|\bnabla \LnDR(\btheta_0)\|_{\infty}$ and $P_{\Asc(\btheta_0)}^{\perp} (\btheta_0) = \Pi_{\Asc^c(\btheta_0)}(\btheta_0) \equiv \Pi^c_{\btheta_0}(\btheta_0) = \bzero$.  \qed

\par\smallskip
Further, using Lemma 1 of \citet{Negahban_2012}, we also have that for $\lambda$ chosen as above, the error $\bDeltahat := (\bthetahatDR - \btheta_0)$ belongs to the set  $\C(\btheta_0)$ as defined in \eqref{strngconv_eqn}. Consequently, $\| \Pi_{\btheta_0}^c(\bDeltahat)\|_1 \leq 3 \| \Pi_{\btheta_0}(\bDeltahat)\|_1$. Hence we have:
\begin{align*}
& \left\| \bthetahatDR - \btheta_0 \right\|_1 \; \equiv \| \bDeltahat \|_1 \; = \;  \| \Pi_{\btheta_0}(\bDeltahat)\|_1 + \| \Pi_{\btheta_0}^c(\bDeltahat)\|_1  \;\leq \; 4  \| \Pi_{\btheta_0}(\bDeltahat)\|_1 \nonumber \\
& \;\; \leq \; 4 \sqrt{s} \| \Pi_{\btheta_0}(\bDeltahat)\|_1 \; \leq \; 4 \sqrt{s} \left\| \bthetahatDR - \btheta_0 \right\|_2 \; \leq \;  12 s \frac{\lambda}{\kappaDR},
\end{align*}
where the final step follows from using \eqref{pf:basic:lem:eq2}. This, along with \eqref{pf:basic:lem:eq2}, establishes the desired $L_2$ and $L_1$ error bounds for $\bthetahatDR$. %\qed
The rest of the informal claims in the second part of Lemma \ref{DEV:BOUND} are straightforward consequences of combining the deterministic error bounds proved above with the results of Theorems \ref{TZERO:THM}-\ref{RMPI:THM}. This completes the proof of Lemma \ref{DEV:BOUND}. \qed

\section{Proof of Theorem \ref{TZERO:THM}}\label{pf:tzero:thm}

Recalling from \eqref{decomp:eqn} and \eqref{tzero}, we note that $\bTzeron$ is simply a sum of two centered i.i.d. averages given by:
\begin{align}
& \bTzeron  \; = \; \bTzeroonen + \bTzerotwon \; \equiv \; \frac{1}{n}\sum_{i=1}^n \bTzeroone(\bZ_i) \; + \; \frac{1}{n}\sum_{i=1}^n \bTzerotwo(\bZ_i), \quad \mbox{where} \label{tzeron:decomp} %\\
 \end{align}
\vspace{-0.1in}
\begin{equation*}
 \bTzeroone(\bZ)  :=  \{m(\bX) - g(\bX,\btheta_0)\} \bh(\bX) \;\; \mbox{and} \; \hspace{0.03in} \bTzerotwo(\bZ)  :=  \frac{T}{\pi(\bX)} \{ Y - m(\bX)\} \bh(\bX), \nonumber
\end{equation*}
with $\E\{\bTzeroone(\bZ)\} = \bzero$ and $\E\{\bTzerotwo(\bZ)\} = \bzero$ since $\E \{ \bnabla \bphi(\bX, \btheta_0) \} = \bzero$ and $\E\{\epsilon(\Z) \medgiven \bX\} = 0$, by definition,  and $\epsilon(\Z) \ind T \given \bX$ due to Assumption \ref{base:assmpns} (a).

Now, using Assumption \ref{subgaussian:assmpn} (a) and Lemma \ref{lem:5:BMC} (a), we have:
\begin{equation}
\bTzeroonej(\bZ) \; \equiv \; \psi(\bX) \bhj(\bX) \; \sim \; \BMC(\sigmabarone, \Kbarone) \quad \forall \; j \in \{1, \hdots , d\}, \label{tzeroone:BMC:bound}
\end{equation}
for some constants $\sigmabarone := 2\sqrt{2} \sigmapsi\sigmabh \geq 0$ and $\Kbarone := 2 \sigmapsi\sigmabh  \geq 0$.

Next, using Assumption \ref{subgaussian:assmpn} (a) and Lemma \ref{lem:1:GenProp} (v), $\psionenorm{\varepsilon(\Z) \bhj(\bX)} \leq \sigmaeps \sigmabh$ for each $j \in \{1, \hdots, d \}$. Further, owing to Assumption \ref{base:assmpns} (b) and (\ref{pos:eqn}), $T/\pi(\bX) \leq \deltapi^{-1}$ a.s. $[\P]$. Hence, using Lemma \ref{lem:5:BMC} (b), we have
\begin{equation}
\bTzerotwoj(\bZ) \; \equiv \; \frac{T}{\pi(\bX)} \varepsilon(\Z) \bhj(\bX) \; \sim \; \BMC(\sigmabartwo, \Kbartwo) \quad \forall \; j \in \{1, \hdots , d\},  \label{tzerotwo:BMC:bound}
\end{equation}
for some constants $\sigmabartwo := 2\sqrt{2} \sigmaeps\sigmabh \deltapi^{-1} \geq 0$ and $\Kbartwo := 2 \sigmaeps\sigmabh \deltapi^{-1}  \geq 0$

Hence, \eqref{tzeroone:BMC:bound} and \eqref{tzerotwo:BMC:bound} ensure that for each $j \in \{1, \hdots, d\}$, $\bTzeroonej(\bZ)$ and $\bTzerotwoj(\bZ)$ satisfy the required moment conditions for Bernstein's inequality (Lemma \ref{lem:4:Bernstein}) to apply. Using Lemma \ref{lem:4:Bernstein}, we then have: for any $\epsilon_1 \geq 0$,
\begin{eqnarray}
\nonumber && \P\left\{ \left\| \bTzeroonen \right\|_{\infty} \; \equiv \; \left\| \frac{1}{n} \sum_{i=1}^n \bTzeroone(\bZ_i)\right\|_{\infty} > \; \sqrt{2}\sigmabarone\epsilon_1 + \Kbarone \epsilon_1^2 \right\} \\
\nonumber && \;\;\; \leq \;\; \sum_{j=1}^d \P \left\{ \left| \frac{1}{n} \sum_{i=1}^n \bTzeroonej(\bZ_i)\right| > \sqrt{2}\sigmabarone\epsilon_1 + \Kbarone \epsilon_1^2 \right\}  \\
 && \;\;\; \leq \;\; \sum_{j=1}^d  2\exp\left( - n \epsilon_1^2\right) \; = \; 2d\exp\left( - n \epsilon_1^2\right) \;\; \equiv \;\; 2\exp\left( - n \epsilon_1^2 + \log d \right), \label{tzeroonen:bound}
\end{eqnarray}
where the second step uses the union bound (u.b.). Similarly, for any $\epsilon_2 \geq 0$,
\begin{eqnarray}
\nonumber && \P\left\{ \left\| \bTzerotwon \right\|_{\infty} \; \equiv \; \left\| \frac{1}{n} \sum_{i=1}^n \bTzerotwo(\bZ_i)\right\|_{\infty} > \; \sqrt{2}\sigmabartwo\epsilon_2 + \Kbartwo \epsilon_2^2 \right\}   \\
\nonumber && \;\;\; \leq \; \sum_{j=1}^d \P \left\{ \left| \frac{1}{n} \sum_{i=1}^n \bTzerotwoj(\bZ_i)\right| > \sqrt{2}\sigmabartwo\epsilon_2 + \Kbartwo \epsilon_2^2 \right\} \\
&& \;\;\; \leq \; \sum_{j=1}^d  2\exp\left( - n \epsilon_2^2\right)  \; = \; 2d\exp\left( - n \epsilon_2^2\right) \;\equiv \; 2\exp\left( - n \epsilon_2^2 + \log d \right). \label{tzerotwon:bound}
\end{eqnarray}
Hence, setting $\epsilon_1 = \epsilon_2 \equiv \epsilon$ for any $\epsilon \geq 0$, letting $\sigma_0 := \sigmabarone + \sigmabartwo$ and $K_0 := \Kbarone + \Kbartwo$, and using (\ref{tzeroonen:bound})-(\ref{tzerotwon:bound}) in the original decomposition (\ref{tzeron:decomp}) of $\bTzeron$, we have a tail bound for $\|\bTzeron\|_{\infty}$, as follows. For any $\epsilon \geq 0$,
\begin{eqnarray}
\nonumber && \P \left( \left\| \bTzeron \right\|_{\infty} \; \equiv \; \left\|  \bTzeroonen +  \bTzerotwon  \right\|_{\infty} > \; \sqrt{2} \sigma_0 \epsilon + K_0 \epsilon^2 \right) \\
%\nonumber && \;\;\;\; \leq \;\; \P \left( \left\|  \bTzeroonen \right\|_{\infty} + \left\|  \bTzerotwon  \right\|_{\infty} \; > \; \sqrt{2} \sigma_0 \epsilon + K_0 \epsilon^2 \right) \\
\nonumber && \;\;\;\; \leq\;\;  \P\left( \left\| \bTzeroonen \right\|_{\infty} > \sqrt{2}\sigmabarone\epsilon + \Kbarone \epsilon^2 \right)  +  \P\left( \left\| \bTzerotwon \right\|_{\infty} > \sqrt{2}\sigmabartwo\epsilon + \Kbartwo \epsilon^2 \right) \\
&& \quad \leq \; 4 \exp\left( - n \epsilon^2 + \log d \right). \label{btzeron:finalbound}
\end{eqnarray}
(\ref{btzeron:finalbound}) therefore establishes a general tail bound for $\| \bTzeron\|_{\infty}$ and also establishes its rate of convergence. This completes the proof of Theorem \ref{TZERO:THM}. \qed

\section{Proof of Theorem \ref{TPI:THM}}\label{pf:tpi:thm}
To establish Theorem \ref{TPI:THM}, we first state and prove a more general result that gives an explicit tail bound for $\|\bTpin \|_{\infty}$.

\begin{theorem}[Tail bound for $\| \bTpin \|_{\infty}$]\label{tpi:thm:tailbound}
Let Assumptions \ref{base:assmpns}, \ref{subgaussian:assmpn} and \ref{tpicont:assmpn} hold with the sequences $(\vnpi, q_{n,\pi})$ and the constants $(\deltapi,\sigmaeps, \sigmabh, C)$ as defined therein, %and let $\bmusqbhinf := \max\{ \E\{\bhj^2(\bX)\}: j = 1, \hdots, d\}$.
Then, for any $\epsilon, \epsilon_1, \epsilon_2, \epsilon_3 \geq 0$, with $\epsilon_2 < \deltapi$ small enough,
\begin{alignat*}{2}
& && \P \left( \left\| \bTpin \right\|_{\infty} > \epsilon \right)  \; \leq \;  2\exp\left\{ \frac{ - n \epsilon^2}{d_n(\epsilon_1, \epsilon_2, \epsilon_3)}  +  \log d\right\}  +   4 \exp\left( - n \epsilon_3^2 + \log d \right)  \\
& && \qquad +  2C \exp \left\{ \frac{ - \epsilon_1^2}{v_{n,\pi}^2} + \log (nd)\right\}   +  2C \exp \left\{ \frac{- \epsilon_2^2}{v_{n,\pi}^2} + \log(nd) \right\} + 4q_{n,\pi}(nd) ,
\end{alignat*}
where, for any $(\epsilon_1, \epsilon_2,\epsilon_3) \geq 0$ as above, $d_n(\epsilon_1,\epsilon_2,\epsilon_3) \geq 0$ is given by:
\begin{equation*}
d_n(\epsilon_1, \epsilon_2, \epsilon_3) \; := \; \frac{8 \sigmaeps^2 \epsilon_1  ^2}{(\deltapi - \epsilon_2 )^2} \left(\frac{\bmusqbhinf }{\deltapi} + \sqrt{2} \sigma_{\pi} \epsilon_3 + K_{\pi} \epsilon_3^2 \right), \;\; \mbox{with}
\end{equation*}
$\bmusqbhinf := \max_{1 \leq j \leq d}\E\{\bhj^2(\bX)\}$, $\sigma_{\pi} := 2 \sqrt{2}\sigmabh^2 \deltapi^{-2}$ and $K_{\pi} := 2\sigmabh^2 \deltapi^{-2}$.
\end{theorem}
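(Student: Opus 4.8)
The plan is to exploit the conditional independence structure of $\bTpin$ given $\Xsc_n := \{(T_i,\bX_i)\}_{i=1}^n$, the subsample from which $\pihat(\cdot)$ is built. Writing $W_i := T_i\{\pihat(\bX_i)^{-1} - \pi(\bX_i)^{-1}\}$, we have $\bTpin = n^{-1}\sum_{i=1}^n W_i\{Y_i - m(\bX_i)\}\bh(\bX_i)$, where, conditionally on $\Xsc_n$, both $W_i$ and $\bh(\bX_i)$ are fixed while the residuals $\{Y_i - m(\bX_i)\}_{i=1}^n$ remain independent, centered (since $\E[\{Y - m(\bX)\}\mid \Xsc_n] = \E[\{Y-m(\bX)\}\mid \bX] = 0$ by Assumption \ref{base:assmpns}(a)), and sub-Gaussian given $\bX$ with parameter $\sigmaeps$ (Assumption \ref{subgaussian:assmpn}). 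First I would fix $\Xsc_n$ and treat $\bTpin$ as a weighted average of conditionally independent sub-Gaussian variables.

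For each coordinate $j$, the conditional variance proxy of $n^{-1}\sum_i W_i \bhj(\bX_i)\{Y_i - m(\bX_i)\}$ is $n^{-2}\sigmaeps^2 \sum_i W_i^2 \bhj^2(\bX_i)$. Applying Lemma \ref{lem:2:SGconc} conditionally and a union bound over the $d$ coordinates gives
\[
\P\left( \|\bTpin\|_\infty > \epsilon \,\middle|\, \Xsc_n\right) \;\leq\; 2\exp\left\{\frac{-n\epsilon^2}{8\sigmaeps^2 \max_{1\leq j \leq d} S_{n,j}} + \log d\right\}, \quad S_{n,j} := \frac{1}{n}\sum_{i=1}^n W_i^2 \bhj^2(\bX_i).
\]
The exponent still involves the random quantity $\max_j S_{n,j}$, so the crux is to bound $S_{n,j}$ deterministically on a high-probability, $\Xsc_n$-measurable ``good'' event $\mathcal{G}$.

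To control $S_{n,j}$, I would use $W_i^2 = T_i\{\pihat(\bX_i) - \pi(\bX_i)\}^2/\{\pihat^2(\bX_i)\pi^2(\bX_i)\}$ and split the three sources of randomness. On $\{\max_i |\pihat(\bX_i) - \pi(\bX_i)| \leq \epsilon_1\}$ I bound the numerator by $\epsilon_1^2$; on $\{\max_i |\pihat(\bX_i) - \pi(\bX_i)| \leq \epsilon_2\}$ (requiring $\epsilon_2 < \deltapi$) I obtain $\pihat(\bX_i) \geq \deltapi - \epsilon_2$, while $\pi(\bX_i) \geq \deltapi$ by Assumption \ref{base:assmpns}(b). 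This yields $W_i^2 \leq \epsilon_1^2(\deltapi - \epsilon_2)^{-2} T_i\pi^{-2}(\bX_i)$, so $S_{n,j} \leq \epsilon_1^2(\deltapi - \epsilon_2)^{-2}\, n^{-1}\sum_i T_i\pi^{-2}(\bX_i)\bhj^2(\bX_i)$. Since $\E\{T\pi^{-2}(\bX)\bhj^2(\bX)\} = \E\{\pi^{-1}(\bX)\bhj^2(\bX)\} \leq \bmusqbhinf/\deltapi$ and $T\pi^{-2}(\bX)\bhj^2(\bX)$ is sub-exponential (bounded by $\deltapi^{-2}$ times the sub-exponential $\bhj^2$, with $\psionenorm{\bhj^2} \leq \sigmabh^2$ via Lemma \ref{lem:1:GenProp}), Lemma \ref{lem:5:BMC} supplies the BMC parameters $(\sigma_\pi, K_\pi)$, and Bernstein's inequality (Lemma \ref{lem:4:Bernstein}) with a union bound over $j$ shows that, off an event of probability at most $4\exp(-n\epsilon_3^2 + \log d)$, $n^{-1}\sum_i T_i\pi^{-2}\bhj^2 \leq \bmusqbhinf/\deltapi + \sqrt{2}\sigma_\pi\epsilon_3 + K_\pi\epsilon_3^2$ for all $j$. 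Intersecting these three events gives $8\sigmaeps^2\max_j S_{n,j} \leq d_n(\epsilon_1,\epsilon_2,\epsilon_3)$ on $\mathcal{G}$.

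Finally I would assemble the pieces through the law of total probability: since $\mathcal{G}$ is $\Xsc_n$-measurable,
\[
\P(\|\bTpin\|_\infty > \epsilon) \;\leq\; \E\left[\mathbf{1}_{\mathcal{G}}\,\P(\|\bTpin\|_\infty > \epsilon \mid \Xsc_n)\right] + \P(\mathcal{G}^c) \;\leq\; 2\exp\left\{\frac{-n\epsilon^2}{d_n} + \log d\right\} + \P(\mathcal{G}^c),
\]
and bound $\P(\mathcal{G}^c)$ by the failure probabilities of the three sub-events: the two uniform deviation events for $\pihat - \pi$ are handled via Assumption \ref{tpicont:assmpn} and Lemma \ref{lem:7:hptosgtypebound}, which convert the high-probability guarantee into a sub-Gaussian tail, producing the terms $2C\exp\{-\epsilon_i^2/\vnpi^2 + \log(nd)\}$ (after union-bounding over the sample points) together with the $q_{n,\pi}$ corrections, while the Bernstein event supplies the $4\exp(-n\epsilon_3^2 + \log d)$ term. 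I expect the main obstacle to be precisely this random variance proxy: the conditional concentration bound is only useful once $\max_j S_{n,j}$ is replaced by a deterministic surrogate, and doing so cleanly requires the controlling event to be $\Xsc_n$-measurable, so that conditioning and the tower property combine without reintroducing uncontrolled dependence between $\pihat(\cdot)$ and the residuals.
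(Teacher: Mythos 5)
Your proposal is correct and follows essentially the same route as the paper's proof: condition on $\Xsc_n$, apply the sub-Gaussian concentration of Lemma \ref{lem:2:SGconc} to the residuals with the random variance proxy $S_{n,j}$ (the paper's $c_{n,j}^2(\Xsc_n)/\sigmaeps^2$), control that proxy by the same three-way split (uniform deviation of $\pihat$, uniform lower bound $\pihat \geq \deltapi - \epsilon_2$, and Bernstein for the i.i.d. average of $T\pi^{-2}(\bX)\bhj^2(\bX)$), and assemble via the tower property over the $\Xsc_n$-measurable good event. The only cosmetic difference is that the paper splits the expectation of the conditional tail bound over the good/bad events coordinate by coordinate rather than via a single event $\mathcal{G}$, which yields the identical final bound.
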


\subsection{Proof of Theorem \ref{tpi:thm:tailbound}}\label{pf:tpi:thm:tailbound}
%\paragraph*{Some notations}
Let $\Xsc_n := \{ (T_i,\bX_i): i = 1,\hdots,n\}$. Let $\E_{\Xscn}(\cdot)$ and $\P_{\Xscn}(\cdot)$ respectively denote expectation and probability w.r.t. $\Xscn$ and $\P(\cdot \given \Xscn)$ denote conditional probability given $\Xscn$. Next, let us define:
\begin{alignat}{4}
& \Deltapin(\bX) && \; := \; \pihat(\bX) - \pi(\bX),  \;\; && \Deltapininfn  && \; := \; \underset{1 \leq i \leq n}{\max} \left| \Deltapin(\bX_i) \right|,   \label{tpicont:notn1} \\
& \pitiln(\bX) && \; := \; - \; \frac{1}{\pihat(\bX)} \;\;\; \mbox{and} \;\; && \pitilninfn &&  \; := \; \underset{1 \leq i \leq n}{\max} \left| \pitiln(\bX_i) \right|. \label{tpicont:notn2}
\end{alignat}
Further, for each $j \in \{1,\hdots, d\}$, let us define:
\begin{eqnarray}%\label{}
 && \bphij(T, \bX)  :=  \frac{T}{\pi(\bX)}\bhj(\bX), \;\; \bphibarsqnj \equiv \bphibarsqnj (\Xscn)  :=  \frac{1}{n} \sum_{i=1}^n \bphij^2(T_i, \bX_i), \label{tpicont:notn3} \\
 && \bmusqbphij  := \E \left\{\bphij^2(T, \bX) \right\} \equiv  \E \left\{ \bphibarsqnj (\Xscn) \right\} \; \mbox{and} \;  \bmusqbhj :=  \E\left\{ \bhj^2(\bX) \right\}. \label{tpicont:notn4}
\end{eqnarray}
Using (\ref{tpicont:notn1})-(\ref{tpicont:notn3}) in \eqref{tpi} and recalling that $\varepsilon(\Z) = Y - m(\bX)$, we have:
\begin{equation}
\bTpi(\bZ) %&\equiv& \left\{\frac{T}{\pihat(\bX)} - \frac{T}{\pi(\bX)} \right\} \{ Y - m(\bX)\}  \bh(\bX) \\
\; = \; \Deltapin(\bX) \pitiln(\bX) \bvphi(T,\bX) \varepsilon(\Z), \quad \mbox{where} \label{tpi:newnotn}
\end{equation}
$\bvphi(T,\bX) \in \R^d$ denotes the vector with $j^{th}$ entry $ = \bphij(T, \bX)$ $\; \forall \; 1 \leq j \leq d$.
\par\smallskip
Under Assumptions \ref{base:assmpns} (a) and \ref{subgaussian:assmpn} (b), $\E\{\varepsilon(\Z) \given \bX\} \equiv \E\{\varepsilon(\Z) \given T, \bX\} = 0$ and $ \psitwonorm{\varepsilon(\Z) \given \bX}$ $\equiv \psitwonorm{\varepsilon(\Z) \given (T,\bX)} $ $\leq \sigmaeps(\bX) \leq \sigmaeps < \infty$. Hence, $ \varepsilon(\Z_i) \given \Xscn$ are (conditionally) independent random variables satisfying: $\E\{\varepsilon(\Z_i) \given \Xscn\} = 0$ and $\psitwonorm{\varepsilon(\bZ_i) \given \Xscn} \leq \sigmaeps \; \forall \; 1 \leq i \leq n$. Further, conditional on $\Xscn$, $\phi(T_i,\bX_i)$, $\Deltapin(\bX_i)$ and $\bh_{[j]}(\bX_i)$ are all constants $\forall \; i, j$. Using these facts along with (\ref{tpicont:notn1})-(\ref{tpicont:notn3}), we have: $\forall \; 1 \leq i \leq n$ and $1 \leq j \leq d$,
\begin{align*}
& \bigpsitwonorm{\bTpij (\bZ_i) \; \given[\big] \Xscn}  \; \equiv \;  \bigpsitwonorm{\Deltapin(\bX_i) \pitiln(\bX_i) \bphij(T_,\bX_i) \varepsilon(\Z_i) \given \Xscn}  \\
&  \; \leq  \Deltapin(\bX_i) \pitiln(\bX_i) \bphij(T_i,\bX_i) \sigmaeps(\bX_i) \; \leq \sigmaeps \Deltapininfn \pitilninfn \bphij(T_i,\bX_i).
\end{align*}
Further, $\forall \; 1 \leq j \leq d$, $\{\bTpij (\bZ_i)\}_{i=1}^n \given \Xscn$ are (conditionally) independent and centered random variables. Hence, using Lemma \ref{lem:2:SGconc}, we have: $\forall \; 1 \leq j \leq d$,
\def\bTpinj{\bT_{\pi, n [j]}}
\begin{eqnarray}
&& \bigpsitwonorm{\frac{1}{n} \sum_{i=1}^n \bTpij(\bZ_i) \; \given[\bigg] \Xscn}  \; \leq \; \frac{4c_{n,j}(\Xscn)}{\sqrt{n}}, \;\; \mbox{where} \nonumber \\
&& \;\; c_{n,j}(\Xscn) \; := \;  \sigmaeps \Deltapininfn \pitilninfn \left(\bphibarsqnj\right)^{1/2} \label{tpicont:cnjdefn}
\end{eqnarray}
and all notations are as defined in (\ref{tpicont:notn1})-(\ref{tpicont:notn3}). Using Lemma \ref{lem:2:SGconc} again, it now follows that for any $\epsilon \geq 0$,
\begin{eqnarray}
&& \quad \P\left\{ \left| \frac{1}{n} \sum_{i=1}^n \bTpij(\bZ_i)\right| > \epsilon  \given[\bigg] \Xscn \right\} \; \leq \; 2 \exp\left\{ \frac{- n \epsilon^2}{8 c_{n,j}^2(\Xscn)} \right\} \;\; \forall \; 1 \leq j \leq d. \label{tpicont:condsgbound}
\end{eqnarray}

\paragraph*{The fundamental bound for $\| \bTpin \|_{\infty}$} Using (\ref{tpicont:condsgbound}), the union bound (u.b.) and the law of iterated expectations (l.i.e.), we then have: for any $\epsilon \geq 0$,
\begin{eqnarray}
&& \P \left\{ \left\| \bTpin \right\|_{\infty} \; \equiv \; \left\| \frac{1}{n} \sum_{i=1}^n \bTpi (\bZ_i) \right\|_{\infty} \; > \; \epsilon \right\} \nonumber \\
&& \qquad \leq \; \sum_{j=1}^d \P \left\{ \left| \frac{1}{n} \sum_{i=1}^n \bTpij (\bZ_i) \right| > \epsilon \right\} \quad \mbox{[using the u.b.]}, \nonumber \\
&& \qquad = \;  \sum_{j=1}^d \E_{\Xscn}\left[\P \left\{ \left| \frac{1}{n} \sum_{i=1}^n \bTpij (\bZ_i) \right| > \epsilon \given [\bigg] \Xscn \right\} \right] \quad \mbox{[using the l.i.e.]}, \nonumber \\
&& \qquad \leq \; \sum_{j=1}^d 2 \; \E_{\Xscn}\left[\exp\left\{\frac{- n \epsilon^2}{8 c_{n,j}^2(\Xscn)} \right\} \right] \quad \mbox{[using (\ref{tpicont:condsgbound})]}. \label{tpicont:fundbound}  \qed
\end{eqnarray}

Next, we aim to control the behavior of the random variable $c_{n,j}^2(\Xscn)$ appearing in the bound (\ref{tpicont:fundbound}). Based on the definition of $c_{n,j}(\Xscn)$ in (\ref{tpicont:cnjdefn}), it suffices to separately control the variables $\Deltapininfn^2$, $\pitilninfn^2$ and $\bphibarsqnj$.

\paragraph*{Controlling $\Deltapininfn^2$} Using (\ref{pi:tailbound2}) in Assumption \ref{tpicont:assmpn} along with the u.b., and recalling %the constant $\gpi$ defined therein with $\|\gpi(\cdot) \|_{\infty} \leq \gpi$, as well as
all notations defined in (\ref{tpicont:notn1})-(\ref{tpicont:notn2}), we have: for any $\epsilon_1 \geq  0$,
\begin{eqnarray}
 \nonumber && \P \left\{ \Deltapininfn^2  \equiv   \underset{1 \leq i \leq n}{\max} \left| \Deltapin(\bX_i) \right|^2  \; > \; \epsilon_1 ^2  \right\} \\
% \nonumber & \quad \leq \; \sum_{i=1}^n \P\left\{ \left| \pihat(\bX_i) - \pi(\bX_i) \right| \; > \; \epsilon_1 \right\} \quad \mbox{[using the u.b.]}, \\
&& \quad \;\; \leq \; \sum_{i=1}^n \P\left\{ \left| \pihat(\bX_i) - \pi(\bX_i) \right|  >    \epsilon_1 \right\}  \; \leq \; C n \exp \left(\frac{- \epsilon_1^2}{v_{n,\pi}^2} \right) + n q_{n,\pi}.
\label{tpicont:deltapinbound} \qquad \qed
\end{eqnarray}

\paragraph*{Controlling $\pitilninfn^2$} Using similar arguments, along with (\ref{pos:eqn}), we have: $\forall \; \epsilon_2 \geq 0$ small enough such that $\epsilon _2  < \deltapi$ with $\deltapi$ as in (\ref{pos:eqn}),
\begin{eqnarray}
\nonumber && \P\left[ \pitilninfn^2 \equiv \underset{1 \leq i \leq n}{\max}\left|\pitiln(\bX_i) \right| ^2 \; > \; \left(\deltapi - \epsilon_2 \right)^{-2} \right] \\
\nonumber && \quad\;\; \leq \; \sum_{i=1}^n \P \left\{ \pihat^{-1}(\bX_i)  >   \left( \deltapi - \epsilon_2 \right)^{-1} \right\} \; \leq \; \sum_{i=1}^n \P \left\{ \pihat(\bX_i)   \; < \; \pi(\bX_i) - \epsilon_2 \right\}  \\
%\nonumber && \quad \leq \; \sum_{i=1}^n \P \left\{ \pihat(\bX_i)   \; < \; \pi(\bX_i) - \epsilon_2 \right\} \quad \mbox{[using (\ref{pos:eqn})]}, \\
&& \;\;\quad \leq \; \sum_{i=1}^n \P \left\{ \left|\pihat(\bX_i) -  \pi(\bX_i) \right| \; > \; \epsilon_2 \right\} \;\; \leq \; C n \exp \left( \frac{- \epsilon_2^2}{v_{n,\pi}^2} \right) + n q_{n,\pi} \label{tpicont:pitilnbound} \quad \qed
\end{eqnarray}

\paragraph*{Controlling $\bphibarsqnj$} Finally, in order to control $\bphibarsqnj(\Xscn)$ which is an average of the i.i.d. random variables $\{\bphij^2(T_i, \bX_i)\}_{i=1}^n$, we first recall all notations from (\ref{tpicont:notn3})-(\ref{tpicont:notn4}) and note that under Assumption \ref{subgaussian:assmpn} (a), $\psionenorm{\bhj^2(\bX)} \leq \sigmabh^2$ $\forall \; j \in \{1, \hdots, d\}$ owing to Lemma \ref{lem:1:GenProp} (v). Further, $T^2/\pi^2(\bX) \leq \deltapi^{-2}$ a.s. $[\P]$. Hence, using Lemma \ref{lem:5:BMC} (b), we have: $\forall \; j \in \{1, \hdots, d\}$, and for some constants $\sigma_{\pi} \equiv \sigmabarbvphi := 2 \sqrt{2}\sigmabh^2 \deltapi^{-2}$ and $K_{\pi} \equiv \Kbarbvphi := 2\sigmabh^2 \deltapi^{-2}$,
\begin{eqnarray}
&& \bphij^2(T, \bX) \; \equiv \; \frac{T^2}{\pi^2(\bX)} \bhj^2(\bX) \; \sim \; \BMC(\sigmabarbvphi, \Kbarbvphi) \quad \mbox{and further,} \label{tpicont:phisqbmc1}%\\
\end{eqnarray}
\begin{eqnarray}
&& \bmusqbphij \; \equiv \; \E\left\{\bphij^2(T, \bX)\right\} \; = \; \E \left\{\frac{\bhj^2(\bX)}{\pi(\bX)} \right\} \; \leq \; \frac{\bmusqbhj}{\deltapi} \; \leq \; \frac{\bmusqbhinf}{\deltapi}, \label{tpicont:phisqbmc2}
\end{eqnarray}
where $\bmusqbhinf := \max \{ \bmusqbhj : j = 1, \hdots, d \} < \infty$ and $\bmusqbhj$ is as in (\ref{tpicont:notn4}).

Using (\ref{tpicont:phisqbmc1})-(\ref{tpicont:phisqbmc2}) along with Lemma \ref{lem:4:Bernstein}, we then have: for any $\epsilon_3 > 0$ and for each $j \in \{1, \hdots, d\}$,
\begin{eqnarray}
\nonumber && \P \left\{  \bphibarsqnj \; \equiv \; \frac{1}{n} \sum_{i=1}^n \bphij^2(T_i, \bX_i) \; > \;  \frac{\bmusqbhinf}{\deltapi} + \sqrt{2} \sigmabarbvphi \epsilon_3 + \Kbarbvphi \epsilon_3^2 \right\} \\
\nonumber && \quad \leq \;  \P \left\{  \left| \frac{1}{n} \sum_{i=1}^n \bphij^2(T_i, \bX_i) -  \bmusqbphij \right| \; > \;  \sqrt{2} \sigmabarbvphi \epsilon_3 + \Kbarbvphi \epsilon_3^2 \right\} \\
&& \quad \leq \; 2 \exp\left( - n \epsilon_3^2\right). \label{tpicont:bphibarsqnjbound} \qquad \qed
\end{eqnarray}

For any $\epsilon_1, \epsilon_3 > 0$, and any $\epsilon_2 > 0$ such that $\epsilon_2 < \deltapi$, let us now define
the event $\Asc_{\pi, n,j}(\epsilon_1, \epsilon_2, \epsilon_3)$, for each $j \in \{1, \hdots, d\}$, as follows.
\begin{align}
 \Asc_{\pi, n,j}(\epsilon_1, \epsilon_2, \epsilon_3) &  := \; \left\{ 8 c_{n,j}^2 (\Xscn)
 >  d_n(\epsilon_1, \epsilon_2, \epsilon_3)\right\}, \; 1 \leq j \leq d, \; \mbox{where}
\label{tpicont:eventandconstdefn}
\end{align}
\begin{equation*}
d_n(\epsilon_1, \epsilon_2, \epsilon_3)  \; := \; \frac{8 \sigmaeps^2 \epsilon_1 ^2}{(\deltapi - \epsilon_2 )^2} \left(\frac{\bmusqbhinf }{\deltapi} + \sqrt{2} \sigmabarbvphi \epsilon_3 + \Kbarbvphi \epsilon_3^2 \right). \nonumber
\end{equation*}

Then, recalling from (\ref{tpicont:cnjdefn}) that $c_{n,j}^2(\Xscn) \equiv \sigmaeps^2 \Deltapininfn^2 \pitilninfn^2 \bphibarsqnj $ and using the bounds (\ref{tpicont:deltapinbound}), (\ref{tpicont:pitilnbound}) and (\ref{tpicont:bphibarsqnjbound}) for $\Deltapininfn^2$, $\pitilninfn^2$ and $\bphibarsqnj$ respectively, along with the union bound, we have:
\begin{eqnarray}
&& \P\left(\Asc_{\pi, n,j}\right) \; \equiv \; \P_{\Xscn}\left(\Asc_{\pi, n,j}\right) \; \equiv \; \P_{\Xscn} \left\{ 8 c_{n,j}^2 (\Xscn)  >  d_n(\epsilon_1, \epsilon_2, \epsilon_3) \right\} \nonumber \\
&& \quad \leq \; C n \exp \left(\frac{- \epsilon_1^2}{v_{n,\pi}^2} \right) + C n \exp \left(  \frac{- \epsilon_2^2}{v_{n,\pi}^2} \right) + 2n q_{n,\pi} + 2 \exp\left( - n \epsilon_3^2\right). \label{tpicont:compeventbound}
\end{eqnarray}
Therefore, it now follows that for each $j \in \{1, \hdots, d\}$ and any $\epsilon \geq 0$,
\begin{eqnarray}
&& \E_{\Xscn}\left[\exp\left\{ \frac{- n \epsilon^2}{8 c_{n,j}^2(\Xscn)} \right\} \right]  \; = \; \E\left[\exp\left\{ \frac{- n \epsilon^2}{8 c_{n,j}^2(\Xscn)} \right\} \given[\bigg]\Asc_{\pi, n, j}^c\right] \P\left( \Ascpinjc \right) \nonumber \\ %\equiv   \E\left[\exp\left\{ \frac{ - n \epsilon^2}{8 c_{n,j}^2(\Xscn)} \right\} \left(1_{\Ascpinjc} + 1_{\Ascpinj} \right)\right] \nonumber \\
%&& \quad = \; \E\left[\exp\left\{ \frac{- n \epsilon^2}{8 c_{n,j}^2(\Xscn)} \right\} \given[\bigg]\Asc_{\pi, n, j}^c\right] \P\left( \Ascpinjc \right) \nonumber \\
&& \qquad \qquad \qquad \qquad \qquad \qquad + \; \E\left[\exp\left\{  \frac{- n \epsilon^2}{8 c_{n,j}^2(\Xscn)} \right\} \given[\bigg]\Asc_{\pi, n, j}\right] \P\left( \Ascpinj \right) \nonumber \\
&& \;\;\quad \leq \; \exp\left\{ \frac{- n \epsilon^2}{d_n(\epsilon_1, \epsilon_2, \epsilon_3)} \right\} \; + \; 2 \exp\left( - n \epsilon_3^2\right) \; + \; 2n q_{n, \pi} \label{tpicont:finalbound1} \\
&& \;\; \qquad + \; C n \exp \left( \frac{- \epsilon_1^2}{v_{n,\pi}^2} \right) \; + \; C n \exp \left( \frac{- \epsilon_2^2}{v_{n,\pi}^2} \right) \quad \mbox{[using (\ref{tpicont:eventandconstdefn})-(\ref{tpicont:compeventbound})]}. \nonumber
\end{eqnarray}
\paragraph*{The final bound for $\|\bTpin \|_{\infty}$} Using (\ref{tpicont:finalbound1}) in the fundamental bound (\ref{tpicont:fundbound}) for $\| \bTpin \|_{\infty}$, we finally have: for any $\epsilon \geq 0$,
\begin{eqnarray}
&&  \P \left( \left\| \bTpin \right\|_{\infty} > \epsilon \right) \; \leq \; \sum_{j=1}^d 2 \; \E_{\Xscn}\left[\exp\left\{  \frac{- n \epsilon^2}{8 c_{n,j}^2(\Xscn)} \right\} \right] \nonumber \\
&& \quad \; \leq \; 2d\exp\left\{ \frac{- n \epsilon^2}{d_n(\epsilon_1, \epsilon_2, \epsilon_3)} \right\} \; + \; 4d \exp\left( - n \epsilon_3^2\right) \; + \; 4 q_{n,\pi}(nd) \nonumber \\
&& \qquad + \; 2C (nd) \exp \left( \frac{- \epsilon_1^2}{v_{n,\pi}^2} \right) \; + \; 2C (nd) \exp \left( \frac{- \epsilon_2^2}{v_{n,\pi}^2} \right) \quad \mbox{[using (\ref{tpicont:finalbound1})],} \nonumber \\
&& \quad \; \equiv \; 2\exp\left\{ \frac{- n \epsilon^2}{d_n(\epsilon_1, \epsilon_2, \epsilon_3)} + \log d\right\}  +  4 \exp\left( - n \epsilon_3^2 + \log d \right)  + 4 q_{n,\pi}(nd) \label{tpicont:finalbound2} \\
&& \qquad +  2C \exp \left\{ \frac{- \epsilon_1^2}{v_{n,\pi}^2} + \log (nd)\right\}  +  2C \exp \left\{ \frac{- \epsilon_2^2}{v_{n,\pi}^2} + \log(nd) \right\}. \nonumber
\end{eqnarray}
This leads to the desired bound and completes the proof of Theorem \ref{tpi:thm:tailbound}.
\qed

%\paragraph*{Implications of the bound in Theorem \ref{tpi:thm:tailbound} and  characterization of the rate}
\subsection{Completing Proof of Theorem \ref{TPI:THM}}
We next evaluate the general tail bound for $\|\bTpin  \|_{\infty}$ in Theorem \ref{tpi:thm:tailbound} under a specific family of choices for $(\epsilon, \epsilon_1, \epsilon_2, \epsilon_3) > 0$ in order to understand its behavior and also establish the convergence rate of $\|\bTpin  \|_{\infty}$. Let $(c_1, c_2, c_3) > 1$ be any universal constants and set $\epsilon_1 = c_1 \vnpi \sqrt{\log (nd)} $, $\epsilon_2 = c_2 \vnpi \sqrt{\log (nd)} $ and $\epsilon_3 = c_3 \sqrt{(\log d)/n}$, where we assume w.l.o.g. that $\epsilon_3 < 1$ and $\epsilon_2  \leq \deltapi/2$, so that $(\deltapi - \epsilon_2) \geq \deltapi/2$. Further with a choice of $\epsilon_3$ as above, note that
\begin{equation*}
\frac{\bmusqbhinf }{\deltapi} +
\sqrt{2} \sigmabarbvphi \epsilon_3 + \Kbarbvphi \epsilon_3^2 \; \leq \; \frac{\bmusqbhinf }{\deltapi} + \left(\sqrt{2}\sigmabarbvphi + \Kbarbvphi\right) c_3 \sqrt{\frac{\log d}{n}}.
\end{equation*}
Using these in the definition (\ref{tpicont:eventandconstdefn}) and letting $C_{\bvphi} := (\sqrt{2}\sigmabarbvphi + \Kbarbvphi)$, we get %$d_n(\epsilon_1, \epsilon_2, \epsilon_3)$
\begin{align*}
& d_n(\epsilon_1, \epsilon_2, \epsilon_3) \; \leq \; 8\sigmaeps^2 \frac{4c_1^2}{\deltapi^2}\{\vnpi \sqrt{\log (nd)} \}^2 \left( \frac{\bmusqbhinf}{\deltapi} + c_3 C_{\bvphi} \sqrt{\frac{\log d}{n}} \right).
\end{align*}
Given these choices of $\{\epsilon_j\}_{j=1}^3$, let us now set $\epsilon = c\sqrt{ \{(\log d)/n \} d_n(\epsilon_1, \epsilon_2, \epsilon_3)}$ for any universal constant $c > 1$. Using Theorem \ref{tpi:thm:tailbound}, we then have:
\begin{equation*}
 \mbox{With probability at least} \; \; 1 - \frac{2}{d^{c^2-1}} - \frac{4}{d^{c_3^2-1}} - \sum_{j=1}^2 \frac{2C}{(nd)^{c_j^2-1}}  - 4 q_{n,\pi} (nd),%\\
\end{equation*}
\begin{equation*}
\|\bTpin\|_{\infty}  \; \leq \; c\sqrt{\frac{\log d}{n}} \{ \vnpi \sqrt{\log (nd)}\} C_1 \left( \frac{\bmusqbhinf}{\deltapi} + C_2 \sqrt{\frac{\log d}{n}} \right)^{\half},
\end{equation*}
where $C_1 := c_1(4\sqrt{2}\sigmaeps/\deltapi)$ and $C_2 := c_3 C_{\bvphi} \equiv c_3(\sqrt{2}\sigmabarbvphi + \Kbarbvphi)$, with $\sigmabarbvphi$ and $\Kbarbvphi$ being as in \eqref{tpicont:phisqbmc1}. This completes the proof of Theorem \ref{TPI:THM}. \qed %

\section{Proof of Theorem \ref{TM:THM}}\label{pf:tm:thm}

To show Theorem \ref{TM:THM}, we first state and prove a more general result that gives an explicit tail bound for $\|\bTmn \|_{\infty}$.

\begin{theorem}[Tail bound for $\| \bTmn \|_{\infty}$]\label{tm:thm:tailbound}
Let Assumptions \ref{base:assmpns}, \ref{subgaussian:assmpn} (a) and \ref{tmcont:assmpn} hold with the sequences $(\vnbarm, q_{\nbar,m})$, $\nbar \equiv n/2$ and the constants $(\deltapi, \sigmabh, C)$ as defined therein.
Then, for any $\epsilon, \epsilon_1, \epsilon_2 \geq 0$,
\begin{eqnarray*}
\P\left( \left\| \bTmn \right\|_{\infty}  > \; \epsilon \right) & \leq & 4\exp \left\{\frac{- \nbar \epsilon^2}{\tnbar(\epsilon_1, \epsilon_2)}  + \log d \right\} \; + \; 8 \exp(- \nbar \epsilon_2^2 + \log d) \\
&&  + \; 4C   \exp\left\{ \frac{- \epsilon_1^2}{\vnbarm^2} + \log(\nbar d) \right\} + 4 q_{\nbar,m} (\nbar d), \quad \mbox{where} \\
%\end{eqnarray*}
%\begin{equation*}
\tnbar(\epsilon_1, \epsilon_2) & := & 8 \deltapibar^2 \epsilon_1 ^2 \left(\bmusqbhinf + \sqrt{2} \sigma_m \epsilon_2  + K_m \epsilon_2^2\right), \;\; \mbox{with}
%\end{equation*}
\end{eqnarray*}
$\bmusqbhinf := \max_{1\leq j \leq d} \E\{\bhj^2(\bX)\}$, $\deltapibar \leq \deltapi^{-1}$, $\sigma_m := 2 \sqrt{2} \sigmabh^2$ and $K_m := 2 \sigmabh^2$. %depending only on the constants introduced in the assumptions.
\end{theorem}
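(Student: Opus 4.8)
The plan is to mirror the proof of Theorem~\ref{tpi:thm:tailbound} almost verbatim, the one essential structural change being that the requisite conditional independence is now \emph{induced by cross-fitting} (as explained in Appendix~\ref{discussion:errorterms}) rather than by conditioning on $\Xscn$, and that the mean-zero ``noise'' factor is now the \emph{bounded} quantity $\{T/\pi(\bX)-1\}$ in place of the sub-Gaussian $\varepsilon(\bZ)$. Recall that $\bTm(\bZ)=\{T/\pi(\bX)-1\}\{\mtil(\bX)-m(\bX)\}\bh(\bX)$, where $|T/\pi(\bX)-1|\le\deltapibar\le\deltapi^{-1}$ a.s.\ by \eqref{pos:eqn}, while $\E\{T/\pi(\bX)-1\mid\bX\}=0$ since $\E(T\mid\bX)=\pi(\bX)$. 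First I would split $\bTmn=\tfrac12(\bTmn^{(1)}+\bTmn^{(2)})$ over the two cross-fitting folds, where $\bTmn^{(k)}:=\nbar^{-1}\sum_{i\in\Dscnk}\bTm(\bZ_i)$ uses the out-of-fold estimate $\mtil(\bX_i)=\mhat^{(k')}(\bX_i)$; a triangle-inequality/union argument reduces $\P(\|\bTmn\|_\infty>\epsilon)$ to $\sum_{k=1,2}\P(\|\bTmn^{(k)}\|_\infty>\epsilon)$, which is exactly what doubles the leading constants (to $4,8,4C,4$) relative to Theorem~\ref{tpi:thm:tailbound}.

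For a single fold $k$, I would condition on $\mathcal{G}^{(k)}:=\sigma\big(\Dscn^{(k')},\{\bX_i:i\in\Dscnk\}\big)$. Since $\mhat^{(k')}$ depends only on the disjoint fold $\Dscn^{(k')}$, under this conditioning each $\{\mtil(\bX_i)-m(\bX_i)\}$ and $\bhj(\bX_i)$ is fixed while $\{T_i\}_{i\in\Dscnk}$ remain independent $\mathrm{Bernoulli}(\pi(\bX_i))$, so that coordinatewise the summands of $\bTmn^{(k)}$ are (conditionally) independent and centered, with conditional sub-Gaussian norm bounded, up to an absolute constant, by $\deltapibar\,|\mtil(\bX_i)-m(\bX_i)|\,|\bhj(\bX_i)|$ (via boundedness of $T/\pi(\bX)-1$ using Lemma~\ref{lem:1:GenProp}, or Lemma~\ref{lem:3:SubgaussianBinary}). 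Applying Lemma~\ref{lem:2:SGconc} conditionally, a union bound over the $d$ coordinates, and the law of iterated expectations then yields, exactly as in \eqref{tpicont:fundbound}, the fundamental bound $\P(\|\bTmn^{(k)}\|_\infty>\epsilon)\le\sum_{j=1}^d 2\,\E\big[\exp\{-\nbar\epsilon^2/(8c_{n,j}^2)\}\big]$ with the random quantity $c_{n,j}^2:=\deltapibar^2\,\max_{i\in\Dscnk}|\mtil(\bX_i)-m(\bX_i)|^2\cdot\nbar^{-1}\sum_{i\in\Dscnk}\bhj^2(\bX_i)$.

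The remaining task is to control $c_{n,j}^2$, and here there are only \emph{two} random factors rather than three, because the multiplier $T/\pi(\bX)-1$ uses the \emph{known} $\pi$ and hence contributes only the deterministic constant $\deltapibar^2$ (this is precisely why $\bTmn$ carries one fewer $\epsilon$-parameter and one fewer $C$-term than $\bTpin$, which had to control both $\Deltapininfn$ and $\pitilninfn$). I would bound $\max_{i}|\mtil(\bX_i)-m(\bX_i)|^2\le\epsilon_1^2$ using the tail bound of Assumption~\ref{tmcont:assmpn} (with rate $\vnbarm$ and failure probability $q_{\nbar,m}$, both at the fold sample size $\nbar=n/2$) together with a union bound over $i\in\Dscnk$, producing the $C\exp\{-\epsilon_1^2/\vnbarm^2+\log(\nbar d)\}$ and $q_{\nbar,m}(\nbar d)$ contributions; and I would bound $\nbar^{-1}\sum_i\bhj^2(\bX_i)\le\bmusqbhinf+\sqrt2\,\sigma_m\epsilon_2+K_m\epsilon_2^2$ via Bernstein's inequality (Lemma~\ref{lem:4:Bernstein}), noting $\bhj^2(\bX)\sim\BMC(\sigma_m,K_m)$ by Lemma~\ref{lem:5:BMC}(a) with $\sigma_m=2\sqrt2\,\sigmabh^2$, $K_m=2\sigmabh^2$, and $\E\{\bhj^2(\bX)\}=\bmusqbhj\le\bmusqbhinf$. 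Together these identify the deterministic surrogate $\tnbar(\epsilon_1,\epsilon_2)=8\deltapibar^2\epsilon_1^2(\bmusqbhinf+\sqrt2\,\sigma_m\epsilon_2+K_m\epsilon_2^2)$.

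Finally, I would introduce the bad event $\mathcal{A}_{m,n,j}(\epsilon_1,\epsilon_2):=\{8c_{n,j}^2>\tnbar(\epsilon_1,\epsilon_2)\}$, bound $\P(\mathcal{A}_{m,n,j})$ by the sum of the two failure probabilities above, and split each $\E[\exp\{-\nbar\epsilon^2/(8c_{n,j}^2)\}]$ over $\mathcal{A}_{m,n,j}$ and its complement exactly as in \eqref{tpicont:finalbound1}; summing over $j$ and over the two folds then delivers the stated constants (the single error factor and the fold-doubling reproduce $4,8,4C,4$). The main obstacle, and the conceptual heart of the argument, is establishing the cross-fitting conditional-independence-and-centering structure of the preceding paragraph: without sample splitting, $\bTmn$ admits no such decomposition and would demand empirical-process machinery under much stronger conditions, whereas with it the centering reduces to the identity $\E(T\mid\bX)=\pi(\bX)$. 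The rest is a careful but essentially routine transcription of the $\bTpin$ argument, the only genuine bookkeeping subtleties being the propagation of the factor $2$ from the two folds and the consistent use of $\nbar=n/2$ throughout.
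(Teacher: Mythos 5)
Your proposal is correct and follows essentially the same route as the paper's proof: the same fold-wise decomposition $\bTmn = \tfrac12\sum_{k\neq k'}\bTmkkpn$, conditioning on the estimation fold's data together with the evaluation fold's covariates, conditional sub-Gaussianity of $T/\pi(\bX)-1$ via Lemma \ref{lem:3:SubgaussianBinary} and Lemma \ref{lem:2:SGconc}, the identical surrogate $\tnbar(\epsilon_1,\epsilon_2)$ obtained by controlling $\max_i|\mtil(\bX_i)-m(\bX_i)|^2$ through Assumption \ref{tmcont:assmpn} and $\nbar^{-1}\sum_i\bhj^2(\bX_i)$ through Bernstein's inequality, and the same good/bad-event split yielding the constants $4,8,4C,4$ after summing over the two folds. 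Your observation that only two random factors need controlling (since $\pi(\cdot)$ is known here, unlike in the $\bTpin$ analysis) matches the paper's reasoning exactly.
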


\subsection{Proof of Theorem \ref{tm:thm:tailbound}}\label{pf:tm:thm:tailbound}

We first rewrite $\bTmn$ from (\ref{decomp:eqn}) as:
\begin{eqnarray}
\nonumber \bTmn &\equiv& \frac{1}{n} \sum_{i=1}^n \left\{ \frac{T_i}{\pi(\bX_i)} - 1 \right\} \left\{\mtil(\bX_i) - m(\bX_i)\right\} \bh(\bX_i) \\
\nonumber &=& \frac{1}{2\nbar} \sum_{k \neq k' = 1}^2 \sum_{i\in \Isc_{k'}} \left\{ \frac{T_i}{\pi(\bX_i)} - 1 \right\} \left\{\mhatk(\bX_i) - m(\bX_i)\right\} \bh(\bX_i) \\
&=:& \frac{1}{2} \sum_{k \neq k' = 1}^2 \bTmkkpn, \;\; \mbox{where} \;\; \bTmkkpn \; := \; \frac{1}{\nbar} \sum_{i\in \Isc_{k'}} \bTmk(\bZ_i) \;\; \mbox{and}
\label{tm:redef:eq1} \\
\bTmk(\bZ) &:=& \left\{ \frac{T}{\pi(\bX)} - 1 \right\} \left\{\mhatk(\bX) - m(\bX)\right\} \bh(\bX) \quad \forall \; k \neq k' \in \{1,2\}. \nonumber
\end{eqnarray}

Define $\Xscnk := \{ \bX_i : i \in \Isck \} \; \forall \; k  \in \{1,2\}$, and let $\EXscnk (\cdot)$ and $\P(\cdot \given \Xscnk)$ respectively denote expectation w.r.t. $\Xscnk$ and conditional probability given $\Xscnk$. Further, for each $k \neq k' \in \{1,2\}$, let $\Ekkp(\cdot)$ and $\P(\cdot \given \Dscnk, \Xscnkp)$ respectively denote expectation w.r.t. $\{ \Dscnk, \Xscnkp\}$ and conditional probability given $\{ \Dscnk, \Xscnkp\}$. With $\Dscnk \ind \Xscnkp$ $\forall \; k \neq k' \in \{1,2\}$, we  note that $\Ekkp(\cdot) = \E_{\Xscnkp}\{\E_{\Dscnk}(\cdot)\}$. Next, let us define: $\forall \; k \neq k' \in \{1,2\}$,
\begin{align}
&   \Deltamnk(\bX) \; := \; \mhatk(\bX) - m(\bX), \;\; \Deltamnkkpin  := \; \underset{i \in \Isckp}{\max} \left| \Deltamnk(\bX_i) \right|, \label{tmcont:notn1}  \\
&  \bhsqbarkpnj  \; := \; \frac{1}{\nbar} \sum_{i \in \Isckp} \bhj^2 (\bX_i) \;\; \mbox{and let} \;\; \psi(T,\bX) \; := \; \frac{T}{\pi(\bX)} - 1.  \label{tmcont:notn2}
\end{align}

Further, for any $a \in (0,1]$, let $\bar{a} := 2\widetilde{a}/a$, where $\widetilde{a} := 1/2$ if $a = 1/2$, $\widetilde{a} := 0$ if $a =1$ and $\widetilde{a} := [(a-1/2)/\log\{a/(1-a)\}]^{1/2}$ if $a \notin \{1/2,1\}$. Let $\{\pibar(\bX), \pitil(\bX)\}$ and $\{\deltapibar, \deltapitil\}$ denote the corresponding versions of $\{\bar{a},\widetilde{a}\}$ for $a \equiv \pi(\bX)$ and $a \equiv \deltapi$ respectively, with $\deltapi$ being as in (\ref{pos:eqn}). We note that $\bar{a}$ is decreasing in $a \in (0,1]$ and $\widetilde{a} \leq 1/2$, so that $\bar{a} \leq 1/a$ $\forall \; a \in (0,1]$. Using this and (\ref{pos:eqn}), we therefore have: $\pibar(\bx) \leq \deltapibar \leq 1/\deltapi$ $\forall \; \bx \in \Xsc$.

Using the notations from (\ref{tmcont:notn1}) and \eqref{tmcont:notn2}, we have: for each $k \in \{1,2\}$,
\begin{equation*}
\bTmk(\bZ) \equiv \left\{\frac{T}{\pi(\bX)}- 1 \right\} \{ \mhatk(\bX) - m(\bX) \} \bh(\bX)
= \psi(T,\bX) \Deltamnk(\bX) \bh(\bX).
\end{equation*}

Now, for each $k \in \{1, 2\}$ and $ k' \neq k \in \{1,2\}$, $\Dscnk \ind \Xscnkp$ and we have: $\{\psi(T_i,\bX_i) \given \Dscnk, \Xscnkp \}_{i\in \Isckp} \equiv \{\psi(T_i,\bX_i) \given \Xscnkp \}_{i\in \Isckp} \equiv \{ \psi(T_i, \bX_i) \given \bX_i \}_{i \in \Isckp}$ are (conditionally) independent sub-Gaussian random variables that satisfy:
\begin{eqnarray}
&& \;\; \forall \; i \in \Isckp, \;\; \E\{\psi(T_i,\bX_i) \given \Dscnk, \Xscnkp\} %\equiv \E\{\psi(T_i,\bX_i) \given \Xscnkp\}
\; \equiv \; \E\{\psi(T_i, \bX_i) \given \bX_i\} = 0 \quad \mbox{and} \nonumber \\
&& \;\; \psitwonorm{\psi(T_i, \bX_i) \given \Dscnk, \Xscnkp} %\equiv \psitwonorm{\psi(T_i, \bX_i) \given \Xscnkp}
\; \equiv \; \psitwonorm{\psi(T_i, \bX_i) \given \bX_i} \; \leq \; \pibar^2(\bX_i) \; \leq \; \deltapibar^2, \label{tmcont:psisgnormbound}
\end{eqnarray}
where the bounds on the $\psitwonorm{\cdot}$ norm follow from using Lemma \ref{lem:3:SubgaussianBinary} and Lemma \ref{lem:1:GenProp} (i)(b) along with the definitions of $\pibar(\cdot)$ and $\deltapibar$ given earlier.
Further, conditional on $\Dscnk$ and $\Xscnkp$, $\{\Deltamnk(\bX_i)\}_{i \in \Isckp}$ and $\{\bhj(\bX_i)\}_{i \in \Isckp}$, for each $ j \in \{1, \hdots, d\}$, are all constants.
Hence, using Lemma \ref{lem:2:SGconc} and \eqref{tmcont:psisgnormbound}, along with \eqref{tm:redef:eq1}-\eqref{tmcont:notn2}, we have: $\forall \; k \neq k' \in \{1,2\}$ and $j \in \{ 1, \hdots, d\}$, %and $|\Deltamnk(\bX_i)| \leq \Deltamnkkpin$
\begin{eqnarray}
&& \bigpsitwonorm{\frac{1}{\nbar}\sum_{i \in \Isckp} \bTmkj(\bZ_i) \given[\bigg] \Dscnk, \Xscnkp} \; \leq \; \frac{4 \dnbarj\left(\Dscnk, \Xscnkp\right)}{\sqrt{\nbar}}, \quad \mbox{where} \label{tmcont:condsgdefn} \\
&& \dnbarj \left(\Dscnk, \Xscnkp\right) \; := \; \deltapibar \Deltamnkkpin \left(\bhsqbarkpnj\right)^{1/2}. \nonumber
\end{eqnarray}
Using Lemma \ref{lem:2:SGconc}, we then have: $\forall \; k \neq k' \in \{1,2\}$, $1\leq j \leq d$ and $\epsilon \geq 0$, %j \in \{1, \hdots, d\}$ and any $\epsilon \geq 0$,
\begin{equation*}
\P \left\{ \left|\frac{1}{\nbar}\sum_{i \in \Isckp} \bTmkj(\bZ_i)\right| > \epsilon \given[\bigg] \Dscnk, \Xscnkp \right\} \; \leq \; 2 \exp \left\{\frac{- \nbar \epsilon^2}{8 \dnbarj^2\left(\Dscnk, \Xscnkp\right)}\right\}.
\end{equation*}
\paragraph*{The fundamental bound for $\|\bTmkkpn\|_{\infty}$} Using the bound obtained above for $\bT_{m,\nbar [j]}^{(k,k')} \given \Dscnk, \Xscnkp$, we then have the following (unconditional) probabilistic bound for $\|\bTmkkpn\|_{\infty}$. For any $\epsilon \geq 0$ and $k \neq k' \in \{1,2\}$,
\begin{eqnarray}
&& \P \left\{ \left\| \bTmkkpn \right\|_{\infty} \; \equiv \; \left\|\frac{1}{\nbar} \sum_{i \in \Isckp} \bTmk(\bZ_i) \right\|_{\infty} \; > \; \epsilon \right\} \nonumber \\
&& \qquad \leq \; \sum_{j=1}^d \P \left\{\left|\frac{1}{\nbar} \sum_{i \in \Isckp} \bTmkj(\bZ_i) \right| \; > \; \epsilon \right\} \quad \mbox{[using the u.b.],} \nonumber \\
&& \qquad = \; \sum_{j=1}^d \Ekkp\left[ \P \left\{\left|\frac{1}{\nbar} \sum_{i \in \Isckp} \bTmkj(\bZ_i) \right| \; > \; \epsilon  \given[\bigg] \Dscnk, \Xscnkp\right\} \right] \nonumber \\
&& \qquad \leq \; 2 \sum_{j=1}^d  \Ekkp \left[ \exp \left\{\frac{- \nbar \epsilon^2}{8 \dnbarj^2\left(\Dscnk, \Xscnkp\right)}\right\} \right]. \label{tmcont:condsg:fundbound} \qquad \qed
\end{eqnarray}

Next, we aim to control the random variable $\dnbarj^2(\Dscnk, \Xscnkp)$ appearing in \eqref{tmcont:condsg:fundbound}. Based on the definition \eqref{tmcont:condsgdefn} of $\dnbarj^2(\Dscnk, \Xscnkp)$, it suffices to separately control $\Deltamnkkpin^2$ and $\bhsqbarkpnj$. To this end, let $\EDscnk(\cdot)$ and $\PDscnk(\cdot)$ denote expectation and probability w.r.t $\Dscnk$ $\forall \; k \in \{1, 2\}$.

With $\Dscnk \ind \Xscnkp$ for each $k \neq k' \in \{1, 2\}$, we note that for any event $A \equiv A(\Dscnk, \Xscnkp)$, $\P(A) \equiv \Pkkp(A) = \EXscnkp [\EDscnk\{1(A) \given \Xscnkp\}] \equiv \EXscnkp[\PDscnk\{A(\Dscnk, \Xscnkp) \given \Xscnkp \}] = \EXscnkp[\PDscnk\{A(\Dscnk, \Xscnkp)\}]$, where the final step holds since $\PDscnk(\cdot \given \Xscnk) = \PDscnk(\cdot)$ as $\Dscnk \ind \Xscnkp$.
\paragraph*{Controlling $\Deltamnkkpin^2$} Using \eqref{m:tailbound2} in Assumption \ref{tmcont:assmpn} along with the u.b. and the notations and facts discussed above, we have: $\forall$ $k \neq k' \in \{1,2\}$,
\begin{eqnarray}
&& \P\left\{ \Deltamnkkpin^2  \equiv  \; \underset{i \in \Isckp}{\max} \left| \Deltamnk(\bX_i) \right|^2  \; > \; \epsilon_1 ^2\right\} \nonumber \\
&& \;\;\; \leq  \; \sum_{i \in \Isckp} \P\left\{ \left| \Deltamnk(\bX_i) \right|  > \epsilon_1 \right\}  \; \leq \; \sum_{i \in \Isckp} \EXscnkp \left\{ C \exp\left(\frac{- \epsilon_1^2}{\vnbarm^2} \right) + q_{\nbar,m}\right\} \nonumber \\
&& \;\;\; \equiv \; C \nbar  \exp\left( \frac{- \epsilon_1^2}{\vnbarm^2}\right) + \nbar q_{\nbar,m} \quad \mbox{for any} \; \epsilon_1 \geq 0, \label{tmcont:deltakkpbound}
\end{eqnarray}
where we also used that $\Dscnk \ind \Xscnkp$ which ensures $\PDscnk(\cdot \given \Xscnk) = \PDscnk(\cdot)$ and makes (\ref{m:tailbound2}) in Assumption \ref{tmcont:assmpn} applicable conditional on $\Xscnkp$. \qed
\paragraph*{Controlling $\bhsqbarkpnj$} We first recall that $\bmusqbhinf = \max_{1 \leq j \leq d} \; \bmusqbhj$, where $\bmusqbhj \equiv \E\{\bhj^2(\bX)\}$.
Now, $ \forall \; k' \in \{1,2\}$ and $j \in \{1, \hdots, d\}$, $\bhsqbarkpnj$ is simply an average of the i.i.d. random variables $\{\bhj^2(\bX_i)\}_{i \in \Isckp}$. Further, using Assumption \ref{subgaussian:assmpn} (a) and Lemma \ref{lem:5:BMC} (a), $\bhj^2(\bX) \sim \BMC(\sigmabarbhsq, \Kbarbhsq)$ for some constants $\sigma_m \equiv \sigmabarbhsq := 2 \sqrt{2} \sigmabh^2 $ and $K_m \equiv \Kbarbhsq := 2 \sigmabh^2$. Hence, using Lemma \ref{lem:4:Bernstein}, we have: for each $k' \in \{1,2\}$ and $j \in \{1, \hdots, d\}$, and for any $\epsilon_2 \geq 0$,
\begin{eqnarray}
&& \quad\;\; \P\left\{ \bhsqbarkpnj \; \equiv \; \frac{1}{\nbar} \sum_{i \in \Isckp} \bhj^2(\bX_i) \; > \; \bmusqbhinf + \sqrt{2} \sigmabarbhsq \epsilon_2  + \Kbarbhsq \epsilon_2^2 \right\} \label{tmcont:bhsqbarkpbound} \\
&& \leq \; \P\left\{ \left| \frac{1}{\nbar} \sum_{i \in \Isckp} \bhj^2(\bX_i)  - \bmusqbhj  \right|  >  \sqrt{2} \sigmabarbhsq \epsilon_2  + \Kbarbhsq \epsilon_2^2 \right\}  \hspace{0.03in} \leq \; 2 \exp(- \nbar \epsilon_2^2).\qed \nonumber
\end{eqnarray}
\paragraph*{The final bound for $\left\| \bTmkkpn \right\|_{\infty}$} For any $\epsilon_1, \epsilon_2 \geq 0$, let us now define:
\begin{equation}
 \tnbar(\epsilon_1, \epsilon_2) \; := \; 8 \deltapibar^2\epsilon_1 ^2 \left(\bmusqbhinf + \sqrt{2} \sigmabarbhsq \epsilon_2  + \Kbarbhsq \epsilon_2^2\right). \label{tmcont:tnbar:def}
\end{equation}
Then, using the bounds (\ref{tmcont:deltakkpbound}) and (\ref{tmcont:bhsqbarkpbound}) in the definition of $\dnbarj^2(\Dscnk, \Xscnkp)$ in (\ref{tmcont:condsgdefn}), we have: for each $ k \neq k' \in \{1, 2\}$, $j \in \{1, \hdots, d\}$ and $\epsilon_1, \epsilon_2 \geq 0$,
\begin{align}
& \P\left\{ 8 \dnbarj^2( \Dscnk, \Xscnkp)\; > \; \tnbar \left( \epsilon_1, \epsilon_2\right) \right\}  \nonumber  \\
&  \leq \; C \nbar  \exp\left( \frac{- \epsilon_1^2}{\vnbarm^2}\right) + \nbar q_{\nbar,m} + 2 \exp(- \nbar \epsilon_2^2). \label{tmcont:dnbarj:probbound}
\end{align}
Using \eqref{tmcont:dnbarj:probbound} in the fundamental bound (\ref{tmcont:condsg:fundbound}) for $\|\bTmkkpn \|_{\infty}$, we then have: for each $k \neq k' \in \{1,2\}$ and for any $\epsilon, \epsilon_1, \epsilon_2 \geq 0$,
\begin{eqnarray}
\nonumber && \P \left\{ \left\| \bTmkkpn \right\|_{\infty} > \; \epsilon \right\} \; \leq \; 2 \sum_{j=1}^d  \Ekkp \left[ \exp \left\{\frac{- \nbar \epsilon^2}{8 \dnbarj^2\left(\Dscnk, \Xscnkp\right)}\right\} \right] \\
\nonumber && \quad\; \equiv \; 2 \sum_{j=1}^d \E \left[ \exp \left\{\frac{- \nbar \epsilon^2}{8 \dnbarj^2\left(\Dscnk, \Xscnkp\right)}\right\} 1_{\left\{8\dnbarj^2( \Dscnk, \Xscnkp) \; \leq \; \tnbar \left( \epsilon_1, \epsilon_2\right)\right\}} \right] \\
\nonumber && \qquad +  \; 2 \sum_{j=1}^d \E \left[ \exp \left\{\frac{- \nbar \epsilon^2}{8 \dnbarj^2\left(\Dscnk, \Xscnkp\right)}\right\} 1_{\left\{8\dnbarj^2( \Dscnk, \Xscnkp)\; > \; \tnbar \left( \epsilon_1, \epsilon_2\right)\right\}} \right] \\
\nonumber && \quad\;\; \leq \; 2d \left[\exp \left\{\frac{- \nbar \epsilon^2}{\tnbar(\epsilon_1, \epsilon_2)} \right\} \; + \; \P\left\{ 8\dnbarj^2( \Dscnk, \Xscnkp)\; > \; \tnbar \left( \epsilon_1, \epsilon_2\right) \right\} \right]\\
&& \quad \;\; \leq \; 2d \left[ \exp \left\{\frac{- \nbar \epsilon^2}{\tnbar(\epsilon_1, \epsilon_2)} \right\} + C \nbar  \exp\left( \frac{- \epsilon_1^2}{\vnbarm^2}\right) +  \nbar q_{\nbar,m} + 2 \exp(- \nbar \epsilon_2^2) \right].
\label{tmcont:tkkpn:finalbound}
\end{eqnarray}
Thus, (\ref{tmcont:tkkpn:finalbound}) establishes an explicit tail bound for $\left\| \bTmkkpn \right\|_{\infty}$. \qed
\paragraph*{The final bound for $\| \bTmn\|_{\infty}$} A tail bound for $\| \bTmn\|_{\infty}$ now follows easily using (\ref{tm:redef:eq1}) and (\ref{tmcont:tkkpn:finalbound}) along with the u.b. For any $\epsilon, \epsilon_1, \epsilon_2 \geq 0$, we have:
\begin{eqnarray}
 \label{tmcont:finalbound} && \quad \;\; \P \left( \left\| \bTmn \right\|_{\infty}  \; > \; \epsilon \right)  \; \leq \; \P\left( \left\| \bTmonetwon  \right\|_{\infty} > \; \epsilon \right) \; + \; \P\left( \left\| \bTmtwoonen  \right\|_{\infty} > \; \epsilon \right)  \\
 \nonumber && \leq \; 4d \exp \left\{\frac{- \nbar \epsilon^2}{\tnbar(\epsilon_1, \epsilon_2)} \right\} + 4C \nbar d  \exp\left( \frac{- \epsilon_1^2}{\vnbarm^2}\right) + 4 \nbar d q_{\nbar,m} + 8d \exp(- \nbar \epsilon_2^2).
\end{eqnarray}
This leads to the desired bound and concludes the proof of Theorem \ref{tm:thm:tailbound}. \qed
%

%
%\paragraph*{Implications of the bound in Theorem \ref{tm:thm:tailbound} and characterization of the rate}
\subsection{Completing the Proof of Theorem \ref{TM:THM}}
Given the general tail bound for $\|\bTmn  \|_{\infty}$ in Theorem \ref{tm:thm:tailbound}, we next evaluate it for a specific set of choices of $(\epsilon, \epsilon_1, \epsilon_2) > 0$ in order to understand its behavior and also establish the convergence rate of $\|\bTmn  \|_{\infty}$. To this end, let $(c_1, c_2) > 1$ be any universal constants and set $\epsilon_1 = c_1 \vnbarm \sqrt{\log (\nbar d)} $ and $\epsilon_2 = c_2 \sqrt{(\log d)/\nbar}$, where we further assume w.l.o.g. that $\epsilon_2 < 1$ so that
\begin{equation*}
\bmusqbhinf + \sqrt{2} \sigmabarbhsq \epsilon_2 + \Kbarbhsq \epsilon_2^2 \; \leq \; \bmusqbhinf  + \left(\sqrt{2}\sigmabarbhsq + \Kbarbhsq\right) c_2 \sqrt{\frac{\log d}{\nbar}}.
\end{equation*}
Using these in the definition (\ref{tmcont:tnbar:def}) and letting $C_{\bh} := (\sqrt{2}\sigmabarbhsq + \Kbarbhsq)$, we have:
\begin{equation*}
\tnbar(\epsilon_1, \epsilon_2) \; \leq \; 8 c_1^2 \deltapibar^2 \{\vnbarm \sqrt{\log (\nbar d)}\}^2 \left\{ \bmusqbhinf + c_2 C_{\bh} \sqrt{\frac{\log d}{\nbar}} \right\}.
\end{equation*}
Given these choices of $\{\epsilon_j\}_{j=1}^2$, let us now set $\epsilon = c\sqrt{ \{ (\log  d)/\nbar\} \tnbar(\epsilon_1, \epsilon_2)}$ for any $c > 1$. Using Theorem \ref{tm:thm:tailbound} and with $\nbar  \equiv n/2 \leq n$, we then have:
\begin{equation*}
 \mbox{With probability at least} \;\; 1 - \frac{4}{d^{c^2-1}} - \frac{8}{d^{c_2^2-1}} - \frac{4C}{(\nbar d)^{c_1^2-1}} - 4q_{\nbar,m}(\nbar d),\\
\end{equation*}
\begin{equation*}
\|\bTmn\|_{\infty} \; \leq \; c\sqrt{\frac{\log d}{n}} \{ \vnbarm \sqrt{\log (n d)} \} C_1^*\left( \bmusqbhinf + C_2^* \sqrt{\frac{\log d}{n}} \right)^{\half},
\end{equation*}
where $C_1^* := 4c_1 \deltapibar$ and $C_{2,n}^* := \sqrt{2}c_2 C_{\bh} \equiv \sqrt{2}c_2(\sqrt{2}\sigmabarbhsq + \Kbarbhsq)$, with $\sigmabarbhsq$ and $\Kbarbhsq$ being as in \eqref{tmcont:bhsqbarkpbound}. This completes the proof of Theorem \ref{TM:THM}. \qed

\section{Proof of Theorem \ref{RMPI:THM}}\label{pf:rmpi:thm}

To show Theorem \ref{RMPI:THM}, we first state and prove a more general result that gives an explicit tail bound for $\|\bRpimn \|_{\infty}$.

\begin{theorem}[Tail bound for $\| \bRpimn \|_{\infty}$]\label{rmpi:thm:tailbound}
Let Assumptions \ref{base:assmpns}, \ref{subgaussian:assmpn}, \ref{tpicont:assmpn} and \ref{tmcont:assmpn} hold with the sequences $(\vnpi, q_{n,\pi})$, $(\vnbarm, q_{\nbar,m}, \nbar)$ and the constants $(\deltapi, \sigmabh, C)$ as defined therein, and let $\bmumodbhinf := \max\{ \E\{ |\bhj(\bX)|\} : j = 1, \hdots, d\}$. %and the constants $(\sigmabarbxi, \Kbarbxi) > 0$ be as in (\ref{rmpicont:eqn5}).
Then, for any $\epsilon_1, \epsilon_2, \epsilon_3 , \epsilon_4 \geq 0$ with $\epsilon_2 < \deltapi$ small enough,
\begin{eqnarray*}
&& \P \left\{ \left\| \bRpimn \right\|_{\infty} \; > \; \frac{\epsilon_1 \epsilon_3}{\deltapi - \epsilon_2} \rbxi(\epsilon_4)\right\} \; \leq \;  2d \exp( - n \epsilon_4^2) \nonumber \\
&&  \;\; + C n \left\{\exp\left( \frac{-\epsilon_1^2}{\vnpi^2}\right) + \exp\left( \frac{-\epsilon_2^2}{\vnpi^2}\right) + \exp\left(\frac{- \epsilon_3^2}{\vnbarm^2}\right) \right\}  + 2nq_{n,\pi} + n q_{\nbar,m}, \nonumber
\end{eqnarray*}
where $\rbxi(\epsilon_4) := \bmumodbhinf + \sqrt{2}{\sigma_{\pi,m}}\epsilon_4 + K_{\pi,m} \epsilon_4^2$ with $\sigma_{\pi,m} := 4 \sigmabh\deltapi^{-1} $ and $K_{\pi,m} := 2 \sqrt{2} \sigmabh \deltapi^{-1}$ being constants. %that depend only on the constants introduced in the assumptions.
\end{theorem}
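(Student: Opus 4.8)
The plan is to exploit the fact that $\bRpimn$ is a second-order (product-type) bias term, so that crude sup-norm bounds on the two nuisance errors—rather than the delicate conditional-independence arguments used for $\bTpin$ and $\bTmn$—already suffice; this is precisely the ``naive'' strategy anticipated for the drift term in Appendix \ref{discussion:errorterms}. First I would work coordinatewise. Writing $\Deltapin(\bX) := \pihat(\bX) - \pi(\bX)$ and setting $\Delta_{m,n}(\bX) := \mtil(\bX) - m(\bX)$, and using the identity $T_i/\pihat(\bX_i) - T_i/\pi(\bX_i) = -T_i\Deltapin(\bX_i)/\{\pihat(\bX_i)\pi(\bX_i)\}$ together with $T_i \leq 1$ and $\pi(\cdot) \geq \deltapi$ from \eqref{pos:eqn}, I would bound the $j$-th entry of $\bRpimn$ in absolute value by
\begin{equation*}
\Big(\max_{1\leq i\leq n} |\Deltapin(\bX_i)|\Big)\Big(\max_{1\leq i\leq n} |\Delta_{m,n}(\bX_i)|\Big)\Big(\max_{1\leq i\leq n} \frac{1}{\pihat(\bX_i)}\Big)\,\frac1n\sum_{i=1}^n \frac{T_i}{\pi(\bX_i)}|\bhj(\bX_i)|.
\end{equation*}
The key simplification is that the residual average involves only the \emph{true} $\pi(\cdot)$ and is therefore a genuine i.i.d.\ average, decoupling the nuisance-error factors from the concentration step.

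Next I would control the three maxima using exactly the arguments already established in the proofs of Theorems \ref{tpi:thm:tailbound} and \ref{tm:thm:tailbound}. By Assumption \ref{tpicont:assmpn} and a union bound over the $n$ observations (cf.\ \eqref{tpicont:deltapinbound}), $\max_i |\Deltapin(\bX_i)| \leq \epsilon_1$ off an event of probability at most $Cn\exp(-\epsilon_1^2/\vnpi^2) + nq_{n,\pi}$. Since $\pihat(\bX_i) \geq \pi(\bX_i) - |\Deltapin(\bX_i)| \geq \deltapi - \epsilon_2$ whenever $|\Deltapin(\bX_i)| \leq \epsilon_2 < \deltapi$, a second application of Assumption \ref{tpicont:assmpn} (cf.\ \eqref{tpicont:pitilnbound}) yields $\max_i 1/\pihat(\bX_i) \leq (\deltapi - \epsilon_2)^{-1}$ off an event of probability at most $Cn\exp(-\epsilon_2^2/\vnpi^2) + nq_{n,\pi}$. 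Finally, treating the cross-fitted errors fold-by-fold and invoking Assumption \ref{tmcont:assmpn} with a union bound (exactly as in \eqref{tmcont:deltakkpbound}, summed over the two folds so that $2\nbar = n$) gives $\max_i |\Delta_{m,n}(\bX_i)| \leq \epsilon_3$ off an event of probability at most $Cn\exp(-\epsilon_3^2/\vnbarm^2) + nq_{\nbar,m}$.

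For the residual average I would apply Bernstein's inequality. The summands $W_{ij} := \{T_i/\pi(\bX_i)\}|\bhj(\bX_i)|$ are i.i.d.\ in $i$ with mean $\E[|\bhj(\bX)|] \leq \bmumodbhinf$, since $\E\{T/\pi(\bX)\mid\bX\} = 1$ by Assumption \ref{base:assmpns} (a). As $T/\pi(\bX) \leq \deltapi^{-1}$ is bounded and $|\bhj(\bX)|$ is sub-Gaussian with $\psionenorm{|\bhj(\bX)|} \leq \sqrt{2}\,\sigmabh$ (Assumption \ref{subgaussian:assmpn} (a) and Lemma \ref{lem:1:GenProp} (i)--(ii)), Lemma \ref{lem:5:BMC} (c) gives $W_{ij} \sim \BMC(\sigma_{\pi,m}, K_{\pi,m})$ with $\sigma_{\pi,m} = 4\sigmabh\deltapi^{-1}$ and $K_{\pi,m} = 2\sqrt{2}\sigmabh\deltapi^{-1}$. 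Lemma \ref{lem:4:Bernstein} followed by a union bound over $j = 1,\dots,d$ then yields $\frac1n\sum_i W_{ij} \leq \bmumodbhinf + \sqrt{2}\,\sigma_{\pi,m}\epsilon_4 + K_{\pi,m}\epsilon_4^2 = \rbxi(\epsilon_4)$ uniformly in $j$, off an event of probability at most $2d\exp(-n\epsilon_4^2)$.

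Intersecting the four good events and multiplying the four displayed bounds gives $\|\bRpimn\|_\infty \leq \{\epsilon_1\epsilon_3/(\deltapi-\epsilon_2)\}\,\rbxi(\epsilon_4)$ on that intersection, and a union bound over the four complementary events produces exactly the stated probability. I do not expect a genuine obstacle: the only points requiring care are (i) verifying the Bernstein moment constants in the third step, which is a routine application of Lemma \ref{lem:5:BMC} (c), and (ii) ensuring that the pointwise nuisance tail bounds are applied conditionally on the correct training subsamples, so that the relevant estimator and its evaluation points are independent—precisely as handled in the proofs of Theorems \ref{tpi:thm:tailbound} and \ref{tm:thm:tailbound}. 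The crude factorization in the first step is what makes $\bRpimn$ substantially easier than $\bTpin$ and $\bTmn$: no centering or cancellation of leading terms is needed, only that each of the two estimation errors contributes its own rate ($\epsilon_1 \propto \vnpi$ and $\epsilon_3 \propto \vnbarm$), which is exactly the product-rate phenomenon the theorem is meant to capture.
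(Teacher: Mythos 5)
Your proposal is correct and follows essentially the same route as the paper's proof: the identity $T/\pihat - T/\pi = -T\Deltapin/(\pihat\,\pi)$ yields the same factorization into three sup-norm nuisance factors times the i.i.d.\ average $\frac1n\sum_i \{T_i/\pi(\bX_i)\}|\bhj(\bX_i)|$, the three maxima are controlled by exactly the union-bound arguments of \eqref{tpicont:deltapinbound}, \eqref{tpicont:pitilnbound} and \eqref{tmcont:deltakkpbound}, and the average is handled by Bernstein's inequality via Lemma \ref{lem:5:BMC} (c) with the identical constants $\sigma_{\pi,m}$ and $K_{\pi,m}$. No gaps.
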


\subsection{Proof of Theorem \ref{rmpi:thm:tailbound}}\label{pf:rmpi:thm:tailbound}
%Note that $\bRpimn$ is essentially a `second order' term since it involves a product of the two error terms arising from the estimation of $\pi(\cdot)$ and $m(\cdot)$ and under reasonable assumptions on $\pihat(\cdot) - \pi(\cdot)$ and $\mhat(\cdot) -m(\cdot)$, one can attempt to control the behavior of this term by `naive' techniques, as opposed to the more sophisticated analyses required for controlling $\bTpin$ and $\bTmn$.
Recalling from the notations in \eqref{decomp:eqn},
\begin{equation}
\bRpimn \; = \; \frac{1}{n} \sum_{i=1}^n \left\{\frac{T_i}{\pihat(\bX_i)} - \frac{T_i}{\pi(\bX_i)}\right\} \left\{ \mtil(\bX_i) - m(\bX_i)\right\} \bh(\bX_i). \label{rmpin:redefn}
\end{equation}

Hence, with $\Deltapininfn$ and $\pitilninfn$ as in (\ref{tpicont:notn1}) and (\ref{tpicont:notn2}) respectively, and with $\Deltamnkkpin$ as in (\ref{tmcont:notn1}) for any $k \neq k' \in \{1, 2\}$, we have:
\begin{eqnarray}
&& \quad \left\| \bRpimn \right\|_{\infty} \; \leq \; \pitilninfn \Deltapininfn \Deltastarmninfn \|\bxibarn\|_{\infty}, \quad \mbox{where} \label{rmpicont:eqn1} \\
&& \Deltastarmninfn \; := \; \max \left\{ \Deltamnonetwoin, \Deltamntwoonein \right\}  \;\; \mbox{and} \nonumber \\
&& \bxibarn \; := \; \frac{1}{n} \sum_{i=1}^n \bxi(T_i, \bX_i), \;\; \mbox{with} \;\; \bxi(T, \bX) \; := \; \left\{\frac{T}{\pi(\bX)} \left|\bhj(\bX) \right| \right\}_{j=1}^d \in \R^d. \nonumber
\end{eqnarray}
For most of the quantities appearing in the bound (\ref{rmpicont:eqn1}), we already have their explicit tail bounds. Specifically, using (\ref{tpicont:deltapinbound}), we have: for any $\epsilon_1 \geq 0$,
\begin{equation}
\P\left\{ \Deltapininfn \; > \; \epsilon_1 \right\} \; \leq \; C n \exp\left( \frac{-\epsilon_1^2}{\vnpi^2}\right) + n q_{n,\pi}, \quad \mbox{where} \label{rmpicont:eqn2} \\
\end{equation}
and using (\ref{tpicont:pitilnbound}), for any $\epsilon_2 \geq 0$ small enough such that $\epsilon_2 < \deltapi$,
\begin{equation}
\P\left\{\pitilninfn \; > \; \left(\deltapi - \epsilon_2\right) ^{-1}\right\} \; \leq \; C n  \exp\left( \frac{-\epsilon_2^2}{\vnpi^2}\right) + n q_{n,\pi}. \label{rmpicont:eqn3}
\end{equation}
Next, using (\ref{tmcont:deltakkpbound}) and recalling that $\nbar = n/2$, we have: for any $\epsilon_3 \geq 0$,
\begin{eqnarray}
&& \P \left\{ \Deltastarmninfn  >  \epsilon_3 \right\} \; \leq \sum_{k \neq k' \in \{1,2\}} \P \left\{ \Deltamnkkpin \; > \; \epsilon_3 \right\}  \nonumber \\
&& \quad \leq \; 2 C \nbar \exp\left(\frac{- \epsilon_3^2}{\vnbarm^2}\right) + 2 \nbar q_{\nbar,m} \;\; \equiv \; Cn \exp\left(\frac{- \epsilon_3^2}{\vnbarm^2}\right) + n q_{\nbar,m}. \label{rmpicont:eqn4}
\end{eqnarray}

Finally, $\bxibarn$ is a simple i.i.d. average defined by the random vector $\bxi(T,\bX)$ and can be controlled as follows. Under Assumption \ref{subgaussian:assmpn} (a) and Lemma \ref{lem:1:GenProp} (ii)(a), $\psionenorm{|\bhj(\bX)|} = \psionenorm{\bhj(\bX)} \leq \sqrt{2} \psitwonorm{\bhj(\bX)} \leq \sqrt{2} \sigmabh$ $\forall \; 1 \leq j \leq d$. Further, due to (\ref{pos:eqn}), $T/\pi(\bX) \leq \deltapi^{-1}$ a.s. $[\P]$. Hence, using Lemma \ref{lem:5:BMC} (ii), we have: for constants $\sigma_{\pi,m} \equiv \sigmabarbxi := 4 \sigmabh\deltapi^{-1} $ and $K_{\pi,m} \equiv \Kbarbxi := 2 \sqrt{2} \sigmabh \deltapi^{-1}$,
\begin{equation}
\bxij (T,\bX) \; \equiv\;  \frac{T}{\pi(\bX)}|\bhj(\bX)| \; \sim \; \BMC(\sigmabarbxi, \Kbarbxi) \quad \forall \; j \in \{1, \hdots, d\}. \label{rmpicont:eqn5}
\end{equation}
Further, $\E\{ \bxij(\T,\bX)\} = \E\{ |\bhj(\bX)|\} \equiv \bmumodbhj$ (say) $\forall \; j \in \{1, \hdots, d\}$, and recall that $\bmumodbhinf = \max\{\bmumodbhj : j = 1, \hdots, d\}$. Using \eqref{rmpicont:eqn5} and Lemma \ref{lem:4:Bernstein} along with the u.b., we then have: for any $\epsilon_4 \geq 0$,
\begin{eqnarray}
&& \P\left\{ \left\| \bxibarn \right\|_{\infty} \; > \; \rbxi(\epsilon_4) \; \equiv \; \bmumodbhinf + \sqrt{2}{\sigmabarbxi}\epsilon_4 + \Kbarbxi \epsilon_4^2 \right\}\nonumber  \\
&& \leq \; \sum_{j=1}^d \P \left\{ \left|  \frac{1}{n} \sum_{i=1}^n \bxij (T_i, \bX_i)  - \bmumodbhj  \right| \; > \; \sqrt{2}{\sigmabarbxi}\epsilon_4 + \Kbarbxi \epsilon_4^2 \right\} \nonumber \\
&& \leq \; 2d \exp(- n \epsilon_4^2) \; \equiv \; 2\exp( - n \epsilon_4^2 + \log d). \label{rmpicont:eqn6}
\end{eqnarray}
Using the bounds (\ref{rmpicont:eqn2}), (\ref{rmpicont:eqn3}), (\ref{rmpicont:eqn4}) and (\ref{rmpicont:eqn6}), along with the u.b., in the original bound (\ref{rmpicont:eqn1}) for $\| \bRpimn \|_{\infty}$, we then have: for any $\epsilon_1, \epsilon_2, \epsilon_3, \epsilon_4 \geq 0$,
\begin{eqnarray}
&& \quad \P \left\{ \left\| \bRpimn \right\|_{\infty} \; > \; \frac{\epsilon_1 \epsilon_3}{\deltapi - \epsilon_2} \rbxi(\epsilon_4)\right\} \; \leq \; 2d \exp( - n \epsilon_4^2) \label{rmpicont:finalbound} \\
&&  \quad\; + C n \left\{\exp\left( \frac{-\epsilon_1^2}{\vnpi^2}\right) + \exp\left( \frac{-\epsilon_2^2}{\vnpi^2}\right) + \exp\left(\frac{- \epsilon_3^2}{\vnbarm^2}\right) \right\} + 2n q_{n,\pi} + n q_{\nbar,m}, \nonumber
\end{eqnarray}
where we assume that $\epsilon_2 < \deltapi$. The proof of Theorem \ref{rmpi:thm:tailbound} is complete. \qed
%

%\paragraph*{Implications of the bound in Theorem \ref{rmpi:thm:tailbound} and characterization of the rate}
\subsection{Completing the Proof of Theorem \ref{RMPI:THM}}

Given the general tail bound for $\|\bRpimn  \|_{\infty}$ in Theorem \ref{rmpi:thm:tailbound}, we next evaluate it under a specific set of choices for $\epsilon_1, \epsilon_2, \epsilon_3, \epsilon_4$ $ > 0$ to understand its behavior and to establish the convergence rate of $\|\bRpimn  \|_{\infty}$. Let $c_1, c_2, c_3, c_4 > 1$ be universal constants, and set $\epsilon_1 = c_1 \vnpi \sqrt{\log n} $, $\epsilon_2 = c_2 \vnpi \sqrt{\log n} $, $\epsilon_3 = c_3 \vnbarm \sqrt{\log n} $ and $\epsilon_4= c_4 \sqrt{(\log d)/n}$, where we assume w.l.o.g. that $\epsilon_2 \leq \deltapi/2 $ and $\epsilon_4 < 1$, so that
\begin{equation*}
\rbxi(\epsilon_4) \; \leq \; \bmumodbhinf + c_4C_{\bxi}\sqrt{\frac{\log d}{n}}, \;\; \mbox{where} \;\; C_{\bxi} := \sqrt{2}\sigmabarbxi + \Kbarbxi
\end{equation*}
 with $\sigmabarbxi$ and $\Kbarbxi$ as in \eqref{rmpicont:eqn5}. Using Theorem \ref{rmpi:thm:tailbound}, we then have: with probability at least $1 - \sum_{j=1}^3 Cn^{-(c_j^2 - 1)} - 2d^{-(c_4^2 - 1)} - 2nq_{n,\pi} - n q_{\nbar,m}$,
\begin{equation*}
%&& \mbox{With probability} \; \geq \; 1 - \sum_{j=1}^3\frac{C}{n^{c_j^2 - 1}}  - \frac{2}{d^{c_4^2 - 1}} - 2nq_{n,\pi} - n q_{\nbar,m}, \\
 \left\| \bRpimn \right\|_{\infty}\; \leq  \; \frac{2 c_1 c_3}{\deltapi} \{\vnpi \vnbarm (\log n)\} \left(\bmumodbhinf + c_4C_{\bxi}\sqrt{\frac{\log d}{n}}\right),  \;\; \mbox{where}
\end{equation*}
This leads to the desired bound and completes the proof of Theorem \ref{RMPI:THM}. \qed

\section{Proof of Theorem \ref{HDINF:THM}}\label{pf:HDinf:thm}

Under the assumed form of $L(\cdot)$ and recalling the definition of $\bSigmahat$ and that $\bnabla \LnDR(\btheta) = \bnabla \LntilDR(\btheta)$, we first note that the gradient $\bnabla \LnDR(\btheta)$ satisfies:
\begin{equation*}
\bnabla \LnDR(\bthetahatDR) - \bnabla \LnDR(\btheta_0) \; = \; %- \frac{2}{n} \sum_{i=1}^n \bPsi(\bX_i) \bPsi(\bX_i)' (\bthetahatDR - \btheta_0) =
 2 \bSigmahat (\bthetahatDR - \btheta_0).
\end{equation*}
Using the definition \eqref{desparsify:est:def} of $\bthetatilDR$ and the notations in \eqref{eq:HDinf:Deltandefn}, we then have:
\begin{eqnarray}
&& (\bthetatilDR - \btheta_0) \; = \; (\bthetahatDR - \btheta_0) - \frac{1}{2} \bOmegahat \{\bnabla \LnDR(\btheta_0) +  2 \bSigmahat (\bthetahatDR - \btheta_0) \} \nonumber \\
&&  \;\;\; = \; -\frac{1}{2} \bOmega \bnabla \LnDR(\btheta_0) - \frac{1}{2}(\bOmegahat - \bOmega) \bnabla \LnDR(\btheta_0) %
 + (I_d - \bOmegahat \bSigmahat) (\bthetahatDR - \btheta_0) \nonumber \\
&& \;\;\; \equiv \; -\frac{1}{2} \bOmega \bnabla \LnDR(\btheta_0)  + \bR_{n,1} + \bR_{n,3} \qquad \mbox{[using \eqref{eq:HDinf:Deltandefn}]}.  \label{eq1:pf:HDinf:thm} %\nonumber
\end{eqnarray}
Next, recall from \eqref{decomp:eqn} that $\bnabla \LnDR(\btheta_0) \equiv \bT_n  = \bTzeron + \bTpin - \bTmn - \bRpimn$, with all notations as in \eqref{tzero}-\eqref{rmpi}. Further, with our choice of $L(\cdot)$, we have:
\begin{equation*}
\bTzeron \; \equiv \; \frac{1}{n} \sum_{i=1}^n \bT_0(\bZ_i)  \; = \; -\frac{2}{n} \sum_{i=1}^n \bpsi_0(\bZ_i), \;\; \mbox{with} \; \bpsi_0(\bZ) \; \mbox{as in the ALE \eqref{desparsifiedest:ALE}}.
\end{equation*}
Applying these facts in \eqref{eq1:pf:HDinf:thm} and using the notations in \eqref{eq:HDinf:Deltandefn}, we then have:
\begin{eqnarray}
&& (\bthetatilDR - \btheta_0) \; = \; -\frac{1}{2} \bOmega(\bTzeron + \bTpin - \bTmn - \bRpimn) + \bR_{n,1} + \bR_{n,3} \nonumber \\
&& \;\;\; = \;  -\frac{1}{2} \bOmega \bTzeron -\frac{1}{2} \bOmega(\bTpin - \bTmn - \bRpimn) + \bR_{n,1} + \bR_{n,3} \nonumber  \\
&& \;\;\; \equiv \; \frac{1}{n} \sum_{i=1}^n \bOmega  \bpsi_0(\bZ_i) + \bR_{n,1} + \bR_{n,2} + \bR_{n,3} \; \equiv \; \frac{1}{n} \sum_{i=1}^n \bOmega \bpsi_0(\bZ_i) + \bDelta_n. \label{eq2:pf:HDinf:thm}
\end{eqnarray}
Now, under Assumptions \ref{base:assmpns}, \ref{subgaussian:assmpn}, \ref{tpicont:assmpn} and \ref{tmcont:assmpn}, all of Theorems \ref{TZERO:THM}-\ref{RMPI:THM} apply, and under Assumption \ref{strngconv_assmpn} and with $L(\cdot)$ being convex and differentiable in $\btheta$  trivially, Lemma \ref{DEV:BOUND} applies as well. Using these results, we then have: %Thus, with $\lambda_n \asymp \sqrt{(\log d)/n}$ (as assumed), we have
\begin{align}
& \| \bnabla \LnDR(\btheta_0) \|_{\infty} = O_{\P}\left( \sqrt{\frac{\log d}{n}}\right) \; \mbox{and} \;\; \| \bthetahatDR(\lambda_n) - \btheta_0\|_1  =  O_{\P}\left( s\sqrt{\frac{\log d}{n}}\right) \label{eq:extra:pf:HDinf:thm} %% Had to eventually label this as I needed to refer to this eqn. at one place in the std. error estimator's consistency's proof. %%
\end{align}
for any choice of $\lambda_n \asymp \sqrt{(\log d)/n}$, as assumed. Using these facts along with Assumption \ref{HDinf:assmpn} (a) and multiple uses of $L_1$-$L_{\infty}$ type bounds, we then have:
\begin{eqnarray}
&& \quad \| \bR_{n,1} \|_{\infty} \; \leq \; \frac{1}{2} \| \bOmegahat - \bOmega \|_1  \| \bnabla \LnDR(\btheta_0) \|_{\infty} \; = \; O_{\P} \left( r_n \sqrt{\frac{\log d}{n}} \right), \;\; \mbox{and} \label{eq3:pf:HDinf:thm} \\
&& \quad \| \bR_{n,3} \|_{\infty} \; \leq \; \| I_d - \bOmegahat \bSigmahat\|_{\max} \| \bthetahatDR(\lambda_n) - \btheta_0\|_1 \; = \; O_{\P} \left( \omega_n s\sqrt{\frac{\log d}{n}} \right). \label{eq4:pf:HDinf:thm}
\end{eqnarray}
Next, to control $\bR_{n,2} \equiv - \half \bOmega(\bTpin - \bTmn - \bRpimn)$, observe that each of the variables $- \half \bOmega\bTpin$, $- \half \bOmega \bTmn $ and $- \half \bOmega \bRpimn$ admit exactly the \emph{same form} as $\bTpin$, $\bTmn$ and $\bRpimn$ in \eqref{decomp:eqn}, respectively, but with a \emph{different choice} of the function $\bh(\bX)$ in the definitions \eqref{tpi}-\eqref{rmpi} of the underlying summands for these terms. In this particular case, the summands correspond to the forms \eqref{tpi}-\eqref{rmpi} with $h(\bX)$ replaced by $\widetilde{h}(\bX) = \bOmega \bPsi(\bX) \equiv \bUpsilon(\bX)$.

Further under Assumption \ref{HDinf:assmpn} (b), $\widetilde{h}(\bX)$  is sub-Gaussian with $\psitwonorm{\widetilde{h}(\bX)} \leq \sigmabUps$, as required in Assumption \ref{subgaussian:assmpn} (a). %$\psitwonorm{\widetilde{h}(\bX)} = \psitwonorm{\bUpsilon(\bX)} \leq \sigmabUps$.
Hence, under Assumptions \ref{base:assmpns}, \ref{subgaussian:assmpn}, \ref{tpicont:assmpn}, \ref{tmcont:assmpn} and \ref{HDinf:assmpn}, Theorems \ref{TPI:THM}, \ref{TM:THM} and \ref{RMPI:THM} certainly apply to $\bOmega\bTpin$, $\bOmega \bTmn $ and $\bOmega \bRpimn$ with this `modified choice' $\widetilde{h}(\bX)$ of $\bh(\bX)$, using which we have:
\begin{align*}
& \left\| \bOmega \bTpin \right\|_{\infty} + \left\| \bOmega \bTmn \right\|_{\infty}   \; = \; O_{\P}\left( (\vnpi + \vnbarm) \sqrt{\frac{ (\log d) \log (nd)}{n}} \right)   \nonumber  \\  %\sigmabUps
\mbox{and} \;\; &   \| \bOmega \bRpimn \|_{\infty}  = \; O_{\P} \left(  \vnpi \vnbarm \log n \right), \nonumber %\sigmabUps
\end{align*}
where both results follow directly from the non-asymptotic bounds in Theorems \ref{TPI:THM}-\ref{RMPI:THM}. %Note further that for both the rates, we have made their dependence on the sub-Gaussian norm $\sigmabUps$ of $\bUpsilon(\bX)$ explicit, whereby it is allowed to depend on $d$ and not necessarily be a constant (although the latter case is what will be typically assumed, implicitly or explicitly).
Combining these and recalling the definition of $v_n^*$ in Assumption \ref{HDinf:assmpn} (b) along with the rate condition on $v_n^*$ assumed therein, we have:
\begin{equation}
\| \bR_{n,3}\|_{\infty} \; \equiv \; \half \| \bOmega(\bTpin - \bTmn - \bRpimn) \|_{\infty} \; = \; O_{\P} \left( v_n^* \nnhalf \right). \label{eq5:pf:HDinf:thm} %
\end{equation}

Combining \eqref{eq3:pf:HDinf:thm}, \eqref{eq4:pf:HDinf:thm} and \eqref{eq5:pf:HDinf:thm} along with the definition of $\bDelta_n$ in \eqref{eq:HDinf:Deltandefn}, and using these in the original decomposition  \eqref{eq2:pf:HDinf:thm} of $(\bthetatilDR - \btheta_0)$, we have:
\begin{eqnarray}
&& \quad (\bthetatilDR - \btheta_0) \; = \;  \frac{1}{n} \sum_{i=1}^n \bOmega \bpsi_0(\bZ_i) + \bDelta_n, \quad \mbox{where} \;\; \bDelta_n \;\; \mbox{satisfies:} \nonumber \\
&& \| \bDelta_n \|_{\infty} \; \equiv \; \|  \bR_{n,1} +  \bR_{n,2} +  \bR_{n,2}  \|_{\infty} \leq \; \|  \bR_{n,1} \|_{\infty} +  \|  \bR_{n,2} \|_{\infty} + \|  \bR_{n,3} \|_{\infty} \nonumber \\
&& \qquad\qquad = \; O_{\P} \left( r_n \sqrt{\frac{\log d}{n}} + v_n^* \nnhalf + \omega_n s\sqrt{\frac{\log d}{n}} \right) \; = \; o_{\P}(\nnhalf).  \label{eq6:pf:HDinf:thm} \qed
\end{eqnarray}
\eqref{eq6:pf:HDinf:thm} therefore establishes the desired ALE \eqref{desparsifiedest:ALE}. Note further that the claim $\E\{ \bpsi_0(\bZ) \} = \bzero$ holds as a simple consequence of the definition of $\btheta_0$ and Assumption \ref{base:assmpns} (b). Specifically, recalling the notations $\varepsilon(\Z) = Y - m(\bX)$ and $\psi(\bX) = m(\bX) - g(\bX, \btheta_0)$ from Assumption \ref{subgaussian:assmpn} (a), with  $g(\bX,\btheta_0) = \bPsi(\bX)'\btheta_0$ for our choice of $L(\cdot)$, we have: $\E\{ \varepsilon(\Z) \medgiven \bX \} = 0$, by definition of $m(\bX)$, and hence, $\E\{ \psi(\bX) \bPsi(\bX)  \} = \E[ \bPsi(\bX)\{Y - \bPsi(\bX)'\btheta_0\}] - \E\{ \bPsi(\bX)\varepsilon(\Z)\}= \bzero$, by definition of $\btheta_0$ and $L(\cdot)$. Further, $T \ind Y \medgiven \bX$ by Assumption \ref{base:assmpns} (a). Thus,
\begin{equation*}
 \E\{ \bpsi_0(\bZ)\} \; \equiv \; \E\{\bPsi(\bX)\psi(\bX)\} + \E_{\bX}[\E\{ T \pi^{-1}(\bX) \medgiven \bX\}\E\{\varepsilon(\Z) \medgiven \bX\}] \; = \; \bzero.
\end{equation*}
This therefore completes the proof of the first part of Theorem \ref{HDINF:THM}. \qed
\par\smallskip
To establish the (coordinatewise) asymptotic normality results claimed in the second part, we simply use the established ALE \eqref{desparsifiedest:ALE} or \eqref{eq6:pf:HDinf:thm} and invoke Lyapunov's Central Limit Theorem (CLT) along with Slutsky's Theorem. To apply Lyapunov's CLT, we need to verify the Lyapunov moment conditions for $\bGamma_0(\bZ) \equiv \bOmega \bpsi_0(\bZ)$. We establish this by first showing that $\bGamma_0(\bZ)$ is, in fact, sub-exponential (as per Definition \ref{orlicz:def} with $\alpha = 1$) under our assumptions.

To this end, under Assumptions \ref{subgaussian:assmpn} (a),  \ref{base:assmpns} (b) and \ref{HDinf:assmpn} (b), we have: %with $(\sigmaeps, \sigmapsi)$, $\deltapi$ and $\sigmabUps$ as defined therein and with $\bUpsilon(\bX) \equiv \bOmega \bPsi(\bX)$, we have:
\begin{eqnarray}
&& \quad \; \psionenorm{\bGamma_0(\bZ) }  \; \equiv \; \psionenorm{ \bOmega \bpsi_0(\bZ) }  \; = \; \psionenorm{\bOmega \bPsi(\bX) \{\psi(\bX) + T \pi^{-1}(\bX) \varepsilon(\Z)\}} \label{eq7:pf:HDinf:thm} \\
&& %\;\; \equiv \psionenorm{\bUpsilon(\bX) \{\psi(\bX) + T \pi^{-1}(\bX) \varepsilon(\Z)\}}
 \leq \;  \psitwonorm{\bOmega \bPsi(\bX)} \{\psitwonorm{\psi(\bX)} + \psitwonorm{\varepsilon(\Z)} \deltapi^{-1} \} \; \leq \;   \sigmabUps (\sigmapsi + \deltapi^{-1} \sigmaeps)  \; =:  \sigma_{\bGamma_0},  \nonumber
\end{eqnarray}
where the steps follow from using Lemma \ref{lem:1:GenProp} (v) and (i) (c). Consequently, using \eqref{eq7:pf:HDinf:thm} and Lemma \ref{lem:1:GenProp} (iv) (a), we have: uniformly in $j \in \{1, \hdots, d\}$,
\begin{equation*}
\; \rho_{\bGamma_0, j} \; := \; \E\{|\bGamma_{0[j]}(\bZ)|^3 \} \; \leq \; 6 \sigma_{\bGamma_0}^3 < \;  \infty \;\; \mbox{and} \;\; \sigma_{0,j}^2 \; := \; \E\{|\bGamma_{0[j]}(\bZ)|^2  \} \; > \; c_0^2,
\end{equation*}
where the second result is due to the lower bound condition assumed on $ \sigma_{0,j}$ with the constant $c_0 > 0$ as defined there. Hence, $\rho_{\bGamma_0, j} /\sigma_{0,j}^3 \leq 6 \sigma_{\bGamma_0}^3 / c_0^3 < \infty$ uniformly in $j \in \{1, \hdots, d\}$. Thus, the Lyapunov moment conditions are now verified (uniformly) for each coordinate of $\bGamma_0(\bZ) \equiv \bOmega \bPsi_0(\bZ)$. Note also that $\E\{\bGamma_0(\bZ) \} = \bzero$ since $\E\{ \bpsi_0(\bZ) \} = \bzero$, as shown earlier. Finally, observe that $\sigma_{0,j}^{-1} |\bDelta_{n [j]}| \leq c_0^{-1} \| \bDelta_n \|_{\infty} = o_{\P}(\nnhalf)$. Hence, by Lyapunov's CLT along with multiple uses of Slutsky's Theorem, we have: for each  $1 \leq j \leq d$,
\begin{eqnarray}
&& \quad \sqrt{n}\sigma_{0,j}^{-1}\left(\bthetatil_{\DDR [j]} - \btheta_{0 [j]}\right) \; = \; \frac{1}{\sqrt{n}\sigma_{0,j}}\sum_{i=1}^n \bGamma_{0 [j]}(\bZ_i) + \sqrt{n} \sigma_{0,j}^{-1} \bDelta_{n [j]|} \label{eq8:pf:HDinf:thm}  \\
&& \quad = \; \frac{1}{\sqrt{n}\sigma_{0,j}}\sum_{i=1}^n \bGamma_{0 [j]}(\bZ_i) + o_{\P}(1) \;\; \convd \; \Nsc(0,1) + o_{\P}(1) \;\; \convd \; \Nsc(0,1). \nonumber \qed
\end{eqnarray}
This establishes the first of the two (coordinatewise) asymptotic normality claims in Theorem \ref{HDINF:THM}. For the second claim, we mainly need to establish the consistency of the estimator $\sigmahat_{0,j}^2$ of $\sigma_{0,j}^2$, uniformly in $1 \leq j \leq d$, as claimed. The asymptotic normality then follows directly from Slutsky's Theorem and \eqref{eq8:pf:HDinf:thm}. To establish the consistency, we first note that for all $1 \leq j \leq d$,
\begin{align}
& \qquad \sigma_{0,j}^2 - \sigma_{0,j}^2 \; \equiv \; \frac{1}{n} \sum_{i=1}^n \bGammahat_{0 [j]}^2 (\bZ_i) - \E\{\bGamma_{0[j]}^2(\bZ)\} \label{eq9:pf:HDinf:thm} \\
&  = \left\{\frac{1}{n} \sum_{i=1}^n \bGammahat_{0 [j]}^2 (\bZ_i) - \frac{1}{n} \sum_{i=1}^n \bGamma_{0 [j]}^2 (\bZ_i)\right\} + \left\{ \frac{1}{n} \sum_{i=1}^n \bGamma_{0 [j]}^2 (\bZ_i) - \E\{\bGamma_{0[j]}^2(\bZ)\} \right\},  \nonumber %\\
\end{align}
where $\bGamma_0(\bZ) =  \bOmega \bpsi_0(\bZ)$ and $\bGammahat_0(\bZ)  =  \bOmegahat \bpsihat_0(\bZ)$ with $\bpsihat_0(\bZ)$ given by:
\begin{equation*}
\bpsihat_0(\bZ) \; := \; \left[\{\mhat(\bX) - \bPsi(\bX)'\bthetahatDR\}+ \frac{T}{\pihat(\bX)} \{Y - \mhat(\bX)\}\right] \bPsi(\bX). \nonumber
\end{equation*}
Next, recall from \eqref{decomp:eqn} the terms $\bTzero(\bZ), \bTpi(\bZ), \bTm(\bZ)$ and $\bRpim(\bZ)$ defined in \eqref{tzero}-\eqref{rmpi}, with $g(\bX,\btheta_0) = \bPsi(\bX)'\btheta_0$ and $h(\bX) = - 2\bPsi(\bX)$ in this case, and let $\bTzero^*(\bZ), \bTpi^*(\bZ), \bTm^*(\bZ)$ and $ \bRpim^*(\bZ)$ respectively denote their versions with $h(\bX)$ replaced by $h^*(\bX) = \bPsi(\bX)$. Then, we have: $\bpsi_0(\bZ) = \bTzero^*(\bZ)$ and
\begin{equation*}
 \bpsihat_0(\bZ) \; = \; \bTzero^*(\bZ) + \bTpi^*(\bZ) - \bTm^*(\bZ) - \bRpim^*(\bZ) - \bPsi(\bX)\bPsi(\bX)' (\bthetahatDR - \btheta_0). \nonumber
\end{equation*}
Hence for all $1 \leq i \leq n$, $\bGammahat_0(\bZ_i) - \bGamma_0(\bZ_i)$ satisfies:
\begin{eqnarray}
%& \mbox{Hence} \;\; \forall \; i, \;\;
&& \quad\;\; \bGammahat_0(\bZ_i) - \bGamma_0(\bZ_i) \; \equiv \; \bOmegahat \bpsihat_0(\bZ_i) - \bOmega\bpsi_0(\bZ_i)
\; = \; (\bOmegahat - \bOmega)\bTzero^*(\bZ_i) \label{eq10:pf:HDinf:thm}  \\
&&  \quad \quad + \bOmegahat\{\bTpi^*(\bZ_i) - \bTm^*(\bZ_i) - \bRpim^*(\bZ_i)\} - \bOmegahat\bPsi(\bX_i)\bPsi(\bX_i)' (\bthetahatDR - \btheta_0). \nonumber
\end{eqnarray}
Under Assumption \ref{subgaussian:assmpn} (a) and \ref{base:assmpns} (b), similar to the proof of \eqref{eq7:pf:HDinf:thm}, we have using Lemma \ref{lem:1:GenProp} (v) and (i) (c): $\bTzero^*(\bZ) \equiv - \half \bTzero(\bZ)$ is sub-exponential with
\begin{equation*}
 \psionenorm{\bTzero^*(\bZ)}  \; \leq \;  \psitwonorm{\bPsi(\bX)} \{\psitwonorm{\psi(\bX)} + \psitwonorm{\varepsilon(\Z)} \deltapi^{-1} \} \; \leq \;   \sigmabh (\sigmapsi + \deltapi^{-1} \sigmaeps).  \nonumber
\end{equation*}
Hence, $\max_{1 \leq i \leq n} \|\bTzero^*(\bZ_i) \|_{\infty} \equiv \max_{1 \leq i \leq n, 1 \leq j \leq d } |\bT_{0 [j]}^*(\bZ_i)| = O_{\P}(\log (nd))$  due to Lemma \ref{lem:1:GenProp} (vi). Using this along with Assumption \ref{HDinf:assmpn} (a), we have:
\begin{align}
\max_{1 \leq i \leq n} \| (\bOmegahat - \bOmega) \bTzero^*(\bZ_i)\|_{\infty} \; \leq \;  \| \bOmegahat - \bOmega \|_1 \max_{i} \|\bTzero^*(\bZ_i) \|_{\infty}  \; =  O_{\P}\left(r_n \log(nd)\right). \label{eq11:pf:HDinf:thm}
\end{align}
Now, since $\bPsi(\bX)$, $\varepsilon(\Z)$ and $\bOmega \bPsi(\bX)$ are all sub-Gaussian due to Assumptions \ref{subgaussian:assmpn} (a) and \ref{HDinf:assmpn} (b), using Lemma \ref{lem:1:GenProp} (vi), we have:
\begin{equation}
\max_{1 \leq i \leq n} \{\| \bPsi(\bX_i) \|_{\infty} + \| \bOmega \bPsi(\bX_i) \|_{\infty} + |\varepsilon(\Z_i)| \} \; = \; O_{\P}\left(\sqrt{\log (nd)}\right). \label{eq12:pf:HDinf:thm}
\end{equation}
Next, recalling the proof techniques and notations introduced in the proofs of Theorems \ref{TPI:THM}, \ref{TM:THM} and \ref{RMPI:THM}, as well as using Assumption \ref{base:assmpns} (b),  we have:
\begin{align}
& \max_{1 \leq i \leq n} \| \bTpi^*(\bZ_i) \|_{\infty}  \;\;\; \leq \; \deltapi^{-1} \pitilninfn \Deltapininfn  \max_{1 \leq i \leq n} \{\| \bPsi(\bX_i) \|_{\infty} |\varepsilon(\Z_i)|\}, \label{eq13:pf:HDinf:thm}\\
%& \max_{1 \leq i \leq n} \| \bOmega \bTpi^*(\bZ_i) \|_{\infty}  \; \leq \; \deltapi^{-1} \pitilninfn \Deltapininfn \max_{1 \leq i \leq n} \{\| \bOmega \bPsi(\bX_i) \|_{\infty} |\varepsilon(\Z_i)|\}; \\
& \max_{1 \leq i \leq n} \| \bTm^*(\bZ_i) \|_{\infty} \;\; \leq \; (1 + \deltapi^{-1}) \Deltastarmninfn \max_{1 \leq i \leq n} \| \bPsi(\bX_i)\|_{\infty} \;\; \mbox{and} \nonumber \\
%& \max_{1 \leq i \leq n} \| \bOmega \bTm^*(\bZ_i) \|_{\infty} \; \leq \; (1 + \deltapi^{-1}) \Deltastarmninfn \max_{1 \leq i \leq n} \| \bOmega \bPsi(\bX_i)\|_{\infty}; \;\; \mbox{and} \\
& \max_{1 \leq i \leq n} \| \bRpim^*(\bZ_i) \|_{\infty}  \leq \; \deltapi^{-1} \pitilninfn \Deltapininfn \Deltastarmninfn \max_{1 \leq i \leq n} \| \bPsi(\bX_i) \|_{\infty}, \nonumber %\\
%& \max_{1 \leq i \leq n} \| \bOmega \bRpim^*(\bZ_i) \|_{\infty} \; \leq \; \deltapi^{-1} \pitilninfn \Deltapininfn \Deltastarmninfn \max_{1 \leq i \leq n} \| \bOmega \bPsi(\bX_i) \|_{\infty} \nonumber
\end{align}
where $\pitilninfn$  and $\Deltapininfn$ are as in \eqref{tpicont:notn1}-\eqref{tpicont:notn2} and $\Deltastarmninfn$ is as defined in \eqref{tmcont:notn1} and \eqref{rmpicont:eqn1}. Using \eqref{rmpicont:eqn2}, \eqref{rmpicont:eqn3} and \eqref{rmpicont:eqn4}, we further have:
\begin{align}
\pitilninfn \Deltapininfn  =  O_{\P}(\vnpi \sqrt{\log n}) \;\; \mbox{and} \;\; \Deltastarmninfn  =  O_{\P}( \vnbarm\sqrt{\log n}). \label{eq14:pf:HDinf:thm}
\end{align}
Using \eqref{eq12:pf:HDinf:thm} and \eqref{eq14:pf:HDinf:thm} in \eqref{eq13:pf:HDinf:thm}, we then have:
\begin{align}
& \max_{1 \leq i \leq n} \{ \| \bTpi^*(\bZ_i) \|_{\infty} + \| \bTm^*(\bZ_i) \|_{\infty} + \| \bRpim^*(\bZ_i) \|_{\infty} \} \; = \; O_{\P}(\widetilde{v}_n),  \label{eq15:pf:HDinf:thm}  \\
& \mbox{where} \;\; \widetilde{v}_n \; := \; \{(\vnpi + \vnbarm) \sqrt{\log n} + \vnpi \vnbarm (\log n) \} \log (nd). \nonumber
\end{align}
Using similar arguments as above, with $\bPsi(\bX)$ replaced by $\bOmega \bPsi(\bX)$ in \eqref{eq13:pf:HDinf:thm} throughout, and using \eqref{eq12:pf:HDinf:thm} and \eqref{eq14:pf:HDinf:thm}, we also have:
\begin{equation}
\max_{1 \leq i \leq n} \{ \| \bOmega \bTpi^*(\bZ_i) \|_{\infty} + \| \bOmega \bTm^*(\bZ_i) \|_{\infty} + \|\bOmega \bRpim^*(\bZ_i) \|_{\infty} \} \; = \; O_{\P}(\widetilde{v}_n).  \label{eq16:pf:HDinf:thm}  \\
\end{equation}
Combining \eqref{eq15:pf:HDinf:thm} and \eqref{eq16:pf:HDinf:thm} along with Assumption \ref{HDinf:assmpn} (a), we have:
\begin{align}
& \max_{1 \leq i \leq n} \| \bOmegahat\{\bTpi^*(\bZ_i) - \bTm^*(\bZ_i) - \bRpim^*(\bZ_i) \} \|_{\infty} \nonumber\\
& \;\; \leq \; \max_{1 \leq i \leq n} \{\| \bOmega \bTpi^*(\bZ_i) \|_{\infty} +  \| \bOmega \bTm^*(\bZ_i) \|_{\infty} + \|\bOmega \bRpim^*(\bZ_i) \|_{\infty}\} \nonumber \\
& \;\; \quad + \|\bOmegahat - \bOmega \|_1 \max_{1 \leq i \leq n} \{ \| \bTpi^*(\bZ_i) \|_{\infty} + \| \bTm^*(\bZ_i) \|_{\infty} + \| \bRpim^*(\bZ_i) \|_{\infty} \} \nonumber \\
& \;\; = \; O_{\P}\left( \widetilde{v}_n + r_n \widetilde{v_n}\right) \; = \; O_{\P}\left(\widetilde{v}_n\right),  \quad \mbox{since} \;\; r_n = o(1). \label{eq17:pf:HDinf:thm}
\end{align}
Now turning to the third term in \eqref{eq10:pf:HDinf:thm}, under Assumption \ref{HDinf:assmpn}, and using \eqref{eq:extra:pf:HDinf:thm} and \eqref{eq12:pf:HDinf:thm} along with multiple uses of $L_1$-$L_{\infty}$ type bounds, we have:
\begin{align}
& \|\bOmegahat \bPsi(\bX_i) \bPsi(\bX_i)' (\bthetahatDR - \btheta_0) \|_{\infty} \; \leq \; \|\bOmega \bPsi(\bX_i)\|_{\infty} \|\bPsi(\bX_i)\|_{\infty} \|\bthetahatDR - \btheta_0 \|_1   \nonumber \\
&\qquad +  \|\bOmegahat - \bOmega\|_1  \|\bPsi(\bX_i)\|_{\infty} \|\bPsi(\bX_i)\|_{\infty} \|\bthetahatDR - \btheta_0\|_1  \;\;\; \forall \; 1 \leq i \leq n, \;\;\; \mbox{so that} \nonumber\\
& \max_{1 \leq i \leq n} \|\bOmegahat \bPsi(\bX_i) \bPsi(\bX_i)' (\bthetahatDR - \btheta_0) \|_{\infty} \; \leq \; O_{\P}\left( s \sqrt{\frac{\log d}{n}} \log(nd) (1+ r_n) \right). \label{eq18:pf:HDinf:thm}
\end{align}
Applying \eqref{eq11:pf:HDinf:thm}, \eqref{eq17:pf:HDinf:thm} and \eqref{eq18:pf:HDinf:thm} in \eqref{eq10:pf:HDinf:thm} via triangle inequality, we get
\begin{align}
& \max_{1 \leq i \leq n} \| \bGammahat_0(\bZ_i) - \bGamma_0(\bZ_i) \|_{\infty} \; = \; O_{\P} \left( r_n \log(nd) +  \widetilde{v}_n +  s \sqrt{\frac{\log d}{n}} \log(nd) \right). \label{eq19:pf:HDinf:thm}
\end{align}
Finally, note that owing to \eqref{eq7:pf:HDinf:thm}, $\bGamma_0(\bZ)$ is sub-exponential with $\psionenorm{\bGamma_0(\bZ)}$ $\leq \sigma_{\bGamma_0} < \infty$. Hence, using Bernstein's Inequality (Lemma \ref{lem:4:Bernstein}), we have:
\begin{align}
& \max_{1 \leq j \leq d} \left\{\frac{1}{n} \sum_{i=1}^n | \bGamma_{0 [j]} (\bZ_i)|\right\} \; \leq \; \max_{1 \leq j \leq d} \E\{|\bGamma_{0 [j]} (\bZ)|\} + O_{\P}\left( \sqrt{\frac{\log d}{n}} + \frac{\log d}{n}\right), \label{eq20:pf:HDinf:thm}
\end{align}
which is $O_{\P}(1)$ since $ \E\{|\bGamma_{0 [j]} (\bZ)|\} \leq \sigma_{\bGamma_0}$ $\forall \; j$ owing to Lemma \ref{lem:1:GenProp} (iv) (a).

Applying \eqref{eq19:pf:HDinf:thm} and \eqref{eq20:pf:HDinf:thm} to the first term in \eqref{eq9:pf:HDinf:thm} via several uses of the triangle inequality and that $a^2 - b^2 = (a-b)(a+b)$ $\forall\; a, b \in \R$, we have:
\begin{eqnarray}
&& \quad \max_{1 \leq j \leq d} \left|\frac{1}{n} \sum_{i=1}^n \bGammahat_{0 [j]}^2 (\bZ_i) - \frac{1}{n} \sum_{i=1}^n  \bGamma_{0 [j]}^2 (\bZ_i)\right| \label{eq21:pf:HDinf:thm} \\
&& = \; \max_{1 \leq j \leq d} \frac{1}{n} \sum_{i=1}^n |\bGammahat_{0 [j]}(\bZ_i) - \bGamma_{0 [j]}(\bZ_i)| \; |\bGammahat_{0 [j]}(\bZ_i) - \bGamma_{0 [j]} (\bZ_i) + 2\bGamma_{0 [j]} (\bZ_i)| \nonumber \\
&& \leq \; \max_{1 \leq i \leq n} \| \bGammahat_0(\bZ_i) - \bGamma_0(\bZ_i) \|_{\infty} \left[\max_{1 \leq j \leq d} \left\{\frac{2}{n} \sum_{i=1}^n | \bGamma_{0 [j]} (\bZ_i)|\right\} + o_{\P}(1)\right] \nonumber \\
&& = \; O_{\P} \left( r_n \log(nd) +  \widetilde{v}_n +  s \sqrt{\frac{\log d}{n}} \log(nd) \right). \nonumber
\end{eqnarray}
Furthermore, since $\psionenorm{\bGamma_0(\bZ)}$ $\leq \sigma_{\bGamma_0}$, we have: $\max_{1 \leq j \leq d} \psialphanorm{\Gamma_{0[j]}^2(\bZ)}\leq \sigma_{\bGamma_0}^2$ with $\alpha = \half$ owing to Lemma \ref{lem:1:GenProp} (v). Hence, using Lemma \ref{lem:6:VarianceTailBound}, we get
\begin{align}
\max_{1 \leq j \leq d} \left|\frac{1}{n} \sum_{i=1}^n \bGamma_{0 [j]}^2 (\bZ_i) - \E\{ \bGamma_{0 [j]}^2 (\bZ) \}\right| \; \leq \; O_{\P}\left( \sqrt{\frac{\log d}{n}} + \frac{(\log n)^2 (\log d)^2}{n}\right). \label{eq22:pf:HDinf:thm}
\end{align}

Hence, combining \eqref{eq21:pf:HDinf:thm} and \eqref{eq22:pf:HDinf:thm} via a triangle inequality and applying them in \eqref{eq9:pf:HDinf:thm}, and recalling $\widetilde{v}_n$ from \eqref{eq15:pf:HDinf:thm}, we finally have: %\eqref{
\begin{eqnarray}
& \quad \max_{1 \leq j \leq d} | \sigmahat_{0,j}^2 - \sigma_{0,j}^2 |   \; = \; O_{\P} (\tau_n) \; = \; o_{\P}(1), \quad \mbox{where} \label{eq23:pf:HDinf:thm} \\
& \qquad \tau_n \; := \; r_n \log(nd) +  \widetilde{v}_n +  s \sqrt{\frac{\log d}{n}} \log(nd) +  \sqrt{\frac{\log d}{n}} + \frac{(\log n)^2 (\log d)^2}{n}. \nonumber
\end{eqnarray}
Note that we have implcitly assumed $\tau_n$ to be $o(1)$ here. A careful analysis will reveal that this entails essentially the same rate conditions as those needed for the ALE \eqref{desparsifiedest:ALE} in Theorem \ref{HDINF:THM} to hold, upto an additional factor of $\sqrt{\log (nd)}$ appearing in the first three terms of $\tau_n$, as well as the presence of the last term in $\tau_n$ (which is expected to be of lower order than the rest).

\eqref{eq23:pf:HDinf:thm} therefore establishes the desired (uniform) consistency of the standard error estimators $\{ \sigmahat_{0,j}\}_{j=1}^d$, and also establishes the second asymptotic normality result in Theorem \ref{HDINF:THM} through use of \eqref{eq8:pf:HDinf:thm}, \eqref{eq23:pf:HDinf:thm} and Slutsky's Theorem, as discussed earlier. This completes the proof of Theorem \ref{HDINF:THM}. \qed

\section{Proofs of all Results in Appendix \ref{SEC:NUISANCE:SEPARATE:SUPP}}\label{pfs:nuisance} %Section \ref{NUISANCE}}\label{pfs:nuisance}
%We present here the proofs of Theorems \ref{parametric:thm}-\ref{KS:mainthm2}, as well as the assumptions required for Theorems \ref{KS:mainthm1} and \ref{KS:mainthm2}. We begin with the proof of Theorem \ref{parametric:thm}.

%We present here the proofs of Theorems \ref{parametric:thm}-\ref{KS:mainthm2} stated in Appendix \ref{SEC:NUISANCE:SEPARATE:SUPP}.

\subsection{Proof of Theorem \ref{parametric:thm}}\label{pf:parametric:thm}

Under the assumed conditions, we have:
\begin{align}
& \sup_{\bx \in \Xsc} | g\{\bbetahat'\bPsi(\bx) \} - g\{\bbeta_0'\bPsi(\bx) \}| \; \leq \; C_g \sup_{\bx \in \Xsc} | (\bbetahat - \bbeta_0)'\bPsi(\bx) |  \nonumber \\
 & \qquad \leq \; C_g \|\bbetahat - \bbeta_0 \|_1 \sup_{\bx \in \Xsc} \|\bPsi(\bX) \|_{\infty} \;\; \leq \; C_g C_{\bPsi} \| \bbetahat - \bbeta_0 \|_1. \label{eq1:pf:parametric:thm}
\end{align}
where the steps follow from the Lipschitz continuity of $g(\cdot)$ and the boundedness of $\bPsi(\cdot)$ along with an $L_1$-$L_{\infty}$ bound.
Now, under the $L_1$ error bound assumed for $\bbetahat$ and using a simple union bound argument, we have: $\forall \; \epsilon \geq 0$,
\begin{align*}
& \P(\|\bbetahat - \bbeta_0\|_1 > \epsilon) \nonumber \\
&   = \; \P(\|\bbetahat - \bbeta_0\|_1 > \epsilon, \|\bbetahat - \bbeta_0\|_1 \leq a_n) + \P(\|\bbetahat - \bbeta_0\|_1 > \epsilon, \|\bbetahat - \bbeta_0\|_1 > a_n) \nonumber \\
& \; \leq \; \P(\|\bbetahat - \bbeta_0\|_1 > \epsilon , \|\bbetahat - \bbeta_0\|_1 \leq a_n) + \P(\|\bbetahat - \bbeta_0\|_1 > a_n) \nonumber \\
& \; \leq \; \P(\|\bbetahat - \bbeta_0\|_1 > \epsilon  \given \|\bbetahat - \bbeta_0\|_1 \leq a_n)\P(\|\bbetahat - \bbeta_0\|_1 \leq a_n) +q_n \nonumber \\
& \; \leq \; 2 \exp\{-\epsilon^2/(2a_n^2)\} (1- q_n)+ q_n \; \leq \; 2 \exp\{-\epsilon^2/(2a_n^2)\}  + q_n, \nonumber
\end{align*}
where the final bounds follow from an application of Hoeffding's inequality for bounded random variables (or using Lemma \ref{lem:1:GenProp} (ii)(d) and (iii)(a)). Using this bound along with that in \eqref{eq1:pf:parametric:thm}, we then have: for any $\epsilon \geq 0$,
\begin{align*}
& \P[\sup_{\bx \in \Xsc} | g\{\bbetahat'\bPsi(\bx) \} - g\{\bbeta_0'\bPsi(\bx) \}| > C_g C_{\bPsi} \epsilon ] \; \leq \; 2 \exp\{-\epsilon^2/(2a_n^2)\} + q_n. \nonumber
\end{align*}
The desired result then follows by setting $\epsilon = \sqrt{2}a_n t$ for any $t \geq 0$. \qed

\subsection{Proof Sketch for Theorems \ref{KS:mainthm1} and \ref{KS:mainthm2}}\label{pfs:KS:mainthms} We first introduce two key supporting lemmas regarding tail bounds for $\lhat(\bbetahat,\bx)$ both of which will be useful for proving Theorems \ref{KS:mainthm1} and \ref{KS:mainthm2}.
We begin with a few notations and a sketch of our analysis to set up and prove these lemmas, and subsequently, use them to complete the proofs of the main theorems.

To analyze the behavior of $\lhat(\bbetahat,\bx)$, we first introduce the corresponding \emph{hypothetical} version of the estimator where the index parameter $\bbeta$ is treated as known. Specifically, for any $\bx \in \Xsc$, let us define the `oracle' `estimator':
\begin{equation*}
\ltil(\bbeta, \bx) \; := \; \frac{1}{nh} \sum_{i=1}^n Z_i K \left( \frac{\bbeta'\bX_i - \bbeta'\bx}{h} \right) \; \equiv \; \frac{1}{nh} \sum_{i=1}^n Z_i K \left( \frac{W_i - w_{\bx} }{h} \right). \nonumber
\end{equation*}
Then, we note that the error $\lhat(\bbetahat, \bx) - l(\bbeta,\bx)$ of the original estimator $\lhat(\cdot)$ admits the following decomposition. For any $\bx \in \Xsc$,
\begin{align}
& | \lhat(\bbetahat, \bx) - l(\bbeta, \bx) |  \; \leq \; | \ltil(\bbeta, \bx)  - l(\bbeta,\bx) | + |\lhat(\bbetahat,\bx) - \ltil(\bbeta, \bx)| \nonumber \\
& \qquad \leq \; | \ltil(\bbeta, \bx)  - \E \{ \ltil(\bbeta, \bx) \} | + | \E \{ \ltil(\bbeta, \bx) \}  - l(\bbeta,\bx) | + |\lhat(\bbetahat,\bx) - \ltil(\bbeta, \bx)| \nonumber \\
& \qquad  =: \; |\Stil_n(\bx)| + |\Sbar_n(\bx)| + |\Rhat_n(\bx)| \;\; \mbox{(say)}. \nonumber
\end{align}
Thus, to analyze the behavior of $ | \lhat(\bbetahat, \bx) - l(\bbeta, \bx) | $, it suffices to control each of the quantities $\Stil_n(\bx)$,  $\Sbar_n(\bx)$ and $\Rhat_n(\bx)$. We now proceed towards obtaining non-asymptotic pointwise tail bounds for these quantities. We first focus on $\Stil_n(\bx)$ and $\Sbar_n(\bx)$ which involve only the hypothetical estimator $\ltil(\cdot)$.
\begin{lemma}[Characterizing the tail bounds for $\Stil_n(\bx)$ and $\Sbar_n(\bx)$]\label{KS:thm1}
Under Assumption \ref{KS:assmpn1} (a)-(c), we have: for any fixed $\bx \in \Xsc$ and any $t \geq 0$,
\begin{equation*}
\P \left\{ |\Stil_n(\bx)| \; > \; C_1 \frac{t}{\sqrt{nh}}  + C_2 \frac{t^2 \sqrt{\log n} }{nh} \right\} \; \leq \; 3 \exp(-t^2),
\end{equation*}
where $C_1 := 7(B_1 C_K M_K)^{1/2}$ and $C_2 := D \sigma_Z M_K$ for some absolute constant $D > 0$. Further, under Assumption \ref{KS:assmpn1} (d), we have:
\begin{equation*}
|\Sbar_n(\bx)| \; \leq \; C_3 h^2 \quad \mbox{uniformly in} \;\; \bx \in \Xsc, \quad \mbox{where} \;\; C_3 \; := \; B_2 R_K.
\end{equation*}
\end{lemma}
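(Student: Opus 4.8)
The plan is to bound the two terms $\Sbar_n(\bx)$ and $\Stil_n(\bx)$ by entirely separate arguments: the former is a deterministic kernel-smoothing bias that I would control via a second-order Taylor expansion, while the latter is a centered i.i.d.\ average to which I would apply the variance-based concentration bound of Lemma \ref{lem:6:VarianceTailBound}.

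For the bias term $\Sbar_n(\bx) = \E\{\ltil(\bbeta,\bx)\} - l(\bbeta,\bx)$, I would first evaluate the expectation by conditioning on $W = \bbeta'\bX$. Since $\E(Z \mid W = w) = m_{\bbeta}(w)$, the conditional expectation turns the integrand into $l_{\bbeta}(w) = m_{\bbeta}(w) f_{\bbeta}(w)$, so that, after the change of variables $u = (w - w_{\bx})/h$,
\begin{equation*}
\E\{\ltil(\bbeta,\bx)\} = \frac{1}{h}\int l_{\bbeta}(w)\, K\left(\frac{w - w_{\bx}}{h}\right) dw = \int l_{\bbeta}(w_{\bx} + hu)\, K(u)\, du.
\end{equation*}
I would then Taylor-expand $l_{\bbeta}(w_{\bx} + hu)$ to second order around $w_{\bx}$ and invoke the second-order kernel conditions in Assumption \ref{KS:assmpn1} (d), namely $\int K(u)\,du = 1$ and $\int u K(u)\,du = 0$, which annihilate the zeroth- and first-order contributions and leave exactly $l(\bbeta,\bx)$ plus a quadratic remainder. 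Bounding the remainder by $\| l_{\bbeta}''\|_{\infty}\int u^2 |K(u)|\,du \leq B_2 R_K$ yields $|\Sbar_n(\bx)| \leq C_3 h^2$ with $C_3 = B_2 R_K$; since this bound depends only on $B_2$ and $R_K$, it holds uniformly in $\bx$.

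For the stochastic term I would write $\Stil_n(\bx) = \frac{1}{n}\sum_{i=1}^n X_i$, where $X_i := h^{-1}\big(Z_i K((W_i - w_{\bx})/h) - \E[Z_i K((W_i - w_{\bx})/h)]\big)$ is a centered i.i.d.\ average, and then feed two ingredients into Lemma \ref{lem:6:VarianceTailBound} with $p = 1$ and $\alpha = 2$. First, for the variance, conditioning on $W$ and changing variables as above gives $\E(X_i^2) \leq h^{-2}\,\E[Z^2 K^2((W - w_{\bx})/h)] \leq h^{-1}\| m_{\bbeta}^{(2)} f_{\bbeta}\|_{\infty}\int K^2(u)\,du \leq h^{-1} B_1 M_K C_K$, using $\int K^2 \leq \| K\|_{\infty}\int |K| \leq M_K C_K$ and Assumption \ref{KS:assmpn1} (c); thus $\Gamma_n \leq B_1 M_K C_K / h$. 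Second, for the Orlicz norm, since $Z$ is sub-Gaussian with $\psitwonorm{Z} \leq \sigma_Z$ and $K$ is bounded by $M_K$, Lemma \ref{lem:1:GenProp} (v) gives $\psitwonorm{Z K(\cdot)} \leq M_K \sigma_Z$, and after centering (which changes the norm by at most an absolute constant factor) and the $h^{-1}$ scaling, $\psitwonorm{X_i} \leq c_0 M_K \sigma_Z / h =: K_n$ for an absolute $c_0 > 0$. The key manoeuvre is then to apply Lemma \ref{lem:6:VarianceTailBound} with its deviation parameter set to $t^2$, so that the failure probability becomes $3\exp(-t^2)$: the leading term becomes $7\sqrt{\Gamma_n t^2 / n} \leq C_1 t/\sqrt{nh}$ with $C_1 = 7(B_1 C_K M_K)^{1/2}$, while the second term, with $\alpha = 2$ so that $1/\alpha = 1/2$ and $\alpha^* = \min\{\alpha,1\} = 1$, becomes $C_{\alpha} K_n (\log n)^{1/2} t^2 / n \leq C_2 t^2 \sqrt{\log n}/(nh)$ with $C_2 = D \sigma_Z M_K$. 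This is precisely the asserted bound.

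I expect the main obstacle to be bookkeeping rather than conceptual: correctly reconciling the $1 - 3e^{-s}$ tail and the $(t + \log p)^{1/\alpha^*}$ exponent of Lemma \ref{lem:6:VarianceTailBound} with the target form involving $\exp(-t^2)$ together with the separate $t$ and $t^2$ powers. This is what forces the substitution $s = t^2$ and a careful accounting of how the choice $\alpha = 2$ splits the bound into the $1/\sqrt{nh}$ (variance-driven) and $1/(nh)$ (Orlicz-driven) rates, and of how the constants in the assumptions combine into $C_1, C_2, C_3$. The kernel-calculus steps are otherwise standard, once one keeps in mind that for fixed $\bx$ and fixed (known) $\bbeta$, the point $w_{\bx} = \bbeta'\bx$ is simply a fixed scalar, so all integrals reduce to ordinary one-dimensional kernel computations.
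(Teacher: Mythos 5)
Your proposal is correct and follows essentially the same route as the paper's proof: the bias term is handled by conditioning on $W$, a change of variables, and a second-order Taylor expansion exploiting $\int K = 1$ and $\int uK = 0$, while the stochastic term is controlled by computing the same variance bound $\Gamma_n \leq B_1 M_K C_K/h$ and Orlicz bound $K_n \propto \sigma_Z M_K/h$ and then applying Lemma \ref{lem:6:VarianceTailBound} with $p=1$, $\alpha=2$ and the deviation parameter set to $t^2$. The constants and the split into the $t/\sqrt{nh}$ and $t^2\sqrt{\log n}/(nh)$ terms come out exactly as in the paper.
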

Hence, for any $\bx \in \Xsc$ and $t \geq 0$, with probability at least $1 - 3 \exp(-t^2)$,
\begin{equation}
|\ltil(\bbeta, \bx) - l(\bbeta, \bx)| \; \leq \; C_1\frac{t}{\sqrt{nh}}  + C_2 \frac{t^2 \sqrt{\log n}}{nh} + C_3 h^2, \quad \forall \; \bx \in \Xsc. \qed \label{LKA:tailbound1}
\end{equation}

Next, we aim to control the term $\Rhat_n(\bx)$ whose behavior signifies the nature and extent of the additional price one pays due to estimation of $\bbeta$.
%We first enlist the assumptions we further require for obtaining the bounds on $\Rhat_n(\bx)$.

Using a first order Taylor series expansion of $\lhat(\bbetahat,\bx)$ around $\lhat(\bbeta,\bx) \equiv \ltil(\bbeta,\bx)$, we first rewrite $\Rhat_n(\bx) \equiv \lhat(\bbetahat,\bx) - \ltil(\bbeta,\bx)$  as:
\begin{align*}
& \Rhat_n(\bx) = (\bbetahat - \bbeta)' \left\{ \frac{1}{nh}\sum_{i=1}^n Z_i \frac{(\bX_i - \bx)}{h} K'\left( \frac{W^*_i - w_{\bx}^*}{h} \right) \right\}, \;\; \mbox{where}
\end{align*}
$\{W^*_i \}_{i=1}^n$ and $w_{\bx}^*$ are `intermediate' points that satisfy, for each $i = 1,\hdots, n$,  $ | (W^*_i  - w_{\bx}^*) - (W_i - w_{\bx}) | \leq |(\What_i - \what_{\bx}) - (W_i - w_{\bx}) | \equiv |(\bbetahat - \beta)'(\bX_i - \bx)|$.
\par\smallskip
We now rewrite the expansion above as: $\Rhat_n(\bx) \equiv \Rhat_{n,1}(\bx) + \Rhat_{n,2}(\bx)$, where
\begin{align}
& \Rhat_{n,1}(\bx) \; := \; (\bbetahat - \bbeta)' \left\{ \frac{1}{nh}\sum_{i=1}^n Z_i \frac{(\bX_i - \bx)}{h} K'\left( \frac{W_i - w_{\bx}}{h} \right) \right\} \nonumber \\
& \qquad =: \; (\bbetahat- \bbeta)'\bThat_n(\bx) \;\; \mbox{(say),} \quad \mbox{and} \;\; \Rhat_{n,2}(\bx) \; := \; \Rhat_n(\bx) - \Rhat_{n,1}(\bx). \nonumber
\end{align}
In the result below, we now characterize the tail bounds for $\Rhat_{n}(\bx)$.
\begin{lemma}[Characterizing the tail bounds for $\Rhat_{n,1}(\bx) $ and $\Rhat_{n,2}(\bx) $]\label{KS:thm2}
Under Assumption \ref{KS:assmpn2} (a), (b) and (d), and Assumption \ref{KS:assmpn1} (a) and (c), we have: for any $t \geq 0$, with probability at least  $1  - 3 \exp(-t^2) - q_n$,
\begin{align}
& |\Rhat_{n,1}(\bx) | \; \leq \; C^*_1 a_n + C_2^* \frac{a_n(t + \sqrt{\log p})}{\sqrt{nh^3}} +  C^*_3 \frac{a_n (t^2  + \log p)\sqrt{\log n}}{nh^2}, \;\; \mbox{where} \nonumber
\end{align}
$C^*_1, C^*_2, C^*_3 > 0$ are constants depending only on the constants introduced in Assumptions \ref{KS:assmpn2} and \ref{KS:assmpn1}, and $\bx \in \Xsc$ is any fixed evaluation point.

Further, under the additional condition in Assumption \ref{KS:assmpn2} (c), we have: for any $t \geq 0$, with probability at least $1  - 3 \exp(-t^2) - q_n$,
\begin{align}
& |\Rhat_{n,2}(\bx) | \;\leq \; 4 M_{\bX}^2 C^*_4 \frac{a_n^2}{h^2} + 4 M_{\bX}^2 \left(C^*_5 \frac{t a_n^2}{\sqrt{nh^5}} + C^*_6 \frac{t^2 a_n^2 \sqrt{\log n}}{nh^3} \right), \;\; \mbox{where} \nonumber \\
& \qquad \leq \; 3 \exp(-t^2) + q_n, \quad \mbox{for any fixed} \; \bx \in \Xsc \; \mbox{and any given} \; t \geq 0, \;\; \mbox{where} \nonumber
\end{align}
$C^*_4, C^*_5, C^*_6 > 0$ are constants depending only on the constants introduced in Assumptions \ref{KS:assmpn1} and \ref{KS:assmpn2}, and $\bx \in \Xsc$ is any fixed evaluation point.
\end{lemma}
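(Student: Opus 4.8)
The plan is to exploit that both remainders are \emph{bilinear} in the estimation error $(\bbetahat - \bbeta)$ and a kernel-weighted data average, so that an $L_1$--$L_{\infty}$ (H\"older) split is the natural first move. For each I would write $|\Rhat_{n,1}(\bx)| \leq \|\bbetahat - \bbeta\|_1\, \|\bThat_n(\bx)\|_{\infty}$ and analogously $|\Rhat_{n,2}(\bx)| \leq \|\bbetahat - \bbeta\|_1\, \|V_n(\bx)\|_{\infty}$, where $V_n(\bx)$ collects the $K'$-difference terms. Invoking the high-level guarantee \eqref{eq:index:L1bound}, on an event of probability at least $1 - q_n$ we have $\|\bbetahat - \bbeta\|_1 \leq a_n$; on this event the task reduces to controlling the $\ell_{\infty}$-norm of the two \emph{oracle} ($\bbeta$-fixed) data averages, which are genuine i.i.d. sums and hence amenable to the concentration tools already assembled. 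Since the bounds are pointwise in a fixed $\bx \in \Xsc$, no supremum over $\bx$ is needed and the only union bound is over the $p$ coordinates, which is the source of the $\log p$ factors.

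For $\Rhat_{n,1}(\bx) = (\bbetahat - \bbeta)'\bThat_n(\bx)$, I would split $\bThat_n(\bx)$ into its mean and a centered fluctuation. Each coordinate of the mean $\E\{\bThat_n(\bx)\}$ is computed by conditioning on $W = \bbeta'\bX$, expressing it through $\boldsymbol{\eta}_{\bbeta [j]}(\cdot)$ and $l_{\bbeta}(\cdot)$, changing variables $u = (w - w_{\bx})/h$, and integrating by parts; the boundary terms vanish because $K(u) \to 0$ at $\pm\infty$ (Assumption \ref{KS:assmpn2}(a)), leaving integrals of $\boldsymbol{\eta}_{\bbeta [j]}'$ and $l_{\bbeta}'$ against $K$, which are $O(1)$ by the derivative bounds $B_1^*, B_2^*$ (Assumption \ref{KS:assmpn2}(b)) and $\int_{\R}|K| \leq C_K$. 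This yields $\|\E\{\bThat_n(\bx)\}\|_{\infty} \leq C_1^*$ and hence the deterministic term $C_1^* a_n$. For the fluctuation I would apply the variance-in-the-leading-term inequality of Lemma \ref{lem:6:VarianceTailBound}: the summands carry a factor $Z$ (sub-Gaussian, Assumption \ref{KS:assmpn1}(a)) times bounded quantities scaled by $h^{-2}$, so the Orlicz parameter is $K_n = O(h^{-2})$ with $\alpha = 2$, while the usual kernel-localization computation gives variance $\Gamma_n = O(h^{-3})$. Substituting $t \mapsto t^2$ in that lemma and using a union bound over $j = 1,\dots,p$ produces precisely the $\tfrac{a_n(t + \sqrt{\log p})}{\sqrt{nh^3}}$ and $\tfrac{a_n(t^2 + \log p)\sqrt{\log n}}{nh^2}$ terms after multiplying by $a_n$.

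For $\Rhat_{n,2}(\bx)$ the essential device is the \emph{local} Lipschitz property of $K'$ (Assumption \ref{KS:assmpn2}(c)), which lets me dispose of the random intermediate points $W_i^*, w_{\bx}^*$ from the Taylor expansion. Writing $\Rhat_{n,2}(\bx) = (\bbetahat - \bbeta)'V_n(\bx)$ with
\begin{equation*}
V_n(\bx) = \frac{1}{nh}\sum_{i=1}^n Z_i \frac{\bX_i - \bx}{h}\left\{ K'\!\left(\frac{W_i^* - w_{\bx}^*}{h}\right) - K'\!\left(\frac{W_i - w_{\bx}}{h}\right)\right\},
\end{equation*}
I would first note that on the event $\{\|\bbetahat - \bbeta\|_1 \leq a_n\}$ the argument gap is at most $2 M_{\bX} a_n/h \leq L$ by Assumption \ref{KS:assmpn2}(d), so the local Lipschitz bound $|K'(u) - K'(v)| \leq \varphi(u)|u - v|$ applies with $u = (W_i - w_{\bx})/h$. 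Bounding $\|\bX_i - \bx\|_{\infty} \leq 2 M_{\bX}$ then gives $\|V_n(\bx)\|_{\infty} \leq (4 M_{\bX}^2 a_n/h^2)\, G_n(\bx)$, where $G_n(\bx) := \tfrac{1}{nh}\sum_i |Z_i|\, \varphi((W_i - w_{\bx})/h)$ is itself an LKA estimator with ``outcome'' $|Z|$ and bounded, integrable kernel $\varphi$. Hence $G_n(\bx)$ can be controlled by exactly the argument used for Lemma \ref{KS:thm1} and the bound \eqref{LKA:tailbound1}, giving $G_n(\bx) = O(1) + C\,t/\sqrt{nh} + C\,t^2\sqrt{\log n}/(nh)$ with probability at least $1 - 3\exp(-t^2)$. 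Multiplying through by $\|\bbetahat - \bbeta\|_1 \leq a_n$ and the prefactor $4 M_{\bX}^2 a_n/h^2$ reproduces the claimed $\tfrac{a_n^2}{h^2}$, $\tfrac{t a_n^2}{\sqrt{nh^5}}$ and $\tfrac{t^2 a_n^2 \sqrt{\log n}}{nh^3}$ terms.

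The main obstacle is the handling of the data-dependent intermediate points in $\Rhat_{n,2}$: one cannot apply concentration to $K'$ evaluated at $W_i^*$ directly, and the whole purpose of Assumption \ref{KS:assmpn2}(c)--(d) is to convert that difference into the evaluated-at-$W_i$ quantity $\varphi((W_i - w_{\bx})/h)$ at the cost of the extra $a_n/h$ factor, while the constraint $2 M_{\bX}(a_n/h) \leq L$ guarantees the Lipschitz range is respected uniformly over $i$ on the good event. Beyond this, the book-keeping of the powers of $h$ in the variance ($\Gamma_n \sim h^{-3}$) and Orlicz ($K_n \sim h^{-2}$) parameters, together with the consistent $t \mapsto t^2$ rescaling needed to turn the $e^{-t}$ tails of Lemma \ref{lem:6:VarianceTailBound} into the $e^{-t^2}$ tails of the statement, are the places where the precise constants and exponents must be tracked carefully.
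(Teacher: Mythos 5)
Your proposal is correct and follows essentially the same route as the paper's proof: the $L_1$--$L_\infty$ split against the event $\{\|\bbetahat-\bbeta\|_1\leq a_n\}$, the mean/fluctuation decomposition of $\bThat_n(\bx)$ with integration by parts for the mean and Lemma \ref{lem:6:VarianceTailBound} (with $\Gamma_n\sim h^{-3}$, $K_n\sim h^{-2}$, $\alpha=2$) for the fluctuation, and for $\Rhat_{n,2}$ the local-Lipschitz reduction of the intermediate-point difference to the i.i.d.\ average $\tfrac{1}{nh}\sum_i|Z_i|\varphi\{(W_i-w_{\bx})/h\}$, which the paper likewise controls via the same concentration lemma. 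The only cosmetic difference is that you phrase that last average as an LKA estimator handled ``as in Lemma \ref{KS:thm1}'' while the paper applies Lemma \ref{lem:6:VarianceTailBound} to $h^{-3}|Z|\varphi(\cdot)$ directly; these are the same computation.
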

With $a_n/h = o(1)$ as assumed, %$h = o(1)$, $(nh)^{-1} = o(1)$ and $a_n/h = o(1)$,
note that the second and the third terms in the bound for $\Rhat_{n,2}(\bx)$ are each dominated by the respective terms in the bound for $\Rhat_{n,1}(\bx)$ in Lemma \ref{KS:thm2}. Using this, we obtain a bound for $\Rhat_n(\bx)$ as follows: for any $t \geq 0$, with probability at least $ 1 - 6 \exp(-t^2) - 2 q_n$,
\begin{align}
| \Rhat_n(\bx) | & \; \equiv \; | \lhat(\bbetahat,\bx)  - \ltil(\bbeta,\bx)| \nonumber \\
& \; \leq \; C_1^* (a_n + a_n^2 h^{-2}) + C_2^* \frac{a_n (t + \sqrt{\log p})}{\sqrt{nh^3}} + C^*_3 \frac{a_n (t^2  + \log p)\sqrt{\log n}}{nh^2}, \label{LKA:tailbound2}
\end{align}
for some constants $C^*_1, C^*_2, C^*_3 > 0$ (possibly different from those in Lemma \ref{KS:thm2}) depending only on the constants defined in Assumptions \ref{KS:assmpn1}-\ref{KS:assmpn2}. \qed %and \ref{KS:assmpn2}. \qed

\subsection{Completing the Proof of Theorem \ref{KS:mainthm1}}\label{pf:KS:mainthm1}
\hspace{-0.08in} Combining the bounds \eqref{LKA:tailbound1} and \eqref{LKA:tailbound2} via a union bound, we then have: for any $\bx \in \Xsc$ and for any $t \geq 0$, with probability at least $1 - 9 \exp(-t^2) -2 q_n$,
\begin{align}
&  | \lhat(\bbetahat, \bx) - l(\bbeta, \bx)|  \; \leq \; | \ltil(\bbeta, \bx) - l(\bbeta, \bx)| + | \Rhat_n(\bx)| \; \leq \; C_1\frac{t}{\sqrt{nh}}  + C_2 \frac{t^2 \sqrt{\log n}}{nh} \nonumber \\
& \;\;\; + C_3 h^2 + C_1^* (a_n + a_n^2 h^{-2}) + C_2^* \frac{a_n (t + \sqrt{\log p})}{\sqrt{nh^3}} + C^*_3 \frac{a_n (t^2  + \log p)\sqrt{\log n}}{nh^2}  \nonumber %\\
\end{align}    %%% NOTE: TWO ALIGNS DELIBERATELY ADDED HERE TO BREAK UP THE EQUATIONS TO BETTER FORMAT THE FINAL SUPPLEMENT BEFORE SUBMISSION. %%%
\begin{align}
& \;\; \equiv \; D_1 \frac{t}{\sqrt{nh}} \left(1 + \frac{a_n}{h} \right) + D_2 \frac{t^2 \sqrt{\log n}} {nh} \left(1 + \frac{a_n}{h}\right) + D_3 b_n, \;\; \mbox{where} \label{LKA:tailbound:final1} \\
& \quad r_n \; := \; h^2 + a_n + \frac{a_n^2}{h^2} + \frac{a_n}{h} \sqrt{\frac{\log p}{nh}} + \frac{a_n}{h} \frac{\sqrt{\log n} \log p}{nh} \; = \; o(1) \quad \mbox{and}  \nonumber %& \;\; \leq \; D_1 \frac{t}{\sqrt{nh}} + D_2^* \frac{t^2 \sqrt{\log n}} {nh} + D_3^* b_n \label{LKA:tailbound:final}
\end{align}
$D_1, D_2, D_3 > 0$ are some constants depending on the constants $\{C_j, C_j^* \}_{j=1}^3$.

Further, with $(a_n \sqrt{\log p})/h = o(1)$  and $\{\log(np)\}/(nh) = o(1)$ by assumption, the fourth term in the definition of $r_n$ in \eqref{LKA:tailbound:final1} can be bounded as: $(a_n/h)\{\sqrt{\log p}/(nh)\} = o(1/\sqrt{nh})$ and the fifth term can be bounded as:
\begin{align*}
&\frac{a_n}{h} \frac{\sqrt{\log n} \log p}{nh}  \; \leq \; \frac{a_n\sqrt{\log p}}{h} \frac{\log(np)}{nh} = o\left( \frac{\log (np)}{nh}\right),
\end{align*}
where we used that $\sqrt{\log n} \sqrt{\log p} \leq (\log n + \log p)/2 \leq \log(np)$. Using these simplifications in \eqref{LKA:tailbound:final1} and that $a_n/h = o(1)$ by assumption, we finally have: for any $\bx \in \Xsc$ and for any $t \geq 0$, with probability at least $1 - 6 \exp(-t^2) - 2 q_n$,
\begin{align*}
& | \lhat(\bbetahat, \bx) - l(\bbeta,\bx)| \; \leq \; D_1^* \frac{t}{\sqrt{nh}} + D_2^* \frac{t^2 \sqrt{\log n}}{nh} + D_3^* b_n, \;\; \mbox{where} \\
& b_n \; := \; h^2 + a_n + \frac{a_n^2}{h^2} + \frac{1}{\sqrt{nh}} + \frac{\log(np)}{nh} \;\; \mbox{and}
\end{align*}
$D_1^*, D_2^*, D_3^* > 0$ are some constants depending only on those introduced in the assumptions. This completes the proof of Theorem \ref{KS:mainthm1}. \qed

\subsection{Completing the Proof of Theorem \ref{KS:mainthm2}}\label{pf:KS:mainthm2}

%For any $t \geq 0$, let us define:
%$$
%\epsilon_n(t) := C_1\frac{t+1}{\sqrt{nh}} + C_2 \frac{t^2 \sqrt{\log n}}{nh} + C_3 b_n, \;\; \mbox{where} \;\; b_n := h^2 + a_n + \frac{a_n^2}{h^2} + \frac{\log np}{nh}
%$$ and $C_1, C_2, C_3 > 0$ are the same constants as in Theorem \ref{KS:mainthm1}. Then
Using Theorem \ref{KS:mainthm1}, we have: for any fixed $\bx \in \Xsc$ and for any $t \geq 0$,
\begin{align}
& \P\left\{ |\lhat(\bbetahat, \bx) - l(\bbeta,\bx)| > \epsilon_n(t) \right\} \; \leq \; 9 \exp(-t^2) +2q_n \;\;\mbox{and} \nonumber \\
& \P\left\{ |\fhat(\bbetahat, \bx) - f(\bbeta,\bx)| > \epsilon_n(t) \right\} \; \leq \; 9 \exp(-t^2) +2q_n, \label{eq1:pf:KS:mainthm2}
\end{align}
where we recall that $\{\fhat(\bbetahat,\bx), f(\bbeta,\bx)\}$ is a special case of $\{\lhat(\bbetahat,\bx), l(\beta,\bx)\}$ with $Z \equiv 1$ so that Theorem \ref{KS:assmpn2} indeed applies to get both bounds above.

Next, note that $\mhat(\cdot) \equiv \lhat(\cdot)/\fhat(\cdot)$ and $m(\cdot) \equiv l(\cdot)/f(\cdot)$, so that
\begin{align*}
& |\fhat(\cdot)\{\mhat(\cdot) - m(\cdot)\}| \; = \; |\{\lhat(\cdot) - l(\cdot)\} - m(\cdot)\{ \fhat(\cdot) - f(\cdot)\}|\\
& \;\; \leq \;  |\lhat(\cdot) - l(\cdot)| + |m(\cdot)| |\fhat(\cdot) - f(\cdot)| \; \leq \; |\lhat(\cdot) - l(\cdot)| + \delta_m |\fhat(\cdot) - f(\cdot)|,
\end{align*}
where in the last step, we used $\|m(\cdot)\|_{\infty} \leq \delta_m$ by assumption. Using a simple union bound argument, we then have: for any $\bx \in \Xsc$ and for any $t \geq 0$,
\begin{align}
%\begin{eqnarray}
& \P \left\{ |\fhat(\bbetahat,\bx)\{\mhat(\bbetahat,\bx) - m(\bbetahat,\bx)\}| > (1 + \delta_m) \epsilon_n(t) \right\} \nonumber \\
& \quad \leq \; \P\left\{ |\lhat(\bbetahat, \bx) - l(\bbeta,\bx)| > \epsilon_n(t) \right\} + \P\left\{ |\fhat(\bbetahat, \bx) - f(\bbeta,\bx)| > \epsilon_n(t) \right\} \nonumber %\\
\end{align}     %%% NOTE: TWO ALIGNS DELIBERATELY ADDED HERE TO BREAK UP THE EQUATIONS TO BETTER FORMAT THE FINAL SUPPLEMENT BEFORE SUBMISSION. %%%
\begin{align}   %%% Also, this was originally written using a single eqnarray environment - apparently, that helped save a line from the numbering placement. %%
& \quad \leq \; 18 \exp(-t^2) + 4q_n, \label{eq2:pf:KS:mainthm2}
\end{align}
%\end{eqnarray}
where the final step follows from using the bounds in \eqref{eq1:pf:KS:mainthm2}.

Recall further that by assumption, $|f(\bbeta,\bx)| \equiv f(\bbeta,\bx) \geq \delta_f > 0$ $\forall \; \bx \in \Xsc$. Then, for any $\bx \in \Xsc$ and any $t_*\geq 0$ such that $\delta_f - \epsilon_n(t_*) > 0$, we have:
\begin{align}
& \P\{ |\fhat(\bbeta,\bx)| <  \delta_f - \epsilon_n(t_*)\} \; \leq \; \P\{ |\fhat(\bbeta,\bx)| <  |f(\bbeta,\bx)| - \epsilon_n(t_*)\} \nonumber \\
& \; \leq \; \P\{ |\fhat(\bbeta,\bx) - f(\bbeta,\bx)| | >  \epsilon_n(t_*)\} \; \leq \; 9 \exp(-t_*^2) + 2q_n,  \label{eq3:pf:KS:mainthm2}
\end{align}
where the penultimate bound follows since $|b| - |a| \leq | |a| -|b| | \leq |a -b| $ for any $a,b \in \R$, and the final bound follows from \eqref{eq1:pf:KS:mainthm2}. In particular, we have:
\begin{align*}
&\P\left\{ |\fhat(\bbeta,\bx)| <  \frac{\delta_f}{2}\right\} \; \leq \; 9 \exp(-t_*^2) + 2q_n, \;\; \forall \; t_* \geq 0 \;\; \mbox{such that} \;\; \epsilon_n(t_*) \leq \frac{\delta_f }{2}.
\end{align*}
Combining this bound along with \eqref{eq2:pf:KS:mainthm2}, we now have: for any $\bx \in \Xsc$ and for any $t, t_* \geq 0$ with $\epsilon_n(t_*) \leq \delta_f/2$,
\begin{align*}
& \P\left\{ |\mhat(\bbetahat,\bx) - m(\bbeta,\bx)| > \frac{2(1+\delta_m)}{\delta_f}\epsilon_n(t)\right\} \nonumber \\
&  = \; \P\left\{ |\mhat(\bbetahat,\bx) - m(\bbeta,\bx)| > \frac{2(1+\delta_m)}{\delta_f}\epsilon_n(t), |\fhat(\bbetahat,\bx)| \geq  \frac{\delta_f }{2}\right\} \nonumber \\
& \quad + \P\left\{ |\mhat(\bbetahat,\bx) - m(\bbeta,\bx)| > \frac{2(1+\delta_m)}{\delta_f}\epsilon_n(t), |\fhat(\bbetahat,\bx)| < \frac{\delta_f }{2}\right\} \nonumber \\
& \leq \; \P \left\{ |\fhat(\bbetahat,\bx)||\mhat(\bbetahat,\bx) - m(\bbetahat,\bx)| > (1 + \delta_m) \epsilon_n(t) \right\} + \P \left\{ |\fhat(\bbetahat,\bx)| < \frac{\delta_f }{2}\right\} \nonumber \\
& \leq \;  18 \exp(-t^2) + 9 \exp(-t_*^2) + 6q_n, \nonumber
\end{align*}
where the final bound follows from using \eqref{eq2:pf:KS:mainthm2}, \eqref{eq3:pf:KS:mainthm2} and the bound noted below \eqref{eq3:pf:KS:mainthm2} as a special case. This completes the proof of Theorem \ref{KS:mainthm2}. \qed

\subsection{Proof of Lemma \ref{KS:thm1}}\label{pf:ks:thm1}

Let $\bZ := (Z, \bX)$ and rewrite $\ltil(\bbeta,\bx)$ as:
$$
\ltil(\bbeta,\bx) \; = \; \frac{1}{n}\sum_{i=1}^n T_h(\bZ_i; \bx, \bbeta), \;\; \mbox{where} \;\; T_{h}(\bZ; \bx, \bbeta) := \frac{1}{h} Z K\left( \frac{W_i - w_{\bx}}{h} \right).
$$
Under Assumption \ref{KS:assmpn1} (a)-(b) and using Lemma \ref{lem:1:GenProp} (i)(b), (ii)(d) and (v), $T_{h}(\bZ; \bx, \bbeta)$ is sub-Gaussian with $\psitwonorm{T_{h}(\bZ; \bx, \bbeta)} \leq h^{-1}\sigma_{Z} M_{K}$. Hence, using Lemma \ref{lem:1:GenProp} (iv)(b) and (i)(c), we have:
$$
\psitwonorm{T_h(\bZ; \bx, \bbeta) - \E\{T_h(\bZ; \bx, \bbeta) \}} \leq 3 h^{-1}\sigma_{Z} M_{K} \quad \mbox{uniformly for all} \; \bx \in \Xsc.
$$
Further, under Assumption \ref{KS:assmpn1} (b)-(c), we have: uniformly for all $\bx \in \Xsc$,
\begin{align*}
& \Var\{T_{h}(\bZ;\bx,\bbeta) \} \; \leq \; \E\{T_{h}^2(\bZ;\bx,\bbeta) \}  \; = \; \E_{W}[\E\{T_{h}^2(\bZ;\bx,\bbeta) \medgiven W\}] \\
& \quad = \; h^{-2}\int_{\R} \E(Z^2 \medgiven W = w) K^2\{(w - w_{\bx})/h\} f_{\bbeta}(w) dw \\
& \quad \equiv \; h^{-2}\int_{\R} m_{\bbeta}^{(2)}(w) K^2\{(w - w_{\bx})/h\} f_{\bbeta}(w) dw \\
& \quad = \; h^{-1}\int_{\R} m_{\bbeta}^{(2)}(w_{\bx} + hu) f_{\bbeta}(w_{\bx} + hu) K^2(u) du \;\; \leq \; h^{-1}B_1 M_K C_K,
\end{align*}
where the penultimate step follows from a standard change of variable argument. We have thus verified all the conditions required for Lemma \ref{lem:6:VarianceTailBound} using which we now obtain: for any $t \geq 0$, with probability  at least $ 1 - 3 \exp(-t^2)$,
$$
\Stil_n(\bx) \equiv |\ltil(\bbeta, \bx) - \E\{ \ltil(\bbeta,\bx) \}| \leq \; 7 t \sqrt{\frac{B_1 M_K C_K}{nh}} + t^2 \frac{D \sigma_{Z} M_K}{nh} \sqrt{\log n},
$$
where while using Lemma \ref{lem:6:VarianceTailBound}, we set $\Gamma_n = h^{-1}B_1 M_K C_K$, $K_n = h^{-1}\sigma_{Z} M_K$, $p = 1$, $\alpha = 2$, and $D$ depends on the absolute constant $C_{\alpha}$ in the statement of the lemma. This completes the proof of the first part of Lemma \ref{KS:thm1}. \qed

\par\smallskip
For the second part regarding $\Sbar_n(\bx) \equiv \E\{\ltil(\bbeta,\bx)\} - l(\bbeta,\bx)$, observe that $\E\{\ltil(\bbeta,\bx)\} = \E\{T_h(\bZ;\bx, \bbeta) \}$ and $l(\bbeta,\bx) \equiv l_{\bbeta}(w_{\bx})$. We the have: $\forall \; \bx \in \Xsc$, % that $l(\bbeta,\bx) \equiv l_{\bbeta}(w_{\bx})$ by notation. We then have: , . We then have:
\begin{align*}
& \Sbar_n(\bx)  \; = \; \E\{T_h(\bZ;\bx, \bbeta) \} - l(\bbeta,\bx) \; = \; \E_{W}[\E\{T_h(\bZ;\bx, \bbeta) \medgiven W \}] - l_{\bbeta}(w_{\bx}) \\
& \; = \; h^{-1} \int_{\R} \E(Z \medgiven W = w) K\{(w - w_{\bx})/h\} f_{\bbeta}(w) dw - l_{\bbeta}(w_{\bx})  \\
& \; = \; h^{-1} \int_{\R} l_{\bbeta}(w) K\{(w - w_{\bx})/h\} dw -  l_{\bbeta}(w_{\bx})  \\
&  \; = \; \int_{\R} l_{\bbeta}(w_{\bx} + hu) K(u) du - l_{\bbeta}(w_{\bx}) \; = \; \int_{\R} \{l_{\bbeta}(w_{\bx} + hu) - l_{\bbeta}(w_{\bx})\}K(u) du \\
& \; = \; h l_{\bbeta}'(w_{\bx}) \underbrace{\int_\R u K(u) du}_{= \; 0} + h^2 R^*(\bx) \; := \; h^2 \int_{\R} l_{\bbeta}''(w_{\bx,u}^*) u^2 K(u) du, \;\; \mbox{where}
\end{align*}
$w_{\bx, u}^*$ is some `intermediate' point satisfying $|w_{\bx, u} - w_{\bx}| \leq h |u|$. The first two steps use $\E(Z \medgiven W = w) \equiv m_{\bbeta}(w)$ and $m_{\bbeta}(w) f_{\bbeta}(w) \equiv l_{\bbeta}(w)$. The next steps follow from a standard change of variable and Taylor series expansion argument under the assumed smoothness of $l_{\bbeta}(\cdot)$ in Assumption \ref{KS:assmpn1} (d) along with the conditions imposed therein on the kernel $K(\cdot)$. Using Assumption \ref{KS:assmpn1} (d), we further have: $\| l_{\bbeta}''(\cdot) \|_{\infty} \leq B_2$ and $\int |u^2 K(u)| du \leq R_K$. Hence, %therefore,
$$
|\Sbar_n(\bx)| \; \leq \; B_2 \int_{\R} u^2 |K(u)| du \; \leq \; B_2 R_K \;\; \mbox{uniformly for all} \;\; \bx \in \Xsc.
$$
This establishes the second part of Lemma \ref{KS:thm1} and completes the proof. \qed

\subsection{Proof of Lemma \ref{KS:thm2}}\label{pf:ks:thm2}
To control $\Rhat_{n,1}(\bx) \equiv (\bbetahat - \bbeta)'\bThat_n(\bx)$, note
\begin{align}
& |\Rhat_{n,1}(\bx)| \; \leq \; \| \bbetahat - \bbeta\|_1 \left[ \| \bThat_n(\bx) - \E\{\bThat_n(\bx) \}\|_{\infty} +  \| \E\{\bThat_n(\bx) \}\|_{\infty} \right] \label{eq1:pf:ks:thm2}
\end{align}
In the light of \eqref{eq1:pf:ks:thm2} and the assumed high probability bound for $\| \bbetahat - \bbeta\|_1$ in Assumption \ref{KS:assmpn2} (d), it now suffices to bound $\|\bThat_n(\bx) - \E\{ \bThat_n(\bx)\} \|_{\infty} $ and $\|\E\{\bThat_n(\bx)\}\|_{\infty}$. To this end, for each $\bx \in \Xsc$, define
$$
\bT_h^*(\bZ; \bx) %\; \equiv \; \bT_h^*(\bZ; \bx, \bbeta)
\; := \; \frac{1}{h^2}Z (\bX - \bx) K'\left(\frac{W - w_{\bx}}{h}\right) \;\; \mbox{so that} \; \bThat_n(\bx) \equiv \frac{1}{n} \sum_{i=1}^n \bT_h^*(\bZ_i; \bx).
$$
Now under Assumptions \ref{KS:assmpn1} (a), \ref{KS:assmpn1} (c), \ref{KS:assmpn2} (a), \ref{KS:assmpn2} (d) and using Lemma \ref{lem:1:GenProp} (i)(b)-(c), (iv)(b) and (v) at appropriate places, we have: for all $\bx \in \Xsc$,
\begin{align*}
& \max_{1 \leq j \leq p} \psitwonorm{\bT_{h[j]}^*(\bZ;\bx)} \; \leq \;  2 h^{-2} M_{\bX} M_{K'} \sigma_Z \;\; \mbox{and therefore,}\\
& \max_{1 \leq j \leq p} \psitwonorm{\bT_{h[j]}^*(\bZ;\bx) - \E\{ \bT_h^*(\bZ;\bx)\}}  \; \leq \;  6 h^{-2} M_{\bX} M_{K'} \sigma_Z.
\end{align*}
Further, under Assumptions \ref{KS:assmpn2} (d), \ref{KS:assmpn1} (c), \ref{KS:assmpn2} (a) and with $\E\{Z^2 (\bX_{[j]} - \bx_{[j]})^2 \medgiven W\} \leq 4M_{\bX}^2 \E_{W}(Z^2 \medgiven W) \equiv 4M_{\bX}^2 m_{\bbeta}^{(2)}(W) \; \forall j$, we have: for all $\bx \in \Xsc$,
\begin{align*}
& \max_{1 \leq j \leq p} \E [\{\bT_{h [j]}^*(\bZ;\bx)\}^2]  \; \leq \; \frac{4}{h^4} M_{\bX}^2 \int _{\R} m_{\bbeta}^{(2)}(w)[K'\{(w - w_{\bx_j})/h\}]^2 f_{\bbeta}(w) dw \\
& \;\; \leq \;  \frac{4}{h^3} M_{\bX}^2 M_{K'} B_1 \int_{\R}  m_{\bbeta}^{(2)}(w_{\bx} + hu)f_{\bbeta}(w_{\bx} + hu) \{K'(u)\}^2  du   \\
& \;\; \leq \; \frac{4}{h^3} M_{\bX}^2 M_{K'} B_1 \int_{\R} |K'(u)| du \;\; \leq \;  \frac{4}{h^3} B_1 M_{\bX}^2 M_{K'} C_{K'},
\end{align*}
where the second step follows from a change of variable argument and the final two bounds follow from using the assumptions mentioned above.

Using Lemma \ref{lem:6:VarianceTailBound} with the parameters therein set to: $\alpha = 2$,  $\Gamma_n \propto h^{-3}$ and $K_n \propto h^{-2}$, all in the light of the two bounds above, we then have: %for each $j \in \{1, \hdots, p\}$,
for any fixed $\bx \in \Xsc$  and for any $t \geq 0$, with probability at least $1 - 3 \exp (-t^2)$,
\begin{align}
\left\| \bThat_n(\bx) - \E\{\bThat_n(\bx)\} \right\|_{\infty}  & \equiv \; \left\| \frac{1}{n} \sum_{i=1}^n \bT_{h}^*(\bZ_i;\bx) - \E\{\bT_{h}^*(\bZ;\bx)\} \right\|_{\infty} \nonumber \\
&  \leq \; C_1 \frac{(t + \sqrt{\log p})}{\sqrt{nh^3}} + C_2 \frac{(t^2 + \log p) \sqrt{\log n}}{nh^2}, \label{eq2:pf:ks:thm2}
\end{align}
for some constants $C_1, C_2 > 0$ depending only on those introduced in the assumptions. Here, we further used $\sqrt{a + b} \leq \sqrt{a} + \sqrt{b}$ for any $a, b \geq 0$ to obtain the bound \eqref{eq2:pf:ks:thm2} from the one originally provided by Lemma \ref{lem:6:VarianceTailBound}.

Next, we focus on controlling $\|\E\{\bT_{h}^*(\bZ;\bx) \}\|_{\infty}$. To this end, recall the definitions of $\boldsymbol{\eta}_{\bbeta}(\cdot) \in \R^p$ and $l_{\bbeta}(\cdot) \in \R$, and let $\boldsymbol{\eta}_{\bbeta}'(w) := \frac{d}{dw} \boldsymbol{\eta}_{\bbeta}(w) \in \R^p$. Then, under Assumption \ref{KS:assmpn2} (a)-(b), we have: uniformly in $\bx \in \Xsc$,
\begin{align}
& \E\{ \bT_{h}^*(\bZ;\bx)\} \; = \; \frac{1}{h^2}\E_{W}[\E\{ (ZX - Zx)\medgiven W\} K'\{ (W - w_{\bx})/h\}] \nonumber \\
& \;\; \equiv \; \frac{1}{h^2}\int_{\R} \{\boldsymbol{\eta}_{\bbeta}(w) - \bx l_{\bbeta}(w)\} K'\{ (W - w_{\bx})/h\}  dw \nonumber \\
& \;\; = \; \frac{1}{h}\int_{\R} \{\boldsymbol{\eta}_{\bbeta}(w_{\bx} + hu) - \bx l_{\bbeta}(w_{\bx} + hu)\} K'(u)  du \nonumber  \\
& \;\; = \; \int_{\R} \{\boldsymbol{\eta}_{\bbeta}'(w_{\bx} + hu) - \bx l_{\bbeta}'(w_{\bx} + hu)\} K(u)  du, \nonumber
\end{align}
where the last two steps follow from a change of variable and integration by parts argument, where the latter is applicable under Assumption \ref{KS:assmpn2} (a)-(b). Under Assumptions \ref{KS:assmpn2} (a), \ref{KS:assmpn2} (b) and \ref{KS:assmpn2} (d), we then have:
\begin{align}
 \| \E\{ \bT_{h}^*(\bZ;\bx)\} \|_{\infty} & \leq \; \left\{\max_{1 \leq j \leq p} \|\boldsymbol{\eta}_{\bbeta [j]}'(\cdot) \|_{\infty} + \| \bx \|_{\infty} \| l_{\bbeta}'(\cdot) \|_{\infty} \right\} \int_{\R} |K(u)| du \nonumber \\
&  \leq \; (B_1^* + M_{\bX} B_2^*) C_K \quad \mbox{uniformly in} \; \bx \in \Xsc. \label{eq3:pf:ks:thm2}
\end{align}
Finally, recall that from Assumption \ref{KS:assmpn2} (d), we have $\bbetahat - \bbeta \|_1 \leq a_n$ with probability at least $1  - q_n$. Combining this with the bounds \eqref{eq2:pf:ks:thm2} and \eqref{eq3:pf:ks:thm2} and applying them in \eqref{eq1:pf:ks:thm2} through a simple union bound, we have: for any fixed $\bx \in \Xsc$ and for $t \geq 0$, with probability at least $1 - 3 \exp(-t^2) - q_n$,
\begin{align*}
|\Rhat_{n,1}(\bx)| \; \leq \; a_n \left\{ C_1^* + C_2^* \frac{(t + \sqrt{\log p})}{\sqrt{nh^3}} + C_3^* \frac{(t^2 + \log p) \sqrt{\log n}}{nh^2} \right\},
\end{align*}
for some constants $C_1^*, C_2^*, C_3^*$ depending only on those introduced in our assumptions. This establishes the first part of Lemma \ref{KS:thm2}. \qed

\par\smallskip
To establish the second part of Lemma \ref{KS:thm2} regarding bounds for $\Rhat_{n,2}(\bx)$, first recall that for some `intermediate' points $\{W^*_i \}_{i=1}^n$ and $w_{\bx}^*$ satisfying $ | (W^*_i  - w_{\bx}^*) - (W_i - w_{\bx}) | \leq |(\What_i - \what_{\bx}) - (W_i - w_{\bx}) | \equiv |(\bbetahat - \beta)'(\bX_i - \bx) |$,
\begin{align}
& |\Rhat_{n,2}(\bx)|  \equiv   \left|\frac{(\bbetahat - \bbeta)'}{nh^2}\sum_{i=1}^n Z_i (\bX_i - \bx)  \left\{K'\left( \frac{W^*_i - w_{\bx}^*}{h} \right) - K'\left( \frac{W_i - w_{\bx}}{h} \right) \right\} \right|  \nonumber \\
& \;\; \leq \;  \frac{\| \bbetahat - \bbeta \|_1}{nh^2}\sum_{i=1}^n \|\bX_i - \bx \|_{\infty} |Z_i|  \left|K'\left( \frac{W^*_i - w_{\bx}^*}{h} \right) - K'\left( \frac{W_i - w_{\bx}}{h} \right) \right|  \nonumber %\\
\end{align} %%% NOTE: TWO ALIGNS DELIBERATELY ADDED HERE TO BREAK UP THE EQUATIONS TO BETTER FORMAT THE FINAL SUPPLEMENT BEFORE SUBMISSION. %%%
\begin{align}
& \;\; \leq \; 2 M_{\bX}\| \bbetahat - \bbeta \|_1  \left\{ \frac{1}{nh^2}\sum_{i=1}^n |Z_i| \left|K'\left( \frac{W^*_i - w_{\bx}^*}{h} \right) - K'\left( \frac{W_i - w_{\bx}}{h} \right) \right| \right\}, \label{eq4:pf:ks:thm2}
\end{align}
where the steps follow from an $L_1$-$L_{\infty}$ bound along with a triangle inequality and using the boundedness of $\bX$ from Assumption \ref{KS:assmpn2} (d).

Let $\Asc_n$ denote the event $\Lonenorm{\bbetahat - \bbeta} \leq a_n$ and let $\Asc_n^c$ denote the complement event of $\Asc_n$. Then, from Assumption \ref{KS:assmpn2} (d), we have $\P(\Asc_n) \geq 1 - q_n$. Further, on the event $\Asc_n$, $(\bbetahat - \beta)'(\bX_i - \bx)/h \leq 2 M_{\bX} (a_n/h) \leq L$ under Assumption \ref{KS:assmpn2} (d) and consequently, using Assumption \ref{KS:assmpn2} (c) with the function $\varphi(\cdot)$ as defined therein, we have: on the event $\Asc_n$,
\begin{align}
&  \left|K'\left( \frac{W_i - w_{\bx}}{h} \right)  - K'\left( \frac{W^*_i - w_{\bx}^*}{h} \right) \right| \; \leq \;  \frac{1}{h}|(\bbetahat - \bbeta)'(\bX_i - \bx)|\varphi\left( \frac{W_i - w_{\bx}}{h} \right) \nonumber \\
& \;\; \leq \; \frac{1}{h}\|(\bbetahat - \bbeta)\|_1\|(\bX_i - \bx)\|_{\infty} \varphi\left( \frac{W_i - w_{\bx}}{h} \right) \;\; \leq \; \frac{2 M_{\bX} a_n}{h}\varphi\left( \frac{W_i - w_{\bx}}{h} \right), \label{eq5:pf:ks:thm2} %\;\; \forall 1 \leq i \leq n, \bx \in \Xsc.
\end{align}
and consequently, combining \eqref{eq4:pf:ks:thm2} and \eqref{eq5:pf:ks:thm2}, we have: on the event $\Asc_n$,
\begin{align}
& | \Rhat_{n,2}(\bx)| \; \leq \; \frac{2M_{\bX}^2 a_n^2} {nh^3} \sum_{i=1}^n |Z_i| \varphi\left( \frac{W_i - w_{\bx}}{h} \right) \quad \forall \; \bx \in \Xsc. \label{eq6:pf:ks:thm2}
\end{align}
Thus, we have: for any $\epsilon \geq 0$ and for any $\bx \in \Xsc$,
\begin{align}
 \P(| \Rhat_{n,2}(\bx)| > \epsilon) \; & \leq \; \P(| \Rhat_{n,2}(\bx)| > \epsilon, \Asc_n) + \P(| \Rhat_{n,2}(\bx)| > \epsilon, \Asc_n^c) \nonumber \\
& \leq \; \P\left\{\frac{4M_{\bX}^2 a_n^2} {nh^3} \sum_{i=1}^n |Z_i| \varphi\left( \frac{W_i - w_{\bx}}{h} \right) > \epsilon, \Asc_n \right\} + \P(\Asc_n^c) \nonumber \\
& \leq \; \P\left\{\frac{4M_{\bX}^2 a_n^2} {nh^3} \sum_{i=1}^n |Z_i| \varphi\left( \frac{W_i - w_{\bx}}{h} \right) > \epsilon \right\} + q_n, \label{eq7:pf:ks:thm2}
\end{align}
where the steps follow from \eqref{eq6:pf:ks:thm2} and that $\P(\Asc_n^c) \leq q_n$ by assumption.

Next, define: $\Tsc_h(\bZ; \bx) \equiv \Tsc_h(\bZ; \bx, \bbeta) := h^{-3} |Z| \varphi\{ (W - w_{\bx})/h\}$ and recall that $m_{\bbeta}^{(2)}(W) \equiv \E(Z^2 \medgiven W)$. Then, using the boundedness conditions from Assumptions \ref{KS:assmpn1} (c) and \ref{KS:assmpn2}(c), along with use of iterated expectations, we bound the first and second moments of $\Tsc_h(\bZ; \bx)$ $\forall \; \bx \in \Xsc$ as follows.
\begin{align*}
&\E \{\Tsc_h^2(\bZ; \bx)\}   \; = \; \frac{1}{h^6} \int_{\R} m_{\bbeta}^{(2)} (w) \varphi^2\left( \frac{W - w_{\bx}}{h} \right) f_{\bbeta}(w) dw  \nonumber \\
&  \quad = \; \frac{1}{h^5} \int_{\R} m_{\bbeta}^{(2)} (w_{\bx} + hu) f_{\bbeta}(w_{\bx} + hu) \varphi^2(u) du \; \leq \; \frac{B_1 M_{\varphi} C_{\varphi}}{h^5}, \;\; \mbox{and} \nonumber %\\
\end{align*}%%% NOTE: TWO ALIGNS DELIBERATELY ADDED HERE TO BREAK UP THE EQUATIONS TO BETTER FORMAT THE FINAL SUPPLEMENT BEFORE SUBMISSION. %%%
\begin{align*}
& \E \{\Tsc_h(\bZ; \bx)\}  \; = \; \frac{1}{h^3} \int_{\R} \E(|Z| \medgiven W = w) \varphi\left( \frac{W - w_{\bx}}{h} \right) f_{\bbeta}(w) dw \nonumber \\
& \quad \leq \; \frac{1}{h^3} \int_{\R} \{m_{\bbeta}^{(2)} (w)\}^{\half} \varphi\left( \frac{W - w_{\bx}}{h} \right) f_{\bbeta}(w) dw  \nonumber \\
& \quad \leq \; \frac{1}{h^2} \int_{\R} \{m_{\bbeta}^{(2)} (w_{\bx} + hu)\}^{\half} \varphi(u) f_{\bbeta}(w_{\bx} + hu) du \; \leq \; \frac{(B_1 C_f)^{\half} C_{\varphi}}{h^2}, \nonumber
\end{align*}
where $C_f > 0$ is a constant such that $\| f_{\bbeta}(\cdot) \|_{\infty} \leq C_f$. Further, under Assumptions \ref{KS:assmpn1} (a) and \ref{KS:assmpn2} (c), using various parts of Lemma \ref{lem:1:GenProp}, we have:
$$
\psitwonorm{ \Tsc_h(\bZ; \bx) - \E\{ \Tsc_h(\bZ; \bx)\} } \; \leq \; 3 \psitwonorm{\Tsc_h(\bZ; \bx)} \; \leq \; 3 h^{-3}\sigma_Z M_{\varphi} \quad \forall \bx \in \Xsc.
$$
Hence, using Lemma \ref{lem:6:VarianceTailBound}, with all required conditions verified now, we have: for any $\bx \in \Xsc$ and for any $t \geq 0$, with probability at least $1  - 3\exp(-t^2)$,
\begin{align}
\left|\frac{1}{n} \sum_{i=1}^n \Tsc_h(\bZ_i; \bx) \right| \; & \leq \; \left|\frac{1}{n} \sum_{i=1}^n \Tsc_h(\bZ_i; \bx) - \E\{\Tsc_h(\bZ; \bx) \}\right| + \left|\E\{\Tsc_h(\bZ; \bx) \right| \nonumber\\
\; & \leq \; C_3 \frac{t}{n h^5} + C_4 \frac{t^2 \sqrt{\log n}}{n h^3} + \frac{C_5}{h^2}, \label{eq8:pf:ks:thm2}
\end{align}
for some constants $C_3, C_4, C_5 > 0$ depending only on those in the assumptions. Hence, using \eqref{eq8:pf:ks:thm2} in \eqref{eq7:pf:ks:thm2}, we now have: for any $t \geq 0$,
\begin{align*}
& \P\left\{|\Rhat_{n,2}(\bx)| \geq 4M_{\bX}^2a_n^2 \left( C_3 \frac{t}{n h^5} + C_4 \frac{t^2 \sqrt{\log n}}{n h^3} + \frac{C_5}{h^2}\right) \right\} \\
& \leq \; \P\left\{ \frac{1}{nh^3} \sum_{i=1}^n |Z_i| \varphi\left( \frac{W_i - w_{\bx}}{h} \right) >  C_3 \frac{t}{n h^5} + C_4 \frac{t^2 \sqrt{\log n}}{n h^3} + \frac{C_5}{h^2}\right\} + q_n \\
& \equiv \; \P\left( \left|\frac{1}{n} \sum_{i=1}^n \Tsc_h(\bZ_i; \bx) \right|  \; > \; C_3 \frac{t}{n h^5} + C_4 \frac{t^2 \sqrt{\log n}}{n h^3} + \frac{C_5}{h^2}\right) + q_n \\
& \leq \; 3 \exp(-t^2) + q_n  \quad \mbox{for any} \; \bx \in \Xsc.
\end{align*}
This establishes the desired bound for $\Rhat_{n,2}(\bx)$ and completes the proof. \qed

%\bibliographystyle{imsart-nameyear}
%%\bibliographystyle{apalike}
%\bibliography{P1-HDME-Abhishek-Biblio}

%\end{document}

\bibliographystyle{imsart-nameyear}
\bibliography{P1-HDME-Abhishek-Biblio}

\end{document}